\newcommand{\nbar}{{\overline n}}
\newcommand{\NN}{\mathbb{N}}
\newcommand{\RR}{\mathbb{R}}
\newcommand{\CC}{\mathbb{C}}
\newcommand{\Chi}{\mathbb{\chi}}
\newcommand{\refeq}[1]{(\ref{#1})}
\newcommand{\dist}{\mbox{dist}}
\newcommand{\Un}{{\mbox{U}(n)}}
\newcommand{\un}{{\mathfrak{u}(n)}}
\newcommand{\unbar}{{\mathfrak{u}(\nbar)}}
\newcommand{\Unbar}{{\mbox{U}(\nbar)}}
\newcommand{\Image}{\mbox{\rm Im}}
\newenvironment{proof}{\mbox{\textbf{Proof.}}}{ \hfill $\Box$}
\newcommand{\pdist}{\mbox{\textrm{pdist}}}
\newcommand{\optgoal}{\mbox{\textrm{optgoal}}}
\newcommand{\trace}{\mbox{\textrm{trace}}}
\newcommand{\diag}{\textrm{diag}}
\newcommand{\Xgoalbar}{{\overline X}_{{goal}_\ell}}
\newcommand{\Xgoalbarellminusone}{{\overline X}_{{goal}_{\ell-1}}}
\newcommand{\Window}{$\mathbb{W}$}%%
\newcommand{\BallV}[1]{{\overline B}_{#1}(I)} %% Ball
\newcommand{\BallZ}[1]{{{\overline B}}^Z_{#1}(0)} %% Ball
\newtheorem{theorem}{Theorem}
\newtheorem{problem}{Problem}
\newtheorem{corollary}{Corollary}
\newtheorem{lemma}{Lemma}
\newtheorem{proposition}{Proposition}
\newtheorem{definition}{Definition}
\newtheorem{remark}{Remark}
\begin{document}

\title{On encoded quantum gate generation by iterative Lyapunov-based methods}

\author{Paulo Sergio Pereira da Silva~\thanks{{Polytechnic School -- PTC,
		University of S\~ao Paulo (USP), S\~ao
		Paulo, Brazil}\\
	{\tt paulo@lac.usp.br}\\
	{\em orcid: 0000-0002-5458-182X}
	}
\and Pierre Rouchon~\thanks{{Laboratoire  de Physique de l’Ecole Normale Supérieure, Mines Paris-PSL, Inria,  ENS-PSL, Université PSL, CNRS, Paris, France.}\\
		{\tt pierre.rouchon@minesparis.psl.eu}\\
		{\em orcid:0000-0001-6160-5634}}
	}

\maketitle

\begin{abstract}
  The problem of encoded quantum gate generation is studied in this paper. The idea is to consider a quantum system of higher dimension $n$ than the dimension $\nbar$ of the quantum gate to be synthesized. Given two orthonormal subsets $\mathbb{E} = \{e_1, e_2, \ldots, e_{\nbar}\}$ and $\mathbb F = \{f_1, f_2, \ldots, f_{\nbar}\}$ of $\CC^n$, the problem  of encoded quantum gate generation consists
in obtaining an open loop control law defined in an interval $[0, T_f]$ in a way that all initial states $e_i$ are steered to $\exp(\jmath \phi) f_i, i=1,2, \ldots ,\nbar$ up to some desired precision and to some global phase $\phi \in \RR$.
This problem includes the classical (full) quantum gate generation problem, when $\nbar = n$,  the state preparation problem, when $\nbar = 1$, and finally the encoded gate generation when $ 1 < \nbar < n$.  Hence, three problems are unified here within a unique common approach. The \emph{Reference Input Generation Algorithm (RIGA)} is generalized in this work for considering the encoded gate generation problem for closed quantum systems. A suitable Lyapunov function is derived   from  the orthogonal projector on the support  of the encoded gate. Three  case-studies of physical interest indicate the potential interest of such numerical  algorithm: two coupled transmon-qubits,   a cavity mode coupled to a transmon-qubit, and  a chain of $N$ qubits, including a large dimensional case for which $N=10$.

\end{abstract}

\section{Introduction}\label{sec1}

State preparation and quantum gate generation are important  quantum control problems.
Optimal control could be applied in order to solve these problems
\cite{PalaoK2002PRL,PalaoK2003PRA}. However,
this last approach is restricted to low orders due to complexity issues \cite{SchirF2011NJoP}. Lyapunov stabilization \cite{L1,L2,L3,L4,L5,L6,LS1,SilPerRou14,SilPerRou16} may be considered for large orders. The drawback is that Lyapunov techniques produces slow solutions in general. The Krotov method  \cite{SchirF2011NJoP}, GRAPE \cite{KHANEJA2005,SecondGRAPE}, and  RIGA \cite{PerSilRou19,CODE_OCEAN_CONSTANT} in the piecewise-constant setting, and  CRAB \cite{CRAB}, GOAT \cite{MacShaTanFra15} or RIGA \cite{PerSilRou19,CODE_OCEAN_SMOOTH}  in the smooth setting, are efficient methods for solving the
quantum gate generation problem. A comparison between Krotov, GRAPE, CRAB and GOAT  methods for small dimensions (two and three qubits) is presented in \cite{CRAB_GOAT_GRAPE}.
A comparison between GRAPE and RIGA in the piecewise-constant setting can be found in \cite{PerSilRou19}. In the present paper, some comparisons between RIGA and GRAPE are presented
first in a theoretical context (see section \ref{sComparison}). A second set of comparisons is presented in a more practical context, by considering  two examples that were also studied in \cite{LeuAbdKocSch17} and also another example that was considered
in \cite{HeeEtAl15}. This comparison between RIGA and GRAPE that is based on the results of the numerical experiments is included in the conclusions section of the paper (Section \ref{sConclusions}).It is worth noting that RIGA is able to consider large Hilbert space dimensions with a performance that seems to be superior to GRAPE (see for instance the third example, where $n = 1024$), although a complete comparison must be done in several case studies in the same computational platform\footnote{The authors think that the runtime of their MATLAB implementation of  \cite{CODE_OCEAN_SMOOTH} may be improved a lot, for instance by compilation in language $C$.}.

This paper considers the problem of encoding a quantum gate of dimension $\nbar$
inside a quantum system whose propagator $X(t)$ evolves on $\Un$, with $n \geq \nbar$.
An encoded gate corresponds to the case  for which $n$ is greater than  $\nbar$, which is a common situation of quantum control. For instance, in \cite{CavityTransmonGrape} the authors  want to encode a Hadamard gate
inside a coupled cavity-transmon qubit. A strong reason to do this is related to the robustness
of quantum information, that can be improved by the encoding process, since one may recover the lost quantum information after some decoherence events
\cite{Maz14}.  It is worth to mention that many authors have considered different approaches of encoding
quantum information and error correction strategies \cite{Jin12,Nig14,Got01,Mic16,Chi04,Cra16,Fu17,Ris15,Mag15}.

RIGA was introduced in \cite{PerSilRou19} for the case where $n$ coincides to $\nbar$, as well as
the proofs of convergence of this algorithm along with  several numerical experiments with the piecewise-constant\footnote{It means that the control pulses are assumed to be piecewise-constant,
whereas the ``smooth version '' means the the control pulses are assumed to be smooth.}
implementation of RIGA \cite{CODE_OCEAN_CONSTANT}. The proofs of RIGA's convergence for  the case
where $n$ is equal to $\nbar$ can be found in \cite{PerSilRou19}. The authors have generalized those proofs for the encoded case that is presented in this paper, that is, when $\nbar$ is less than $n$. Those proofs are quite similar to the ones of  \cite{PerSilRou19}, but they are rather long ones, and so they are deferred to the Appendix \ref{aMathematical}. 

The authors have already implemented the smooth version of RIGA \cite{CODE_OCEAN_SMOOTH}, that considers any possible values of $\nbar$. The description of this implementation as well as a set of numerical experiments for all cases are included in this work, even if the main contributions regards the encoded case.

In this paper one considers a closed quantum system (Shr\"odinger picture) that is modelled by:
\begin{equation}\label{cqs}
    \dot{X}(t)  =  -\iota \left( H_0 + \sum_{k=1}^{m}u_k(t) H_k \right) X(t),  X(0)= X_0
\end{equation}
where $X \in \mbox{U}(n)$ is the propagator, $S_k = -\iota H_k \in \mathfrak{u}(n), k= 0, 1 \ldots$, where  $\mbox{U}(n)$ and  $\mathfrak{u}(n)$ are respectively the Lie-group  of $n \times n$ unitary matrices and its Lie Algebra\footnote{Recall that $\mathfrak{u}(n)$ is the set of anti-hermitian $n \times n$ matrices.} , and  $u_k(t) \in \RR, k = 1, \ldots , m$ are the controls
and the initial condition $X_0$ is the identity matrix $I$.
One defines two $\nbar$-dimensional subspaces $\mathcal E$ and $\mathcal F$ of $\CC^n$ spanned respectively by the orthonormal sets $\mathbb{E} = \{e_1, e_2, \ldots, e_{\nbar}\}$ and $\mathbb {F} = \{f_1, f_2, \ldots, f_{\nbar}\}$. The subspace $\mathbb E$  is called by \emph{ decoded space} and the subspace $\mathbb F$ is the \emph{ encoded space}. An encoded quantum gate is the linear map $Z_{goal} : \mathcal E \rightarrow \mathcal F$
defined by $Z_{goal} (e_i) = \exp (\jmath \phi) f_i, i=1,2, \ldots, n$, where $\phi \in \RR$ is a global phase.
In some situations, the encoded and the decoded subspaces, may coincide and this fact will be transparent in the context of the mathematical setting that is proposed here (in this case the gate is an endomorphism).

From a control perspective, since the propagator is a representation of a linear map, the studied control problem may be stated as follows:
\begin{problem} \label{Prob1} \emph{(Encoded Quantum Gate Generation Problem)}
Fix orthonormal bases $\mathbb{E} = \{e_1, e_2, \ldots, e_{\nbar}\}$ and $\mathbb {F} = \{f_1, f_2, \ldots, f_{\nbar}\}$, with  $\mathbb E, \mathbb F \subset \CC^n$. Fix a final time $T_f$.
The  \emph{Encoded Quantum Gate Generation Problem} consists in finding an open loop control $u : [0, T_f] \rightarrow \RR^m$ that steers the initial conditions $e_i$ at $t=0$ to $\exp (\jmath \phi) f_i$ at $t=T_f$, for $i=1, \ldots, \nbar$
up to some desired precision and to some global phase $\phi \in \RR$.
In particular, defining the matrices $E$ and $F$ whose columns are respectively given 
by $\mathbb{E}$ and $\mathbb{F}$, then 
the control $u :  [0, T_f] \rightarrow \RR^m$ must steer the system \refeq{cqs} from $X(0) = I$ to $X(T_f) = X_f$ in a way that $X_f E = \exp (\jmath \phi) F$ up to some desired precision.
\end{problem}

The precision of a quantum gate is usually expressed by a function $\mathcal F : \Un \rightarrow [0, 1]$ called \emph{Fidelity}. The function $\mathcal I(\cdot) = 1 - \mathcal F (\cdot)$ is called \emph{Infidelity} function. 
\begin{definition}
\label{dInfidelity}
In this paper, the infidelity function $\mathcal I : \Un \rightarrow \RR$ of an encoded gate is defined by \cite{LeuAbdKocSch17}:
\[
 \mathcal I( X) = 1 - \left( \frac{1}{\nbar}  \left\| \trace \left( F^\dag  X E \right) \right\| \right)^2
\]
where $E$ and $F$ are the matrices defined in Prob. \ref{Prob1}.
\end{definition}
Smaller is $\mathcal I (X)$, smaller is the error
of the quantum gate corresponding to $X$. It can be shown that $\mathcal I (X) = 0$ if and only if there exists a global phase $\phi \in \RR$ such that $X e_i = \exp(\imath \phi) f_i, i=1, \ldots, \nbar$.

The paper is organized as follows. Section \ref{sResults} presents RIGA, and the main objective is to be readable for the interested user of the algorithm, that is available as a public domain software \cite{CODE_OCEAN_SMOOTH}. Three interesting examples are studied in section \ref{sExamples}:   an encoded C-NOT gate generation and state preparation for two coupled transmon-qubits;  an encoded Hadamard gate generation for for a coupled cavity-transmon qubit system;   the generation of a Hadamard gate for a chain of $N$-coupled qubits (the unique example for which $n= \nbar$), including a high dimensional case ($ n= 1024$). 
The reader that is interested in more details about the algorithm may refer to Section \ref{s:Refinements}, including the main properties of the Lyapunov function, the strategy in order to avoid its singular and or critical points, the choice of the seed aof the algorithm and other related questions. 
Section \ref{sImplementation} is devoted to the computational implementation
of RIGA. The main implementation considers that the control pulses $u_k(t)$ are smooth functions for $k=1, \ldots, m$, but the piecewise-constant implementation  is also considered.  In fact, it  allows
to establish a direct connection with GRAPE in sub-section \ref{sComparison} which is summarized by the following remark: the piecewise-constant version RIGA is a kind of closed loop version of GRAPE.  Some conclusions are stated in section \ref{sConclusions}. Finally, the reader that interested in the mathematical details is invited to refer to the Appendices. 

\paragraph{Acknowledgments. }
This project has received funding from the European Research Council (ERC) under the European Union’s Horizon 2020 research and innovation program (grant agreement No. [884762]).

\section{The Reference Input Generation Algorithm (RIGA)}
\label{sResults}

The method called \emph{Reference Input Generation Algorithm (RIGA)} was introduced in \cite{PerSilRou19}
as an efficient method for solving the quantum gate generation problem. It was originally conceived for tackling the full quantum gate generation problem, that is, the case for which the dimension $\nbar$ of the quantum gate coincides with the full dimension $n$ of the quantum system. A convenient choice of Lyapunov function will be the key for adapting RIGA for the case where $\nbar$ is less than $n$. Firstly, a simplified description of RIGA will be given. A more detailed description will be given in section \ref{s:Refinements} for the interested reader.
The main ingredients of the RIGA are  \textbf{(i) a reference trajectory to be tracked}, and \textbf{(ii) an adequate Lyapunov function}. Let us describe all the features of RIGA in the sequel.

\subsection{The reference and the error systems}

Given a reference input $\overline u : [0, T_f] \rightarrow \RR^m$, consider the reference system
 \begin{equation}\label{reference}
    \dot{\overline X}(t)= S_0 \overline X(t) + \sum_{k=1}^{m}{\overline u}_k(t) S_k \overline X(t), \quad {\overline X}(0)={\overline X}_0
\end{equation}
Define the (tracking) error matrix $\widetilde{X}(t)=\overline{X}^{\dag}(t) X(t) \in \Un$.
The dynamics of $\widetilde{X}(t)$ is given by \cite{SilPerRou16}:
\begin{equation}\label{eErrorSystem}
  \dot{\widetilde{X}}(t) = \sum_{k=1}^{m} \widetilde{u}_k(t) \widetilde{S}_k(t)
\widetilde{X}(t), \quad \widetilde{X}(0) = \widetilde{X}_0,
\end{equation}
where $\widetilde{u}_k(t) = u_k(t) - \overline{u}_k(t)$ and\footnote{Recall that $S_k = \jmath H_k$, $k=1, \ldots, n$, where $H_k$ are the control Hamiltonians of the system \ref{cqs}.} 
$ \widetilde{S}_k(t)  =   {\overline X}^\dag (t) S_k {\overline X}(t)$.

In principle, a goal matrix ${X}_{goal}$ is any unitary matrix such that 
\begin{equation}
\label{eXgoal}
    {\overline X}_{goal} E = \exp(\jmath \phi) F
\end{equation}
for a given global phase $\phi \in \RR$, and the reference trajectory
of a step $\ell$ of RIGA is computed in a way that its final condition is a goal matrix. 

\begin{remark}  A trick for avoiding critical and or singular points
of the Lyapunov function will be explained in section \ref{sFirstStrategy}, and in this case the chosen goal matrix ${\overline X}_{goal}$ may not obey the condition \eqref{eXgoal}, at least during 
in certain steps of the algorithm. Then RIGA may be slightly modified in order to include this trick. This is not important for a first reading of this work. The interested reader may refer to Section \ref{s:Refinements}.
\end{remark}

\subsection{The Lyapunov-based tracking control}

Let $\mathcal V : \Un \rightarrow \RR$ be a Lyapunov function. It will be assumed that $\mathcal V$ is well defined and smooth on a open set $G \subset \CC^{n \times n}$ containing $\Un$, and so it admits a gradient $\nabla_X \mathcal V$ for all  $X \in G$. Furthermore
\[
\nabla_X \mathcal V \cdot D
\]
denotes the directional derivative
of $\mathcal V$ at $X$ in the direction $D$ (a complex $n\times n$ matrix). The different choices of the Lyapunov function $\mathcal V$ will determine if RIGA will consider state preparation or (encoded) quantum gate generation. The value of  $\mathcal V(\widetilde X(t))$ is related to a notion of distance between $X(t)$ and $\overline X(t)$.
The tracking control is meant to assure that $\frac{d}{dt} \mathcal V(\widetilde X(t)) \leq 0$.
 Computing $\frac{d}{dt} \mathcal V(\widetilde X(t))$, the linearity of the directional derivative and \refeq{eErrorSystem} gives:
\[
 \dot{\mathcal V}(t) =    \sum_{k=1}^{m} \widetilde{u}_k(t) {\left\{ \nabla_{\widetilde X}{\mathcal V} \cdot ({\widetilde S}_k \widetilde X) \right\} }
\]
 So, choosing some gain $K>0$,  one may define the Lyapunov-based control:
\begin{equation} \label{eUtilk}
 \widetilde{u}_k(t) =  -K \left\{ \nabla_{\widetilde X}{\mathcal V} \cdot ({\widetilde S}_k \widetilde X) \right\} , k=1, \ldots, m
\end{equation}
Then, for the closed loop system, the derivative of the Lyapunov function will be:
\begin{equation}
\label{eVdot}
\dot{\mathcal V}(t) = - \sum_{k=1}^{m} \frac{\widetilde{u}_k(t)^2}{K} =- K~\left\{ \nabla_{\widetilde X}{\mathcal V} \cdot ({\widetilde S}_k \widetilde X) \right\}^2\leq 0
\end{equation}
This will assure that the Lyapunov function is always nonincreasing for the closed loop system. As $\mathcal V(\widetilde X(t))$ is related to a notion of distance
of $\overline X(t)$ and $X(t)$, such distance will be non-increasing on the interval $[0, T_f]$, also.

 \subsection{The Closed Loop System and RIGA}
 \label{sRigaDescription}

 Consider that a reference trajectory  $\overline X$ with final condition  $\overline X (T_f) = {\overline X}_{goal}$ is chosen. Recall that $\overline X : [0, T_f] \rightarrow \Un$ is a solution of \refeq{reference}.
 Define the closed loop system by:
\begin{subequations}
\label{eClosedLoopComplete}
\begin{eqnarray}
  \dot{{\widetilde X}}(t) & = & \left( \sum_{k=1}^{m} \widetilde{u}_k(t) {\widetilde S}_k(t) \right)
{\widetilde X}(t), \\
  {\widetilde  X}(0) & = &  {\overline X}^\dag (0),\\
\label{Sktilde} \widetilde{S}_k(t) &  =  &  {\overline X}^\dag (t) S_k {\overline X}(t)\\
\label{utilde2} { \widetilde u}_k(t) &  =  & - K \left\{ \nabla_{\widetilde X}{\mathcal V} \cdot ({\widetilde S}_k \widetilde X(t)) \right\}
\end{eqnarray}
\end{subequations}
% \end{widetext}
Note that an equivalent way for describing the closed loop system is:
\begin{subequations}
\label{eClosedLoopComplete2}
\begin{eqnarray}
  \dot{{ X}}(t) & = & \left( S_0 + \sum_{k=1}^{m} { u}_k(t) { S}_k(t) \right)
{ X}(t), \\
  { X}(0) & = & I,\\
\label{utilde} { \widetilde u}_k(t) &  =  & - K~\left\{ \nabla_{\widetilde X}{\mathcal V} \cdot ({\widetilde S}_k \overline{X}(t)  X(t)) \right\},\\
{ u}_k(t) &  =  & {\overline u}_k(t) +   {\widetilde u}_k(t)
\end{eqnarray}
\end{subequations}

A simplified description of RIGA that is useful for understanding its main features is the following algorithm:
\begin{itemize}
 \item[$\sharp 1.$] Choose the \textbf{seed input} $\overline u^0: [0, T_f] \rightarrow \RR^m$.\\
 Execute the steps $\ell=1, 2, 3, \ldots$\\[0.2cm]
  \textbf{BEGIN STEP $\ell$}.\\
  \begin{itemize}
 \item[$\sharp 2.$] Set ${\overline u}(\cdot) = {\overline u}^{\ell-1}(\cdot)$. \\
             Set ${\overline X}(T_f) = X_{goal}$. \\
             Integrate numerically backwards the  \textbf{(open loop) reference system}~\eqref{reference} from $T_f$ to $0$. \\
               Obtain ${\overline X}^{\ell}(t)  = \overline X(t)$,  for $t \in [0, T_f]$. \\
\item [$\sharp 3.$]   Set $X(0) = I$. Integrate numerically  forward the \textbf{closed loop system }~\eqref{eClosedLoopComplete2} from $0$ to $T_f$.\\
               Obtain $X^{\ell}(t) = X(t)$ for $t \in [0, T_f]$ . \\
              Set ${\overline u}^{\ell}(\cdot) = {\widetilde u}(\cdot) +  {\overline u}(\cdot)$ (closed loop input)\\
\item [$\sharp 4.$] If the final infidelity $\mathcal I(X^{\ell}(T_f))$  is acceptable, \\
             then terminate RIGA. Otherwise, execute step $\ell+1$.
 \end{itemize}
 \item[]
 \textbf{END STEP $\ell$}
 \end{itemize}
 
 It is clear from \eqref{eVdot} that the Lyapunov function $\mathcal V(\widetilde X(t))$ is non-increasing during a step $\ell$ of RIGA. A main feature of RIGA is the fact that the Lyapunov function is non-increasing  along all the steps of the algorithm (see Appendix \ref{sNonIncreasing} for a complete discussion about this question). In other words, one may expect monotonic convergence of the gate fidelity along the steps of RIGA. This is not the case for gradient based algorithms like GRAPE.
 
\subsection{The partial trace Lyapunov Function}

In \cite{PerSilRou19}, only the case where $\nbar$ is equal to $n$ was considered, that is, the case  for which the dimension of the quantum gate coincides with the dimension of the system. In this work we are also interested in the encoded gates, that is, the case where $\nbar$ is less than $n$. Recall also that $E, F$ are complex $n \times \nbar$  matrices with
columns respectively given by  $\{e_1, e_2, \ldots, e_{\nbar}\}$ and $ \{f_1, f_2, \ldots, f_{\nbar}\}$. Let $X_E, X_F \in \Un$ be any pair of matrices such
that $X_E = [ E \; \widehat E]$ and $X_F = [F \; \widehat F]$. The quantum gate generation problem is then equivalent
to find $u : [0, T_f] \rightarrow \RR^m$ that steers the system from the initial condition $X_E$ to the final condition $X_F$. By right-invariance, the problem could be solved by steering
the system from the identity to $X_{goal} = X_F X_E^\dag$. In principle, this problem could be tackled considering the same approach of
 \cite{PerSilRou19}, but it is clear that the choices of $\widehat E$ and $\widehat F$ are transparent to the problem, and then these choices will generate artificial restrictions in this context.
Define the partial trace function $\mathcal V: \Un \rightarrow \RR$ given by
 \begin{equation}
 \label{eLyap_function}
\mathcal V (\widetilde X) = 2 \nbar - 2 \Re \left[ \trace \left( E^\dag \widetilde X E \right) \right],
 \end{equation}
where $\Re (z)$ denotes the real part of a complex number $z$.
Then Prop. \ref{pCritical} of section \ref{s:Refinements} will show that, if $X, \overline X \in \Un$, and $\widetilde X = {\overline X}^\dag X$, then $ \mathcal V(\widetilde X) = \| \overline X E - X E\|^2$.
If a reference trajectory $\overline X(t)$ is such that $\overline X(t) = X_{goal}$, from Prop. \ref{pCritical} it will be clear that this function is a good candidate to measure the relevant notion of distance between  $\overline X(t)$  and the trajectory $X(t)$. It is easy to show from \refeq{eClosedLoopComplete} and \refeq{eLyap_function}, that the feedback law \refeq{utilde} that is related to this choice of Lyapunov function is given by:
\begin{equation}
\label{eFeedbackLaw}
 \widetilde u_k(t) = 2 K \Re \left[ \trace \left( E^\dag {\widetilde S}_k \widetilde X E \right) \right].
\end{equation}

\subsection{A Lyapunov function for the case where $n$ is equal to $\bar n$}

When $\nbar$ coincides with $n$, a convenient choice of the Lyapunov function is the one defined in  \cite{PerSilRou19}.
Let 
\begin{equation}
\label{eDefine_W}
\mathcal W = \{ X \in \Un ~|~\det(X+I) \neq 0\}.
\end{equation}
Then define $\mathcal V : \mathcal W \rightarrow \RR$ by\footnote{This Lyapunov function $\mathcal V( \widetilde X)$ coincides with
the square of the Frobenius norm of the Cayley transformation $\mathfrak{W}(\widetilde X) = (\widetilde X - I)(\widetilde X + I)^{-1}$ (see section \ref{sSmooth}). In other words, $\mathcal V( \widetilde X) = \|\mathfrak{W}(\widetilde X)\|^2$.}
 \begin{equation}
 \label{eLyap2}
 \begin{array}{rcl}
 \mathcal V(\widetilde X) &  = & \trace[ (\widetilde X-I)^2 (\widetilde X+I)^{-2}] \\
  & = & \| (\widetilde X-I)(\widetilde X+I)^{-1}\|^2
  \end{array}
 \end{equation}
It is shown in that paper that $\mathcal V(\widetilde X) = \sum_{i=1}^n \tan(\frac{\theta_i}{2})^2$, where $\exp(\imath \theta_i)$ are the eigenvalues of $\widetilde X$.
The unique critical point of this function is the identity matrix. However $\mathcal V$ is unbounded in $\Un$.
 Its singular points form the set $\mathcal S = \{ X \in \Un ~|~\det(X+I) =0\}$, corresponding to the existence of at least an eigenvalue
 of $X$ that is equal to minus one. The corresponding (unbounded) feedback law is given by (see \cite{PerSilRou19}):
\begin{equation}
 {\widetilde u}_k(t) =   K \trace [ Z(\widetilde X) {\widetilde S}_k]
\end{equation}
where  $Z({\widetilde X}) = {\widetilde X} ({\widetilde X}-I)({\widetilde X}+I)^{-3}$ is also unbounded.

Note that critical points must be avoided for the partial trace Lyapunov function. However, for the Lyapunov function
\refeq{eLyap2}, the singular points must be avoided. The same strategy that avoids eigenvalues at $-1$ (that may be critical points of the
partial trace  \refeq{eLyap_function}, or singular points of \refeq{eLyap2},
will work for both Lyapunov functions (see Section \ref{sFirstStrategy}).

\subsection{Window and saturation functions}
\label{sWindow}

In some cases it will be useful to replace \refeq{eFeedbackLaw} by:
\begin{equation}
\label{eFeedbackWindow}
 { \widetilde u}_k(t)  =   - K~\mbox{{\Window}(t)} \Re \left[ \trace \left( E^\dag {\widetilde S}_k \widetilde X E \right) \right],
\end{equation}
 where the function $\mbox{{\Window}}(t)$ may be a
\emph{Hamming-like} window function:
\begin{equation}
\label{eHamming}
\mbox{{\Window}}(t)  = \frac{1}{2} \left[ 1 - cos \left(2 \pi \frac{t}{T_f} \right)\right]
\end{equation}
Note that $\mbox{\Window}^{(j)}(0)= \mbox{\Window}^{(j)}(T_f) = 0$, for $j \in \{0, 1\}$, and $\mbox{\Window}(T_f/2) = 1$.
The effect of such window function will be discussed in the examples, but the main issue is to reduce the bandwidth of the
control pulses $u_k(t)$ by assuring that $u_k^{(j)}(0)=u_k^{(j)}(T_f) =0$  for $j \in \{0, 1\}$, with an smooth variation between the endpoints. Those restrictions must hold for the seed input also,
as described in section \ref{sSeed}. Taking the window function identically equal to one could be acceptable in some situations.

It is also useful to consider a saturation function for the control pulses.
The classical saturation function is not smooth.
It is much more convenient to define a smooth saturation function of the form
\[
\phi(x) = \frac{2~x}{\pi} \arctan(\frac{\pi~x}{2})
\]
A smooth saturation between $u^*$ and $-u^*$ can be obtained by the function  
\begin{equation}
\label{eSat}
\mbox{sat}(x) = u^* \phi (\frac{x}{u^*}).
\end{equation}
Figure \ref{fSaturation} compares the traditional saturation and the smooth saturation.
%\begin{widetext}
\begin{figure*}[t]
    \centering{\includegraphics[scale=0.80]{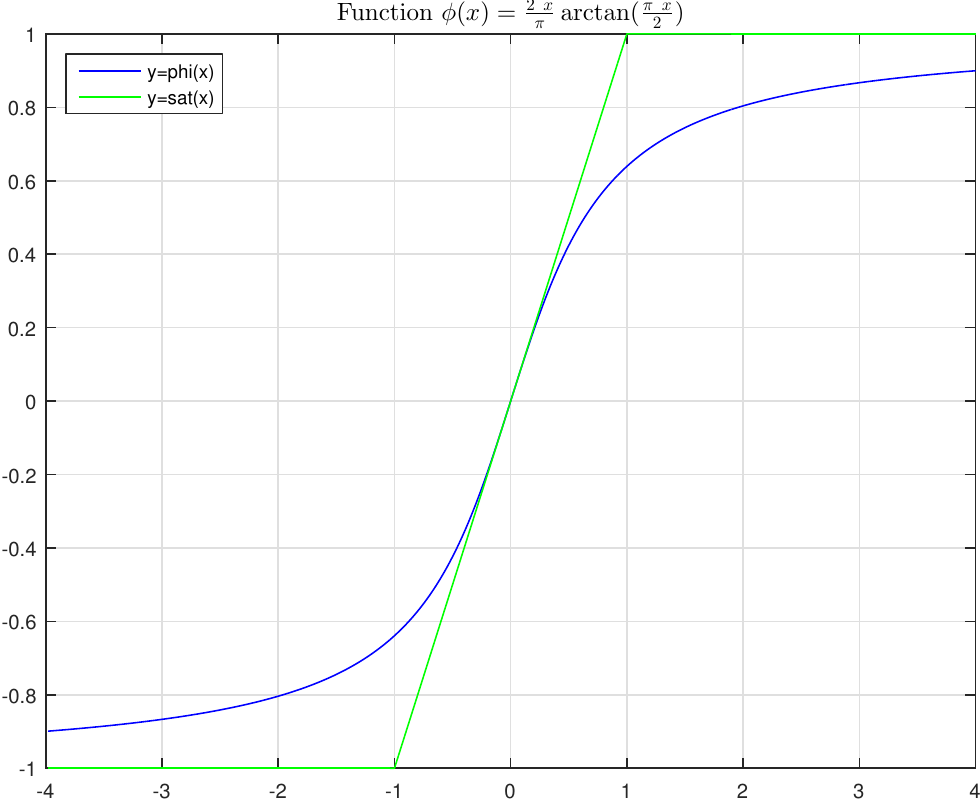}}
    \caption{Traditional saturation $\mbox{sat}(x)$ and smooth saturation $\phi(x)$.
    }
    \label{fSaturation}
\end{figure*}
%\end{widetext}

Let ${ \widetilde u}_k(t)$ be given by \refeq{eFeedbackWindow}.
Let $u_k^{max}$ be the maximum admissible absolute value of the input.  The final saturated input to be applied to the system is given by
\begin{subequations}
\begin{equation}
\label{eSATu}
 u_k^{sat} =
 \left\{
\begin{array}{l}
{\overline u}_k  + u^* \phi (\frac{{\widetilde u}_k}{u^*}), \mbox{if} \;{\widetilde u}_k \geq 0,\\
{\overline u}_k  + u_* \phi (\frac{{\widetilde u}_k}{u_*}), \mbox{if}\; {\widetilde u}_k < 0,
\end{array}
\right.
\end{equation}
where
\begin{equation}
\begin{array}{l}
 u^* = u_k^{max} - {\overline u}_k,\\
 u_* = u_k^{max} +{\overline u}_k
\end{array}
\end{equation}
\end{subequations}
Figure \ref{Saturation_policy} illustrate the action of the saturation policy. At an instant $t$, the value of  ${\overline u}_k (t)$ is
marked as a point in the (vertical) interval $[ -u_k^{max},  u_k^{max}]$. If ${\widetilde u}_k \geq 0$ the distance of this point to the maximum accepted value
of the input is $u^* = u_k^{max} - {\overline u}_k$. Otherwise, if ${\widetilde u}_k \geq 0$ the distance of this point to the minimum value
$-u_k^{max}$ is $u_* = u_k^{max} + {\overline u}_k$. This explains why the saturation policy \refeq{eSATu} is defined in that way.
The function \refeq{eSATu} is smooth, and the sign of
feedback is preserved  as in the case of the traditional saturation function.
In particular it is easy to see that this saturation policy preserves the non-positiveness of $\dot {\mathcal V}$. However, this signal invariance holds only if,
in each step of RIGA, the value of ${\overline u}_k (t)$ is
 in the interval $[-u_k^{max},  u_k^{max}]$, that is, all the inputs, including the seed, must respect
 the saturation restrictions.
 %\begin{widetext}
\begin{figure*}[t]
    \centering{\includegraphics[scale=0.80]{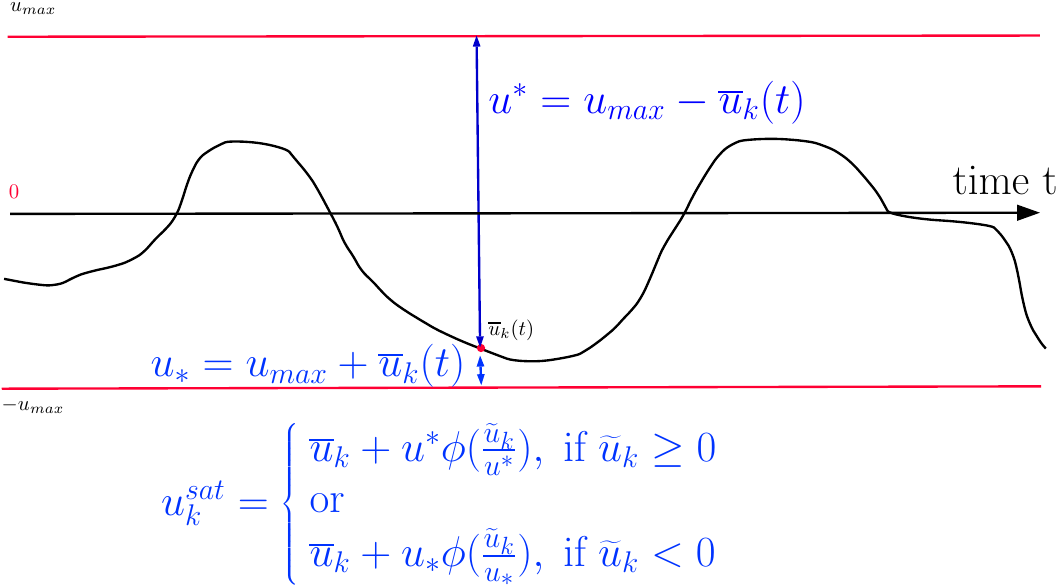}}
    \caption{Feedback saturation: one will always have $-u_{max} \leq u_k^{sat}(t)  \leq u_{max}$}.
     \label{Saturation_policy}
\end{figure*}
%\end{widetext}

\subsection{Choice of the seed input ${\overline u}^0(t)$}
\label{sSeed}

Let $\mbox{{\Window}}(t)$ be a chosen window function as defined in section \ref{sWindow}.
Fix some $T>0$ and consider  an input of the form
\begin{subequations}
  \label{refcon}
 \begin{eqnarray}
  \label{refconab1}
  \overline{u}_k (t) = \mbox{{\Window}}(t) \overline{u}_{k_1} (t),
 \end{eqnarray}
 where the ${\overline u}_{k_1}(t)$ are given by
 \begin{equation}
 \label{refconab}
 \sum_{\ell = 1}^{M} \left[ a_{k \ell} \sin(2\ell \pi t / T) + b_{k \ell} \cos(2\ell \pi t / T) \right],
 \end{equation}
  \end{subequations}
 for $k=1, \ldots, m$,
 which are a sum of a finite number $M$ of harmonics of $\sin(2\pi t / T)$ and $\cos(2\pi t / T)$, and whose amplitudes are parameterized by a pair of a randomly chosen vectors $(\mathbf{a},\mathbf{b}) \in  \RR^{m M} \times \RR^{m M}$, where:
 \[
 \begin{array}{l}
  \mathbf a  =  (a_{11}, a_{12}, \ldots, a_{1M}, \ldots, a_{m1}, a_{m2}, \ldots, a_{mM})\\
   \mathbf b  =  (b_{11}, b_{12}, \ldots, b_{1M}, \ldots, b_{m1}, b_{m2}, \ldots, b_{mM})
   \end{array}
 \]
 In \cite{PerSilRou19} it is shown for the case which $n = \nbar$, that such choice assures the convergence of RIGA with probability one in a precise sense that can be found in that paper\footnote{Without much loss of generality,
 those proofs are done for the case where the window function is absent (or identically equal to one).}. This result
 is supported from previous ideas of \cite{SilPerRou14} and \cite{SilPerRou16} that are mainly based on Coron return's method \cite{Cor07}. These proofs  are generalized for the case where $\nbar < n$, for considering the partial trace Lyapunov function of the present paper (see Appendix \ref{aMathematical}. The examples of section \ref{sExamples} presents further discussions on the choice of
 $M$, $T$, and the vector $(\mathbf{a},\mathbf{b})$. If one wants to consider input saturation, it is clear that the seed input $\overline{u}_k (t)$ must be inside the interval  $[ -u_k^{max},  u_k^{max}]$
 for all $t \in [0, T_f]$. Executing RIGA without such a restriction would imply the positiveness of $\dot {\mathcal V}$.

\section{Examples}
\label{sExamples}

\subsection{Coupled Transmon-qubits}
  \label{ssec:ex1}

This system consists in two coupled transmon-qubits. The aim is to implement an  encoded
C-NOT gate, and a state preparation, as well. This system was also considered in  \cite{LeuAbdKocSch17} for testing an implementation of GRAPE.
It is considered to be a benchmark because of the nature of control problem itself.
The Hamiltonians of the system
 are given by:
\begin{eqnarray*}
H_0 & = & J (b_1 + b_1^\dag)(b_2+b_2^\dag)\\
    & + & \left[ \sum_{i=1}^2 \omega_j b_j^\dag b_j + \frac{1}{2} \alpha_j b_j^\dag b_j (b_j^\dag b_j -1)\right]\\
H_{u_1} & = & \beta (b_1+ b_1^\dag)\\
H_{u_2} & = & \beta (b_2+ b_2^\dag)\\
H_{u_3} & = & \beta b_2^\dag b_2
\end{eqnarray*}
In the simulations, the model is truncated. Only  $n_c=7$ levels are considered for both transmons. So $n= n_c^2$.
The parameters of the system $\frac{\omega_1}{2 \pi} = 3.5 GHz$, $\frac{\omega_2}{2 \pi} = 3.9 GHz$, $\frac{\alpha_j}{2 \pi} = -225 MHz, j=1,2$, $\frac{J}{2 \pi} = 100 MHz$, $\frac{\beta}{2 \pi} = 1 GHz$
This value of $\beta$ means that the control inputs $u_1$, $u_2$ and $u_3$ are given in $GHz$. A fourth  input  $u_4$  (global phase) is included, and it corresponds to the Hamiltonian $H_{u_4} = I_{n_c} \otimes I_{n_c}$.
 The aim is to encode a C-NOT gate in the first two levels of the transmon cavities. The control qubit is the first transmon qubit. In this case the C-NOT flips the second qubit when the first one is set to one.
For the C-NOT gate, the corresponding matrices are $E = [ |0~0\rangle, |0~1\rangle, |1~0\rangle, |1~1\rangle ]$ and  $F =[ |0~0\rangle, |0~1\rangle, |1~1\rangle, |1~0\rangle ]$.
As the quantum models of both transmons are truncated to the levels $0,1,2, \ldots, n_c-1$. The  level $n_c-1$ will be called \emph{forbidden} level.
The population of the $2 n_c - 1$ states  $\{|i~j\rangle~|~ i, j \in \{0, 1, \ldots, n_c-1\}, ~ \mbox{where} ~ i = n_c-1~\mbox{or}~ j = n_c-1\}$
is called \emph{forbidden population}. Let $\Pi_{forb}$  be the orthogonal projector onto the ``\emph{forbidden space}''.
In the simulations we have defined the bad population  $\mathcal B(t)$ as the maximum norm of the column vectors of $B(t) = \Pi_{forb} X(t) E$. As $F^\dag$ is the projector
into the encoded space, the ``good population'' $G(t)$
is defined as the smaller complex norm of an element of the diagonal of $G(t) = F^\dag X(t) E$.
The main RIGA parameters are $T_f=10 ns$, $K = \frac{1}{\omega_1}$, $u_{max}=0.5 GHz$. The desired final infidelity
is $0.001$. We have chosen a number of simulation points $N_{sim} = 4000$ (see section \ref{sImplementation}).
The parameters of the seed of RIGA are $M=3$, $T=19 M \frac{2 \pi}{\omega_1}$, $A_m = \frac{0.2}{M}$. No window function
was used in simulation.
The  entries of the vectors $\mathbf{a}$, $\mathbf{b}$ defining the amplitudes of the harmonics of the seed are available in \cite{CODE_OCEAN_SMOOTH} (file \verb"ab.mat"
in the data directory). The figures \ref{Control_CNOT}, \ref{Spectra_CNOT}, \ref{Population_CNOT} illustrates the obtained results for the C-NOT
gate generation. After the computation of RIGA, the re-simulation of the system with $n_c = 10$ levels for both transmon cavities presented the infidelity 1.0004e-03 (is not changed {up} to a precision of 1e-6) , showing that the original truncation of the system is a good approximation.

\begin{figure*}[t]
    \centering{\includegraphics[scale=0.7]{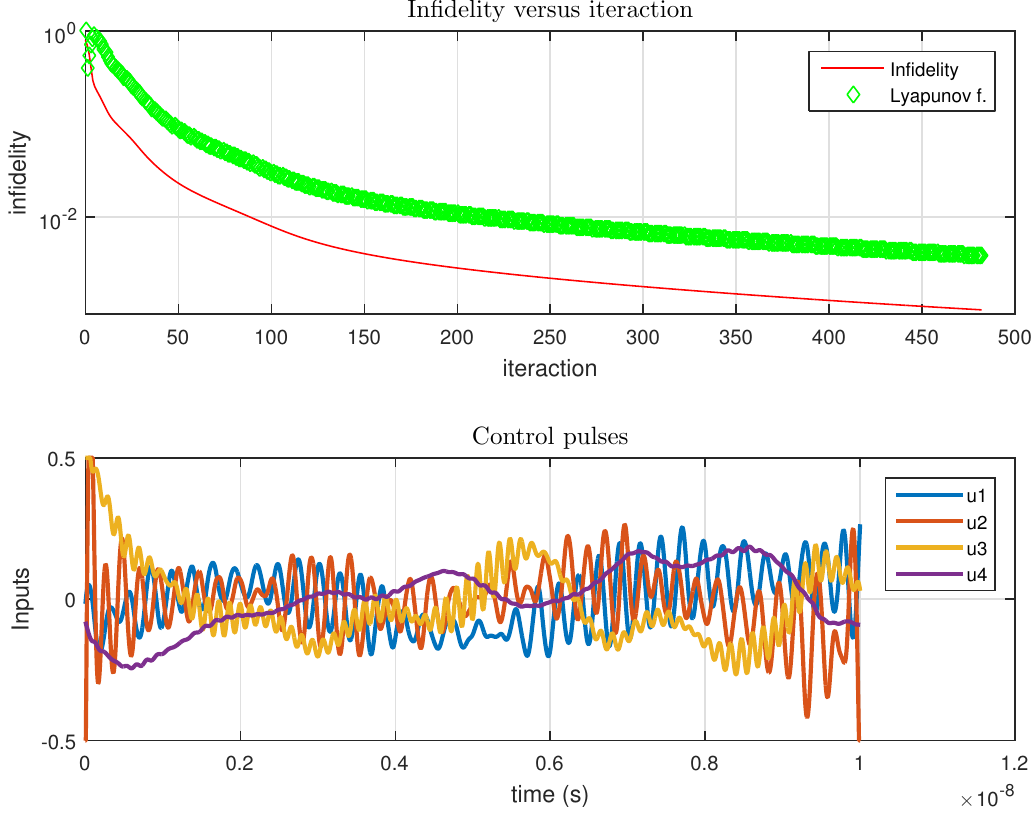}}
    \caption{
    \textbf{Results of RIGA for the Hadamard gate for system of sub-section~\ref{ssec:ex1}.
    Top: evolution of infidelity along the steps of RIGA.
    Bottom:  control pulses generated by  RIGA. }.}
     \label{Control_CNOT}
\end{figure*}

\begin{figure*}[t]
    \centering{\includegraphics[scale=0.7]{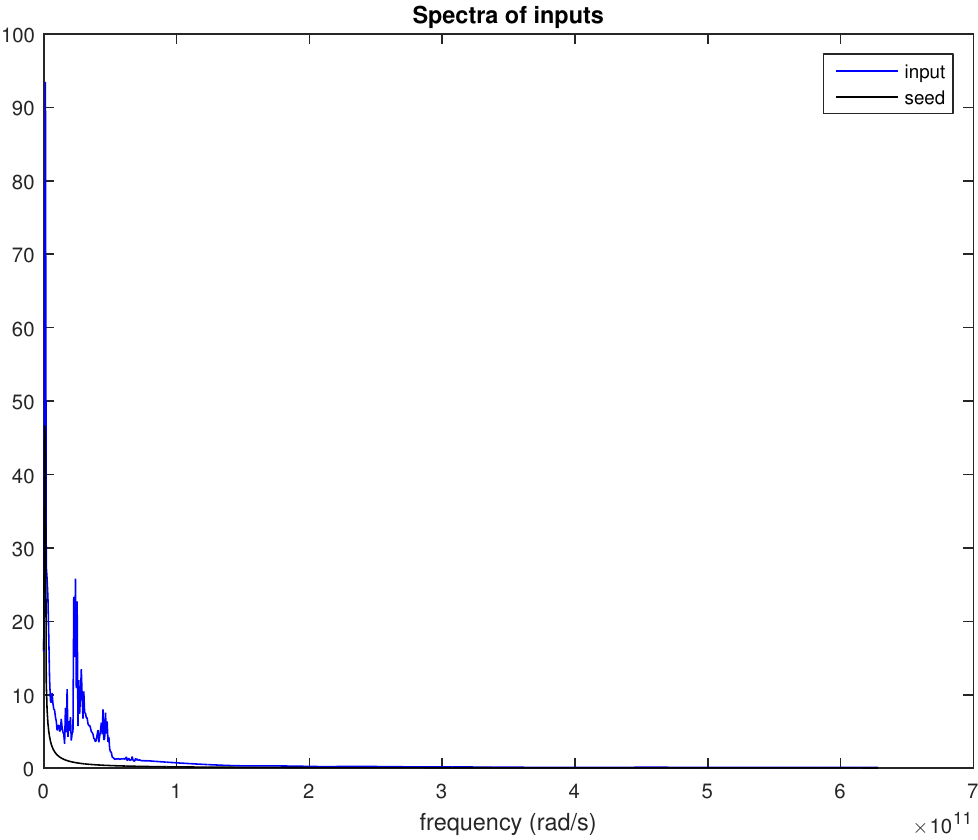}}
    \caption{
    \textbf{Spectra of the control signals and of the seed input for the Hadamard gate for the system of sub-section~\ref{ssec:ex1}.}}
     \label{Spectra_CNOT}
\end{figure*}

\begin{figure*}[t]
    \centering{\includegraphics[scale=0.7]{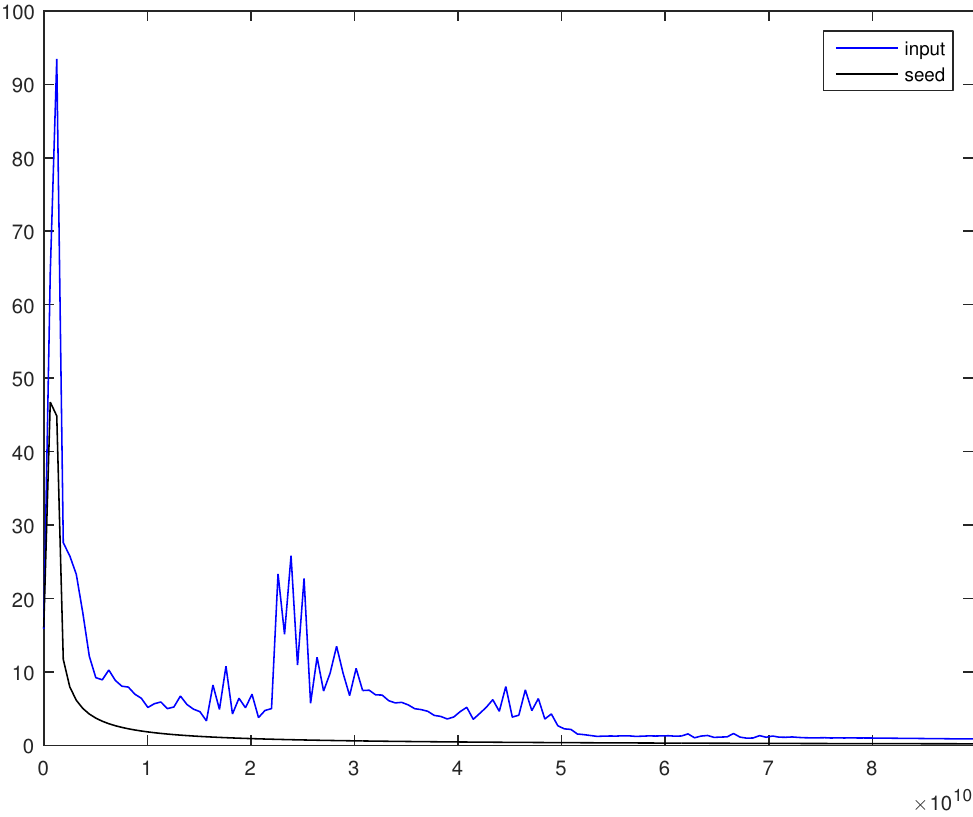}}
    \caption{
    \textbf{Results of RIGA for the Hadamard gate for system of sub-section~\ref{ssec:ex1}. 
    Zoom of the spectra of Figure \ref{Spectra_CNOT}}.}
     \label{Spectra_zoomCNOT}
\end{figure*}

\begin{figure*}[t]
    \centering{\includegraphics[scale=0.7]{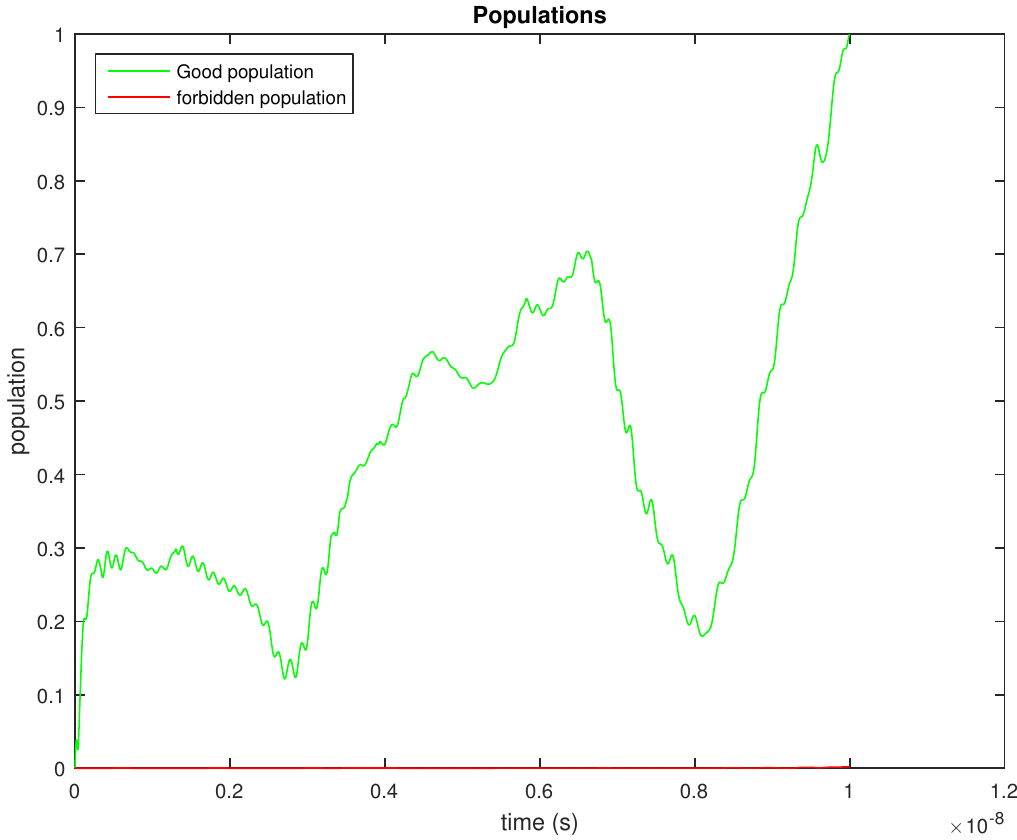}}
    \caption{
    \textbf{Populations for the Hadamard gate for system of sub-section~\ref{ssec:ex1}} corresponding to the re-simulation of the system with $n_c = 10$ levels for both transmon systems.}
     \label{Population_CNOT}
\end{figure*}

\begin{figure*}[t]
    \centering{\includegraphics[scale=0.70]{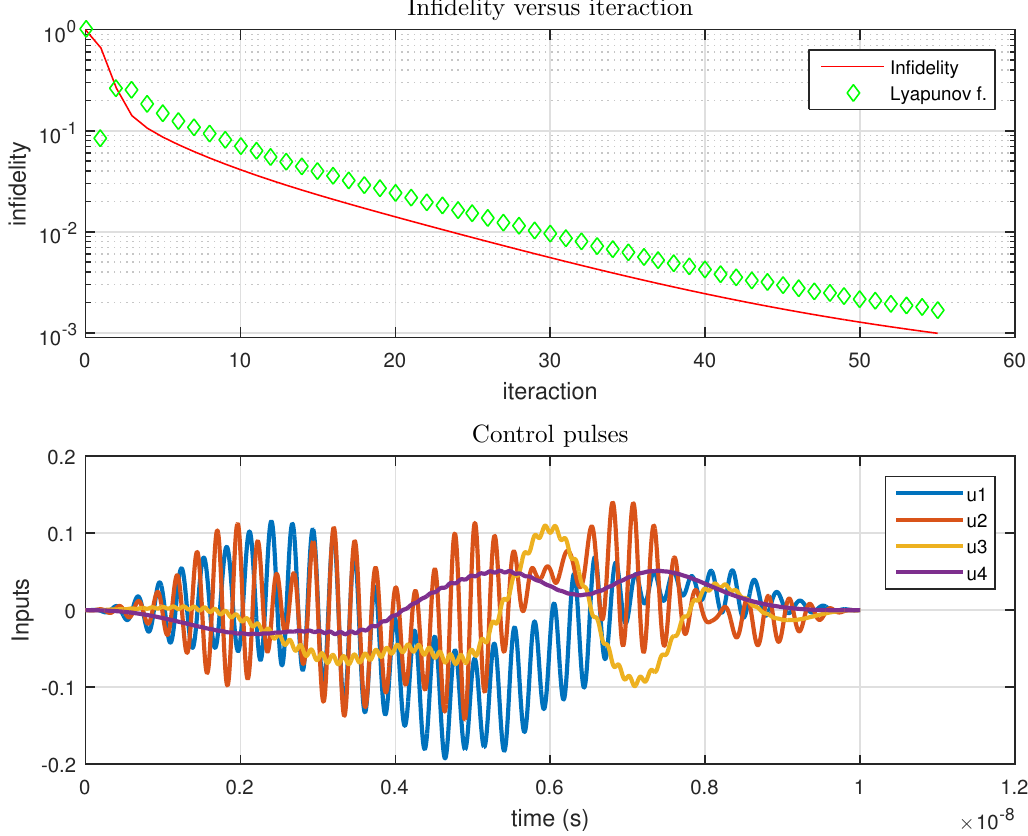}}
    \caption{
    \textbf{Results of RIGA for the state preparation for system of sub-section~\ref{ssec:ex1}.
    Top: evolution of infidelity along the steps of RIGA.
    Botton:  control pulses generated in the last step of RIGA. }.}
     \label{Control_State}
\end{figure*}

The implementation of RIGA in \cite{CODE_OCEAN_SMOOTH} presents the generated control pulses
and their spectra. This helps the selection of a good gain $K$, because a high gain tends to generate artificial high frequencies
in the control pulses. Furthermore, the spectra of the seed and of the feedback law is also shown. A good tip for choosing the parameters
$M$ and $T$ of the seed is to have in mind that it is not desirable to have artificial high frequencies in the seed, and at the same time, the seed must
``excite sufficiently'' the closed loop system in order to assure the convergence of RIGA \cite{PerSilRou19}. So studying the spectra of the generated feedback and comparing with the spectra
of the seed could give good insight to the choice of these parameters. After that, the implementation \cite{CODE_OCEAN_SMOOTH} produces also a simulation of the system with half step ($\delta=T_f/N_{sim}$, and the
half step is $\delta/2$), showing the infidelity between the final condition of the solutions with half step $\delta/2$ and the solution with step $\delta$. High gains $K$ tends to need
smaller steps (or larger $N_{sim}$) to assure a good precision. In the end of RIGA, the information of the ``forbidden population'' is a good measure for
estimating if the number of levels $n_c$ of the models were sufficiently large for assuring a good precision of the numerical simulation. In any case, the re-simulation  of the open-loop system
with a model with more levels will give a final answer to this question. High gains tends to excite the high levels of the cavities, and the simulation of more levels may be needed to assure a good precision.

For the state preparation, we have considered the same parameters, but now $E = [|0~0\rangle]$ and  $F=[\frac{\sqrt{2}}{2} \left( |1~0\rangle + |0~1\rangle \right)]$. Note that we have used
the window function \eqref{eHamming}.  The figure \ref{Control_State} illustrates the obtained results for such a  state preparation.

After the computation of RIGA, the re-simulation of the system with $n_c = 10$ levels for both transmon cavities presented the infidelity 9.9036e-04 (the difference between the infidelity that was obtained with $n_c=7$ indicates a precision of 1e-6) , showing that the original truncation of the system is indeed a good approximation.

It is important to observe in Figure \ref{Spectra_zoomCNOT} that the spectra of the the generated inputs includes the frequencies of the first three levels  of both trasmon systems (counting the level zero). When the control gains are well tuned, RIGA is able to generate the control pulses without generating spurious frequencies without the need of any penalty function\footnote{As a mather of fact, in the implementation of GRAPE described in \cite{LeuAbdKocSch17}, some penalty functions on the control pulses must be included in order to assure that the energy and the frequency band of the generated control pulses is adequate.}

%\clearpage

\subsection{Cavity coupled to a transmon qubit}
This second system considered in~\cite{CavityTransmonGrape} consists in 
a cavity-mode dispersively coupled to a transmon qubit.  The aim  is to implement an encoded Hadamard gate for the transmon-qubit.
In~\cite{CavityTransmonGrape}  the authors have generated the control pulses using GRAPE. In this work
the control pulses are  smooth and generated using an implementation of RIGA
available at \cite{CODE_OCEAN_SMOOTH}, which implements the theory of the present article.

The Hamiltonian of the model is of the form:
\[
H_{\mbox{\tiny cavity}} + H_{\mbox{\tiny transmon}} + H_{\mbox{\tiny int}} + H_{\mbox{\tiny drive}}
\]
where:
\begin{eqnarray*}
\frac{H_{\mbox{\tiny cavity}}}{\hbar} & = & \omega_c a^\dag a+ \frac{\kappa}{2} (a^\dag)^2 a^2\\
\frac{H_{\mbox{\tiny transmon}}}{\hbar} & = & \omega_T b^\dag b + \frac{\alpha}{2} (b^\dag)^2 b^2\\
\frac{H_{\mbox{\tiny int}}}{\hbar} & = & \chi (a^\dag a) (b^\dag b) + \frac{\Chi^\prime}{2} [(a^\dag)^2 a^2] [(b^\dag)^2 b^2]\\
\frac{H_{\mbox{\tiny drive}}}{\hbar} & = & u_1 (a + a^\dag) + \frac{u_2}{ \jmath} (a - a^\dag) \\
& + &  u_3 (b + b^\dag) + \frac{u_4} {\jmath} (b - b^\dag) \\
& + & u_5 I_{n_c} \otimes  I_{n_T}
\end{eqnarray*}
The controls $u_1, u_2$ (resp., $u_3, u_4$ are local controls of the cavity (resp., of the transmon). The control $u_5$ is an artificial global phase control, added
because our approach is in $\Un$ (and not in $\mbox{SU}(n)$). The parameters of the system are the following:  cavity frequency $\frac{\omega_c}{2 \pi} = 4452.6$ MHz, transmon frequency  $\frac{\omega_T}{2 \pi} = 5664.0$ MHz,
dispersive shift  $\frac{\chi}{2 \pi} = 2194.0$ kHz, transmon anharmonicity $\frac{\alpha}{2 \pi} = -236$ MHz, Kerr effect  $\frac{\kappa}{2 \pi} = -3.7$ kHz, 2\emph{nd} order dispersive shift  $\frac{\chi^\prime}{2 \pi} = 19.0$ kHz. As usual, a rotating frame was adopted in order to eliminate the terms  $\frac{H_{\mbox{cavity}}}{\hbar}$ and $\frac{H_{\mbox{transmon}}}{\hbar}$ of the whole Hamiltonian. In this way, the control pulses that are presented here are in fact
modulating respectively the amplitudes of the cavity frequency $\frac{\omega_c}{2 \pi}$ for $u_1$ and $u_2$  (resp. the transmon frequency $\frac{\omega_T}{2 \pi}$ for $u_3$ and $u_4$).

Standard notations of the states $|k\rangle, k \in \NN$ of the cavity
and for the transmon-qubit are considered here. One shall define the states $| \pm Z_L \rangle$ of the cavity by (see \cite{CavityTransmonGrape}):
\begin{eqnarray*}
|+Z_L \rangle & = & \sum_{n} \frac{\alpha^{4 n}} { \sqrt{4 n}} | 4 n \rangle,\\
|-Z_L \rangle & = & \sum_{n} \frac{\alpha^{4 n+2}} { \sqrt{4 n+2}} | 4 n +2 \rangle
\end{eqnarray*}
The state $|0\rangle$ of the transmon will denoted by $g$ and the state $|1\rangle$ will be denoted by $e$.
In the simulations, the states  $| \pm Z_L \rangle$ are truncated up to $n_c$ levels and then renormalized.

In order to define the encoded gate to be implemented, define the vectors $e_1 = |0 \rangle \otimes g$, $e_2 = |0 \rangle \otimes e$, $f_1= |+Z_L \rangle \otimes g$ and $f_2 = |-Z_L \rangle \otimes g$. In this example,
an encoded
 \emph{Hadammard Gate} is defined by $e_i \mapsto h_{1i} f_1 +  h_{2i} f_2, i=1,2$
 where $h_{ij}= \{H\}_{ij}$ is the Hadamard  matrix:
 \[
 H = \frac{1}{\sqrt{2}} \left[
 \begin{array}{cc}
  1 & 1\\
  1 & -1
 \end{array}
  \right]
 \]
The main RIGA parameters are $N_{sim} =500$, $T_f =1.1 \mu s$ (which is approximately $2.4 \left[ \frac{2 \pi}{\chi} \right]$),
$K=$ 0.5e-6, $u_{max} = 5$, using a Hamming-like window function given by  \refeq{eHamming}. The parameters
of the seed are $M=3$, $T=M~T_f/(2 \pi)$, $A_m = K/2$. The vector of coefficients $\mathbf{a}$  and $\mathbf{b}$ are
the same of the file \verb"ab.mat" of the directory data of \cite{CODE_OCEAN_SMOOTH}.
The desired infidelity is $0.001$.
 The cavity model is truncated to $n_c = 20$ levels, and the transmon model is truncated to $n_T = 4$ levels.
The figures \ref{Control_Hadamard}, \ref{Spectra_Hadamard} illustrate the obtained results. A  re-simulation of the system with higher truncation up to   $n_c = 25$ and $n_T = 6$ for the cavity/transmon
provide an infidelity 9.8577e-04, showing that the original truncation is  adapted. Figure  \ref{Population_Hadamard} show the populations for the resimulation, showing that the "forbidden population" is almost absent, indeed\footnote{The forbidden population is the  population of the levels $25$ and $6$ respectively for the cavity and the transmon.}. 

\begin{figure*}[t]
    \centering{\includegraphics[scale=0.70]{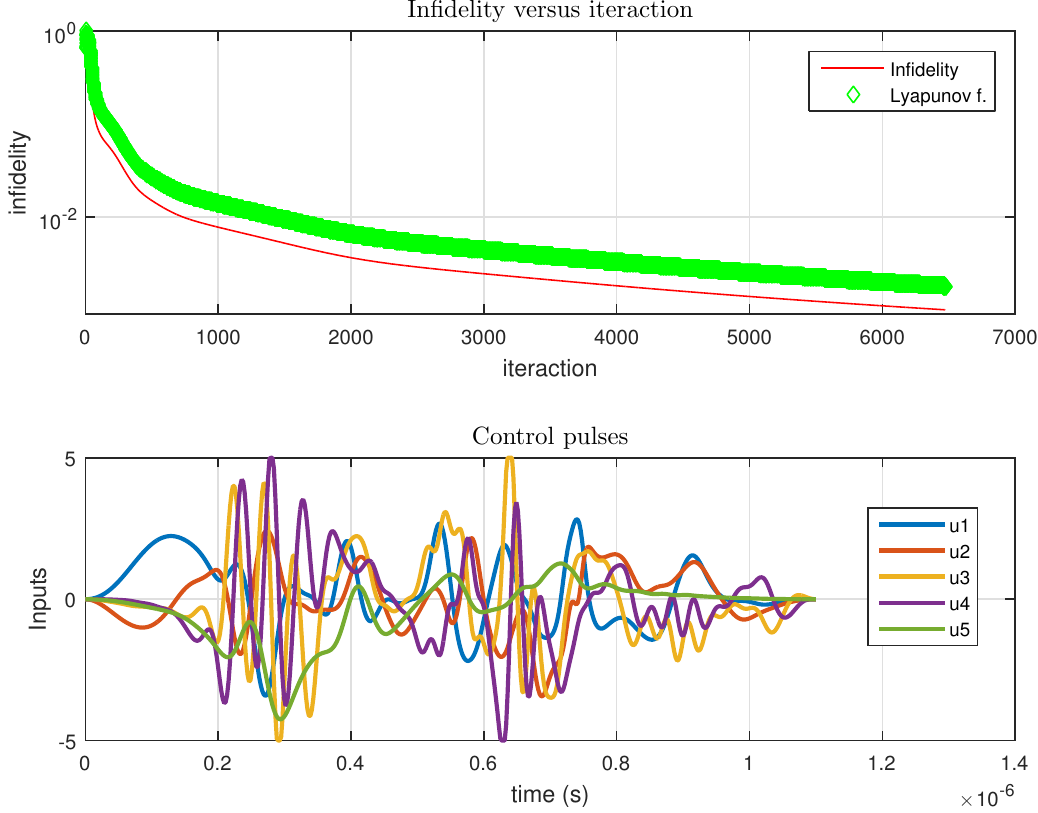}}
    \caption{
    \textbf{Results of RIGA for the Hadamard gate for the second example.
    Top: evolution of infidelity along the steps of RIGA.
    Botton:  control pulses generated in the last step of RIGA. }.}
     \label{Control_Hadamard}
\end{figure*}

\begin{figure*}[t]
    \centering{\includegraphics[scale=0.70]{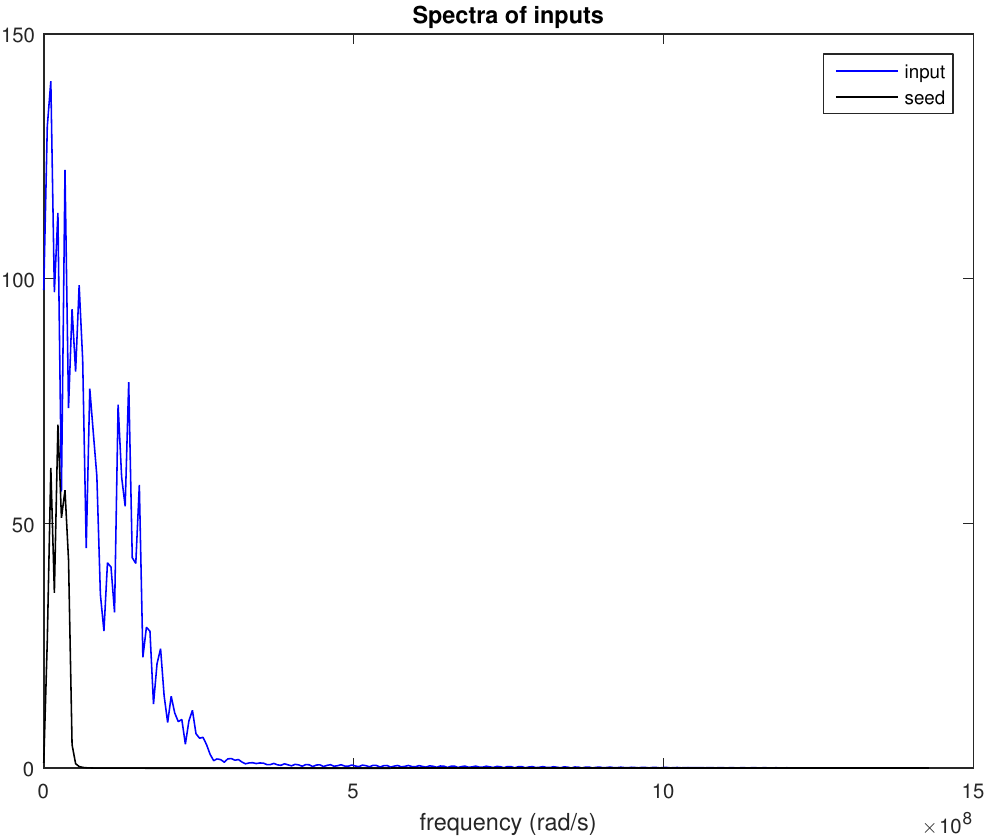}}
    \caption{
    \textbf{Spectra of control inputs for the Hadamard gate for the second example.}.}
     \label{Spectra_Hadamard}
\end{figure*}

\begin{figure*}[t]
    \centering{\includegraphics[scale=0.70]{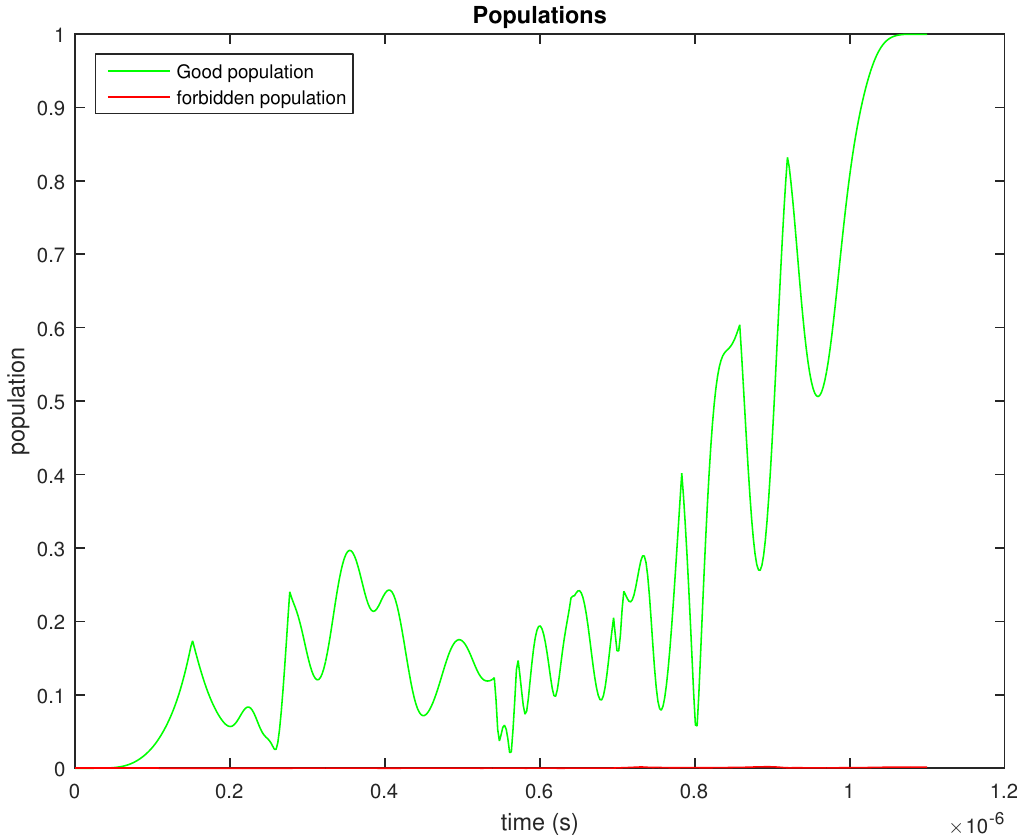}}
    \caption{
    \textbf{Populations for the Hadamard gate for the resimulation of the second example with a truncation of $25$ and $6$ levels respectively of the cavity and the transmon.}.}
     \label{Population_Hadamard}
\end{figure*}

\subsection{$N$-coupled qubits}
\label{sN_coupled}

\begin{figure*}[t]
    \centering{\includegraphics[scale=0.70]{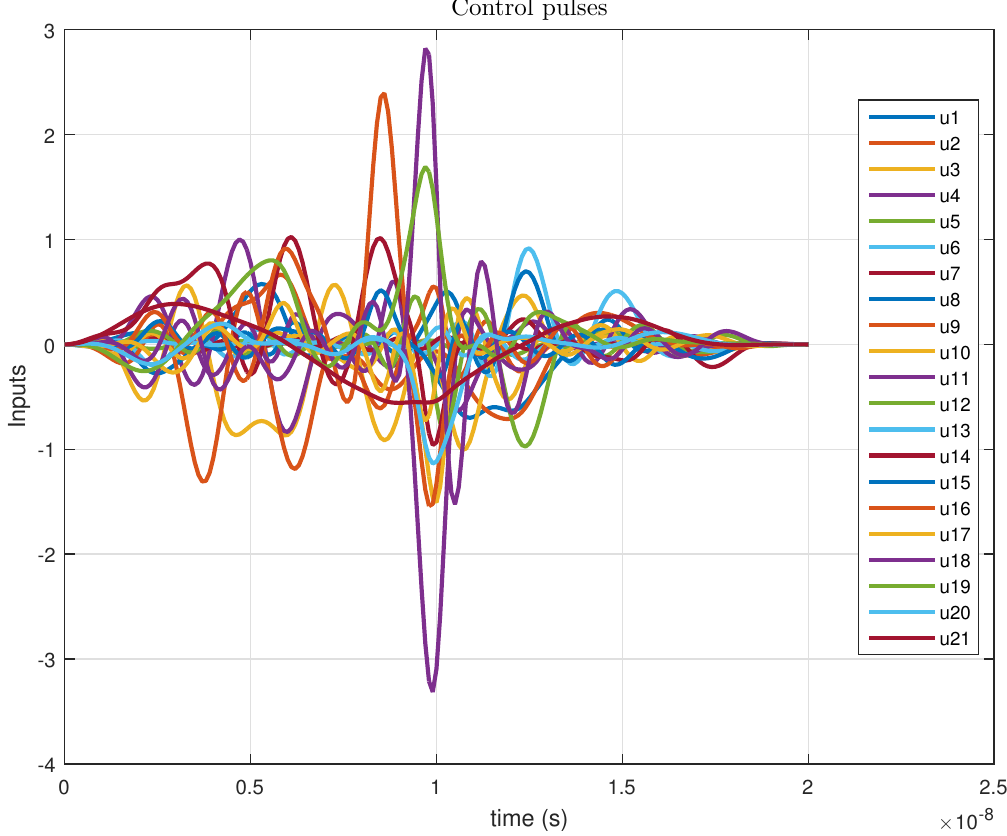}}
    \caption{Third Example -- Numerical experiment for $N=10$ qubits: {Generated control inputs (with the window option)}.}
     \label{fControl}
\end{figure*}

\begin{figure*}[t]
    \centering{\includegraphics[scale=0.70]{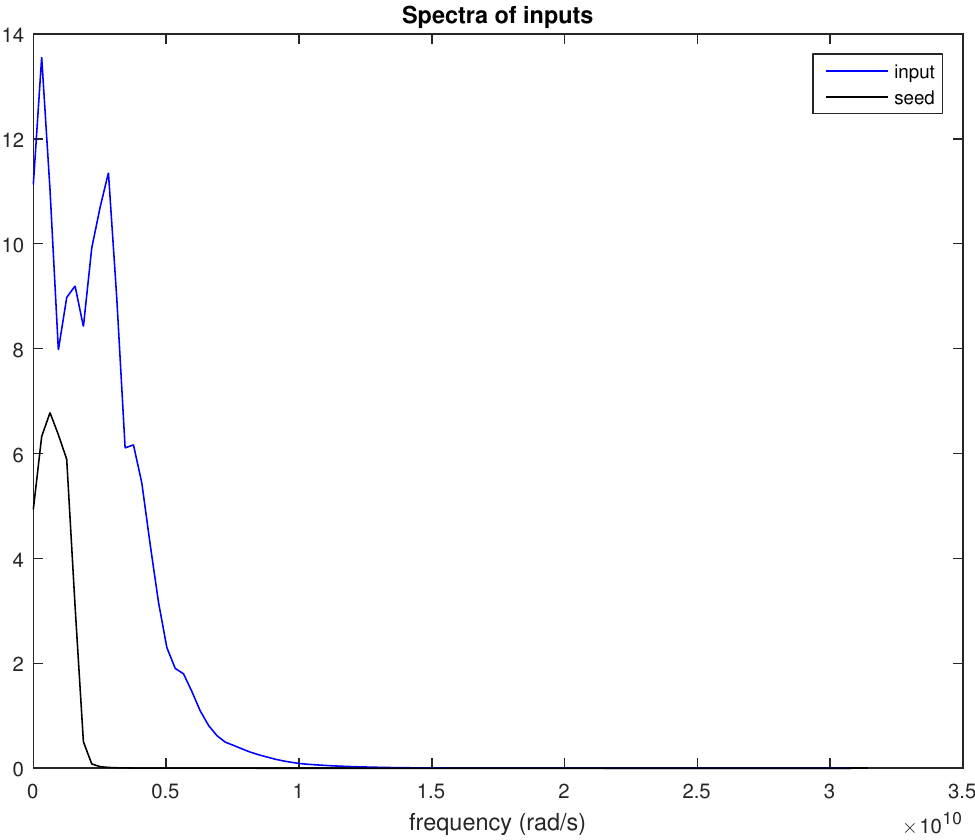}}
    \caption{Third Example -- Numerical experiment for $N=10$ qubits: \textbf{Average (in $k$) of the spectra of the seed ${\overline u}^0_k(t), k=1, \ldots , m$ of RIGA and of the generated control inputs}.}
     \label{fSpectra}
\end{figure*}

This example is  a benchmark proposed  in  \cite[section D]{LeuAbdKocSch17} for testing an
implementation of GRAPE for a large dimensional Hilbert space. One considers here  such a composite system made  of  $N$ qubits where $N$ varies from 3 to 10  with   Hilbert space dimensions from $2^3=8$ to  $2^{10}=1024$. The Hamiltonian  reads
\begin{eqnarray*}
 H(t) & = & J_0 \left[\sum_{s=1}^{N-1} \sigma_z^{(s)} \sigma_z^{(s+1)} \right] \\
      & + & J \left[\sum_{k=1}^{N} u_{x_k}(t) \sigma^{(k)}_x +  u_{y_k}(t) \sigma^{(k)}_y \right] + J_g u_g(t) I
\end{eqnarray*}
where $\sigma_x, \sigma_y, \sigma_z$ are the usual  Pauli matrices.
The artificial input $u_g(t)$ is only a global phase input. The goal is to generate the "large Hadamard gate"  formed by  the tensor product of  $N$  Hadamard qubit-gates. In the present example, $\nbar$ is equal to $n$. The dimension of the Hilbert space is $n= 2^N$ and the number of control inputs is $m = 2 n +1$.
The system parameters are $J_0 = J = J_g = 2 \pi 100 MHz$.
The main RIGA parameters are   $T_f = (2 N)~ns$, $N_{sim}=20 N$, $u_{max} = 5$.
The feedback gains are given by $K = \frac{1}{J} \mathcal K(N)$, where $\mathcal K(N)$ is the $N$-th entry of the vector
$[10, 10, 10, 2, 2, 1, 0.5, 0.25, 0.125, 0.0625]$. The values of gains $K$ were chosen with the aid of an option of the piecewise-constant  implementation of RIGA
\cite{CODE_OCEAN_CONSTANT}. The  values of the entries of $\mathbf a , \mathbf b$  are available in the file \verb"ab.mat" of the directory data
of \cite{CODE_OCEAN_SMOOTH}. The value of  $M$  is the $N$-th entry of the vector $[10~10~11~10~14~14~14~14~14~14~]$, $T = \pi T_f$ and $A_m =2/M$.
The desired infidelity is $0.001$.
The numerical experiments with this system are shown in the next table:

\begin{tabular}{|c|c|c|c|}
%\label{tsmooth}
  \hline
  % after \\: \hline or \cline{col1-col2} \cline{col3-col4} ...
  % nqubit Infidelity_final1 Step1 Time_RIGA Infidelity_final  Step_open Step_closed Time_Fixed Time_RIGA+Time_Fixed window];
  N         & Infidelity &  steps  &  Runtime\\
  (qubits)   &           &              &   $(s)$ \\
  \hline
  $3.0 $&$ 9.99\cdot 10^{-4} $&$ 55.0 $&$ 2.59  $\\ $
  4.0 $&$ 9.87\cdot 10^{-4} $&$ 77.0  $&$  8.17 $ \\ $
  5.0 $&$ 9.95\cdot 10^{-4} $&$ 477.0 $&$ 135.0 $\\ $
  6.0 $&$ 9.87\cdot 10^{-4} $&$ 84.0 $&$  90.8 $ \\ $
  7.0 $&$ 9.73\cdot 10^{-4} $&$ 130.0 $&$ 632.0 $ \\ $
  8.0 $&$ 9.98\cdot 10^{-4} $&$ 148.0 $&$ 5166.0 $ \\ $
  9.0 $&$ 9.97\cdot 10^{-4} $&$ 362.0 $&$ 1.24\cdot 10^5 $ \\ $
  10.0 $&$ 9.95\cdot 10^{-4} $&$ 229.0 $&$ 5.08\cdot 10^5 $\\
 \hline
\end{tabular}

The results for $N=10$ qubits are shown in figures \ref{fControl} and \ref{fSpectra}. It must be pointed out that, for $N=10$ qubits, the propagator $X(t)$ is a $1024\times 1024$ matrix. This shows that RIGA is able to tackle high-dimensional systems.

\section{RIGA refinements and some mathematical details}
\label{s:Refinements}

The description of RIGA of section \ref{sRigaDescription} is very compact, but it cannot include the scenario that is faced by the more complete version of RIGA. In fact, as the control is Lyapunov-based, and the Lyapunov function contains singular and/or critical points, the goal  matrix $X_{goal}$ must be chosen step-wise in order to avoid such undesirable points. A fundamental property of RIGA to be shown is the fact the Lyapunov function of the final error $\widetilde {\overline X}(T_f)^\dag X(T_f) = X_{goal}^\dag X(T_f)$ is always a non-increasing sequence along the steps of RIGA.  This nice property does not hold any more during the steps for which the application of a strategy for avoiding singular and/or critical points of the Lyapunov function are necessary.

The implemented version of RIGA is given by the following algorithm\footnote{
This description of RIGA is a little bit different from the one of \cite{PerSilRou19}, but it is easy to show their equivalence from the right-invariance of the quantum system.}:
\begin{center}
 \textbf{RIGA (ALGORITHM 1)}
\begin{tabbing}
 \=123 \=123 \=123 \=123 \kill\\
 $\sharp 1.$ Choose the seed input $\overline u^0: [0, T_f] \rightarrow \RR^m$.\\
 Execute the steps $\ell=1, 2, 3, \ldots$\\
  \textbf{BEGIN STEP $\ell$}.\\
 $\sharp 2.$ \> \>  Set ${ u}(\cdot) = {\overline u}^{\ell-1}(\cdot)$. Set $  X(0) = I$.\\
             \> \>  Integrate numerically the  open loop system \\
             \>  \>   \refeq{cqs}. Obtain $X^{\ell-1}(t)  =X(t)$, for $t \in [0, T_f]$, \\
             \> \> and ${X}_f^{\ell-1} = { X}^{\ell-1}(T_f)$.\\
$\sharp 3.$ \> \>  If the final infidelity $\mathcal I(X_f^{\ell-1})$  is acceptable, \\
            \> \> then terminate RIGA. Otherwise, continue.\\
$\sharp 4.$ \> \>  Construct an adequate $\Xgoalbar$.\\
             \> \> Compute $R_\ell = \left( X_f^{\ell-1} \right)^\dag \Xgoalbar$.\\
 $\sharp 5.$  \> \> Define ${\overline X}^{\ell-1} (t) = \overline X(t) = X^{\ell-1}(t) R_\ell$ \\
            \> \>  (new reference defined by right translation)\\
            \> \> Define ${\overline  u}(\cdot) = {\overline u}^{\ell-1}(\cdot)$.\\
$\sharp 6.$ \> \>   Integrate numerically the closed loop \\
            \>  \>  system \refeq{eClosedLoopComplete2} from $X(0) = I$.\\
             \>  \>  Obtain $X^{\ell}(t) = X(t)$, for $t \in [0, T_f]$. \\
            \>  \>   Set ${\overline u}^{\ell}(\cdot) = {\widetilde u}(\cdot) +  {\overline u}(\cdot)$\\
 \textbf{END STEP $\ell$}\\
\end{tabbing}
\end{center}

\begin{remark}
\label{rRIGA} For further reference, the specific operations  of RIGA are numbered from $\sharp 1$ to $\sharp 6$.
 This will avoid confusion with the steps $\ell =1, 2, \ldots$ of RIGA.
From a pure mathematical point of view, operation $\sharp 6$, that regards the integration of system \refeq{eClosedLoopComplete2}
with initial condition ${\widetilde X}_0 = \overline X(0)^\dag X(0) = R^\dag_\ell$,
is equivalent
to the integration of \refeq{eClosedLoopComplete}
in closed loop with the input \refeq{eUtilk} with initial condition $X_0 = I$. However, from the numerical point of view,
different computation errors may be obtained (see  section \ref{sImplementation} for the details about the algorithm implementation).
 The choice of $\Xgoalbar$ in each step $\ell$
is done in order to avoid the critical and or singular  points of the Lyapunov function, if they do exist. The process of avoiding singular and critical
points (see Section \ref{sFirstStrategy}) may produce $\Xgoalbar$ such that $\Xgoalbar E $ is not equal to $\exp(\jmath \phi) F$.  Proposition \ref{pMonotonous} shows
that, if $\Xgoalbar E = \Xgoalbarellminusone E$, then the value of Lyapunov function $\mathcal V(\widetilde X(T_f))$ in the end  of step $\ell-1$ is coincident with
the  value  $\mathcal V(\widetilde X(0))$ of the beginning of step $\ell$. This is a key property of RIGA, as  discussed in Proposition  \ref{pMonotonous} in the Appendix.
From part (a) of Prop. \ref{pMonotonous}, the operations $\sharp 2$, $\sharp 4$, and $\sharp 5$ are shown to be equivalent to integrating system \refeq{reference} backwards from $\overline X(T_f) = \Xgoalbar$. Nevertheless the determination of $\Xgoalbar$ by a certain optimization process (to be discussed in section \ref{sFirstStrategy}) depends on $X_f^{\ell-1}$, which is computed by a forward integration of $\overline X  (t)$ with $X(0) = I$. Finally, in theory, the computation of $\sharp 2$ for step $\ell+1$ could be obtained from $\sharp 6$ in the previous step $\ell$. However, from a numerical point of view, the result is not the same as explained in Remark \ref{rSharps} of Section \ref{sSmooth}. The reader will find an explanation in that remark of the apparently unnecessary repetition of the computations  of $\sharp 2$ and $\sharp 6$.
 \end{remark}

The following notations and definitions are considered in the sequel:
\begin{definition}
\label{dFund}
Consider the following notations in the context of RIGA:\\
(a) $X^{\ell-1}(t)$  is the trajectory that is obtained in operation $\sharp 2$ of step $\ell$ of RIGA (which coincides
       \footnote{The algorithm is executed in this way for the sake of numerical issues (see the section \ref{sImplementation} regarding
the numerical implementation of RIGA).} with the trajectory $X(t)$ that is obtained in operation $\sharp 6$ of step $\ell-1$).
Note that $X_f^{\ell-1} = X^{\ell-1}(T_f)$.\\
(b) $X^{\ell}(t)$ is the trajectory (with $X(0) = I$)  that is obtained in operation $\sharp 6$ of step $\ell$  of RIGA.\\
(c) The goal matrix that is constructed in operation $\sharp 4$ of step $\ell$ of RIGA is denoted by  $\Xgoalbar$.\\
(d) The reference trajectory ${\overline X}^{\ell-1}(t)$ of step $\ell$ is the one that is obtained
in operation $\sharp 5$ of step $\ell$ of RIGA, and is such that\footnote{This is shown in part (a) of Prop. \ref{pMonotonous}.}  ${\overline X}^{\ell-1}(T_f) = \Xgoalbar$
and ${\overline X}^{\ell-1}(t) = X^{\ell-1}(t) R_\ell$.\\
(e) $\widetilde X^\ell(t) = \left( {\overline X}^{\ell-1}  (t) \right)^\dag X^{\ell}(t)$.\\
(f) $R_\ell = \left( X_f^{\ell-1} \right)^\dag \Xgoalbar$.\\
(g) ${\overline u}_k(t)$ is the reference input constructed  in operation $\sharp 6$ of step $\ell$ of RIGA.
\end{definition}

\subsection{The critical points of the \emph{partial trace} Lyapunov function}

The partial trace Lyapunov function is bounded with a bounded gradient, and so the resulting feedback law \refeq{eFeedbackLaw} is always bounded.
This is not the case  for the unbounded Lyapunov function that is used in \cite{PerSilRou19} in the context that considers that $n$ coincides with $\bar n$.
That Lyapunov function presents a set $\mathcal W = \{ W \in \Un ~|~\det(W+I)=0\}$ of singular points.
So an strategy to avoid the set $\mathcal W$ which is used in that paper is essentially to avoid eigenvalues at $-1$.
On the one hand, the partial trace Lyapunov function
does not admit singular points, but on the other hand, it admits a set $\mathcal G$ of nontrivial critical points\footnote{Recall that a critical point $\widetilde X \in \Un$ of
$\mathcal V$ is a point such that $\nabla_{\widetilde X} \mathcal V \cdot D = 0$ for all $D \in \un$.} that must be avoided by RIGA.
Given a  $n \times n$  complex matrix ${\widetilde W}$, it can be decomposed in four submatrices $ {\widetilde W}_{11},  {\widetilde W}_{12},  {\widetilde W}_{21},  {\widetilde W}_{22}$
where ${\widetilde W}_{11}$ is $\bar n$-square and
  ${{\widetilde W}} = \left[
 \begin{array}{cc}
 {\widetilde W}_{11} & {\widetilde W}_{12}\\
 {\widetilde W}_{21} & {\widetilde W}_{22}
 \end{array}
 \right]$.
 If $\widetilde W \in \Un$ it is straightforward to show that ${\widetilde W}_{21}$ is the null matrix if and only if  ${\widetilde W}_{12}$ is also null.
 The next proposition justifies why this Lyapunov function is called\emph{ partial trace } and why $\sqrt {\mathcal V( \widetilde X)} = \| (X - \overline X) E \|$ is
 is called\emph{ partial distance} between $X$ and $\overline X$.
 \begin{proposition}
 \label{pCritical}
 Let $X, \overline X \in \Un$ such that $\widetilde X = {\overline X}^\dag X$. Let  $X_E = [ E ~\; \widehat E]$ be a unitary  matrix, where the matrix $E$ is defined in Prob. \ref{Prob1} in the introduction of this work..
 Let $ W = X_E^\dag X  X_E$, $\overline W = X_E^\dag \overline X  X_E$ and $\widetilde W = {\overline W}^\dag W$.
 Then\\
(a) $\mathcal V (\widetilde X) = \mathcal V ({\widetilde X}^\dag) =$ $2 \nbar - 2 \Re \left[ \trace \left( {\widetilde W}_{11}  \right)  \right] =$ $\| (\widetilde X - I) E\|^2$.\\
(b)  $\| \overline X E - X E\|^2 = \mathcal V(\widetilde X)$.\\
(c) $\widetilde X$ is a critical point of the partial trace Lyapunov function if and only ${\widetilde W}_{21}$ is a null matrix
  and ${\widetilde W}_{11} =  \bar V^\dag D \bar V$ with $V \in \Unbar$, where $D$ is a diagonal matrix of the form
$D = \mbox{diag} [\pm 1, \pm 1, \ldots, \pm 1]$.\\
(d) Let   $\Pi : \CC^n \rightarrow \CC^{\nbar}$ be the canonical projection, represented by the matrix $\Pi = [ I_{\nbar} \; 0 ]$. Then
  $\mathcal V (\widetilde X) = 2 \nbar - 2 \Re \left[ \trace \left( \Pi \widetilde W \Pi^\dag \right) \right] = \| (\widetilde W - I) \Pi^\dag\|^2$.
 \end{proposition}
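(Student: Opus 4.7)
The plan is to prove the four parts (a)--(d) in an order that lets each build on the previous. The single algebraic backbone of everything is the identity
\[
\widetilde W \;=\; \overline W^\dag W \;=\; X_E^\dag \overline X^\dag X_E\, X_E^\dag X X_E \;=\; X_E^\dag \widetilde X X_E,
\]
together with the fact that, because $X_E=[E\;\widehat E]$ is unitary with orthonormal block $E$, one has $X_E^\dag E=\Pi^\dag$ where $\Pi=[I_{\nbar}\;0]$. Consequently $E^\dag \widetilde X E=\Pi\widetilde W\Pi^\dag=\widetilde W_{11}$, so the three expressions in (a) for $\mathcal V(\widetilde X)$ are already linked at the level of the trace. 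Symmetry $\mathcal V(\widetilde X)=\mathcal V(\widetilde X^\dag)$ then follows from $E^\dag \widetilde X^\dag E=(E^\dag\widetilde X E)^\dag$ and the fact that $\trace(M^\dag)=\overline{\trace(M)}$ has the same real part as $\trace(M)$.

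Next I would dispatch (a)'s norm identity and (b) together by one expansion. For (a), $\|(\widetilde X-I)E\|^2=\trace(E^\dag(\widetilde X-I)^\dag(\widetilde X-I)E)$; expanding, using $\widetilde X^\dag\widetilde X=I_n$ and $E^\dag E=I_{\nbar}$, gives $2\nbar-\trace(E^\dag\widetilde X E)-\trace(E^\dag\widetilde X^\dag E)=2\nbar-2\Re\trace(E^\dag\widetilde X E)=\mathcal V(\widetilde X)$. For (b), write $\overline X E-XE=\overline X(I-\widetilde X)E$ and invoke the $\overline X$-invariance of the Frobenius norm, reducing to (a). Part (d) then just re-runs the same computation with $\Pi^\dag$ in place of $E$ and $\widetilde W$ in place of $\widetilde X$, using the identities above.

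The substantive part is (c). Tangent vectors at $\widetilde X\in\Un$ take the form $D=S\widetilde X$ with $S\in\un$, and because $\mathcal V$ depends linearly on the entries of $\widetilde X$ (through its trace) the directional derivative is simply $\nabla_{\widetilde X}\mathcal V\cdot D=-2\Re\trace(E^\dag S\widetilde X E)$. Conjugating $S\mapsto \widetilde S=X_E^\dag S X_E$ (which preserves $\un$), the criticality condition becomes $\Re\trace(\Pi\widetilde S\widetilde W\Pi^\dag)=0$ for every $\widetilde S\in\un$. Writing $\widetilde S$ and $\widetilde W$ in their $2\times 2$ block form, this trace is $\trace(\widetilde S_{11}\widetilde W_{11})+\trace(\widetilde S_{12}\widetilde W_{21})$. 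I would now test this against two disjoint families of admissible $\widetilde S$. First, switching off the diagonal blocks and letting $\widetilde S_{21}$ range over arbitrary complex $(n-\nbar)\times\nbar$ matrices (and setting $\widetilde S_{12}=-\widetilde S_{21}^\dag$) forces $\widetilde W_{21}=0$; then block-unitarity of $\widetilde W$ forces $\widetilde W_{12}=0$ too. Second, with the off-diagonal blocks zero, letting $\widetilde S_{11}$ range over all anti-hermitian $\nbar\times\nbar$ matrices $-\iota H_{11}$ gives $\Im\trace(H_{11}\widetilde W_{11})=0$ for every hermitian $H_{11}$, which by the Frobenius inner product on hermitian matrices forces the anti-hermitian part of $\widetilde W_{11}$ to vanish, i.e.\ $\widetilde W_{11}=\widetilde W_{11}^\dag$. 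A hermitian unitary matrix squares to $I_{\nbar}$, so its spectrum lies in $\{\pm 1\}$ and it diagonalizes as $\bar V^\dag D\bar V$ with $D=\diag(\pm 1,\ldots,\pm 1)$ and $\bar V\in\Unbar$.

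The main obstacle I anticipate is bookkeeping in part (c): one must carefully separate the two independent real linear constraints hidden inside $\Re\trace(\widetilde S_{11}\widetilde W_{11})+\Re\trace(\widetilde S_{12}\widetilde W_{21})=0$, and in particular exploit that $\widetilde S_{21}$ is an \emph{arbitrary} complex matrix (not merely its real part) so that nondegeneracy of the pairing $(A,B)\mapsto\Re\trace(A^\dag B)$ on complex matrices kicks in. The rest — linearity of the directional derivative, unitary invariance of the Frobenius norm, and the hermitian/unitary spectral argument — is routine and can be carried out in a few lines each.
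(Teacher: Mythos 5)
Your proof is correct. Parts (a), (b) and (d) follow the same route as the paper: expand the Frobenius norm using $E^\dag E=I_{\nbar}$ and ${\widetilde X}^\dag\widetilde X=I_n$ for (a), invoke left-unitary invariance of the Frobenius norm for (b), and replay (a) in the $\widetilde W$ coordinates for (d), via the identities $\widetilde W=X_E^\dag\widetilde X X_E$ and $X_E^\dag E=\Pi^\dag$.

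Where you genuinely diverge from the paper is part (c). The paper routes it through a separate Lemma (about a general $n\times\nbar$ matrix $\xi$, not only $\widetilde W\Pi^\dag$) whose proof works entry by entry against the explicit basis $\{J_{kp},R_{kp},D_k\}$ of $\un$ and carries a separate ``$\xi_1$ diagonalizable'' hypothesis that must then be verified. You instead go directly from the conjugated criticality condition $\Re\,\trace(\Pi\widetilde S\widetilde W\Pi^\dag)=0$ for all $\widetilde S\in\un$, split into the two block contributions $\Re\,\trace(\widetilde S_{11}\widetilde W_{11})+\Re\,\trace(\widetilde S_{12}\widetilde W_{21})$, and use two coordinate-free facts: nondegeneracy of the real Frobenius pairing on arbitrary complex rectangular matrices kills $\widetilde W_{21}$ (hence $\widetilde W_{12}$, by block-unitarity), and testing against all anti-hermitian $\widetilde S_{11}$ forces $\widetilde W_{11}$ to be hermitian; since it is also unitary it squares to $I_{\nbar}$ and its spectrum lies in $\{\pm1\}$, giving the diagonalization $\bar V^\dag D\bar V$ with $D=\diag(\pm1)$ for free from the spectral theorem. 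Your argument is shorter, avoids the explicit basis bookkeeping, and derives diagonalizability rather than assuming it. The one thing the paper's route buys that yours does not is the extra generality of its Lemma (statements about arbitrary $\xi$ with $\xi_1$ diagonalizable), which the paper reuses later in the convergence proofs for the matrices it denotes $\widehat\xi$; for Proposition~\ref{pCritical} itself your route is the more economical of the two. One small point you might make explicit: your two tests give necessary conditions, and you should note that the conditions $\widetilde W_{21}=0$, $\widetilde W_{11}$ hermitian are also sufficient (if $\widetilde W_{21}=0$ then the second trace term vanishes, and if $\widetilde W_{11}$ is hermitian then $\trace(\widetilde S_{11}\widetilde W_{11})$ is purely imaginary for every anti-hermitian $\widetilde S_{11}$), so that you really get the claimed equivalence.
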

\begin{proof} See appendix \ref{aCritical}.
 \end{proof}

  \begin{definition}
 \label{dPartial}
 Given $X, \overline X \in \Un$, the partial distance is defined by $\pdist(X, \overline X) = \| (X - \overline X) E\|$.
 Clearly, the partial distance is a semi-norm in $\Un$.
 \end{definition}

The proof of the last proposition also shows that $\widetilde X$ is a critical point of $\mathcal V$ if and only if
$E^\dag \widetilde X E$ can be written in the same form of  ${\widetilde W}_{11} = \bar U^\dag D \bar U$ that appears in that proposition, which is a more intrinsic characterization
of critical points.  Furthermore, if $\widetilde X$ is a critical point of $\mathcal V$, it is easy to show that, if $n_c$ is the number of diagonal elements of $D$
that are equal to $-1$, then $\mathcal V (\widetilde X) = 4  n_c$, and $\widetilde X$ admits the eigenvalue $-1$
 with multiplicity $n_c$.
 Let ${\mathcal S}_4 = \{ \widetilde X \in \Un ~|~\mathcal V(\widetilde X(0)) <4\}$.
 If the initial error matrix in the execution of RIGA is inside the open set ${\mathcal S}_4$, the fact that $\mathcal V (\widetilde X) = 4  n_c$
 at a critical point implies that no nontrivial critical point is contained in ${\mathcal S}_4$.
 As the Lyapunov function is always decreasing along the steps of RIGA (at least while $\Xgoalbar E = \Xgoalbarellminusone E$)
 then $\widetilde X(t)$ will never meet a nontrivial critical point of the Lyapunov function. If $\mathcal V(\widetilde X(0)) \geq 4$,
 a strategy for avoiding nontrivial critical points would be necessary.  Essentially, this would be accomplished by ``saturating''
 the eigenvalues of the error matrix to a region that does not contain $-1$. Before talking about these strategies, we present a remark about the invariance of  RIGA with respect to the matrix $E$ (that is defined in Prob. \ref{Prob1} in the introduction):
 
 \begin{remark}
\label{rY}
If $X(t)$ and $\overline X(t)$ are respectively solutions of \refeq{cqs} and \refeq{reference} will be useful to consider
the trajectories $Y(t) = X(t) E$ and $\overline Y(t) = \overline X(t) E$. Then, from \refeq{eFeedbackLaw}, from the fact that
${\widetilde S}_k = {\overline X}^\dag S_k \overline X$, and $\widetilde X = {\overline X}^\dag X$, it follows that
\begin{equation}
\label{eFeedbackLaw2}
 \widetilde u_k(t) = K \Re \left[ \trace \left( {\overline Y}(t)^\dag  S_k Y(t) \right) \right].
\end{equation}
Note that $Y(t)$ and $\overline Y(t)$ are the relevant parts of the encoded information to be tracked and the Lyapunov function is given by $\mathcal V ( \overline X(t)^\dag X(t) ) = \|Y(t) - \overline Y(t)\|^2$.
\end{remark}

From this last remark it is clear that RIGA could be implemented by simulating only $Y(t)$ and not the entire propagator $X(t)$. However, the entire propagator is needed for implementing the strategy for avoiding singular or critical points of the Lyapunov function.

\subsection{Avoiding critical and/or singular points}
\label{sFirstStrategy}

Recall that the natural choice of $X_{goal}$ in each step of RIGA is te one of the next remark.
 \begin{remark} \label{NoStrategy}
 Construct a fixed $X_{goal}$ such that $X_{goal} E = F$ and then
 choose the same  $\Xgoalbar = X_{goal}$ for all $\ell =1, 2, \ldots$. Then one cannot avoid the critical and/or singular points
 of the Lyapunov function and so RIGA may not converge globally to a solution of the encoded gate generation problem. 
 \end{remark}

 The choice $\Xgoalbar$ in each step $\ell$ of RIGA is done for avoiding critical and/or singular points of the Lyapunov function.
 From Proposition \ref{pCritical}, this would be accomplished by ``saturating''
 the eigenvalues of the error matrix to a region that does not contain the point -1, for instance, accepting eigenvalues of the form $\exp(\jmath {\overline \theta})$
 with $\overline\theta \in [-\pi/4, \pi/4]$ as done in \cite{PerSilRou19}.

 Let $X_f^{\ell-1} = X^{\ell-1}(T_f)$ be the final propagator of operation $\sharp 4$ in a step $\ell$ of RIGA.
 The algorithm for choosing  ${\overline X}_{goal}$ in each step $\ell$ of RIGA consists in two steps:\\
 \textbf{(a) Optimizing  $X_{goal}$ (only when $\nbar < n$).} Given  $X_{goal} \in \Un$  find an optimal $X_{goal}^{*}$ and an optimal phase $\phi \in \RR$ in order to minimize the Frobenius norm $\|X_f - X_{goal}^{*}\|$ under the restriction $X_{goal}^{*} E = \exp(\jmath \phi) X_{goal} E$.
 Another option is to take $\phi$ always equal to zero, considering the restriction $X_{goal}^{*} E = X_{goal} E$. The solution of this problem is based on singular value decompositions and it is presented in Theorem \ref{tOpt} of Appendix \ref{aOpt}. A solution  of this optimization problem will be denoted by $X_{goal}^{*} = \optgoal (X_{goal}, X_f)$. Note that
 the optimization problem has no sense when $\nbar = n$. Much more information about this optimization 
 problem can be found in the Appendix \ref{aOpt}.\\
 \textbf{(b) (Saturating eigenvalues).} Consider the usual saturation function $\mbox{sat} : \RR \rightarrow  \RR$ such that $\mbox{sat}(x) = x$, if $x \in [-\pi/4, \pi/4]$, $\mbox{sat}(x) = \pi/4$
 if $x > \pi/4$ and $\mbox{sat}(x) = -\pi/4$,  if $x < -\pi/4$.
  Compute $R = X_f^\dag X_{goal}^{*}$ and its eigenstructure\footnote{The Schur decomposition is indicated in this case}
  \[
  R = U^\dag \mbox{diag}[\exp(\imath \theta_1), \ldots,  \exp(\imath \theta_n)] U.
  \]
  Then compute
  \[
  R^{sat}(R) = U^\dag \mbox{diag} [\exp(\imath {\overline \theta}_1), \ldots,  \exp(\imath {\overline \theta}_n)] U,
  \]
where ${\overline \theta}_i = \mbox{sat}(\theta_i), i=1, \ldots, n$.
After that, define $\Xgoalbar = X_f R^{sat}(R)$.

Summarizing and using the notation defined above, the algorithm for choosing $\Xgoalbar$ in each step reads:
\[
\begin{array}{c}
\mbox{$R = X_f^\dag [\optgoal(X_{goal}, X_f)]$ (optimize first)}\\
\mbox{$\Xgoalbar = X_f R^{sat} (R)$ (saturate eigenvalues)}
\end{array}
\]
 By Remark \ref{rRIGA}, it follows that error matrix $\widetilde X(0)$ of step $\ell$ is given by $R_\ell^\dag = (R^{sat})^\dag$, which has also the same ``saturated'' eigenvalues.
 
 It is interesting to stress that the Theorem \ref{tEigenOpt} that is shown in Appendix \ref{aOpt} proves that this optimization process restricts the saturation of the eigenvalues to the sum of the spaces $\mathcal Y = \Image (X_f E)$ and $\mathcal F = \Image F$.  In fact, Theorem \ref{tEigenOpt} shows that, after the optimization process,  at least $n - 2 \nbar$ eigenvalues of $R$ will coincide with $1$, and so one needs to saturate at most $2 \nbar$ eigenvalues angles $\theta_i$ in this algorithm, the other are left invariant (and equal to zero).

Although this strategy of avoiding critical/singular points works well in numerical experiments of RIGA, convergence proofs for RIGA equipped with this strategy are very difficult to obtain. The authors have obtained analogous convergence proofs for the encoded case and for another strategy for avoiding singular and or critical points, that is the strategy that is analogous of \cite[Algorithms B and C]{PerSilRou19}. This different strategy and the corresponding convergence proofs for the encoded case are available in Appendix \ref{aMathematical}.

\section{Numerical Implementations of RIGA}
\label{sImplementation}

The implementations of RIGA may consider that the inputs $u_k(t)$ are piecewise-constant functions (as in GRAPE) or that they are smooth functions.
An executable code of a  MATLAB$^\circledR$ implementation of RIGA is available in \cite{CODE_OCEAN_CONSTANT}
for the piecewise-constant case (only for $\nbar = n$) and in  \cite{CODE_OCEAN_SMOOTH} for the smooth case (for all
values $\nbar$). The authors believe that the smooth implementation is more relevant than the piecewise-constant one for future applications.
Hence, all the numerical experiments of this paper are performed with the smooth implementation of RIGA.
However, the piecewise-constant implementation is also presented here because it is possible to establish a direct comparison between GRAPE and RIGA,
showing that GRAPE is a kind of open loop version of (the piecewise-constant) RIGA, in some sense to be precised in this section.
For the numerical implementations of RIGA, the interval $[0, T_f]$ will be divided in $N_{sim}$ equal parts,
 and  time $t$ will be discretized at instants $t_s = s \delta, s=0, 1 \ldots, N_{sim}$, where $\delta = T_f/N_{sim}$.
Consider that the inputs $u(\cdot)$ and the reference inputs ${\overline u}(\cdot)$ are piecewise-constant in the intervals $(t_{s}, t_{s+1}]$.
We shall denote:
\begin{subequations}
\label{eSamp}
\begin{eqnarray}
t_s & = & s \delta, s \in \{0, 1, \ldots, N_{sim}\}\\
{\overline u}_k (t_s)& = & {\overline u}_{k_s}\\
u_k (t_s)  & = & u_{k_s}.
\end{eqnarray}
\end{subequations}
Similarly, one will denote
$X_s = X(t_s)$, ${\overline X}_s = {\overline X}(t_s)$.
In the piecewise-constant case
it is assumed that
\begin{equation}
\label{ePC}
\begin{array}{l}
u_k(\tau) = u_{k_1}, \mbox{for}~\tau \in [0, \delta],\\
u_k(t_s + \tau) = u_{k_{s+1}}, \mbox{for}~\tau \in (0, \delta], \\
\mbox{for}~s=1, \ldots, N_{sim},\\
k = 1, 2, \ldots, m.
\end{array}
\end{equation}
The same properties hold by replacing $u_k(\cdot)$ by ${\overline u}_k$ in \eqref{ePC}.
In this way, for instance, the  input and the reference input may be represented by the $m  N_{sim}$-dimensional vectors:
\begin{equation}
\label{eUbarPiecewise}
\begin{array}{c}
 {\mathfrak U}  =  \left( {u}_{k_s} : k=1, \ldots, m, s=1, \ldots, N_{sim} \right)\\
\overline {\mathfrak U}  =  \left(  {\overline u}_{k_s} : k=1, \ldots, m, s=1, \ldots, N_{sim} \right)
\end{array}
\end{equation}

In the smooth case,  the following linear interpolation is used as
 an approximation for the open-loop simulations:
\begin{equation}
\label{ePL}
\begin{array}{l}
{\overline u}_k(t_s + \tau) = \left( \frac{\tau - \delta}{\delta} \right) {\overline u}_{k_s} + \left( \frac{\tau}{\delta} \right) {\overline u}_{k_{s+1}},  \\
\mbox{for}~s=0, 1 \ldots, N_{sim}-1\\
k = 1, 2, \ldots, m.
\end{array}
\end{equation}
 This means that
an input may be represented by $m  (N_{sim}+1)$-dimensional vector, for instance
\begin{equation}
\label{eUbarSmooth}
\overline {\mathfrak U} =  \left(  {\overline u}_{k_s} : k=1, \ldots, m, s=0, \ldots, N_{sim} \right)
\end{equation}
The details of each implementation are discussed in specific subsections in the sequel.

\subsection{Piecewise-constant implementation of RIGA}
\label{sPiecewise}

 For the sake of comparison between RIGA and GRAPE, it will be considered that
$\Xgoalbar$ is a fixed matrix $X_{goal}$ for all the steps $\ell =1, 2, \ldots $ of RIGA. This corresponds to the simplified version of RIGA of Section \ref{sRigaDescription}.
Note that a reference control is represented by a $m  N_{sim}$-vector given by \refeq{eUbarPiecewise}.
In this implementation, the reference system is simulated  backwards from $X_{goal}$, including operations $\sharp 2, \sharp 4$ and $\sharp 5$
in a single operation, without the need of computation $R_\ell$ (operation $\sharp 5$). The Lyapunov based feedback that is related to a Lyapunov function,
when computed at $t= t_{s-1}$
will be given by (see equation \refeq{eUtilk})
\[
 - K \nabla_{{\widetilde X}_{s-1}} {\mathcal V} \cdot ({\widetilde S}_{k_{s-1}} {\widetilde X}_{s-1})
\]
where ${\widetilde S}_{k_{s-1}} = {\overline X}_{s-1}^\dag {S}_k {\overline X}_{s-1}$ and
${\widetilde X}_{s-1} = {\overline X}_{s-1}^\dag { X}_{s-1}$.
Hence, at $t=t_s$, the feedback that will be applied to the system in a zero order approximation will
be
\[
 u_{k_s} = {\overline u}_{k_s} - K  \nabla_{{\widetilde X}_{s-1}} {\mathcal V} \cdot ({\widetilde S}_{k_{s-1}} {\widetilde X}_{s-1})
\]
Then, consider the following zero order pseudocode implementation of RIGA\footnote{The exponential of matrices may be implemented as a Pad\'{e} approximation \cite{PerSilRou19}.}:
\begin{center}
 \textbf{RIGA (Piecewise-constant implementation)}
\begin{tabbing}
 \=123 \=123 \=123 \=123 \kill\\
 $\sharp 1.$ Choose the seed input \\
$
\overline {\mathfrak U}^0 =  \left(  {\overline u}_{k_s}^0 : k=1, \ldots, m, s=1, 2, \ldots, N_{sim} \right)
$\\
 \textbf{Execute the steps $\ell=1, 2, 3, \ldots$}\\
  \textbf{BEGIN STEP $\ell$}.\\
            $\sharp 2.$ 
             \> \> REFER. SYST. SIMULATION (Backwards)\\
             \> \>  ${\overline X}_{Nsim} = X_{goal}$\\
             \> \> FOR $s=N_{sim}:-1:1$ \\
            \> \> \> ${\overline X}_{s-1}  = \exp \left[ -\delta  (S_0 + \sum_{k=1}^m {\overline u}_{k_s}^{\ell-1} S_k)\right] {\overline X}_{s}$\\
            \> \> END\\
$\sharp 3.$ \> \>  CLOSED-LOOP SYST. SIMULATION  \\
            \> \> $X_0 = I$\\
            \> \> FOR $s=1:N_{sim}$\\
            \> \>  \> ${\widetilde X}_{s-1} = {\overline X}_{s-1}^\dag X_{s-1}$\\
             \>  \> \> ${\widetilde S}_{k_{s-1}} =  {\overline X}_{s-1}^\dag { S}_k X_{s-1}$\\
            \> \> \> FOR $k=1:m$\\
            \>  \> \> \> ${\overline u}_{k_s}^\ell = {\overline u}_{k_s}^{\ell-1} - K \nabla_{{\widetilde X}_{s-1}} \mathcal V \cdot ({\widetilde S}_{k_{s-1}} {\widetilde X}_{s-1})$\\
            \> \> \> END\\
$\sharp 3^\prime. $ \> \> \> ${X}_{s}  = \exp \left[ \delta  (S_0 + \sum_{k=1}^m {\overline u}_{k_s}^\ell S_k)\right] { X}_{s-1}$\\
            \> \> END\\
            \> \>  Next (reference) input is:\\
            \> \> $\overline {\mathfrak U}^\ell =  \left(  {\overline u}_{k_s}^\ell : k=1, \ldots, m, s=1, 2, \ldots, N_{sim} \right)$\\
              \> \> $X_f^{\ell} = X_{N_{sim}}$.\\
 $\sharp 4.$   \> \> If the final infidelity $\mathcal I (X_f^{\ell})$ is acceptable\\
             \> \>  then terminate RIGA. Otherwise continue.\\
 \textbf{END STEP $\ell$}\\
\end{tabbing}
\end{center}

\subsection{A comparison with GRAPE}
\label{sComparison}

Consider the evolution map 
\[{\mathcal X} : \RR^{m N_{sim}} \rightarrow \Un\]
such that, given 
\begin{small}
\[
{\mathfrak U} =  \left(  { u}_{k_s} : k=1, \ldots, m, s=1, 2, \ldots, N_{sim} \right)
\in \RR^{m N_{sim}}\] 
\end{small}
then ${\mathcal X}(\mathfrak U)$ is defined by:
\begin{equation}
\label{eXs}
\left\{
\begin{array}{l}
X_0 = I\\
X_s   = \exp \left[ \delta  (S_0 + \sum_{k=1}^m { u}_{k_s} S_k)\right] { X}_{s-1}, \\
s=1, \ldots, N_{sim}\\
\mathcal X(\mathfrak U) = X_{Nsim}
\end{array}
\right.
\end{equation}

The first order GRAPE \cite{KHANEJA2005} considers an objective function $\Omega : \RR^{m N_{sim}} \rightarrow \RR$ such that
$\Omega (\mathfrak U) = \mathcal V (X_{goal}^\dag \mathcal X( \mathfrak U))$, where
$\mathcal V$ is some fidelity (or infidelity) measure that it will be optimized by a gradient
ascent (or descent) method. Here it will be considered that $\mathcal V$ is our Lyapunov function,
and so it is an infidelity measure.

 Recall that the first  order version of GRAPE \cite{KHANEJA2005} is essentially
the gradient descent (or ascent) algorithm for this objective function:
\begin{center}
 \textbf{(GRAPE - first order)}
\begin{tabbing}
 \=123 \=123 \=123 \=123 \kill\\
 $\sharp 1.$ Choose $\Omega_*$, the acceptable value of \\
 the objective function.  Choose the seed input \\
$
{\mathfrak U}^0 =  \left(  { u}_{k_s}^0 : k=1, \ldots, m, s=1, 2, \ldots, N_{sim} \right)
$\\
 Execute the steps $\ell=1, 2, 3, \ldots, \ell^*$\\
  \textbf{BEGIN STEP $\ell$}.\\
$\sharp 2.$   \> \> Compute $\Omega_\ell = \Omega ({{\mathfrak U}^{\ell-1}})$.  \\
              \> \> If $\Omega_\ell \leq \Omega_*$, then stop. Otherwise continue.\\
$\sharp 3.$   \> \> Compute the gradient $\nabla_{{\mathfrak U}^{\ell-1}} \Omega$ \\
$\sharp 4.$   \> \> Set ${\mathfrak U}^{\ell} = {\mathfrak U}^{\ell-1} - K \, \nabla_{{\mathfrak U}^{\ell-1}} \Omega$\\
   \textbf{END}
 \end{tabbing}
 \end{center}

The piecewise-constant implementation of RIGA can be regarded as a closed-loop version of GRAPE, as stated in the following result.
\begin{theorem}
\label{tGRAPE_RIGA}
If one replaces ${\overline u}_{k_s}^\ell$ by ${\overline u}_{k_s}^{\ell-1}$ in the operation $\sharp 3^\prime$ of the piecewise-implementation
of RIGA, then one obtains a particular implementation of GRAPE.
\end{theorem}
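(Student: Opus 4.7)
The plan is to show that the substitution $\overline{u}_{k_s}^\ell \mapsto \overline{u}_{k_s}^{\ell-1}$ in operation $\sharp 3'$ decouples the forward propagation from the feedback update currently being computed, so that $X_s$ becomes exactly the open-loop evolution $\mathcal X(\overline{\mathfrak U}^{\ell-1})$ truncated at time $t_s$, and then to verify that the RIGA update formula reproduces a first-order-in-$\delta$ gradient of the GRAPE objective $\Omega(\mathfrak U)=\mathcal V(X_{goal}^\dag \mathcal X(\mathfrak U))$.

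First I would observe that once $\sharp 3'$ uses the old inputs, both the backward-integrated reference $\overline X_s$ (with $\overline X_{N_{sim}}=X_{goal}$) and the forward-integrated trajectory $X_s$ (with $X_0=I$) share the common per-step propagator $U_s=\exp[\delta(S_0+\sum_k \overline u_{k_s}^{\ell-1} S_k)]$. Right-invariance of the dynamics on $\Un$ yields $\widetilde X_s=\overline X_s^\dag X_s=\overline X_{s-1}^\dag X_{s-1}=\widetilde X_{s-1}$, so that $\widetilde X_s$ is \emph{constant in} $s$ and equals $X_{goal}^\dag \mathcal X(\overline{\mathfrak U}^{\ell-1})$. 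This constancy is the structural identity that makes the comparison with GRAPE possible.

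Next I would compute the GRAPE gradient by chain rule. Setting $P_s=U_{N_{sim}}\cdots U_{s+1}$ so that $X_{N_{sim}}=P_s X_s$, the zero-order-in-$\delta$ approximation $\partial U_s/\partial u_{k_s}\simeq \delta\, S_k U_s$ — which is precisely the derivative formula used by the standard first-order GRAPE of \cite{KHANEJA2005} — gives
\[
\frac{\partial \Omega}{\partial u_{k_s}} \simeq \delta\, \nabla_{\widetilde X_{N_{sim}}}\mathcal V\cdot\bigl(X_{goal}^\dag P_s S_k X_s\bigr).
\]
The key algebraic step is the identity $X_{goal}^\dag P_s = \overline X_s^\dag$, which follows at once from the backward-integration relation $\overline X_s = P_s^\dag X_{goal}$ together with the unitarity of $P_s$. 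Inserting $\overline X_s \overline X_s^\dag=I$ between $S_k$ and $X_s$ then collapses the bracket to $\widetilde S_{k_s}\widetilde X_s$, and by the constancy of $\widetilde X_s$ one obtains
\[
\frac{\partial \Omega}{\partial u_{k_s}} \simeq \delta\,\nabla_{\widetilde X_{s-1}}\mathcal V\cdot\bigl(\widetilde S_{k_{s-1}}\widetilde X_{s-1}\bigr),
\]
up to the one-sample index shift $s \leftrightarrow s-1$ which is consistent with the zero-order discretization.

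Comparison with the RIGA update $\overline u_{k_s}^\ell-\overline u_{k_s}^{\ell-1}=-K\,\nabla_{\widetilde X_{s-1}}\mathcal V\cdot(\widetilde S_{k_{s-1}}\widetilde X_{s-1})$ shows that, modulo absorbing the factor $\delta$ into the gain, the modified RIGA performs exactly the GRAPE gradient descent step on $\Omega$. The main obstacle I anticipate is pinning down conventions: one must identify the sample-index mismatch between $\widetilde S_{k_s}$ and $\widetilde S_{k_{s-1}}$ with the left-rectangle rule implicit in GRAPE's zero-order exponential derivative, and one must make precise that the ``particular implementation of GRAPE'' refers to the one using the Euler-type approximation $\partial U_s/\partial u_{k_s}\simeq \delta\, S_k U_s$ rather than the exact Magnus-type gradient; otherwise the equivalence would be only asymptotic in $\delta$ instead of an exact identity between two specific algorithms.
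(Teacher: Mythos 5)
Your proof is correct and follows essentially the same route as the paper: observe that replacing $\overline u_{k_s}^{\ell}$ by $\overline u_{k_s}^{\ell-1}$ in $\sharp 3'$ makes the forward propagation purely open-loop (so $X_{goal}^\dag\mathcal X(\overline{\mathfrak U})=\overline X_{s-1}^\dag X_{s-1}=\widetilde X_{s-1}$ for all $s$), then differentiate the GRAPE objective via the chain rule and the first-order matrix-exponential derivative, and finally identify the result with the RIGA feedback modulo absorbing $\delta$ into the gain. The only cosmetic difference is that the paper's Lemma~\ref{lll} yields $\delta\,U_s S_k+\mathcal O(\delta^2)$ rather than your $\delta\,S_k U_s$, which lands the gradient exactly at sample $s-1$ and avoids the one-index shift you had to explain away; otherwise the decompositions, the key identity $X_{goal}^\dag P_s=\overline X_s^\dag$, and the insertion of $\overline X_s\overline X_s^\dag$ all match the paper's argument.
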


\begin{proof}
See Appendix \ref{aGRAPE_RIGA}.
\end{proof}

 It is clear from  figure \ref{fGrapeandRiga} that GRAPE may be interpreted as being an open loop
version of RIGA. In the computation of  step $\ell$ of RIGA, the input ${\overline u}_{k_s}^\ell$ is applied ``on the fly'', that is,
the feedback is dynamically computed inside a step $\ell$. So, what is being simulated is really the closed loop system.
On the contrary, for GRAPE, what is being simulated in $\sharp 3$ is the open-loop system. The input is updated only
in the end of each step $\ell$.

\begin{figure*}[t]
    \centering{\includegraphics[scale=0.70]{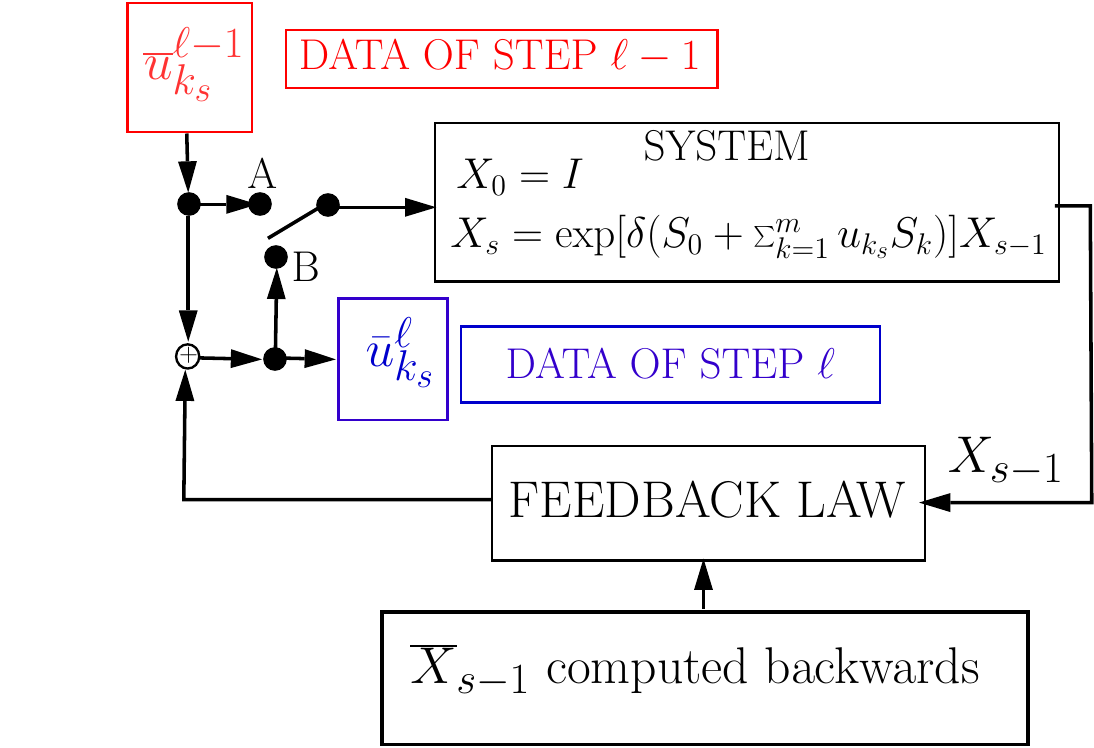}}
    \caption{The block diagram represents the statement of Theorem \ref{tGRAPE_RIGA}.  When the switch is at the position A (open loop), the block diagram corresponds to an implementation of GRAPE. When the switch is at the position  B (closed
     loop), the block diagram corresponds to a piecewise-constant implementation of RIGA.}
    \label{fGrapeandRiga}
\end{figure*}

\subsection{The smooth implementation of RIGA}
\label{sSmooth}

The smooth implementation of RIGA is based on the Cayley transformation \cite{Die98} and a standard 4th-order Runge-Kutta scheme.
A complete code for a MATLAB$^\circledR$ implementation can be found in \cite{CODE_OCEAN_SMOOTH}.
Using the same notation \refeq{eSamp}, recall that the control inputs are supposed to be smooth and in the open loop integration they will be approximated to piecewise-linear functions, which means that \refeq{ePL} holds. The idea of this implementation is to develop a 4th-order Runge-Kutta integration scheme for both closed-loop and open-loop cases. Since $\Un$ is not an Euclidean space, there is no sense in applying a Runge-Kutta method directly to the dynamics\footnote{The Runge Kutta method in this case will generate
non-unitary matrices.} \refeq{reference}--\refeq{eClosedLoopComplete} (or even to \refeq{reference}--\refeq{eClosedLoopComplete2}). The idea is based on the method that is proposed in \cite{Die98}, and relies on  a\footnote{The Cayley transformation considered in \cite{Die98} is given by $-\mathfrak W(\cdot)$.} smooth-map $\mathfrak W$ that is similar to the homographic function that was considered in \cite{SilPerRou14} for defining the
Lyapunov function $\mathcal V$ that is used in \cite{PerSilRou19}.

Let $\mathfrak{W} : \mathcal W \subset \Un \rightarrow \Sigma \subset \un$ defined by 
\begin{equation}
    \label{eCayley}
\mathfrak{W}(\widetilde X) = (\widetilde X - I)(\widetilde X + I)^{-1},
\end{equation}
where $\Sigma \subset \un$ is the set of anti-hermitean  complex matrices
$\sigma$ such that $(\sigma - I)$ is invertible. Recall that $\mathcal W$ is the set of complex matrices $\widetilde X$ such that $(\widetilde X + I)$ is invertible.  It is easy to show that $(\mathfrak W( \widetilde X) - I) = -2 (\widetilde X + I)^{-1}$ and so, the following result is straightforward to be shown: 
\begin{proposition}
Let $\widetilde X \in \mathcal W$. Then,
 the matrix $(\mathfrak W ( \widetilde X) - I)$ is always invertible, with inverse $-\frac{1}{2} (\widetilde X + I)$. Furthermore, the inverse of the map $\mathfrak W$ is the smooth map $\mathfrak X :  \Sigma \rightarrow \mathcal W$ such that  
 \begin{equation}
 \label{eInvCayley}
 \mathfrak X(W) = -(W-I)^{-1} (W+I).
 \end{equation}
\end{proposition}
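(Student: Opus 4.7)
The plan is to verify the two claims directly from the definition of $\mathfrak W$, using the algebraic identity already recalled in the excerpt, namely $\mathfrak W(\widetilde X) - I = -2(\widetilde X + I)^{-1}$. This identity itself is a one-line check: writing the left-hand side over the common right factor $(\widetilde X + I)^{-1}$ gives $[(\widetilde X - I) - (\widetilde X + I)](\widetilde X + I)^{-1} = -2(\widetilde X + I)^{-1}$.

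From this identity the first assertion is immediate. Since $\widetilde X \in \mathcal W$ means $(\widetilde X + I)$ is invertible, its inverse exists, and the displayed identity exhibits $\mathfrak W(\widetilde X) - I$ as $-2$ times an invertible matrix. Multiplying on either side by $-\tfrac{1}{2}(\widetilde X + I)$ yields the identity, so $(\mathfrak W(\widetilde X) - I)^{-1} = -\tfrac{1}{2}(\widetilde X + I)$, as claimed.

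For the second assertion, I would verify that the two maps compose to the identity on their respective domains. Starting from $W = \mathfrak W(\widetilde X) = (\widetilde X - I)(\widetilde X + I)^{-1}$ and multiplying on the right by $(\widetilde X + I)$, I get $W(\widetilde X + I) = \widetilde X - I$, which rearranges to $(I - W)\widetilde X = W + I$, i.e.\ $-(W - I)\widetilde X = W + I$. Since by the first part the factor $(W - I)$ is invertible whenever $W$ lies in the image of $\mathfrak W$, this gives $\widetilde X = -(W - I)^{-1}(W + I) = \mathfrak X(W)$, proving $\mathfrak X \circ \mathfrak W = \mathrm{id}_{\mathcal W}$. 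The reverse composition $\mathfrak W \circ \mathfrak X = \mathrm{id}_{\Sigma}$ is a symmetric calculation: for $W \in \Sigma$, invertibility of $(W - I)$ (built into the definition of $\Sigma$) makes $\mathfrak X(W)$ well defined, and the same manipulation run in reverse recovers $W$.

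Two small bookkeeping items remain, neither of which is a real obstacle. First, one must check that $\mathfrak X$ actually lands in $\mathcal W$, i.e.\ that $\mathfrak X(W) + I$ is invertible; this follows from the dual identity $\mathfrak X(W) + I = -2(W - I)^{-1}$, obtained by the same one-line manipulation as before. Second, smoothness of $\mathfrak X$ is immediate from its explicit rational formula in the matrix entries, valid throughout the open set $\Sigma$. The only place where any care is required is tracking the two invertibility hypotheses ($\widetilde X + I$ on one side, $W - I$ on the other) and confirming that they are exactly what is built into the definitions of $\mathcal W$ and $\Sigma$, so that the formulas never become singular on their stated domains.
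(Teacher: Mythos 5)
Your argument follows exactly the route the paper has in mind: it exhibits the identity $\mathfrak W(\widetilde X) - I = -2(\widetilde X + I)^{-1}$, from which the first claim is immediate, and then solves $W(\widetilde X + I) = \widetilde X - I$ for $\widetilde X$ to identify the inverse map. The paper itself states the identity and declares the rest ``straightforward to be shown,'' so you are supplying the details the authors omitted, and the computations you give are correct.

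There is, however, a small but genuine gap in your bookkeeping. You write that checking $\mathfrak X$ lands in $\mathcal W$ amounts to checking $\mathfrak X(W) + I$ is invertible, but $\mathcal W$ is defined in \eqref{eDefine_W} as a subset of $\Un$: one also needs $\mathfrak X(W)$ to be \emph{unitary}. This is exactly where the anti-hermiticity built into $\Sigma \subset \un$ enters and is the reason the Cayley transform is useful at all for integration on $\Un$. The verification is short: since $W^\dag = -W$, one has $\mathfrak X(W)^\dag = -(W-I)(W+I)^{-1}$, and then
\[
\mathfrak X(W)^\dag \mathfrak X(W) = (W-I)(W+I)^{-1}(W-I)^{-1}(W+I) = I,
\]
using that $W-I$ and $W+I$ commute. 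Symmetrically, for the reverse composition to make sense you should note that $\mathfrak W(\widetilde X)$ is anti-hermitian when $\widetilde X$ is unitary (a similar computation using $\widetilde X^\dag = \widetilde X^{-1}$), so that $\mathfrak W$ really does map $\mathcal W$ into $\Sigma$. These two checks complete the argument; everything else you wrote is sound.
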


\begin{remark} 
The computation of the maps $\mathfrak W$ and $\mathfrak X$ could be done by using the Schur decomposition. For instance, since $\Un$ is a set of normal matrices\footnote{Recall that a square complex matrix $U$ is normal if $U^\dag U = U U^\dag$.}, the Schur decomposition $\widetilde X  = U D U^\dag$ coincides with the eigenstructure of $\widetilde X$, where $U$ is a unitary matrix and  $D$ is the diagonal matrix whose entries are the eigenvalues of $\widetilde X$. Then is easy to show that  
$\mathfrak W(\widetilde X) = U (D-I)(D+I)^{-1} U^\dag$, which relies on the inversion of a diagonal matrix. In the same way, as $\un$ is also a set of normal matrices, if $W = V D_1 V^\dag$ is the Schur decomposition of $W \in \Sigma$, then $\mathfrak X(W) = - V (D_1 - I)^{-1} (D_1 + I) V$. However, the implementation of the 4th order Runge-Kutta integration scheme that is described in the seqquel is done in a way that the matrix inversions of the computations are always well conditioned. Hence the numerical results using matrices inversions have better precision a faster runtime than the ones that have used  Shur decomposition\footnote{Using the MATLAB\texttrademark function  $\mbox{mldivide}$ is better than using division and multiplication. For instance, the MATLAB\texttrademark implementation of $\mathfrak X(W)$ is $\mbox{mldivide}(W-I,W+I)$. }.
\end{remark} 

Some standard computations show easily that, if $\widetilde X(t) \in \Un$ is a solution of a differential equation $\frac{d}{dt} \widetilde X(t) = \widetilde S(t)  \widetilde X(t)$, with $\widetilde S(t) \in \un$,then:
\begin{equation}
 \label{Wdynamics}
\dot W(t) = - \frac{1}{2} (W(t)-I) \widetilde S(t) (W(t) + I)
\end{equation}
The equation \eqref{Wdynamics} may be integrated numerically, instead of \refeq{cqs}, with the advantage that $\un$ is an Euclidean space, and so the Runge-Kutta method may be applied in a natural way.
Define the map $F  : \Sigma \times \un \rightarrow \Sigma$ by:
\begin{subequations}
\label{Fdynamics}
\begin{equation}
 \label{FW}
   F (W, \widetilde S) = - \frac{1}{2} (W-I) \widetilde S (W + I)
\end{equation}
then equation \refeq{Wdynamics} reads 
\begin{equation}
\label{WdynamicsF}
\dot W(t) = F(W(t), \widetilde S(t)). 
\end{equation}
\end{subequations}
Each 4th order Runge-Kutta step \cite{BreCamPet95} of the open loop integration of \refeq{Wdynamics} considers the interval $[t_s, t_{s+1}]$
with $W(t_s) = 0$ (corresponding to the identity in $\Un$, which is far from the frontier of $\Sigma$), with posterior correction by right-invariance. Note that $W(t)$ may not be too close to the frontier of the region $\Sigma$, otherwise
 a numerical problem will certainly occur. By linear interpolation, define the linear interpolation
\begin{equation}
\label{eLinearInterpolation}
\begin{array}{l}
{\overline u}_k^{\ell-1} (t_s+\tau) =\left(  \frac{\delta-\tau}{\delta} {\overline u}_{k_s}^{\ell-1} + \frac{\tau}{\delta} {\overline u}_{k_{s+1}}^{\ell-1} \right)\\
s=0, 1, \ldots, N_{sim}-1\\
k=1, \ldots, m\\
\tau \in \{0, \delta/2, \delta\}
\end{array}
\end{equation}
Define $\Sigma (s, \tau) = S_0 + \sum_{k=1}^m {\overline u}_k^{\ell-1} (t_s+\tau) S_k$ and $G(W, s, \tau) = F(W, \Sigma(s, \tau))$ where the map $F$ is defined in \eqref{Fdynamics}.
Then, for the dynamics \eqref{Fdynamics}, each step $s$ of the 4th-order Runge-Kutta for the open loop system for $s=1, 2, \ldots , N_{sim}$, the implementation for the case where $\nbar$ is less than $n$ reads:
\begin{tabbing}
1234 \= 1234 \= 1234 \= 1234 \kill
    \>  \> $\framebox{  $ W_0 = 0$}$\\
    \>    \> $k_1 = \delta G(W_0, s, 0)$\\
    \>    \> $k_2 = \delta G(W_0+ \frac{k1}{2}, s, \frac{\delta}{2}))$\\
    \>    \> $k_3 = \delta G(W_0+ \frac{k2}{2}, s, \frac{\delta}{2})$\\
    \>    \> $k_4 = \delta G(W_0+ {k_3}, s, \delta)$ \\
    \>    \> ${\overline W}_{s} = W_0 + \frac{1}{6} (k_1 + 2 k_2 + 2 k_3 + k_4)$ \\
    \>    \> $X_{s+1} = \mathfrak X({\overline W}_{s}) X_s$ \\
    \>    \> $\framebox{${\overline X}^{\ell-1}_{s+1} =  X_{s+1} R_\ell$}$ \\
    \> \> $\%$  ${\overline X}^{\ell-1} =$ Ref. trajec. of  step $\ell$ \%\%\\
\end{tabbing}
For the simulation of the closed loop system \refeq{eClosedLoopComplete2}, note from  \refeq{eFeedbackLaw2} that we need to
compute  ${\mu}_k(s, \tau, X) = {\overline u}_k^{\ell-1}(t_s+\tau) + K \Re [\trace(E^\dag {\overline X}^{\ell-1}(t_s + \tau) S_k X E)]$
as function of $s$, $\tau$ and $X \in \Un$  for $\tau \in \{0, \delta/2, \delta\}$. The values of ${\overline u}_k^{\ell-1}(t_s+\tau)$
are computed by linear interpolation \refeq{eLinearInterpolation}.
The values of ${\overline X}^{\ell-1}(t_s) = {\overline X}_s$, ${\overline X}^{\ell-1}(t_{s+1})= {\overline X}_{s+1}$ are available
but the value of ${\overline X}(t_s+\delta/2)$ is interpolated. This is done in the transformed space $\un$ by the expression:
\begin{equation}
\label{eInterpol}
{\overline X}(t_s+\tau)=
\left\{
\begin{array}{l}
 {\overline X}_s,~\tau=0,\\
 \mathfrak X (\frac{{\overline W}_0 + {\overline W}_s}{2}) {\overline X}_s, ~\tau=\delta/2,\\
 {\overline X}_{s+1},~\tau=\delta.
 \end{array}
 \right.
\end{equation}
Note that, for $\tau = \delta/2$, the interpolation corresponds to  the middle point between  ${\overline W}_0=0$ and ${\overline W}_s$ in the Euclidean space $\un$.
Define the map ${\Sigma}(s, \tau, X) = S_0 + \sum_{i=1}^m \mu_k (s, \tau, X) S_k$. Then
define $ G_1(W,s,\tau) = F(W, { \Sigma}(s, \tau, \mathfrak X(W))$, where $\tau \in \{0, \delta/2, \delta\}$. Then each step $s \in \{0, 1, \ldots, N_{sim}-1 \}$
of the 4th-order Runge-Kutta \cite{BreCamPet95} for the closed loop
dynamics simulation reads:
\begin{tabbing}
1234 \= 1234 \= 1234 \= 1234 \kill
    \>    \> ${\overline u}_{k_s}^\ell = \mu_k^{\ell} (s, 0, X_s)$ \\
    \>   \> $\framebox{  ${\overline W}_0 = 0$}$\\
    \>    \> $k_1 = \delta  G_1({\overline W}_0, s, 0)$\\
    \>    \> $k_2 = \delta  G_1({\overline W}_0+ \frac{k1}{2}, s, \frac{\delta}{2}))$\\
    \>    \> $k_3 = \delta  G_1({\overline W}_0+ \frac{k2}{2}, s, \frac{\delta}{2})$\\
    \>    \> $k_4 = \delta  G_1({\overline W}_0+ {k_3}, s, \delta)$ \\
    \>    \> ${ W}_{s} = {\overline W}_0 + \frac{1}{6} (k_1 + 2 k_2 + 2 k_3 + k_4)$ \\
    \>    \> $X_{s+1} = \mathfrak X({ W}_{s}) X_s$ \\
    \>  \>  IF $s == N_{sim}-1$\\
    \>  \> \>    ${\overline u}_{k_{s+1}}^\ell = \mu_k^{\ell} (s, \delta, X_{s+1})$. \\
    \>  \>  END
\end{tabbing}

\begin{remark} 
\label{rSharps} 
Consider now the complete description of RIGA of section \ref{s:Refinements}.
We explain here why $\sharp 2$ and $\sharp 6$ are both necessary in RIGA.
The nature of interpolations of the closed loop system and the open loop system are different in the sense that the interpolation of the reference trajectory of the closed loop systems
does not correspond to the input interpolation of the open loop system. The interpolation of the 
closed loop system considers the feedback law, whereas the interpolation of the open loop system
considers a piecewise linear interpolation.
So, if the control pulses will be generated by linear  interpolation \refeq{eLinearInterpolation},
a repetition of the simulation, as done in the operation $\sharp 2$ of RIGA is justified. If the control pulses are generated by other interpolation method, then the open
loop simulation of operation $\sharp 2$ of RIGA must replace \refeq{eLinearInterpolation} by the same interpolation method. However, there is a more profound and important reason in order to justify the apparently repetition of tasks of $\sharp 2$ and $\sharp 6$. In a single step, the difference between 
such computations could be neglected. However, the difference between the computation $\sharp 2$ and $\sharp 6$ will be cumulative, since several steps of integration considering a different interpolation method will produce an important final error in the end of RIGA. This fact was confirmed by several numerical experiments that shows that using only $\sharp 6$ for computing RIGA produces an algorithm whose final error increases with the number of steps, while the implementation that considers both $\sharp 2$ and $\sharp 6$ produces an algorithm that maintains the same numerical error of a single step, that is, the error is not cumulative along the computation of several steps.
\end{remark}

In the case which $n$ coincides with $\nbar$, the implementation of RIGA of \cite{CODE_OCEAN_SMOOTH} considers the Lyapunov function \refeq{eLyap2} of \cite{PerSilRou19}.
The dynamics of the closed loop system that is implemented is \refeq{eClosedLoopComplete} (with state $\widetilde X(t)$, instead of $X(t)$ of \refeq{eClosedLoopComplete2}).
The advantage is that it is not necessary to choose $W_0$ to be equal to identity, and to correct the value by right-invariance (which demands the calculation of  $\mathfrak X({ W}_{s})$
in every step $s = 0, 1, \ldots, N_{sim}-1$). In fact, the domain of the Lyapunov function \refeq{eLyap2}
is the set $\mathcal W$ and furthermore, if ${\widetilde X}_0 \in \mathcal W$, then for the closed loop system $\widetilde X(t) \in \mathcal W$ for $t \in [0, T_f]$. This is implied by the closed loop
monotonic behavior of the closed loop system, that is, $\mathcal V ({\widetilde X(t)}) \leq \mathcal V(\widetilde X(0))$ for all $t \in [0, T_f]$ (see \cite{PerSilRou19}). This will assure that the Lyapunov function is always well defined and so $\mathfrak(\widetilde X(t))$ is also always well defined.
Some simple computations shows that $\mathfrak Z (W) =Z(\mathfrak X( W)) = \frac{1}{4} W (W+I)(W-I)$, which simplifies the computation of the feedback law as a function of
$W$. Then define ${\widetilde \mu}_k (s, \tau, W) = {\overline u}_{k_s}^{\ell-1} +  K \trace[ \mathfrak Z (W) {\overline X}(t_s+\tau)^\dag S_k {\overline X}(t_s+\tau)]$,
where  ${\overline X}(t_s+\tau)$ is given by \refeq{eInterpol}. Then define ${\widetilde \Sigma}(s, \tau, W) = \sum_{i=1}^m {\widetilde \mu}_k^ (s, \tau, X) {\overline X}(t_s+\tau)^\dag S_k {\overline X}(t_s+\tau))$
and $\widetilde G(W, s, \tau) = F(W, {\widetilde \Sigma}(s, \tau, W))$. Then, the implementation of each step $s$ of the closed loop system for $n = \nbar$ reads:
\begin{tabbing}
1234 \= 1234 \= 1234 \= 1234 \kill
     \>   \> ${\overline u}_{k_s}^\ell = {\widetilde \mu}_k^{\ell} (s, 0, W_s)$ \\
    \>    \> $k_1 = \delta \widetilde G(W_s, s, 0)$\\
    \>    \> $k_2 = \delta \widetilde G(W_s+ \frac{k1}{2}, s, \frac{\delta}{2}))$\\
    \>    \> $k_3 = \delta \widetilde G(W_s+ \frac{k2}{2}, s, \frac{\delta}{2})$\\
    \>    \> $k_4 = \delta \widetilde G(W_s+ {k_3}, s, \delta)$ \\
    \>    \> ${ W}_{s+1} = W_s + \frac{1}{6} (k_1 + 2 k_2 + 2 k_3 + k_4)$ \\
    \>    \> $\% \% \%$ ${\widetilde X}^\ell_{s+1} = \mathfrak X({ W}_{s+1})$\\
     \>  \>  IF $s == N_{sim}-1$\\
    \>  \> \>      ${\overline u}_{k_{s+1}}^\ell = {\widetilde \mu}_k (s, \delta, W_{s+1})$. \\
    \>  \>   END
\end{tabbing}
Note that the computation of ${\widetilde X}^\ell_{s+1}$ is unnecessary, since the feedback may be computed directly a  function of $W \in \un$,
the transformed state by the (minus) Cayley-transformation. Furthermore, $X^{\ell}(t)$ will be computed in operation $\sharp 2$ of the next
step of RIGA.

\subsection{Error analysis of the smooth case}

The present error analysis regards the open-loop simulation with 4th-order Runge-Kutta (RK4).
For this error analysis,  we have chosen the third  example, that is, the $N$-qubit system with $N=6$ qubits.
 We have considered a fixed choice the control inputs \eqref{eUbarSmooth} that are obtained in the end of the execution of RIGA in that case.
 We have re-simulated  the open-loop 4th order Runge-Kutta integration with smaller steps $\frac{\delta}{r}$.
 Let $X_f(r)$ be the final  propagator that is obtained with time-step $\frac{\delta}{r}$ considering $X_0=I$.
 The control pulses are always computed from a expression that that is similar to \eqref{eLinearInterpolation} but
 now $\tau$ may take the values in the set $\{0, \frac{\delta}{r},   \frac{2 \delta}{r}, \ldots , \frac{(r-1) \delta}{r}, \delta\}$.

For instance $X_f(8)$ is the propagator that is obtained by RK4 with a step $\frac{\delta}{8}$.
The infidelity between $X_f(8)$ and $X_f(r)$, $r=1,2,3$ gives a first measure of the precision.
 Let $X_{fexp} (r)$ be the propagator obtained by assuming that the input is piecewise-constant in intervals
      $\frac{\delta}{r}$.
 The matrix $X_{fexp} (r)$ will be computed with the MATLAB$^{\circledR}$ function \textbf{expm} similarly to the linear piecewise-constant implementation;
The piecewise-constant control inputs are obtained by sampling the linearly interpolated inputs at $t_s = s \frac{\delta}{r}$.
The infidelity between $X_f(8)$  and $X_{fexp}(r)$ gives a second measure of the precision.
\begin{figure*}[t]
    \centering{\includegraphics[scale=0.7]{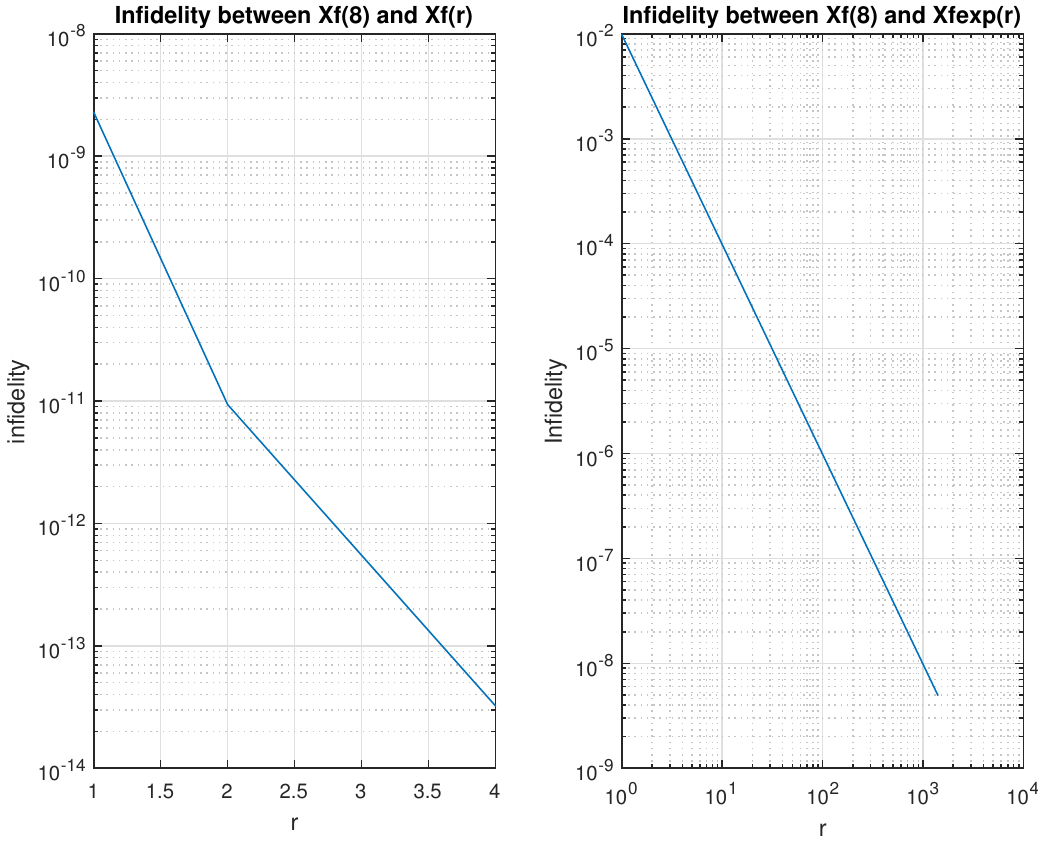}}
    \caption{LEFT PLOT: Infidelity between $X_f(8)$ and $X_f(r)$, where $X_f(r)$ = integration with RUNGE-KUTTA with step is $\frac{\delta}{r}$.
    RIGHT PLOT : Infidelity between $X_f(8)$ and $X_{fexp}(r)$, where $X_{fexp}(r)$ = piecewise-constant computation with ``expm'' and step $\frac{\delta}{r}$.}
     \label{fXfexp}
\end{figure*}

What  figure \ref{fXfexp} indicates is the following. The precision RK4 of this implementation
increases very fast with $r$ (recall that the step is $\frac{\delta}{r}$). The approximation by piecewise-constant
inputs tends to the smooth case when $r$ tends to infinite, but one needs very small steps for approximating the piecewise-constant case
to the solution that is obtained with RK4. For instance, the right plot of Figure \ref{fXfexp} shows that the precision
corresponding to a step $\delta/r$ with $r=1000$ is needed for the piecewise-constant case to recover the same precision of RK4 with
step $\delta$.

Now, one studies what happens when the integration is done in $\Un$ directly, that is, without the Cayley transformation,
 As it produces a non-unitary matrix at every step,  another possible option is to \emph{correct} the result with a projection into
  $\Un$ at every step of the 4th-order Runge-Kutta. In fact, if $X= U^\dag \Sigma V^\dag$,
  is the singular value decomposition of $X$,
  then $W^*= U^\dag V^\dag$ is  called ``unitary projection''. It is well known that $W^*$ is the closest unitary matrix, in the sense
  that the Frobenius norm $\|W - X\|$ for $W \in \Un$ is minimal for  $W=W^*$ (it is easy to show this from the results of \cite{Kel75}).
 The precision of the integration with the Cayley transformation is compared with the one of the integration in $\Un$,
 with or without correction by the projection into $\Un$. Figure \ref{fPrecision} summarizes the obtained results. The legend ``Infidelity W'' corresponds
 to the integration in $\un$ with the Cayley transformation, ``Infidelity in $\Un$'' corresponds to the integration directly in $\Un$,
 and the legend ``Infidelity in $\Un$ corr'' corresponds to the integration in $\Un$ equipped with the correction by the projection into $\Un$.

\begin{figure*}[t]
    \centering{\includegraphics[scale=0.7]{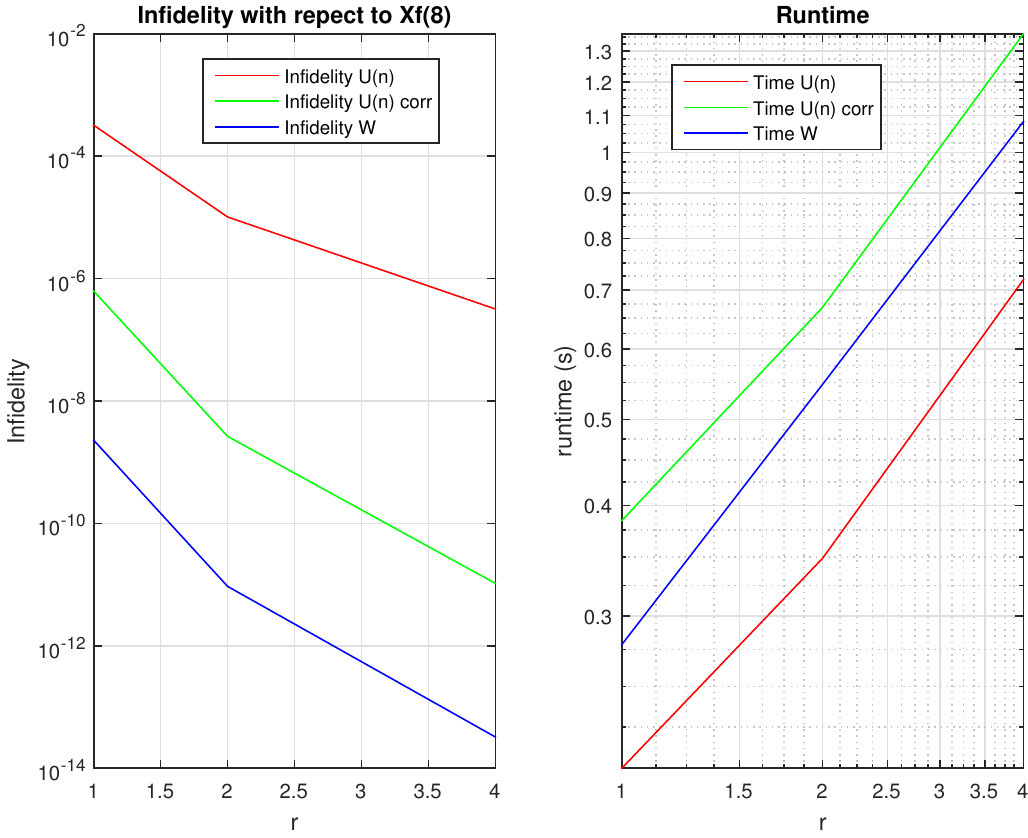}}
    \caption{LEFT PLOT: Infidelity between $X_f(8)$ and the integration in $\Un$ with step is $\frac{\delta}{r}$ with and without correction (legends ``Infidelity in $\Un$'', and ``Infidelity in $\Un$ corr'', respectively. Legend ``Infidelity W'' corresponds to the infidelity between $X_f(8)$ and the integration in $\un$ with step is $\frac{\delta}{r}$.
    RIGHT PLOT : Runtime corresponding of each case of the left plot.}
    \label{fPrecision}
\end{figure*}

Now, what Figure \ref{fPrecision} indicates is that, considering the same runtime, for instance equal to $0.4s$, the precision of the ``uncorrected''  integration
in $\Un$ is worse than 1e-6, the precision of the corrected one is not far from 1e-6, and the precision of RK4 using the Cayley transformation
is not far from 1e-10! Now for a precision of the order of 1e-10,  then   RK4 using the Cayley transformation needs $r\approx 1.6$ (corresponding to a runtime $\approx 0.45s$, the corrected integration in $\Un$ needs
$r\approx 3.2$ (corresponding to a runtime $\approx 1 s$, and the non-corrected integration in $\Un$ will not reach a precision better than 5e-7 for $r=4$ corresponding to a runtime  $\approx 1$ second.
This confirms the study of \cite{Die98}, that claims that the use of the Cayley-transformation is a very efficient way to integrate a dynamics evolving in $\Un$, with an excellent compromise between runtime and precision.

\section{Conclusions}
\label{sConclusions}

The numerical experiments of this paper have shown that RIGA is a powerful algorithm for state preparation  and quantum gate generation for closed quantum systems.
The smooth implementation of RIGA \cite{CODE_OCEAN_SMOOTH}  produces smooth control pulses, which is an advantage with respect to Krotov and GRAPE  that produces piecewise-constant control
pulses (only GOAT, CRAB and RIGA produces smooth pulses). RIGA can tackle  the basic problems without using the penalty functions that are commonly included in order to shorten
\cite{LeuAbdKocSch17}:\\
(A) The bandwidth of the control pulses;\\
(B) The amplitude of control pulses and the one of its time-derivatives.\\
 Note that the presence of penalty functions is a barrier to the convergence of the infidelity to zero for all methods, indeed.

The bandwidth of the control pulses produced by RIGA depends on
 the bandwidth of its seed and on the feedback gain $K$. Greater is the gain, greater is the possibility to appear
 spurious frequencies in the control pulses.

 The heart of RIGA is the choice of the Lyapunov function, that is summarized in the following table

 %\begin{widetext}
\begin{footnotesize}
\begin{table*}
\begin{tabular}{|c|c|c|c|c|}
  \hline
  % after \\: \hline or \cline{col1-col2} \cline{col3-col4} ...
   $\bar n$ & Lyapunov              & Bounded & set  of critical         & Feedback  \\          & Function $\mathcal V$ & Control & points of $\mathcal V$   & law\\
  \hline
   $ \bar n \leq n$   &  $ \bar n-  \Re \left\{\trace \left[  E^\dag \widetilde X E  \right] \right\}$     & Yes &   Prop. \ref{pCritical} & ${\widetilde u}_k = K \Re \left\{ \trace \left[ E^\dag  {\widetilde S}_k \widetilde X  E \right] \right\}$ \\
                                        &  &     &   part (b)  &                    \\
               \hline
   $\bar n = n$  & $ \trace[ (\widetilde X-I)^2 (\widetilde X+I)^{-2}] =$ & No   & $\{ I \}$ & ${\widetilde u}_k = K \trace [Z (\widetilde X) {\widetilde S}_k] $ \\
               &  $\|(\widetilde X-I) (\widetilde X+I)^ {-1}  \|^2  $                          & & &$Z({\widetilde X}) = {\widetilde X} ({\widetilde X}-I)({\widetilde X}+I)^{-3}$  \\
  \hline
  \end{tabular}
\end{table*}
\end{footnotesize}
%\end{widetext}

 The following remarks are important:
\begin{itemize}
\item Smaller values of the gain $K$ tends to generate smaller values of amplitude of the control pulses, with a compromise
     with the speed of convergence of RIGA.

\item Higher values of $K$ may generate numerical imprecision (and even numerical instability) of the Runge-Kutta method, and the presence of undesirable high frequencies in the control pulses.
To increase the precision for a fixed $K$, one needs to increase $N_{sim}$ , increasing linearly the runtime of each step.

\item The present ``smooth-version''  of  RIGA \cite{CODE_OCEAN_SMOOTH} considers a fixed-step 4th-order Runge-Kutta integration method.

\item We believe that using adaptation of the size of the step is not a great advantage, considering the compromise of precision
and runtime.

\end{itemize}

The implementation of RIGA \cite{CODE_OCEAN_SMOOTH} has the folowing features:
\begin{itemize}
\item The user may specify a maximum amplitude $u_{max}$ of the control pulses of RIGA (that must be respected by the seed inputs ${\overline u}_k^0(t)$).

\item A ``smooth'' saturation function is implemented in \cite{CODE_OCEAN_SMOOTH} that allows to respect the bound $u_{max}$.
    Again, the use of small gains is recommended, because higher gains may produce a set of everywhere saturated control pulses.

\item A Hamming window option generates smooth control pulses that are null at the endpoints of $[0, T_f]$ and vary smoothly
      inside the interval $[0, T_f]$ (see  \refeq{eHamming}). This leads to smaller bandwidths of control pulses.

\item RIGA may also include penalty functions as in \cite{LeuAbdKocSch17}. In particular,  a penalty function for  minimizing a forbidden population is implemented
      in \cite{CODE_OCEAN_SMOOTH}.

\item For the case of cavities, we believe that the best solution found in the numerical experiments is not to include such penalty function,
but to consider a number of levels $n_c$ ``\textbf{large enough}'' in the model. The infidelity will converge to zero and the forbidden populations
will be small because of physical reasons. For instance, as the energy fast grows with the cavity level, it is natural to expect that bounded inputs will produce
small populations in higher levels of the cavity.

 \item The spectra of the feedback ${\widetilde u}_k$, of the control pulses ${\overline u}_k^\ell$, and of the seed  is shown in every step $\ell$ of RIGA . This information helps to choose  $T$ and $M$ considering that the seed may not include
        artificial high frequencies when compared with the feedback.

  \item Looking to the spectrum of the feedback  is useful to see if $N_{sim}$ must be raised (or equivalently, if $\delta$ must be shortened).
       In fact, the feedback must have components of its spectrum
        that are small with respect to the Nyquist frequency related to the period $\delta$ (the maximum frequency that is show in the spectra in the implementation of \cite{CODE_OCEAN_SMOOTH}.).

  \item The fidelity  between the final state $X_f$ obtained by open loop integration with step $\delta= \frac{T_f}{N_{sim}}$ and the one obtained with the half step
  $\delta/2 = \frac{T_f}{2 N_{sim}}$ is informed.

  \item Again, this last information is a measure of the numerical precision, it can be used for estimating whether or not  $N_{sim}$ must be raised for increasing the precision.

\item The last step of RIGA also furnishes the same information, and so the user may estimate the numerical precision of the solution
      that was furnished by RIGA.

\end{itemize}

Now, we compare our results of the first example with the ones of \cite{LeuAbdKocSch17} obtained with GRAPE. As the main parameters, including the system parameters, the desired gate and the final time $T_f$ are the same,
a direct comparison of the results is possible. For the first example,  the appearance of the control pulses obtained in the present paper is not that different
from the results of that paper (for the C-NOT gate only, since the state preparation considered here is not tackled in that paper). In both papers, the rotating wave approximation is not used,
and a bandwidth of the order of 5 GHz is needed. The amplitude of the pulses are also of the same order. For the second example, the appearance of the control pulses of \cite{CavityTransmonGrape}  are
also very similar to the ones of the present paper. The bandwidth of the control pulses estimated to be  $\approx 30Mhz$, which is also of the order of the bandwidth that was found in  \cite{CavityTransmonGrape}. The main difference
for both examples  is that GRAPE is a piecewise-constant method, whereas the implementation \cite{CODE_OCEAN_SMOOTH} of RIGA
consider smooth control pulses. This could be an advantage in practical applications, specially if the control pulse generator is able to consider
 these pulses, which are now possible by the present technology. On the one hand, if the pulse generator implements piecewise-constant control pulses, it will produce
  small discontinuities that will certainly add spurious
 high-frequencies that are not taken into account in the computations of FFT's that was done by any GRAPE implementation. These high frequencies may be filtered by the communication
 channels that link the quantum system to the control pulse generator, producing a degradation in the fidelity.
 On the other hand, if the control pulses are pre-filtered, a degradation of the fidelity will occur, anyway, since a piecewise method will not be able to take into account this filtering process.
 Hence it seems that it is much more appropriate to consider a theory that is able  generate a smooth (or at least a continuous) control pulse.
 The present implementation of RIGA \cite{CODE_OCEAN_SMOOTH} considers control pulses that are continuous and are  linearly interpolated, although one could consider other interpolation methods, like splines, that could
include the continuity of the derivative of the pulses at the instants $t_s = s \delta/N_{sim}, s=1,2, \ldots, N_{sim}$.

 For the third example,  for instance when $N=10$ qubits,  the
the propagator $X(t)$ is a $1024\times 1024$ matrix. This shows that RIGA is able to tackle high-dimensional systems with a runtime that is compatible with the an implementation of GRAPE,
even running in a worst CPU than the one  of  \cite{LeuAbdKocSch17}. A nice feature of RIGA that can be noted in this example, is that the runtime seems to grow exponentially with $N$ for GRAPE, whereas the runtime does not grow exponentially with $N$ for RIGA (this fact was also observed for the piecewise-implementation of RIGA of \cite{CODE_OCEAN_CONSTANT}, as it can be seen in the comparison with an implementation of GRAPE \cite{PerSilRou19}).

Let us state some comments about the robustness of the algorithms. For instance, both GRAPE and RIGA (in its piecewise-constant version)  converges to a stationary point $\overline{\mathfrak U}$ of the objective function $\Omega (\mathfrak U)$ that is, a point for which its gradient is null. This implies robustness with respect to variations of the components of control pulses represented by  ${\mathfrak U}$, including noises and every kind of errors due to the generation and transmission of control pulses. By similar reasons, one may show by a limit process that this robustness is shared by the smooth implementation of RIGA.

The authors are now studying the generalization of RIGA for open systems described by Lindblad equations. These will be the theme of future work.

%%%%%%%%%%%%%%%%%%%%%%%%%%%%%%%%%%%%%%%%%%%%%%%%%%%%%%%%%%%%%
%%%%%%%%%%%%%%%%%%%%%%%%%%%%%%%%%%%%%%%%%%%%%%%%%%%%%%%%%%%%%%

\appendix

\section{Optimization problem}
 \label{aOpt}

 We shall consider three optimization problems, stated as follows (see Def. \ref{Prob1} for the definition of matrices $E$ and $F$):
 \begin{problem}
 \label{Problem_A}
 Given $X_f \in \Un$, find $X_{goal}^* \in \Un$ with the property that there exists some $\phi \in \RR$ such that $X_{goal}^* E = \exp (\jmath \phi) F$ in a way that $X_{goal}^*$ minimizes $\| X_{goal}^* E - X_f E\|$.
 \end{problem}

 \begin{problem}
 \label{Problem_B}
 Given $X_f \in \Un$, find $X_{goal}^* \in \Un$ with the property that there exists some $\phi \in \RR$ such that $X_{goal}^* E = \exp (\jmath \phi) F$ in a way that $X_{goal}^*$ minimizes $\| X_{goal}^*  - X_f \|$.
 \end{problem}
 
 Recall that  Prop. \ref{pCritical} of Section\ref{s:Refinements} introduces 
 the matrix $X_E \in \Un$  $\Un$ such that its first $\nbar$ columns forms the matrix E. 
 Since the Frobenius norm is invariant by left- and right-multiplication by a unitary 
 matrix, given some $X_{goal} \in \Un$ such that $X_{goal} E =  F$, we shall define $W_f = X_E^\dag X_f X_E$ and $W_f = X_E^\dag X_{goal} X_E$, and state the next optimization problem which relies
 in finding $W_g^o \in \Un$ that minimizes the Frobenius norm  $\| W_f - W_{g^o}\|$ in a way that $X_{goal}^* = X_E W_{g^o} X_E$ solves Problems \ref{Problem_A} and \ref{Problem_B}.  So the next problem will be solved first.
 
\begin{problem}
\label{pOptimal}
Given $W_f, W_g \in \Un$  with $W_f= [W_{f_1}~W_{f_2}]$ and $W_g= [W_{g_1}~W_{g_2}]$ where
$W_{f_1}$ and $W_{f_2}$ are respectively blocks of $\nbar$ rows and $n - \nbar$ columns, and
so are the blocks $W_{g_1}$ and $W_{g_2}$, find $\phi \in \RR$
and a $n \times (n-\nbar)$ complex matrix ${\overline W}_{g_2}$ such that
\begin{equation}
\label{eWgo}
W_g^o = [\exp(\jmath \phi) W_{g_1}~{\overline W}_{g_2}]
\end{equation}
and in a way that $\| W_f - W_g^o\|$ is minimal.
\end{problem}
Note that $W_{g_1} = W_g \Pi^\dag$ and $W_g^o \Pi^\dag$ coincides up to a selection of  a global phase.

\begin{theorem}
\label{tOpt}(Solution of Problem \ref{pOptimal})
Let $W_{11}= W_{g_1}^\dag W_{f_1}$. Let $w_{ii} = a_i + \jmath b_i$ the $i$th-element of the diagonal of $W_{11}$.
Then let $\alpha = \sum_{i=1}^{\nbar} a_i$ and $\beta=\sum_{i=1}^{\nbar} b_i$. Let $(\rho, \theta)$  be such that
$\rho \exp (\jmath \theta) = \beta - \jmath \alpha$. Let $\phi_1=\theta + \pi/2$ and $\phi_2=\theta - \pi/2$.
Let $W_{21} = W_{g_2}^\dag W_{f_2}$. Let $W_{21} = U \Sigma V^\dag$ be a singular value decomposition of
$W_{21}$. Let $H= U V^\dag$. Let ${\overline W}_{g_2}=  W_{g_2} H$. Then one of the two matrices $[\exp(\jmath \phi_i) W_{g_1}~{\overline W}_{g_2}], i=1,2$ solves the optimization problem \ref{pOptimal}.
\end{theorem}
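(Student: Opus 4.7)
The plan is to decouple the objective into two independent subproblems, one for the phase $\phi$ and one for the right block $\overline W_{g_2}$, and then solve each in closed form. Throughout I assume (as is implicit in the context of Problems~\ref{Problem_A}--\ref{Problem_B}) that $W_g^o \in \Un$.

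First I would split the Frobenius norm by columns, exploiting the block structure:
\[
\|W_f - W_g^o\|^2 = \|W_{f_1} - e^{\jmath\phi} W_{g_1}\|^2 + \|W_{f_2} - \overline W_{g_2}\|^2.
\]
Since $\phi$ appears only in the first summand and $\overline W_{g_2}$ only in the second, the two minimizations can be performed independently. Next I would characterize the admissible set of $\overline W_{g_2}$: unitarity of $W_g^o$ forces its columns to be orthonormal and orthogonal to those of $W_{g_1}$; since $W_g \in \Un$, the columns of $W_{g_2}$ already form an orthonormal basis of that orthogonal complement, so every admissible $\overline W_{g_2}$ can be written uniquely as $\overline W_{g_2} = W_{g_2} H$ with $H \in \mathrm{U}(n-\nbar)$.

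For the phase subproblem, expanding and using $\|W_{f_1}\|^2 = \|W_{g_1}\|^2 = \nbar$ (columns from a unitary matrix) yields
\[
\|W_{f_1} - e^{\jmath\phi} W_{g_1}\|^2 = 2\nbar - 2\Re\bigl[e^{-\jmath\phi}\trace(W_{11})\bigr],
\]
and $\trace(W_{11}) = \alpha + \jmath\beta$ by definition of $\alpha,\beta$. Thus one must maximize $\alpha\cos\phi + \beta\sin\phi$, whose maximum $\sqrt{\alpha^2+\beta^2}$ is attained at the $\phi$ with $\cos\phi = \alpha/\sqrt{\alpha^2+\beta^2}$, $\sin\phi = \beta/\sqrt{\alpha^2+\beta^2}$. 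Translating through $\beta - \jmath\alpha = \rho e^{\jmath\theta}$, a direct computation shows that $\phi_1 = \theta + \pi/2$ and $\phi_2 = \theta - \pi/2$ give respectively $\alpha\cos\phi + \beta\sin\phi = \rho$ and $-\rho$. Depending on the sign of $\rho$ produced by a particular polar-coordinate implementation, one of the two candidates is the maximizer and the other the minimizer, which matches the ``one of the two'' phrasing of the theorem.

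For the block subproblem, substituting $\overline W_{g_2} = W_{g_2}H$ and using orthonormality of the columns of $W_{g_2}$ gives
\[
\|W_{f_2} - W_{g_2}H\|^2 = \|W_{f_2}\|^2 + (n-\nbar) - 2\Re\bigl[\trace(H^\dag W_{21})\bigr],
\]
so one must maximize $\Re[\trace(H^\dag W_{21})]$ over $H \in \mathrm{U}(n-\nbar)$. This is the classical unitary Procrustes problem: with the SVD $W_{21} = U\Sigma V^\dag$ and the change of variable $Q = V^\dag H^\dag U \in \mathrm{U}(n-\nbar)$, one has
\[
\Re[\trace(H^\dag W_{21})] = \Re[\trace(Q\Sigma)] = \sum_i \sigma_i \Re(Q_{ii}) \leq \sum_i \sigma_i,
\]
with equality iff $Q = I$, i.e.\ $H = UV^\dag$, yielding $\overline W_{g_2} = W_{g_2}UV^\dag$ as claimed.

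The main obstacle is essentially bookkeeping: verifying carefully that $\{W_{g_2}H : H \in \mathrm{U}(n-\nbar)\}$ is exactly the set of admissible $\overline W_{g_2}$, and handling the sign/branch ambiguity inherent in the polar form $\beta - \jmath\alpha = \rho e^{\jmath\theta}$ so that between $\phi_1$ and $\phi_2$ one correctly identifies the global minimizer of the objective (as opposed to the global maximizer, which solves the opposite problem of maximal distance).
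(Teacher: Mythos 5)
Your argument is correct and follows essentially the same route as the paper's: decouple the squared Frobenius distance column-blockwise into an independent phase subproblem and a right-block (unitary Procrustes) subproblem, solve the former by elementary calculus on $\alpha\cos\phi+\beta\sin\phi$, and solve the latter via the SVD of $W_{g_2}^\dag W_{f_2}$. The paper reaches the same two reduced problems by first left-multiplying by $W_g^\dag$ (putting both in canonical form) and then cites Keller (1975, Thm.~3) for the Procrustes step, whereas you derive it self-contained via the trace inequality $\Re\,\trace(Q\Sigma)\le\sum_i\sigma_i$; one small overstatement is your claim that equality holds iff $Q=I$, which fails when $W_{21}$ is rank-deficient, but this does not affect the sufficiency of $H=UV^\dag$ as a minimizer.
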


 Let $W_f, W_g \in \Un$ be decomposed in blocks as in the statement of Theorem \ref{tOpt}.
 Then consider the next two optimization problems:
 \begin{problem} \label{pA}
 Find $H \in \mbox{U} (n- \nbar)$ such that $\| W_{g_2} H - W_{f_2} \|$ is minimal.
\end{problem}
\begin{problem} \label{pB}
Find $\phi \in (-\pi, \pi]$ such that $\| \exp(\jmath \phi) W_{g_1}  - W_{f_1} \|$ is minimal.
\end{problem}
Then one will show that:
\begin{proposition} \label{pEnd}
A solution of Problem \ref{pA} is given by $H$ described in the statement of Theorem \ref{tOpt}. A solution
of Problem \ref{pB} is given by some $\phi$ that is described by the statement of Theorem \ref{tOpt}.
\end{proposition}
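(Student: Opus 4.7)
My plan is to observe that Problems~\ref{pA} and~\ref{pB} decouple completely and to treat each as a standard optimization. In both cases the simplification comes from expanding the squared Frobenius norm and using that $W_{g_i}^\dag W_{g_i} = W_{f_i}^\dag W_{f_i} = I$ for $i=1,2$, which follows since $W_{g_i}$ and $W_{f_i}$ are blocks of columns of the unitary matrices $W_g$ and $W_f$.

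For Problem~\ref{pA}, I would first expand
\[
\| W_{g_2} H - W_{f_2}\|^2 = 2(n-\nbar) - 2\,\Re\,\trace(H^\dag W_{21}),
\]
so the minimization reduces to maximizing $\Re\,\trace(H^\dag W_{21})$ over $H \in \mbox{U}(n-\nbar)$. Substituting the SVD $W_{21} = U\Sigma V^\dag$ and setting $Q = V^\dag H^\dag U$ (which is again unitary) rewrites the objective as $\sum_i \sigma_i\,\Re(q_{ii})$. Since $|q_{ii}|\leq 1$ and $\sigma_i \geq 0$, this is bounded above by $\trace\Sigma$, with equality attained at $Q = I$, i.e.\ $H = U V^\dag$. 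This is precisely the unitary Procrustes result needed for the first claim.

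For Problem~\ref{pB}, a parallel expansion gives
\[
\|\exp(\jmath\phi) W_{g_1} - W_{f_1}\|^2 = 2\nbar - 2\,\Re\bigl[\exp(-\jmath\phi)\trace(W_{11})\bigr].
\]
Writing $\trace(W_{11}) = \sum_i w_{ii} = \alpha + \jmath\beta$, the problem becomes the single-variable maximization of $f(\phi) = \alpha\cos\phi + \beta\sin\phi$ on $(-\pi,\pi]$. Using the definition $\rho\exp(\jmath\theta) = \beta - \jmath\alpha$, i.e.\ $\beta = \rho\cos\theta$ and $\alpha = -\rho\sin\theta$, a direct trigonometric manipulation yields $f(\phi) = \rho\sin(\phi-\theta)$, whose only stationary points are $\phi_1 = \theta + \pi/2$ (global maximum, $f = \rho$) and $\phi_2 = \theta - \pi/2$ (global minimum, $f = -\rho$). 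The wording ``one of the two'' in the statement accommodates the degenerate case $\rho = 0$, in which $f \equiv 0$ and any phase is optimal.

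I expect the main obstacle to be bookkeeping rather than mathematical depth: both reductions are classical (unitary Procrustes and a one-dimensional trigonometric optimization), but one must handle the conjugate-transpose signs carefully when expanding $\|\cdot\|^2$ and the sign conventions in the polar representation $\rho\exp(\jmath\theta) = \beta - \jmath\alpha$ correctly, so as to identify $\phi_1$ (and not $\phi_2$) as the true minimizer of the original distance.
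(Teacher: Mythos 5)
Your proof is correct and hits the same two mathematical pillars as the paper (an orthogonal-Procrustes reduction for the block $H$, and a one-dimensional trigonometric minimization for the phase $\phi$), but your route is cleaner and more self-contained in two respects. First, the paper passes each block norm through a preliminary left-multiplication by the unitary $W_g^\dag$ to obtain a decoupled form $\|H - W_{22}\|^2 + \|W_{21}\|^2$ and then cites Keller's nearest-unitary-matrix theorem; you instead expand $\|W_{g_2}H - W_{f_2}\|^2 = 2(n-\nbar) - 2\Re\trace(H^\dag W_{21})$ directly from the orthonormality of the column blocks and prove the Procrustes bound $\Re\trace(Q\Sigma) \leq \trace\Sigma$ inline via the SVD substitution $Q = V^\dag H^\dag U$, which makes the argument self-contained. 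Second, and more substantively, your rewriting of the phase objective as $f(\phi)=\rho\sin(\phi-\theta)$ actually determines \emph{which} of the two stationary points is the global minimizer, namely $\phi_1 = \theta + \pi/2$ (with the degenerate case $\rho=0$ handled separately); the paper's proof only solves $\partial L/\partial\phi = 0$ to obtain $\phi = \theta \pm \pi/2$ and defers the final selection to the ``one of the two'' hedge in the statement of Theorem~\ref{tOpt}. Your version is therefore a modest sharpening of the paper's argument.
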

Firstly, assume that Prop.  \ref{pEnd} holds. Under this assumption, one now proves Theorem \ref{tOpt}. Let ${\mathcal V}_1 \subset \CC^n$ be the subspace that is spanned by the columns of
$W_{g_1}$. It is easy to see that the collums of the matrices $W_{g_2}$ and ${\overline W}_{g_2}$
are both orthonormal bases of ${\mathcal V}_1^{\perp}$. In particular, there exists $H \in \mbox{U}(n - \nbar)$ such that ${\overline W}_{g_2} = W_{g_2} H$. Now note that:
\begin{eqnarray*}
\| W_g^o - W_f \|^2 & = & \| \exp(\jmath \phi) W_{g_1} - W_{f_1} \|^2 \\
                    & + &  \| W_{g_2} H  - W_{f_2} \|^2.
\end{eqnarray*}
As $\phi$ and $H$ may be chosen independently, it is then clear that the statement of Theorem \ref{tOpt} holds.
\begin{proof} (of Prop. \ref{pEnd})
Note first that
\begin{eqnarray*}
\| W_{g_2} H - W_{f_2} \|  & = & \| W_g^\dag[ W_{g_2} H - W_{f_2}] \|\\
 & = &  \left\| \left[
\begin{array}{c}
0\\
I_{n-\nbar}
\end{array}
\right]
 H - W_g^\dag W_{f_2} \right\|
\end{eqnarray*}
Assume that $W_g^\dag W_{f_2} =  \left[
\begin{array}{c}
W_{21}\\
W_{22}
\end{array}
\right]
$
Hence $\| W_{g_2} H - W_{f_2} \|^2 = \| H - W_{22}\|^2 + \|  W_{21}\|^2$. Hence, we have shown the following result:
\begin{proposition}
\label{pH}
Let $W_{22} = W_{g_2}^\dag W_{f_2}$.
Problem  \ref{pA} is then equivalent to find $H \in \mbox{U}(n - \nbar)$ that minimizes $\| H - W_{22}\|$.
\end{proposition}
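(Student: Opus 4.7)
The plan is to exploit the invariance of the Frobenius norm under left-multiplication by a unitary matrix, using $W_g$ itself as that unitary. First I would write
\[
\| W_{g_2} H - W_{f_2}\|^2 \;=\; \| W_g^\dag(W_{g_2} H - W_{f_2})\|^2,
\]
which is legitimate because $W_g\in\Un$ and the Frobenius norm is left-unitarily invariant. This matches the identity already written in the text just above the proposition, so the only substantive content is to justify the block form that appears there and then read off the conclusion.

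Next I would compute $W_g^\dag W_{g_2}$ by exploiting the orthonormality of the columns of $W_g=[W_{g_1}\;W_{g_2}]$. Since $W_g^\dag W_g = I_n$, we have $W_{g_1}^\dag W_{g_2}=0$ and $W_{g_2}^\dag W_{g_2}=I_{n-\nbar}$, hence
\[
W_g^\dag W_{g_2} \;=\; \begin{bmatrix} 0 \\ I_{n-\nbar}\end{bmatrix},
\qquad
W_g^\dag W_{g_2} H \;=\; \begin{bmatrix} 0 \\ H\end{bmatrix}.
\]
In the same block partition, the definition $W_g^\dag W_{f_2}=\bigl[\begin{smallmatrix}W_{21}\\W_{22}\end{smallmatrix}\bigr]$ used above the proposition is, by the very way $W_g^\dag$ splits, exactly $W_{21}=W_{g_1}^\dag W_{f_2}$ and $W_{22}=W_{g_2}^\dag W_{f_2}$, which is consistent with the identification of $W_{22}$ declared in the statement.

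Finally I would combine these to obtain
\[
\| W_{g_2} H - W_{f_2}\|^2 \;=\; \left\| \begin{bmatrix} -W_{21} \\ H - W_{22}\end{bmatrix}\right\|^2 \;=\; \|W_{21}\|^2 + \|H - W_{22}\|^2,
\]
where the second equality uses the fact that the Frobenius norm squared of a block-stacked matrix is the sum of the squared Frobenius norms of the blocks. Since $W_{21}$ does not depend on $H$, minimizing the left-hand side over $H\in\mbox{U}(n-\nbar)$ is equivalent to minimizing $\|H - W_{22}\|$ over the same set, which is exactly the asserted reformulation of Problem~\ref{pA}.

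There is no real obstacle here: the argument is essentially a one-line computation resting on (i) unitary invariance of the Frobenius norm and (ii) orthonormality of the columns of $W_g$. The only thing worth stating carefully is the block-wise additivity of the squared Frobenius norm, and the fact that the constant piece $\|W_{21}\|^2$ drops out of the optimization.
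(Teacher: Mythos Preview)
Your proof is correct and follows essentially the same approach as the paper: left-multiply by the unitary $W_g^\dag$, use the block structure $W_g^\dag W_{g_2}=\bigl[\begin{smallmatrix}0\\ I_{n-\nbar}\end{smallmatrix}\bigr]$ coming from orthonormality of the columns of $W_g$, and split the squared Frobenius norm into an $H$-independent part $\|W_{21}\|^2$ plus $\|H-W_{22}\|^2$. Your write-up is in fact slightly more explicit than the paper's in identifying $W_{21}=W_{g_1}^\dag W_{f_2}$ and $W_{22}=W_{g_2}^\dag W_{f_2}$.
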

This is a standard minimization problem solved by \cite[Theo. 3]{Kel75}. From that result it follows easily that if $W_{22} = U \Sigma V^\dag$ is a singular value decomposition, then $H= U V^\dag$ is a solution of the proposed  minimization problem.

Now, in order to show the second part of the theorem, note that, since $W_g$ is unitary, then $\| \exp(\jmath \phi) W_{g_1}  - W_{f_1} \| = \| \exp(\jmath \phi) W_g^\dag W_{g_1} - W_g^\dag W_{f_1}\| =
\left\| \exp(\jmath \phi) \left[ \begin{array}{c}
I_{\nbar}\\
0
\end{array}
\right]
- \left[ \begin{array}{c}
W_{11}\\
W_{21}
\end{array}
\right] \right\|
$.
Note that the square of the last norm is given by $\|W_{21}\|^2 + \| \exp(\jmath \phi) I_{\nbar} - W_{11}\|^2$. Since $W_{21}$ is fixed,
it is clear that to solve Problem \ref{pB} one must minimize $\| \exp(\jmath \phi) I_{\nbar} - W_{11}\|^2$.
  Only the elements of the diagonal of the last matrix depends on $\phi$.  Computing this
norm, the contribution of the diagonal to the Frobenius norm  is given by $\sum_{i=1}^{\nbar} \| \exp(\jmath \phi)  - w_{ii} \|^2$.
Denoting $w_{ii} = (a_i + \jmath b_i)$, one shows easily that the $\phi$-dependent part of this last sum is given by $L(\phi) = \alpha \cos \phi + \beta \sin \phi$. Hence the condition $\frac{\partial L}{\partial \phi} = -\alpha \sin \phi + \beta \cos \phi = 0$ implies that the vector $(\cos \phi, \sin \phi) \in \RR^2$ must be orthogonal to the vector $(\beta, \alpha) \in \RR^2$. If $\beta + \jmath \alpha = \rho \exp(\jmath \theta)$, then $\phi = \theta \pm \pi/2$, completing the proof of the proposition.
\end{proof}

Now we will show that one may obtain the solutions of both Problems \ref{Problem_A} and \ref{Problem_B} from the solution of Problem \ref{pOptimal}.
\begin{theorem} \label{tOpt2} Given $X_f \in \Un$, let $W_f = X_E^\dag X_f X_E$. Choose any $X_{goal}$  such that $X_{goal} E = F$. Let $W_g = X_E^\dag X_{goal} X_E = [ W_{g_1} \, W_{g_2} ]$, where
$W_{g_1}$ is the submatrix that is formed by the first $\nbar$ columns.  Then:\\
(i) Let $W_{g}^o =  = [ W_{g_1}^o \, W_{g_2}^o ]$ be a solution of Problem \ref{pOptimal} as stated in Theorem \ref{tOpt}. Then $X_{goal}^* = X_E W_{g}^o X_E^\dag$ is a solution
 of both Problems  \ref{Problem_A} and \ref{Problem_B}.\\
(ii) Conversely, assume $X_{goal}^*$ is a solution of Problem \ref{Problem_B}. Let 
$W_g^o = X_E^\dag X_{goal}^* X_E = [ W_{g_1}^o \, W_{g_2}^o ]$. Then  $W_{g_1}^o = \exp (\jmath \phi ) W_{g_1}$ is a solution of Problem \ref{pB} and $W_{g_2}^o = W_{g_2} H$, where $H$ is a solution of the problem of Prop. \ref{pH}. 
 \end{theorem}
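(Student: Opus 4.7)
The plan is to exploit the unitary invariance of the Frobenius norm by conjugating with $X_E$, thereby reducing both Problems \ref{Problem_A} and \ref{Problem_B} to the block-structured optimization of Problem \ref{pOptimal}, which is already solved by Theorem \ref{tOpt}.

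First I would record three elementary translations. Since $E = X_E \Pi^\dag$ and $F = X_{goal} E = X_E W_g \Pi^\dag = X_E W_{g_1}$, the constraint $X_{goal}^* E = \exp(\jmath \phi) F$ is equivalent to $W_{g_1}^o = \exp(\jmath \phi) W_{g_1}$, where $W_g^o = X_E^\dag X_{goal}^* X_E = [W_{g_1}^o \; W_{g_2}^o]$. By unitary invariance the Problem \ref{Problem_B} objective becomes $\|X_{goal}^* - X_f\| = \|W_g^o - W_f\|$, while the Problem \ref{Problem_A} objective becomes $\|X_{goal}^* E - X_f E\| = \|(W_g^o - W_f)\Pi^\dag\| = \|W_{g_1}^o - W_{f_1}\|$. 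Finally, since $W_g^o \in \Un$ and its first $\nbar$ columns span the same subspace as those of $W_{g_1}$, the block $W_{g_2}^o$ must form an orthonormal basis of that subspace's orthogonal complement, so $W_{g_2}^o = W_{g_2} H$ for some $H \in \mbox{U}(n-\nbar)$.

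With these identifications, part (i) follows from the orthogonal block decomposition
\[
\|W_g^o - W_f\|^2 = \|W_{g_1}^o - W_{f_1}\|^2 + \|W_{g_2}^o - W_{f_2}\|^2,
\]
which, after the substitutions above, splits into independent summands in the two variables $\phi$ and $H$. Theorem \ref{tOpt} picks $\phi$ to minimize the first summand (solving Problem \ref{pB}) and $H$ to minimize the second (solving Problem \ref{pA}); the resulting $W_g^o$ therefore minimizes the full sum, hence solves Problem \ref{Problem_B}, and also minimizes $\|W_{g_1}^o - W_{f_1}\|$, hence solves Problem \ref{Problem_A}, for which the actual value of $H$ is immaterial.

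Part (ii) just runs the same decomposition in reverse. If $X_{goal}^*$ solves Problem \ref{Problem_B}, then the constraint together with unitarity of $W_g^o$ force $W_{g_1}^o = \exp(\jmath\phi) W_{g_1}$ and $W_{g_2}^o = W_{g_2} H$ for some $\phi \in \RR$ and $H \in \mbox{U}(n-\nbar)$; since the two summands depend on independent variables, overall optimality forces each summand to be minimal on its own, which is precisely the claim. I do not expect a real obstacle: the whole argument is bookkeeping built on unitary invariance and on the independence of the two blocks. The one point requiring a line of care is verifying that $W_{g_2}^o = W_{g_2} H$ with $H$ unitary, which follows from the fact that rescaling the first $\nbar$ columns of $W_g$ by the scalar $\exp(\jmath\phi)$ does not change the subspace they span.
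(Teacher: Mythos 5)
Your proof is correct and takes essentially the same route as the paper: conjugate by $X_E$ to exploit unitary invariance of the Frobenius norm, reduce both problems to the block optimization of Problem \ref{pOptimal}, and then use the orthogonal splitting $\|W_g^o - W_f\|^2 = \|W_{g_1}^o - W_{f_1}\|^2 + \|W_{g_2}^o - W_{f_2}\|^2$ together with the independence of the variables $\phi$ and $H$. If anything, your write-up states the block decomposition and the equivalence $X_{goal}^* E = \exp(\jmath\phi)F \Leftrightarrow W_{g_1}^o = \exp(\jmath\phi)W_{g_1}$ a bit more explicitly than the paper, which leaves part (ii) to the reader — a minor stylistic gain, not a different argument.
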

 
 \begin{proof} To show part (i), we show first that
  $X_{goal}^*$ solves Problem \ref{Problem_A}, note that, by Prop. \ref{pEnd}, $W_g^o =[\exp(\jmath \phi) W_{g_1}~{\overline W}_{g_2}]$ solves Problem \ref{pB}. Now note that
 \[ 
 \|X_{goal}^*E - X_f E\| = \|X_E W_g^0 X_E^\dag E - X_E  W_f X_E^\dag E\|.
 \]
 As $X_E^\dag  E = \Pi^\dag$, then 
 $ 
 \|X_{goal}^*E - X_f E\| = \|X_E (W_g^0 - W_f) \Pi^\dag\| =$  $\|X_E (\exp(\jmath \phi) W_{g_1}- W_{f_1}) \Pi^\dag\| =$   $\|\exp(\jmath \phi) W_{g_1}- W_{f_1} \Pi^\dag\|$
 where the last equality is due to the invariance of the Frobenius norm with respect to left- and right-multiplication by an unitary matrix. In particular, this shows that $X_{goal}^*$ solves Problem \ref{Problem_A}.
 
 Now, to show that  $X_{goal}^*$ also solves Problem \ref{Problem_B}, note that 
 $ X_{goal}^* E = X_E X_g^0 X_E^\dag E = X_E X_g^0 \Pi^\dag = X_E \exp(\jmath \phi) W_{g_1} =  \exp(\jmath \phi) X_E W_{g}^o \Pi^\dag$. Now, as $X_{goal} E = F$, then $X_E X_g^0 X_E^\dag E = F$,
 and so $X_E X_g^0 \Pi^\dag = F$. Hence $ X_{goal}^* E = \exp(\jmath \phi) X_E W_{g}^o \Pi^\dag = \exp(\jmath \phi) F$.  Furthermore, $\|W_f - W_{g}^o\|$ is minimized by $W_{g}^o$, then 
 $\|W_f - W_{g}^o\| = \|X_E^\dag X_f X_E - X_E^\dag X_{goal}^* X_E\| = \|X_f - X_{goal}^* \|$. In particular, $\|X_f - X_{goal}^*\|$ is minimized by $X_{goal}^*$, and so it solves Problem \ref{Problem_B}. This shows part (i). The proof of part (ii) is analogous and is left to the reader. 
 \end{proof}
 
 \begin{theorem}
\label{tEigenOpt}
 Let $Y = X_f E$, $\mathcal Y = \Image Y$ and $\mathcal F = \Image F$. Assume
that $\dim (\mathcal Y + \mathcal F) = \nbar + k \leq 2 \nbar$. 
Let $X_{goal}^*$ be a solution of the previous optimization problem such that
$\widetilde X = (X^*_{goal})^\dag X_f$ admits an eigenvalue equal to one with multiplicity at least equal to $n - \nbar - k$.  
\end{theorem}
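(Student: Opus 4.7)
The plan is to reduce the statement to a block-matrix calculation using Theorem \ref{tOpt2}, then extract the claimed multiplicity from the singular value structure supplied by Theorem \ref{tOpt}. First, I observe that $\widetilde X = (X^*_{goal})^\dag X_f$ is similar to $(W_g^o)^\dag W_f$ via conjugation by $X_E$, so their spectra coincide. Using the form $W_g^o = [\exp(\jmath\phi) W_{g_1},\, W_{g_2} H]$ from Theorem \ref{tOpt}, I would write $(W_g^o)^\dag W_f$ as a $2 \times 2$ block matrix, whose lower-right block reduces to $H^\dag W_{g_2}^\dag W_{f_2} = V \Sigma V^\dag$ (since $H = UV^\dag$ and $W_{g_2}^\dag W_{f_2} = U \Sigma V^\dag$). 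This Hermitian block has eigenvalues exactly equal to the singular values of $W_{g_2}^\dag W_{f_2}$.

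Next, I would count how many of these singular values equal one. Since $W_{g_2}$ and $W_{f_2}$ have orthonormal columns with $\Image(W_{g_2}) = X_E^\dag \mathcal F^\perp$ and $\Image(W_{f_2}) = X_E^\dag \mathcal Y^\perp$, the classical principal-angle interpretation gives that the number of unit singular values of $W_{g_2}^\dag W_{f_2}$ equals $\dim(\mathcal F^\perp \cap \mathcal Y^\perp) = n - \dim(\mathcal F + \mathcal Y) = n - \nbar - k$ by the hypothesis of the theorem.

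The key step is then to promote each unit eigenvalue of the lower-right block to an eigenvalue of the full matrix $(W_g^o)^\dag W_f$. For each right-singular vector $v$ of $W_{g_2}^\dag W_{f_2}$ associated with $\sigma = 1$, I would test $\tilde v = \begin{pmatrix} 0 \\ v \end{pmatrix}$ as a candidate eigenvector. The lower-block equation reduces to $V \Sigma V^\dag v = v$, which holds by construction. The upper-block equation reduces to $W_{g_1}^\dag W_{f_2} v = 0$, which I expect to be the main obstacle. The resolution is geometric: $\sigma = 1$ forces $W_{f_2} v$ to lie in $\Image(W_{g_2}) = X_E^\dag \mathcal F^\perp$ (the partial isometry $W_{g_2}^\dag$ preserves norm only on this image), whereas the columns of $W_{g_1} = X_E^\dag F$ lie in the orthogonal subspace $X_E^\dag \mathcal F$, forcing the required inner products to vanish. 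Taking the $v$'s orthonormal then yields $n - \nbar - k$ linearly independent eigenvectors of $(W_g^o)^\dag W_f$ with eigenvalue one, hence the same conclusion for $\widetilde X$.
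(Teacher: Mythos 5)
Your proof is correct, and it takes a genuinely different route from the paper's. The paper constructs a special orthonormal basis adapted to the flag $\mathcal F \subset \mathcal Y + \mathcal F \subset \CC^n$, so that in that basis $(W_{g_2})^\dag W_{f_2}$ appears explicitly as a block diagonal $\begin{pmatrix} C^\dag A & 0 \\ 0 & I_{n-\nbar-k} \end{pmatrix}$; it then invokes an auxiliary lemma (Lemma \ref{lSt}) to show the closest unitary $H$ inherits the same block split, and finally reads off the desired $n-\nbar-k$ eigenvalues $1$ from the block structure of $(W_g^o)^\dag W_f$. You instead stay in arbitrary coordinates, recognize the bottom-right block $H^\dag W_{g_2}^\dag W_{f_2}=V\Sigma V^\dag$ as a Gram-type Hermitian matrix whose unit eigenvalues count by the principal-angle/Grassmannian identity $\dim(\mathcal F^\perp\cap\mathcal Y^\perp)=n-\nbar-k$, and then verify directly that each right-singular vector with $\sigma=1$ lifts to an eigenvector $\begin{pmatrix}0\\ v\end{pmatrix}$ of the full matrix, using the sharpness of the partial-isometry norm inequality to kill the off-diagonal term $W_{g_1}^\dag W_{f_2}v$. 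The paper's construction makes the $\begin{pmatrix} R_1 & 0 \\ 0 & I_{n-\nbar-k}\end{pmatrix}$ block form explicit (which is convenient elsewhere in the saturation strategy), while your argument is more coordinate-free, avoids Lemma \ref{lSt} entirely, and exhibits the eigenspace directly; both are valid.
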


\begin{proof} Before proving the lemma, one states the following Lemma whose proof is left to the reader
\begin{lemma}
\label{lSt}
Assume that 
\[
A =\left[
\begin{array}{cc}
A_1 & 0\\
0        & I
\end{array}
\right]\] 
is a square complex matrix. Then a closest unitary matrix $H$  to $A$ is of the form
\[
H=\left[
\begin{array}{cc}
H_1 & 0\\
0        & I
\end{array}
\right]
\] 
where $H_1$ is a closest unitary matrix to $A_1$.
\end{lemma}

The proof of the Theorem \ref{tEigenOpt} relies on the choice of a particular $X_E = [E \, \overline E]$ and on a particular first choice of $X_{goal}$ such that $X_{goal} E = F$. Then the proof will be a simple application of part (ii) of Theorem \ref{tOpt2}  and of Lemma \ref{lSt}. Let us construct first $X_E$ and $X_{goal}$ before applying Theorem \ref{tOpt}.

Assume that $Y = [ y_1 \cdots y_{\nbar} ]$ where $y_j = X_f e_j \in \CC$ are its orthonormal column vectors.
Let $a_j \in \CC^n , j= 1, \ldots, k$ be such that $\{y_1, \ldots y_{\nbar}, a_1, \ldots, a_k\}$ is a orthonormal basis of  $\mathcal Y + \mathcal F$. In particular $\{X_f^\dag y_1, \ldots X_f^\dag y_{\nbar}, X_f^\dag a_1, \ldots, X_f^\dag a_k\} = \{e_1, \ldots, e_\nbar, b_1, \ldots, b_k\}$  is  a
orthonormal set that is mapped by $X_f$ onto an orthonormal basis of  $\mathcal Y + \mathcal F$.
Then we may complete this set to an ortonormal basis $\mathbb B =  \{e_1, \ldots, e_\nbar, b_1, \ldots, b_k, g_1, \ldots, g_{n-\nbar-k}\}$ of $\CC^n$. Let $X_E$ be the unitary matrix that is formed by the column vectors of $\mathbb B$ and $W_f =  X_E^\dag X_f X_E$
\[
W_f = [\overline Y A G]
\]
where $\overline Y, A, G$ are respectively blocks of size $n \times \nbar$, $n \times k$ and $n \times (n-\nbar-k)$. By construction $\mathcal Y + \mathcal F = \Image [\overline Y A]$ (the subspace is an intrinsic object, and the basis transformation $X_E$ only changes its representation). In particular
$\mathcal G = \Image G$ is a subspace that is orthogonal with respect to $\mathcal Y + \mathcal F$.

Let ${\overline f}_i = X_E^\dag f_i$. Note that $\{ {\overline f}_1, \ldots , {\overline f}_\nbar\}$ 
is a orthonormal basis $\mathcal F$ (after a basis transformation). 
Now let $\{c_1, \ldots, c_k\} \subset \CC$ be such that $\{{\overline f}_1, \ldots , {\overline f}_\nbar , c_1, \ldots, c_k\}$
is an orthonormal basis of $\mathcal Y + \mathcal F$. Let
\[
W_g = [\overline F C G]
\]
where $\overline F$ and $C$ are formed respectively by the column vectors $\{ {\overline f}_1, \ldots , {\overline f}_\nbar\}$ and
$\{c_1, \ldots, c_k\}$. Clear $W_g$ is a possible choice of an initial goal matrix $X_{goal}$ (since in the original basis
one has $X_{goal} \overline E = \overline F$). Let $W_{f_1} = \overline Y, W_{f_2} = [A, G]$, $W_{g_1} = \overline F$  and $W_{g_2} = [C G]$. Now, by part (ii) of Theorem \ref{tOpt2}, if $X_{goal}^*$ solves
Problem \ref{Problem_B}, then $W_g^o = X_E^\dag X_{goal}^* X_E = [ W_{g_1}^o \, W_{g_2}^o]$ is such that
$W_{g_2}^o = W_{g_2} H$, and $H$ 
minimizes $\| H - (W_{g_2})^\dag W_{f_2}\|$. Since the matrices $A$, $C$, and $G$ are blocks of unitary matrices and $\Image [\overline Y A] = \Image [\overline F C] = \mathcal Y + F $, it is easy to show that $C^\dag G =0$, $A^\dag G =0$ and $G^\dag G = I_{n-\nbar-k}$. In particular
\[
(W^{g_2})^\dag W_{f_2} = 
\left[
\begin{array}{cc}
C^\dag A & 0\\
0        & I_{n-\nbar-k}
\end{array}
\right]
\]
Then from Lemma \ref{lSt}, $H$ must be in the form
\[
H =
\left[
\begin{array}{cc}
H_1 & 0\\
0        & I_{n-\nbar-k}
\end{array}
\right]
\]
where $H_1$ is a closest unitary matrix to  $C^\dag A$. Then, by part (ii) of Theorem \ref{tOpt2}, the optimal solution will be of the form
$W_g^o = [\exp (\jmath \phi) W_{g_1} \, W_{g_2} H]$, which is given by:
\[
W_g^o =  \left[
\begin{array}{ccc}
\exp (\jmath \phi) \overline F & C H_1 & G
\end{array} \right]
\]
in particular it is easy to see that
\[
(W_g^o)^\dag W_f = \left[
\begin{array}{cc}
 R_1  & 0\\
0 & I_{n-\nbar-k}
\end{array} \right]
\]
for some $R_1 \in \mbox{U}(\nbar + k)$. This shows the desired property of the eigenvalues of error matrix $(W_g^o)^\dag W_f$
which coincides with the eigenvalues of $(X_{goal}^*)^\dag X_f = X_E (W_g^o)^\dag W_f X_E^\dag$.
\end{proof}

\section{Proof of Theorem \ref{tGRAPE_RIGA}}
\label{aGRAPE_RIGA}

Given the reference input $\overline{\mathfrak U} =  \left(  {\overline u}_{k_s} : k=1, \ldots, m, s=1, 1, \ldots, N_{sim} \right)
\in \RR^{m N_{sim}}$, the backward evolution of the system is defined by:
\begin{equation}
\label{eXbars}
\left\{
\begin{array}{l}
{\overline X}_{N_{sim}} = X_{goal}\\
{\overline X}_{s-1}   = \exp \left[ -\delta  (S_0 + \sum_{k=1}^m {\overline u}_{k_s}^\ell S_k)\right] {\overline X}_{s},\\
 s=1, \ldots, N_{sim}
\end{array} \right.
\end{equation}

The following result will show that the algorithm described in the statement of Theorem
\ref{tGRAPE_RIGA} (that is obtained after replacement of ${\overline u}_{k_s}^\ell$ by
${\overline u}_{k_s}^{\ell-1}$ in operation $\sharp 3^\prime$ of that implementation)  is,
up to a first order approximation, an implementation of the first order of GRAPE.
\begin{proposition}
A first order approximation of each component of the gradient of the objective function is given by:
\[
\left. \frac{\partial \Omega ({\mathfrak U})}{\partial {u}_{k_s}} \right|_{\overline{\mathfrak U}}
 = \delta [{\nabla V}_{{\widetilde X}_{s-1}} \cdot ({\widetilde S}_{k_{s-1}} {\widetilde X}_{s-1}) + \frac{\mathcal O(\delta)}{\delta}]
 \]
where ${\widetilde S}_{k_{s-1}} ={X}_{s-1}^\dag S_k X_{s-1}$ and  ${\widetilde X}_{s-1} = {\overline X}_{s-1}^\dag X_{s-1}$,
for $s=1, 2, \ldots, N_{sim}$ where ${\overline X}_{s-1}$ and $X_{s-1}$ are obtained  respectively by \refeq{eXbars} and \refeq{eXs}.
As $\lim_{\delta \rightarrow 0} \frac{\mathcal O(\delta)}{\delta} = 0$, it follows that, with a gain $\mathcal K = \frac{ K}{\delta}$, a first order aproximation
of $\mathcal K \left. \frac{\partial \Omega}{\partial { u}_{k_s}} \right|_{\overline{\mathfrak U}}$ is given by $K [{\nabla_{{\widetilde X}_{s-1}} \mathcal V} \cdot ({\widetilde S}_{k_{s-1}} {\widetilde X}_{s-1})]$,
which coincides with the feedback law \refeq{eUtilk}.
\end{proposition}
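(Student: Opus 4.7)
The plan is to compute $\frac{\partial \Omega}{\partial u_{k_s}}$ by the chain rule applied to the discrete forward map defined in \eqref{eXs}, and then Taylor-expand the matrix exponentials in the step size $\delta$. Write $U_s(\mathfrak U) = \exp[\delta(S_0 + \sum_{k=1}^m u_{k_s} S_k)]$, so that $X_s = U_s X_{s-1}$ and $X_{N_{sim}} = U_{N_{sim}} \cdots U_1$. Only $U_s$ depends on $u_{k_s}$, and a standard first-variation formula for the matrix exponential gives $\frac{\partial U_s}{\partial u_{k_s}} = \delta S_k U_s + \mathcal O(\delta^2)$. Consequently, evaluated at $\overline{\mathfrak U}$,
\begin{equation*}
\left.\frac{\partial X_{N_{sim}}}{\partial u_{k_s}}\right|_{\overline{\mathfrak U}} = \delta \bigl( U_{N_{sim}} \cdots U_{s+1}\bigr)\, S_k\, X_s + \mathcal O(\delta^2).
\end{equation*}

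The next step is to recognise that along the reference input, the backward-propagated $\overline X_s$ and the forward-propagated $X_s$ satisfy $U_{N_{sim}} \cdots U_{s+1} = \overline X_{N_{sim}} \overline X_s^\dag = X_{goal}\, \overline X_s^\dag$, because both trajectories obey the same propagator dynamics driven by $\overline{\mathfrak U}$. Substituting and using the chain rule on $\Omega(\mathfrak U) = \mathcal V(X_{goal}^\dag \mathcal X(\mathfrak U))$ together with the linearity of the directional derivative yields
\begin{equation*}
\left.\frac{\partial \Omega}{\partial u_{k_s}}\right|_{\overline{\mathfrak U}} = \delta\, \nabla_{\widetilde X_s} \mathcal V \cdot \bigl( \overline X_s^\dag S_k\, X_s \bigr) + \mathcal O(\delta^2) = \delta\, \nabla_{\widetilde X_s} \mathcal V \cdot \bigl( \widetilde S_{k_s}\, \widetilde X_s \bigr) + \mathcal O(\delta^2),
\end{equation*}
where the identity $\overline X_s^\dag S_k X_s = (\overline X_s^\dag S_k \overline X_s)(\overline X_s^\dag X_s) = \widetilde S_{k_s} \widetilde X_s$ has been used.

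To obtain the stated formula indexed by $s-1$ rather than $s$, I would invoke two small observations. First, because $\widetilde X_s = \overline X_s^\dag X_s$ and both $\overline X_s$ and $X_s$ acquire the same left factor $U_s$ when passing from the index $s-1$ to $s$, the error matrix satisfies $\widetilde X_s = \widetilde X_{s-1}$ identically along the reference. Second, $\widetilde S_{k_s} = \overline X_{s-1}^\dag U_s^\dag S_k U_s \overline X_{s-1}$ differs from $\widetilde S_{k_{s-1}}$ by $\mathcal O(\delta)$ because $U_s = I + \mathcal O(\delta)$. Since both of these corrections multiply the $\delta$ already out front, they are absorbed into the remainder, giving
\begin{equation*}
\left.\frac{\partial \Omega}{\partial u_{k_s}}\right|_{\overline{\mathfrak U}} = \delta\, \nabla_{\widetilde X_{s-1}} \mathcal V \cdot \bigl( \widetilde S_{k_{s-1}}\, \widetilde X_{s-1} \bigr) + \mathcal O(\delta^2),
\end{equation*}
which is equivalent to the claimed expression after writing $\mathcal O(\delta^2)=\delta\cdot\mathcal O(\delta)/\delta\cdot\delta$.

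The main obstacle I anticipate is the careful bookkeeping of the matrix exponential perturbation: one must justify that $\frac{\partial}{\partial u_{k_s}} \exp[\delta(S_0 + \sum u_{k_s} S_k)] = \delta S_k U_s + \mathcal O(\delta^2)$ uniformly in the other inputs, which follows from the integral representation $\frac{d}{d\epsilon} e^{A+\epsilon B}|_{\epsilon=0} = \int_0^1 e^{(1-t)A} B\, e^{tA}\,dt$ together with the bound $\|e^{tA}-I\| = \mathcal O(\delta)$ for $A$ of norm $\mathcal O(\delta)$. Once this is in hand, the remaining equalities are algebraic identities using the unitarity of the $U_s$'s.
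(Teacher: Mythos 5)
Your proof is correct and follows essentially the same route as the paper's Appendix~B: both differentiate the discrete propagation map via the integral (Higham) formula for the derivative of the matrix exponential, use the telescoping identity $U_{N_{sim}}\cdots U_{s+1}=X_{goal}\,\overline X_s^\dag$ coming from backward propagation of the reference, and then apply the chain rule to $\Omega=\mathcal V\circ\bigl(X_{goal}^\dag\,\mathcal X(\cdot)\bigr)$. The only difference is bookkeeping: you first land at index $s$ and then shift to $s-1$ using the constancy $\widetilde X_s=\widetilde X_{s-1}$ along the reference and the $\mathcal O(\delta)$ perturbation of $\widetilde S_{k_s}$, whereas the paper anchors both telescoping products at level $s-1$ (extracting the factor $\exp(-\delta\overline\Sigma_s)$ explicitly) so that the cancellation $\exp(-\delta\overline\Sigma_s)\cdot\delta\exp(\delta\overline\Sigma_s)S_k=\delta S_k$ produces the $s-1$ expression directly, without a separate shift argument.
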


The following notations will be used along the proofs of this Appendix
\begin{definition}
Denote ${\Sigma}_{s} = \left(S_0 + \sum_{k=1}^{m} { u}_{k_s} S_k \right)$
and ${\overline {\Sigma}}_{s} = \left(S_0 + \sum_{k=1}^{m} {\overline  u}_{k_s} S_k \right)$.
Let ${\widetilde {\Sigma}}_{s}^k = {\overline{\Sigma}}_{s} + {\widetilde  u}_{k_s} S_k$.
Consider ${\overline {\mathfrak U}} =  \left(  {\overline u}_{k_s} : k=1, \ldots, m, s=1, \ldots, N_{sim} \right)
\in \RR^{m N_{sim}}$. Let $\left\{ {\overline {\mathfrak U}} \right\}_{k_s} = {\overline u}_{k_s}$ be the coordinate function.
Denote ${{\mathfrak U}}^{k_s} \in \RR^{m N_{sim}}$ be such that $ {\mathfrak U}^{k_s}$ is equal to
${\overline {\mathfrak U}}$ with the exception of the component $k_s$, which is given by
$\left\{{ \overline{\mathfrak U}}^{k_s} \right\}_{k_s} = {\overline u}_{k_s} + {\widetilde u}_{k_s}$.
\end{definition}

The proof of the Proposition is based on the following Lemma:
\begin{lemma} The following identity holds
\label{lll}
\[
\begin{array}{l}
\frac{\partial \exp(\delta {\widetilde {\Sigma}}_s^k)}{\partial  {\widetilde u}_{k_s}} =
\lim_{{\widetilde u}_{k_s}\rightarrow 0} \frac{\exp[\delta ({\overline{\Sigma}}_{s} + {\widetilde u}_{k_s} S_k)]}{{\widetilde u}_{k_s}} \\
= \delta \exp [\delta {\overline {\Sigma}}_s]  S_k + \mathcal O(\delta).
\end{array}
\]
\end{lemma}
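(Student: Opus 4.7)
The plan is to derive the identity using Duhamel's formula (the integral representation of the derivative of a matrix exponential) and then extract the leading order in $\delta$ by a straightforward Taylor expansion of the conjugation that appears inside the integral.

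First I would recall Duhamel's formula: for any square matrices $A$ and $B$,
\[
\left.\frac{d}{dt} e^{A+tB}\right|_{t=0} = \int_0^1 e^{(1-\tau)A}\, B\, e^{\tau A}\, d\tau.
\]
This is standard and can be justified quickly by noting that $M(t) = e^{A+tB}$ satisfies $M(1)$ of the ODE $\dot X = (A+tB)X$ at time $1$, and differentiating under the integral. Applying this with $A = \delta \overline\Sigma_s$, $B = \delta S_k$ and $t = \widetilde u_{k_s}$, and using the definition ${\widetilde \Sigma}_s^k = \overline\Sigma_s + \widetilde u_{k_s} S_k$, I obtain
\[
\left.\frac{\partial\, e^{\delta \widetilde\Sigma_s^k}}{\partial \widetilde u_{k_s}}\right|_{\widetilde u_{k_s}=0} = \delta \int_0^1 e^{(1-\tau)\delta \overline\Sigma_s}\, S_k\, e^{\tau \delta \overline\Sigma_s}\, d\tau.
\]

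Next, I would factor $e^{\delta \overline\Sigma_s}$ to the left by writing $e^{(1-\tau)\delta \overline\Sigma_s} = e^{\delta \overline\Sigma_s} e^{-\tau \delta \overline\Sigma_s}$, which gives
\[
\delta\, e^{\delta \overline\Sigma_s} \int_0^1 e^{-\tau \delta \overline\Sigma_s}\, S_k\, e^{\tau \delta \overline\Sigma_s}\, d\tau.
\]
A first-order Taylor expansion of both exponentials shows that the conjugation satisfies
\[
e^{-\tau \delta \overline\Sigma_s}\, S_k\, e^{\tau \delta \overline\Sigma_s} = S_k + \tau \delta\, [S_k,\overline\Sigma_s] + O(\delta^2),
\]
uniformly in $\tau \in [0,1]$. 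Integrating over $\tau$ yields $S_k + O(\delta)$, and multiplying by the prefactor $\delta e^{\delta \overline\Sigma_s}$ produces
\[
\delta\, e^{\delta \overline\Sigma_s}\, S_k + \mathcal O(\delta^2),
\]
which implies the claim (the remainder being $\mathcal O(\delta^2)$ is in particular $\mathcal O(\delta)$, as stated).

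The main ingredient is therefore Duhamel's formula; beyond that the argument is routine big-$\mathcal O$ bookkeeping, and no step is genuinely difficult. If one prefers to avoid citing Duhamel, an entirely elementary alternative is to expand both $e^{\delta \widetilde\Sigma_s^k}$ and $\delta e^{\delta \overline\Sigma_s} S_k$ as power series in $\delta$, differentiate the first with respect to $\widetilde u_{k_s}$ at zero term by term, and check that both sides agree modulo $\mathcal O(\delta^2)$; the only nontrivial cancellation is at order $\delta^2$, where the difference is a commutator that is manifestly $\mathcal O(\delta^2)$.
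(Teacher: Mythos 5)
Your proof is correct and follows essentially the same route as the paper: both invoke Duhamel's integral representation for the derivative of the matrix exponential (the paper cites Higham for it), pull the factor $\exp(\delta\overline\Sigma_s)$ to the left of the integral, and extract the leading order by a Taylor expansion, the only cosmetic difference being that the paper first substitutes $z=\delta s$ and Taylor-expands the resulting function $\phi(\delta)$, whereas you Taylor-expand the conjugation directly under the $\int_0^1$. Incidentally, your $\mathcal O(\delta^2)$ remainder is the tighter (and really the needed) estimate: the paper's displayed $\mathcal O(\delta)$ would be vacuous next to the leading $\delta\exp(\delta\overline\Sigma_s)S_k$ term, and what the subsequent proposition actually uses when dividing by $\delta$ and letting $\delta\to 0$ is exactly the $o(\delta)$ behavior your argument establishes.
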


\begin{proof}
In \cite{Hig08} it is shown that $\frac{ \partial \exp(S+ \mathcal \tau  T)}{\partial \tau}|_{\tau=0} = \int_0^1 \exp[(1-s)S] T \exp(s S) ds$.
Hence, taking $S= \delta {\overline{\Sigma}}_{s}$, $\tau =  {\widetilde u}_{k_s}$  and $T = \delta S_k$, one obtains
$\frac{ \partial \exp(S+ \mathcal \tau  T)}{\partial \tau}|_{\tau=0} = $ $\int_0^1 \exp[(1-s) \delta {\overline{\Sigma}}_{s}] \delta S_k \exp(s \delta {\overline{\Sigma}}_{s}) ds$.
So, defining
$z = \delta s$, one obtains
$\int_0^1 \exp[(1-s) \delta {\overline{\Sigma}}_{s}] \delta S_k \exp(s \delta{\overline{\Sigma}}_{s}) ds =$
$\int_0^\delta  \exp[(1-\frac{z}{\delta}) \delta {\overline{\Sigma}}_{s}]  S_k \exp(\frac{z}{\delta} \delta{\overline{\Sigma}}_{s}) dz =$
$ \exp(\delta {\overline{\Sigma}}_{s}) \int_0^\delta  \exp[-z  {\overline{\Sigma}}_{s}] S_k \exp(z {\overline{\Sigma}}_{s}) dz$.

Now, if $\phi (\delta) =  \int_0^\delta  \exp[-z  {\overline{\Sigma}}_{s}] S_k \exp(z {\overline{\Sigma}}_{s}) dz$, it is clear that
$\phi(\delta)=  \phi(0) + \delta \phi^\prime(0)  + \mathcal O (\delta) =$  $0 + \delta S_k + \mathcal O (\delta)$.
Hence, the desired result follows.
\end{proof}

In order to show the proposition, let us compute first
\begin{equation}
\label{etoile}
\left. \frac{\partial X_{goal}^\dag \mathcal X({\mathfrak U})}{\partial { u}_{k_s}} \right|_{\overline{\mathfrak U}} =
\lim_{{\widetilde u}_{k_s}\rightarrow 0} \frac{ X_{goal}^\dag [\mathcal X({\overline{\mathfrak U}}^{k_s}) - \mathcal X({\overline{\mathfrak U}})]}{{\widetilde u}_{k_s}}
 \end{equation}
From \refeq{eXbars}, it follows that
\begin{small}
\[
X_{goal}^\dag = {\overline X}_{s-1}^\dag \exp(-\delta {\overline{\Sigma}}_{s})  \exp(\delta {\overline{\Sigma}}_{s+1})^\dag \cdots \exp(\delta {\overline{\Sigma}}_{N_{sim}-1})^\dag.
\]
\end{small}
Now, from \refeq{eXs}, it follows that
\begin{small}
\[
\mathcal X ({\overline {\mathfrak U}}) = \exp(\delta {\overline {\Sigma}}_{N_{sim}-1}) \cdots \exp(\delta {\overline {\Sigma}}_{s+1}) \exp[\delta ({\overline {\Sigma}}_{s})] {\overline X}_{s-1}.
\]
\end{small}
and
\begin{small}
\[
\mathcal X ({\overline {\mathfrak U}}^{k_s})
= \exp(\delta \overline {\Sigma}_{N_{sim}-1}) \cdots \exp(\delta {\overline {\Sigma}}_{s+1}) \exp[\delta ({\widetilde {\Sigma}}_{s})] {\overline X}_{s-1}.
\]
\end{small}
So $X_{goal}^\dag \mathcal X(\overline{\mathfrak U}) =$ ${\overline X}_{s-1}^\dag X_{s-1} =$ ${\widetilde X}_{s-1}$ and
 $X_{goal}^\dag \mathcal X({\overline{\mathfrak U}}^{ks}) = {\overline X}_{s-1}^\dag \exp(-\delta  {\overline{\Sigma}}_s) \exp [\delta {\widetilde \Sigma}_{s}] X_{s-1}$

From \refeq{etoile}, it follows that
\[
\left. \frac{\partial X_{goal}^\dag \mathcal X({\mathfrak U})}{\partial { u}_{k_s}} \right|_{\overline{\mathfrak U}} = {\overline X}_{s-1}^\dag
\exp(-\delta  {\overline{\Sigma}}_s ) \frac{\partial \exp (\delta {\widetilde {\Sigma}}_s^k)}{\partial  {\widetilde u}_{k_s}}
X_{s-1}
 \]
By Lemma \ref{lll},  from the fact that ${\overline X}_{s-1}^\dag S_k X_{s-1} = {\widetilde S}_{k_s} {\widetilde X}_{s-1}$ and ${\widetilde X}_{s-1} = {\overline X}_{s-1}^\dag X_{s-1}$, it follows that
\[
\left. \frac{\partial X_{goal}^\dag \mathcal X({\mathfrak U})}{\partial {u}_{k_s}} \right|_{\overline{\mathfrak U}} = \delta {\widetilde S}_{k_s} {\widetilde X}_{s-1} + \mathcal O(\delta)
\]
By the chain rule, one may write
\[
\left. \frac{\partial \Omega ({\mathfrak U})}{\partial {u}_{k_s}} \right|_{\overline{\mathfrak U}} =
\nabla_{{X_{goal}^\dag \mathcal X(\overline{\mathfrak U})}} {\mathcal V} \cdot \left. \frac{\partial X_{goal}^\dag \mathcal X({\mathfrak U})}{\partial { u}_{k_s}} \right|_{\overline{\mathfrak U}}
\]
Then
\[
\left. \frac{\partial \Omega ({\mathfrak U})}{\partial { u}_{k_s}} \right|_{\overline{\mathfrak U}} = \delta \nabla_{{\widetilde X}_{s-1}} {\mathcal V} \cdot ({\widetilde S}_{k_s} {\widetilde X}_{s-1}) + \mathcal O(\delta)
\]
showing the desired result.

\section{Some properties of the Frobenius norm}

\begin{proposition}
 \label{ineqFrob}
Assume that $E$ is a complex matrix whose columns form an orthonormal set. Let $U \in \Un$. \\
(a) If $A$ is a complex matrix such that $A U$ is well defined,
then $\| A U \| = \|A\|$.\\
(b) If $A$ is a complex matrix such that $U A$ is well defined,
then $\| U A \| = \|A\|$.\\
(c) If $A$ is a complex matrix such that $A E$ is well defined,
then $\| A E \| \leq \|A\|$.\\
(d) If $A$ is a complex matrix such that $E^\dag A$ is well defined,
then $\| E^\dag A \| \leq \|A\|$. In both cases, if $E \in \Un$
then the equality is attained.
\end{proposition}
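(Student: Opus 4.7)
My plan is to rely on the well-known identity $\|M\|^2 = \trace(M^\dag M)$ for the Frobenius norm, together with the cyclic property of the trace. Parts (a) and (b) fall out immediately; parts (c) and (d) then follow by extending the matrix with orthonormal columns to a full unitary and using (a) and (b).

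For part (a), I would write $\|AU\|^2 = \trace((AU)^\dag(AU)) = \trace(U^\dag A^\dag A U)$, then cycle $U^\dag$ to the right to obtain $\trace(A^\dag A U U^\dag) = \trace(A^\dag A) = \|A\|^2$. Part (b) is analogous, with the cancellation $U^\dag U = I$ happening directly in the middle of the trace expression $\trace(A^\dag U^\dag U A)$.

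For part (c), the plan is to complete $E$ (which has orthonormal columns, so $E^\dag E = I_{\nbar}$) to a full unitary $X_E = [E \; \widehat E] \in \Un$, which is possible by Gram–Schmidt on a basis of the orthogonal complement of the image of $E$. Then $\|A X_E\|^2 = \|A\|^2$ by part (a). Writing $A X_E = [AE \; A\widehat E]$ and decomposing the Frobenius norm block-wise gives $\|A X_E\|^2 = \|AE\|^2 + \|A \widehat E\|^2$, hence $\|AE\|^2 \leq \|A\|^2$. Equality holds when $\widehat E$ is absent, i.e.\ when $E$ is itself unitary. Part (d) is the same argument applied on the left: complete $E$ to $X_E$, use part (b) to get $\|X_E^\dag A\|^2 = \|A\|^2$, then split row-wise as $\|X_E^\dag A\|^2 = \|E^\dag A\|^2 + \|\widehat E^\dag A\|^2$.

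There is no real obstacle here; the only small point to be careful about is the existence of the completion $\widehat E$, which is standard, and the block decomposition of the Frobenius norm, which follows from $\|[B \; C]\|^2 = \trace(B^\dag B) + \trace(C^\dag C)$ and similarly for stacked rows. The whole proof should fit in a handful of lines.
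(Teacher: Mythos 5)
Your proof is correct and follows essentially the same route as the paper: for (a) and (b) you expand the Frobenius norm as a trace and use $U^\dag U = UU^\dag = I$ (the paper writes $\|AU\|^2 = \trace[AUU^\dag A^\dag]$ directly, which is your cyclic rearrangement), and for (c) and (d) you complete $E$ to a unitary $X_E = [E\;\widehat E]$ by Gram--Schmidt and observe that $AE$ is a block of $AX_E$ so its squared Frobenius norm is bounded by $\|AX_E\|^2 = \|A\|^2$. Nothing is missing; the completion step and the block-wise decomposition are exactly what the paper invokes.
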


 \begin{proof}
 (a) Recall that the Frobenius norm $\|A\|^2 = \trace [A^\dag A] = \trace[ A A^\dag]$.
  Then if $U \in \Un$ , $\| A U \|^2 = \trace[ A U U^\dag A^\dag\| = \|A\|^2$.  \\
 (c) Note that, by the Gram-Schmidt algorithm, one may complete the matrix
 $E$ to some $U = [E \; \widehat E] \in \Un$. Since the square of the Frobenius  norm of a matrix
 is the sum of the square of complex norm of all elements of this matrix, and  as $A E$ is a submatrix
 of $A U = [ A E \; A \widehat E]$, then it is clear that $\| A E \|^2 \leq \| A U\|^2 = \|A\|^2$.
 The proof of (b) and (d) are analogous.
\end{proof}

%\appendixname{A1}

%\appendixname{A2}

\section{Is the Lyapunov functon  nonincreasing along the steps of RIGA?}
\label{sNonIncreasing}

If no policy of avoiding singular and or critical points are applied, the the Lyapunov function would be monotonic along the steps of RIGA, as shown by the next proposition. However, these policies may produce a non-monotonic situation, as it was  observed  in some  numerical examples.   
\begin{proposition}
 \label{pMonotonous} 
 Recall that $\pdist(X_1, X_2) = \| X_1 R - X_2 E\|$. Then,
 the following properties holds for the steps $\ell=1, 2, \ldots$ of RIGA:\\
(a)   ${\overline X}^{\ell-1} (T_f) = \Xgoalbar$.\\
(b)  ${\widetilde X}^\ell (0) = \left(R^\ell\right)^\dag = \left( \Xgoalbar \right)^\dag  X_f^{\ell-1}$.\\
(c)  ${\widetilde X}^{\ell-1}(T_f) = \left( \Xgoalbarellminusone \right)^\dag X_f^{\ell-1}$.\\
(d)  $\pdist(X^\ell(0), \overline X^{\ell-1}(0) )  =  \pdist (X_f^{\ell-1}, \Xgoalbar)$.\\
(e) If $\Xgoalbarellminusone E = \Xgoalbar E$, then  $\mathcal V ({\widetilde X}^\ell (0)) = $
 $\mathcal V ({\widetilde X}^{\ell-1} (T_f))$. In particular, $\mathcal V ({\widetilde X}^{\ell} (T_f))
 \leq \mathcal V ({\widetilde X}^{\ell-1} (T_f))$.  As the Lyapunov function
 is non-increasing inside a step of RIGA, this implies that the Lyapunov
function is non-increasing along the steps of RIGA, at least while $\Xgoalbarellminusone E = \Xgoalbar E$.\\
(f) If $\Xgoalbar$ is constructed by the optimization process of Appendix \ref{aOpt}, then $\mathcal V ({\widetilde X}^{\ell} (0))
 \leq \mathcal V ({\widetilde X}^{\ell-1} (T_f))$.
\end{proposition}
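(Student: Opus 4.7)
The plan is to dispatch parts (a)--(e) as essentially bookkeeping that unpacks Definition \ref{dFund} and then invokes Proposition \ref{pCritical}(b), and to reserve the real argument for (f), which rests on the optimality characterization of $\optgoal$ established in Appendix \ref{aOpt}. For (a), I would substitute $\overline X^{\ell-1}(t) = X^{\ell-1}(t) R_\ell$ from operation $\sharp 5$ at $t = T_f$ and use $R_\ell = (X_f^{\ell-1})^\dag \Xgoalbar$ to make the $X_f^{\ell-1}$ factors cancel. Part (b) is the same computation at $t=0$, using that both $X^\ell(0)$ and $X^{\ell-1}(0)$ equal $I$. Part (c) is just (a) written one step earlier. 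For (d), the partial distance $\|(I - R_\ell) E\|$ is invariant under left-multiplication by the unitary $X_f^{\ell-1}$ (Proposition \ref{ineqFrob}(b)), and after this multiplication the identity becomes $X_f^{\ell-1}$ and $R_\ell$ becomes $\Xgoalbar$.

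For (e), parts (b), (c) and Proposition \ref{pCritical}(b) together give
\[
\mathcal V(\widetilde X^\ell(0)) = \|X_f^{\ell-1} E - \Xgoalbar E\|^2, \quad \mathcal V(\widetilde X^{\ell-1}(T_f)) = \|X_f^{\ell-1} E - \Xgoalbarellminusone E\|^2,
\]
so the hypothesis $\Xgoalbar E = \Xgoalbarellminusone E$ forces these two values to coincide. Coupled with \refeq{eVdot}, which guarantees $\mathcal V(\widetilde X^\ell(T_f)) \leq \mathcal V(\widetilde X^\ell(0))$ within a single step, this yields the monotonicity claim.

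Part (f) is the only non-trivial item, and I anticipate this to be the main obstacle. The plan is to exploit that $\Xgoalbar$, produced purely by the $\optgoal$ step of Appendix \ref{aOpt}, is by Theorems \ref{tOpt} and \ref{tOpt2} a minimizer of $\|X_f^{\ell-1} E - X E\|^2$ over all $X \in \Un$ satisfying the constraint $X E = \exp(\jmath \phi) F$ for some $\phi \in \RR$. The previous goal $\Xgoalbarellminusone$, produced by the same optimization process at step $\ell-1$, inherits the feasibility property $\Xgoalbarellminusone E = \exp(\jmath \phi_{\ell-1}) F$, so it is an admissible competitor at step $\ell$. Consequently
\[
\|X_f^{\ell-1} E - \Xgoalbar E\|^2 \leq \|X_f^{\ell-1} E - \Xgoalbarellminusone E\|^2,
\]
and rewriting both sides via (b), (c) and Proposition \ref{pCritical}(b) finishes the proof. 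The subtle point I expect to have to verify is that each iterate $\Xgoalbarellminusone$ indeed preserves the form $X E = \exp(\jmath \phi) F$; this is an inductive invariant of the pure optimization process, but it would fail in general if the eigenvalue-saturation step of Section \ref{sFirstStrategy} were additionally applied, which is why the hypothesis of (f) explicitly restricts to the optimization of Appendix \ref{aOpt}.
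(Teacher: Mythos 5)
Your proof is correct and follows essentially the same route as the paper's: parts (a)--(c) are unpacked from Definition \ref{dFund} and operation $\sharp 5$, part (d) rests on Frobenius-norm invariance (your direct left-multiplication by $X_f^{\ell-1}$ is the same computation the paper packages via Prop.~\ref{pCritical}(b)), and parts (e)--(f) rewrite the relevant Lyapunov values as $\|X_f^{\ell-1}E - \Xgoalbar E\|^2$ and $\|X_f^{\ell-1}E - \Xgoalbarellminusone E\|^2$ before invoking, for (f), the optimality of $\Xgoalbar$ with $\Xgoalbarellminusone$ as an admissible competitor. One small correction: the feasibility of $\Xgoalbarellminusone$ at step $\ell$ is not an \emph{inductive} invariant but a direct consequence of the construction at step $\ell-1$ --- each call to $\optgoal(X_{goal},\cdot)$ returns a matrix satisfying $X_{goal}^{*}E=\exp(\jmath\phi)X_{goal}E=\exp(\jmath\phi)F$ outright, so no induction on $\ell$ is needed; your remark that the eigenvalue-saturation step of Section~\ref{sFirstStrategy} would break this constraint is correct and is precisely the reason the paper restricts the hypothesis of (f) to the pure optimization process.
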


\begin{proof}

{\noindent}(a) Note that $X^{\ell-1}(t)$ is obtained in operation $\sharp 2$ of RIGA from the integration of system \refeq{cqs} with input ${\overline u}^{\ell-1}$ with $X(0) = I$.
Then, in operation $\sharp 5$, one gets ${\overline X}^{\ell-1} (t) =  X^{\ell-1}(t) R_{\ell}$. By right-invariance, ${\overline X}^{\ell-1} (t)$ is the solution of \refeq{cqs} with
$X(0) = R_{\ell}$. Then ${\overline X}^{\ell-1} (T_f) =   X^{\ell-1}(T_f) X_f^{\ell-1} \Xgoalbar =$ $\Xgoalbar$.\\
(b) Recall that  the reference trajectory of step $\ell$ is ${\overline X}^{\ell-1} (t)$. As $X(0) = I$, then ${\widetilde X}^\ell(0) =  {\overline X}^{\ell-1}(0)^\dag X^\ell(0) =$ $({\overline X}^{\ell-1}(0) R^{\ell})^\dag =$
$R_\ell^\dag = (\Xgoalbar)^\dag X_f^{\ell-1}$.\\
(c) As ${\widetilde X}^{\ell-1}(T_f) =  {\overline X}^{\ell-2}(T_f)^\dag X^{\ell-1}(T_f)$, from part (a), ${\widetilde X}^{\ell-1}(T_f) = (\Xgoalbarellminusone)^\dag X_f^{\ell-1}$.\\
(d) From Definition \ref{dPartial}, from (b) and part (b) of Prop. \ref{pCritical}, it follows that $ \pdist(X^{\ell}(0), {\overline X}^{\ell-1}(0))^2 =$
$ \mathcal V({\widetilde X}^\ell(0)) =   \mathcal V( {\Xgoalbar}^\dag X_f^{\ell-1})= $
$\pdist(\Xgoalbar,X_f^{\ell-1})^2$.\\
(e) By Prop. \ref{pCritical}, since $\Xgoalbar E = \Xgoalbarellminusone E$, then  $\pdist(X_f^{\ell-1}, \Xgoalbarellminusone ) = \| X_f^{\ell-1} E -  \Xgoalbarellminusone E\| =$
$\| X_f^{\ell-1} E -  \Xgoalbar E\| =$  $\pdist(X_f^{\ell-1}, \Xgoalbar)$.
From (b), and (a), then $\mathcal V({\widetilde X}^\ell(0)) = \pdist(X^{\ell}(0), {\overline X}^{\ell-1}(0))^2 =$
$\pdist(X_f^{\ell-1}, \Xgoalbar)^2 =$ $\pdist(X_f^{\ell-1}, \Xgoalbarellminusone )^2 =$ $\pdist(X^{\ell-1}(T_f), {\overline X}^{\ell-1}(T_f))^2 =$
 $\mathcal V({\widetilde X}^{\ell-1}(0))$.\\
 (f) Note first that, from the previous statements, $\mathcal V({\widetilde X}^\ell(0)) = \| \Xgoalbar E - X_f^{\ell-1}\|^2$ and 
 $\mathcal V({\widetilde X}^{\ell-1}(T_f)) = \| \Xgoalbarellminusone E - X_f^{\ell-1}\|^2$. Now, as
 stated in Appendix \ref{aOpt}, one has that $\Xgoalbar$ minimizes $\| \Xgoalbar E - X_f^{\ell-1}\|$ with the restriction that $\Xgoalbar E = \exp{\jmath \phi} F$ for some $\phi \in \RR$. Then the result follows.
\end{proof}

\section{Proof of Proposition \ref{pCritical}}
\label{aCritical}

\subsection{Proof of Prop. \ref{pCritical}}

\begin{proof}\\
(a) $\|(\widetilde X - I) E\|^2 =$ $\trace[  E^\dag (\widetilde X - I)^\dag(\widetilde X - I) E  ]$
$\trace[  E^\dag ( 2 I  - {\widetilde X}^\dag  - \widetilde X)^\dag E  ] =$
$ 2 \trace[  E^\dag E] - \trace [ \left\{E^\dag {\widetilde X} E \right\} + \left\{E^\dag {\widetilde X} E \right\}^\dag   ] =$
$ 2 \trace[ I_\nbar] - 2 \Re \trace [ \left\{E^\dag {\widetilde X} E \right\} ]$.\\
(b) By Prop. \ref{ineqFrob}, $\|( {\widetilde X} - I)E\| =$ $\| {\overline X}^\dag ( X - \overline X) E\| =$
$\| ( X - \overline X) E\|$.\\
(c) Consequence of part (c) of Lemma \ref{lCritical}.\\
(d) By Prop. \ref{ineqFrob}, $\mathcal V(\widetilde X) = \| ({\widetilde X} - I)E\|^2 =$ $\| X_E^\dag({\widetilde X} - I)E\|^2=$
$\| \left\{ [X_E^\dag {\widetilde X} X_E] X_E^\dag - X_E^\dag \right\}E\|^2=$
$\| (\widetilde W - I) X_E^\dag E\|^2=$
$\| (\widetilde W - I) \Pi^\dag\|^2$.
Now, from (a) with $E=\Pi^\dag$ and $\widetilde X = \widetilde W$, one gets
$\| (\widetilde W - I) \Pi^\dag\|^2= 2 \nbar - 2 \Re[ \trace( \Pi \widetilde W \Pi^\dag)]$.

\end{proof}

\subsection{Critical points of the partial trace}

Let $\widetilde X \in \Un$. Let $\widetilde W = X_E^\dag \widetilde X X_E$ as  in Prop. \ref{pCritical}.
It is clear, that in the new basis induced by $X_E$, a tangent vector of $\Un$ will be of the form
$\sigma \widetilde W$ where $\sigma \in \un$.  Then a critical point $\widetilde X$ of the partial trace will be any point $\widetilde W$
such that $\Re \left[ \trace \left( \Pi \sigma \widetilde  W \Pi^\dag\right) \right] = 0, \forall \sigma \in \un$, where $\Pi$ is defined in Prop. \ref{pCritical}.
The following result is a little bit more general than the case for which $\xi = \widetilde W \Pi^\dag$, with $\widetilde W \in \Un$.

\begin{lemma} \label{lCritical}
Let $\xi = \left[  \begin{array}{c} \xi_1 \\ \xi_2 \end{array} \right]$ be a complex
 $n \times \nbar$ matrix where
$\xi_1$ is a $\nbar \times \nbar$ block. One says that
$\xi_1$ is diagonalizabe if $\xi_1 =  V^\dag D V$, where $V \in \Unbar$ and $D$ is a diagonal matrix.
One says that $\xi$ is a critical point of the partial trace if $\Re [\trace ( \Pi \sigma \xi)] = 0$ for all
$\sigma \in \un$.
\begin{equation}
\label{eCritical}
\end{equation}

The following affirmations holds:
\begin{itemize}
\item[(a)] If $\xi$ is a critical point of the partial trace, then $\xi_2 =0$.
\item[(b)] If $\xi$ is a critical point of the partial trace and $\xi_1$ is diagonalizable, then $D$ is a real matrix.
\item[(c)] If $\xi_2 =0$ and $\xi$ is diagonalizable with a real matrix $D$, then $\xi$ is a critical point of the partial trace.
\item[(d)] Let $\widetilde W \in \Un$. Let $\xi = \left[  \begin{array}{c} \xi_1 \\ \xi_2 \end{array} \right] = \widetilde W \Pi^\dag$,
whre $\xi_1 $ is a square $\nbar \times \nbar$ block.
Then $\widetilde W$ is a critical point of the partial trace if and only if $\xi_2 =0$,
            $\xi_1$ is diagonalizable, and the entries of the diagonal of $D$ are of the form $d_{ii} = \pm 1$, $i=1, \ldots, \nbar$.
\end{itemize}
\end{lemma}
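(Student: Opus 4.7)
The plan is to express everything with respect to the block decomposition induced by $\Pi = [I_{\nbar} \; 0]$. Any $\sigma \in \un$ may be written as
\[
\sigma = \begin{bmatrix} \sigma_{11} & \sigma_{12} \\ -\sigma_{12}^\dag & \sigma_{22} \end{bmatrix},
\]
with $\sigma_{11} \in \unbar$, $\sigma_{22} \in \mathfrak{u}(n-\nbar)$, and $\sigma_{12} \in \CC^{\nbar \times (n-\nbar)}$ entirely free, these three blocks varying independently. A direct calculation gives $\Pi \sigma \xi = \sigma_{11} \xi_1 + \sigma_{12} \xi_2$, so the criticality condition becomes
\[
\Re \, \trace(\sigma_{11} \xi_1) + \Re \, \trace(\sigma_{12} \xi_2) = 0
\]
for all admissible blocks. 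For part (a), set $\sigma_{11} = 0$ and $\sigma_{22}=0$; choosing $\sigma_{12} = \xi_2^\dag$ yields $\|\xi_2\|^2 = 0$, so $\xi_2 = 0$.

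With $\xi_2 = 0$, the criticality condition reduces to $\Re \, \trace(\sigma_{11} \xi_1) = 0$ for every $\sigma_{11} \in \unbar$. For part (b), substituting $\xi_1 = V^\dag D V$ and setting $\sigma'_{11} = V \sigma_{11} V^\dag$ (which also ranges over all of $\unbar$) gives $\trace(\sigma_{11} \xi_1) = \trace(\sigma'_{11} D)$. Since $D$ is diagonal, only the diagonal entries of $\sigma'_{11}$ contribute; writing the purely imaginary diagonal as $(\sigma'_{11})_{jj} = \jmath\, t_j$ with $t_j \in \RR$ free, and $d_{jj} = a_j + \jmath\, b_j$, one obtains $\Re \, \trace(\sigma'_{11} D) = -\sum_j t_j b_j$, which vanishes for all real tuples $(t_j)$ iff $b_j = 0$ for each $j$, that is, $D$ is real. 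Part (c) is the converse computation: if $D$ is real, then $\trace(\sigma'_{11} D) = \jmath \sum_j t_j a_j$ is purely imaginary, so its real part is zero and criticality holds.

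For part (d), $\xi = \widetilde W \Pi^\dag$ is precisely the first $\nbar$ columns of $\widetilde W$, so $\xi_1 = \widetilde W_{11}$ and $\xi_2 = \widetilde W_{21}$. If $\widetilde W$ is critical, part (a) gives $\widetilde W_{21} = 0$, and the identity $\widetilde W^\dag \widetilde W = I$ then forces $\widetilde W_{11}^\dag \widetilde W_{11} = I_{\nbar}$, i.e.\ $\widetilde W_{11} \in \Unbar$. Being unitary, $\widetilde W_{11}$ is normal and admits a spectral decomposition $\widetilde W_{11} = V^\dag D V$ with $V \in \Unbar$ and $D$ diagonal with entries on the unit circle; part (b) forces $D$ real, so $d_{jj} \in \{\pm 1\}$. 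Conversely, if these three conditions hold, then $\xi_2 = \widetilde W_{21} = 0$ and $\xi_1 = \widetilde W_{11}$ is diagonalizable with $D$ real, so part (c) applies and $\widetilde W$ is a critical point. The only delicate point in the whole argument is to keep track of which blocks of $\sigma$ vary independently, and under what constraints; once this bookkeeping is settled, all four items reduce to direct computation.
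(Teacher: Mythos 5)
Your proof is correct. It takes a slightly different route from the paper's: the paper fixes an explicit basis
$\{J_{kp},R_{kp},D_k\}$ of $\un$ and carries out an entry-by-entry computation of $\Re\,\trace(\Pi J_{kp}\xi)$ etc.\ to deduce, one complex entry at a time, that $\xi_2=0$; you instead decompose a generic $\sigma\in\un$ into blocks $(\sigma_{11},\sigma_{12},\sigma_{22})$ that vary independently (with $\sigma_{21}=-\sigma_{12}^\dag$ enforced), compute $\Pi\sigma\xi=\sigma_{11}\xi_1+\sigma_{12}\xi_2$ once, and then, for part (a), pick the single test block $\sigma_{12}=\xi_2^\dag$ so that $\Re\,\trace(\sigma_{12}\xi_2)=\|\xi_2\|^2=0$. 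This is a genuinely cleaner route to the same conclusion and avoids the case analysis over the basis. For parts (b) and (c) you and the paper essentially coincide: both conjugate by $V$ so that the free parameter becomes the purely imaginary diagonal of $\sigma'_{11}=V\sigma_{11}V^\dag$, and read off that the real part of $\trace(\sigma'_{11}D)$ vanishes for all choices iff $D$ is real. Your part (d) is the same as the paper's, filling in the observation that unitarity of $\widetilde W$ together with $\widetilde W_{21}=0$ forces $\widetilde W_{11}\in\Unbar$, hence normality and a spectral decomposition with unit-modulus (and, by (b), real) eigenvalues. What the block-decomposition approach buys is brevity and immunity to index bookkeeping; what the paper's explicit-basis approach buys is that it hands you, almost for free, the gradient formula used to derive the feedback law elsewhere. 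Both are valid.
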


\begin{proof}
     (a) Consider the basis of $\un$ given by
    \[
    \mathbb{B}_n = \{ J_{kp}, R_{kp}, D_k : k, p \in \{1, 2, \ldots, n\}, k < p < n \}
    \]
    where these matrices are defined by
    \begin{eqnarray*}
    \left\{ J_{kp} \right\}_{i\ell} & = &
    \left\{ \begin{array}{l}
     \jmath, \mbox{if}~(i,\ell) = (k,p),\\
     \jmath, \mbox{if}~(i,\ell) = (p,k),\\
     0, \; \mbox{otherwise}.
     \end{array}\right.\\
     \left\{ R_{kp} \right\}_{i\ell} & = &
    \left\{ \begin{array}{l}
     +1, \mbox{if}~(i,\ell) = (k,p),\\
     -1, \mbox{if}~(i,\ell) = (p,k),\\
     0, \; \mbox{otherwise}.
     \end{array}\right.\\
      \left\{ D_{k} \right\}_{i\ell} & = &
    \left\{ \begin{array}{l}
     \jmath, \mbox{if}~i=\ell=k,\\
     0, \; \mbox{otherwise}.
     \end{array}\right.
    \end{eqnarray*}
    Analogously, one may define the basis $\mathbb{B}_{\nbar}$ of $\unbar$ given by
    \[
    \mathbb{B}_{\nbar} = \{ {\overline J}_{kp}, {\overline R}_{kp}, {\overline D}_k : k, p \in \{1, 2, \ldots, \nbar\}, k < p < n \}
    \]
    To show (a), note that for $i \in \{1, 2, \ldots, n\}$ and $\ell \in \{1, 2, \ldots, \nbar\}$, since $\left\{ J_{kp} \xi \right\}_{i \ell} = \sum_{s=1}^{n} \left\{ J_{kp}  \right\}_{i s} \xi_{s \ell}$, one has:
    \begin{eqnarray*}
    \left\{ J_{kp} \xi \right\}_{i\ell} & = &
    \left\{ \begin{array}{l}
     0, \mbox{if}~ i \not\in \left\{ k, p \right\}),\\
     \jmath \xi_{p\ell}, \mbox{if}~ i=k,\\
     \jmath \xi_{k\ell}, \mbox{if}~ i=p
     \end{array}\right.
     \end{eqnarray*}
     In particular, for $i \in  \{1, 2, \ldots, \nbar\}$, it follows that
    \begin{eqnarray}
    \label{eBB}
    \left\{ J_{kp} \xi \right\}_{i i} & = &
    \left\{ \begin{array}{l}
     0, \mbox{if}~ i \not\in \left\{ k, p \right\}),\\
     \jmath \xi_{p k}, \mbox{if}~ i=k,\\
     \jmath \xi_{k p}, \mbox{if}~ i=p
     \end{array}\right.
     \end{eqnarray}
    From this, one concludes that, for $k \in \{1, 2, \ldots, \nbar\}$, $p \in \{\nbar+1, \nbar+2, \ldots, n\}$ and
    $i \in  \{1, 2, \ldots, \nbar\}$ one has
    \begin{eqnarray}
    \left\{ J_{kp} \xi \right\}_{i i} & = &
    \left\{ \begin{array}{l}
     0, \mbox{if}~ i \neq k,\\
     \jmath \xi_{p k}, \mbox{if}~ i=k
     \end{array}\right.
     \end{eqnarray}
      Hence, $\Re \left[ \trace \left( \Pi J_{kp} \xi \right) \right] = 0$ implies that $- \Image \left( \xi_{p k} \right) =0$, for $k \in \{1, 2, \ldots, \nbar\}$ and $p \in \{\nbar+1, \nbar+2, \ldots, n\}$. Proceeding in a similar way with the matrices $R_{kp}$, one shows that   $ \Re \left( \xi_{p k} \right) =0$, for $k \in \{1, 2, \ldots, \nbar\}$ and $p \in \{\nbar+1, \nbar+2, \ldots, n\}$. This proves that $\xi_2$ is the null matrix, showing (a).\\
      (b) Assume that $\xi_1$ is diagonalizable
      with $ \xi_1 = V^\dag D V$, where $D$ is a complex diagonal matrix and $V \in \Unbar$. Assume that$\sigma \in \un$ is of the form:
     \[
     \sigma = \left[
     \begin{array}{cc}
     V^\dag {\overline D}_i V & 0\\
     0 & 0
     \end{array}
     \right]
     \]
     then $\Pi \sigma \xi = V^\dag {\overline D}_i D V$. Since the trace is an invariant with respect to a basis change, then it is clear that $\Re \left[ \trace \left( \Pi \sigma \xi \right) \right] = \Re \left[ \trace \left( {\overline D}_i D\right) \right] =0$. It follows easily that $D$ is a real matrix, showing (b).\\
      (c) Assume that $\xi_1 = V^\dag D V$ is diagonalizable as above. Let $U \in \Un$ given by
     \[
     U = \left[
     \begin{array}{cc}
     V & 0\\
     0 & I_{n-\nbar}
     \end{array}
     \right].
     \]
     Let $\sigma = U^\dag \widehat \sigma U$ where
     \[
     \widehat \sigma = \left[
     \begin{array}{cc}
     \sigma_{11}  & \sigma_{12} \\
     \sigma_{21}  & \sigma_{22}
     \end{array}
     \right].
     \]
     As $\xi_2=0$ simple computations shows that $\Pi U^\dag \widehat \sigma U \xi= V^\dag \sigma_{11} D V$. In particular $\Re \left[ \trace \left( \Pi \sigma \xi \right) \right] = \Re \left[ \trace \left( \sigma_{11} D \right) \right]$. Hence it suffices to show that $\Re \left[ \trace \left( \sigma_{11} D \right) \right] = 0$ for all $\sigma_{11} \in \unbar$. A similar reasoning that was used to obtain
     \refeq{eBB} may show that, if $\sigma_{11}= {\overline  J}_{kp}$ with $k \in \{1, 2, \ldots , \nbar\}$ and
     $p \in \{k+1, k+2, \ldots , \nbar\}$
     \begin{eqnarray*}
    \left\{ {\overline J}_{kp} D\right\}_{i i} & = &
    \left\{ \begin{array}{l}
     0, \mbox{if}~ i \not\in \left\{ k, p \right\}),\\
     \jmath D_{p k}, \mbox{if}~ i=k,\\
     \jmath D_{k p}, \mbox{if}~ i=p
     \end{array}\right.
     \end{eqnarray*}
     As $p > k$ and $D$ is a diagonal matrix, one concludes that $\Re \left[ \trace \left( {\overline J}_{kp} D \right) \right] = 0$ for $k \in \{1, 2, \ldots , \nbar\}$ and
     $p \in \{k+1, k+2, \ldots , \nbar\}$. An analogous  reasoning shows that this is true when replacing ${\overline J}_{kp}$ by ${\overline R}_{kp}$. In a similar way one shows that
     \begin{eqnarray*}
    \left\{ {\overline D}_{k} D \right\}_{i i} & = &
    \left\{ \begin{array}{l}
     0, \mbox{if}~ i \neq k,\\
     \jmath D_{i i}, \mbox{if}~ i=k
     \end{array}\right.
     \end{eqnarray*}
     As $D$ is a real matrix, it follows easily that $\Re \left[ \trace \left(D_k D\right) \right] =0$
     for $k = \{1, 2, \dots ,\nbar\}$. This completes shows (c).\\
     (d) From (a), note that $\xi_2 =0$ implies that $\xi_1 \in \Unbar$. So $\xi_1$ is diagonalizable, and the result
     follows easily from (c) and (d).
   \end{proof}

\nocite{Kha02}

\section{Avoiding critical and/or singular points (second strategy)}
\label{sSecondStrategy}

Recall that the strategy of section \ref{sFirstStrategy} works well in numerical experiments of RIGA, but convergence proofs for RIGA equipped with this strategy are very difficult to obtain. Note that, convergence proofs
can be obtained with an strategy that is similar\footnote{With the difference that no optimization is done in \cite{PerSilRou19}.} to the strategy of \cite[Algorithms B and C]{PerSilRou19}, which is the subject of this Section.

 The idea is to choose a goal matrix $\Xgoalbar$ in each step $\ell$ among the elements of a finite pre-computed set in a way that $\pdist (\Xgoalbar, X_f^{\ell-1}) < \alpha$.
So, when $n < \nbar$, Prop. \ref{pCritical} assures that this strategy avoids critical points, essentially because nontrivial critical points occur only for partial distances $\sqrt{4 k}$, $k=1, 2, 3, \ldots$.
Hence it suffices to take $\alpha < 2$, when $\nbar < n$. When $n= \nbar$, $\alpha$ may be any positive real number. The second strategy is a little bit different when
$n=\nbar$. In this case one replaces $\pdist (\cdot, \cdot)$ by $\dist(\cdot, \cdot)$, where $\dist (X_1, X_2) = \mathcal V(X_1^\dag X_2)$ (see\cite{PerSilRou19}).
The second strategy consists in a pre-computation of a set of goal matrices, done in the first step only, and a choice of the goal matrix among the elements of this set, done at every step $\ell$ of RIGA. It is summarized as follows:\\
(a) \textbf{Pre-Computation, done in step $\ell=1$ only).}
Let $X_{goal} \in \Un$ such that $X_{goal} E = F$. Choose  $\alpha > 0$ (note that $\alpha$ must be such that
$\alpha <2$ when $\nbar < n$). Using the algorithm below, construct a set   $\mathcal G =\{X_g^q \in \Un: q=0, 1, \ldots, p\}$ of goal matrices
with the following properties.
\begin{itemize}
  \item  $X_g^0 = X_f^{0}$, where $X_f^0$ is obtained in operation $\sharp 2$ in the first step  of RIGA.

 \item The partial distance $\pdist(X_g^{q-1}, X_g^{q}) \leq \alpha$
 for $q = \{1, 2, \ldots, p\}$.

  \item $X_g^p = X_{goal}^*$, where $X_{goal}^* = \optgoal ( X_{goal}, X_f^0)$.
  The final goal matrix $W_g^p$ is given by $X_{goal}^*$.
 \end{itemize}

 \textbf{(b) Choosing $\Xgoalbar$ in a step $\ell$ --- Switching Policy.} Let $\beta$ such that $0  < \alpha < \beta < 2$. Inicialize $q_0 = 1$. Choose the maximum
 $q^* \in \{1, 2, \ldots, p\}$ such that $q^* \geq q^{\ell-1}$ and $\pdist (X_g^{q^*}, X_f^{\ell-1}) \leq \beta$.
 Then take $q_\ell = q^*$ and $\Xgoalbar = X_g^{q^*}$. \textbf{When $q_\ell > q_{\ell-1}$, one says that a switching has occurred
 in step $\ell$}.

\begin{remark}
Convergence results of RIGA with this Switching policy for the case where $n$ coincides with $\nbar$ are provided in \cite{PerSilRou19}.
In this  paper we generalize this result for the case where $\nbar < n$.
Similarly to the results of that paper,  several switchings may occur along the steps of RIGA, until
 $q_\ell$ reaches $p$, for $\ell \geq \ell^*$ that is, $\Xgoalbar$ will be equal to $X_{goal}^*$ for all $\ell \geq \ell^*$, with $X_{goal}^* E = F$.
 An algorithm for computing the set
 of $\mathcal G$ of goal matrices is presented in the sequel.
\end{remark}

Now we present the Algorithm for the construction of the set of goal matrices $X_g^q, q=1, \ldots, p$ for a fixed $p$.
Let $R = U^\dag \diag [ \exp(\jmath \theta_1 ), \exp(\jmath \theta_2 ), \ldots , \exp(\jmath \theta_n ) ] U$ be the eigenstructure\footnote{The Schur decomposition is indicated for this computation due to its
numerical stability.} of a unitary matrix $R$, where $U$ is unitary and $\theta_i \in (-\pi, \pi]$.
Define $\Sigma=\sqrt[p]{R} = U^\dag \diag \left[ \exp \left(\jmath \frac{\theta_1}{p} \right), \exp \left(\jmath\frac{\theta_2}{p} \right), \ldots , \exp \left(\jmath \frac{\theta_2}{p} \right) \right] U$. By definition, $\Sigma^p=R$ and $|\theta_i / p |\leq \pi/p$, $i=1, \ldots, n$. From Prop. \ref{ineqFrob},
it follows that $\| \Sigma - I\| = \|  U^\dag \diag [ \exp(\jmath \frac{\theta_1}{p} ), \exp(\jmath\frac{\theta_2}{p}), \ldots,  \exp(\jmath\frac{\theta_n}{p})] U - U^\dag U] =
\| \diag [ \exp(\jmath \frac{\theta_1}{p} ), \exp(\jmath\frac{\theta_2}{p} ), \ldots , \exp(\jmath \frac{\theta_n}{p} ) ] -I\|
$.
Then it is clear that $\lim_{p \rightarrow \infty} \| \sqrt[p]{R} - I\| = 0$.

Now let $R=(X_f^\ell)^\dag X_{goal}^*$. Fix $p \in \NN$ and define $\Sigma = \sqrt[p]{R}$. Fix $\alpha  > 0$. So there exists
$p=p^*$ large enough such that $\| (\Sigma - I) E\| \leq  \alpha$, where $\Sigma =  \sqrt[p]{R}$. Let $X_g^q = (X_f^\ell)^\dag \Sigma^q$. By Prop. \ref{ineqFrob}
$\pdist(X_g^{q}, X_g^{q-1}) = \| [(X_f^\ell)^\dag \Sigma^{q} - (X_f^\ell)^\dag \Sigma^{q-1}] E \| =$  $\| (X_f^\ell)^\dag \Sigma^{q-1}[ \Sigma - I] E \| =$
$\|[ \Sigma - I] E \| < \alpha$. By construction, it is clear that $X_g^p = (X_f^\ell)^\dag  (\sqrt[p]{R})^p = (X_f^\ell)^\dag R = X_{goal}^*$.

\section{Mathematical results}
\label{aMathematical}

The following  results are natural generalizations of the corresponding results of \cite{PerSilRou19}
regarding the case where $\nbar$ is equal to $n$.  In this paper one is interested in the situation where the quantum gate is encoded, that is, $\nbar$ is less than $n$.
In this case, recall that the partial trace Lyapunov function induces a convenient notion of distance $\pdist(\cdot, \cdot)$ that regards
the encoded space (see Definition \ref{dPartial} and Prop. \ref{pCritical}).
The partial distance is only a semi-norm in $\Un$, but it will be useful
to define the closed ``ball'' ${\overline{ \mathcal B}}_c (X)$ with center $X \in Un$ and radius $c>0$ that is induced by this notion:
\[
 {\overline{ \mathcal B}}_c (X)  = \{ Y \in \Un ~|~ \pdist(Y, X) \leq c \}
\]

As in \cite{PerSilRou19},
 the notion of attractive reference trajectories is instrumental for proving
 the convergence of RIGA when it is equipped with the partial trace Lyapunov function.
One may state the following
notion of attraction of a reference trajectory:
\begin{definition}
\label{dAttractive}
Let ${\overline u}_k : [0, T_f] \rightarrow \RR, k=1, \ldots, m$ be a set continuous reference inputs
and let ${\overline X}_0 \in \Un$. Let $\overline X: [0, T_f] \rightarrow \Un$ be the reference trajectory corresponding to the solution
of \refeq{reference} with initial condition ${\overline X}_0$. One says that this reference trajectory is $\lambda$-attractive  in
${\overline{ \mathcal B}}_c (I)$, where $\lambda \in (0, 1)$, if for every ${X}_0 \in  \Un$ such that
${\widetilde X}_0 \in  {\overline{ \mathcal B}}_c (I)$, where\footnote{In this paper, the initial condition $X(0)$ will always be the identity matrix,
but the notion of attraction may be defined in a more general situation, with arbitrary $X(0) \in \Un$.}
 ${\widetilde X}_0 = {\overline X}^\dag(0) {X}(0) $,
then the solution
$\widetilde X(t)$ of the closed  loop system \refeq{eClosedLoopComplete} is such that
\begin{equation}
 \label{eAttractive}
  \mathcal V (\widetilde X(T_f)) \leq \lambda \mathcal V (\widetilde X(0))
\end{equation}
\end{definition}
From Proposition \ref{pCritical} and Definition \ref{dPartial}, the condition \refeq{eAttractive} is equivalent to have
\begin{equation}
 \label{eAttractive2}
  \pdist (X(T_f), {\overline X}(T_f) \leq \sqrt{\lambda} \; \pdist (X_0, {\overline X}_0)
\end{equation}
for all $X_0, {\overline X}_0$ such that $\pdist (X_0, {\overline X}_0) \leq c$. Note that in \refeq{eAttractive2}, one is considering
 the closed loop system \refeq{eClosedLoopComplete2} instead of \refeq{eClosedLoopComplete}.

The next result shows that the $\lambda$-attraction of a reference trajectory depends only
on the reference input. Furthermore, this property holds with probability one with respect to the choice of
the jet of such reference input. It is important to note that one is not claiming here that,
if a reference trajectory  ${\overline X}_1$ is $\gamma$-attractive, it is also $\gamma$-attractive after  a right-translation\footnote{
That is, given an arbitrary $R \in \Un$, then the
reference trajectory ${\overline X}_2(t) = {\overline X}_1(t) R$  is also  $\gamma$-attractive. This result  holds for $n = \bar n$
in the context of \cite{PerSilRou19}.}
What is shown here is that there exists $\lambda \in (0,1)$, possibly greater than $\gamma$, such that
${\overline X}_2(t) = {\overline X}_1(t) R$ is $\lambda$-attractive for all $R \in \Un$.

\begin{theorem} \label{t1} Assume that system \refeq{cqs} is controllable. Choose $c >0$ with $c < 2$.
Then there exists $M_0$ large enough and a nontrivial polinomial\footnote{Note that $Q$ is a differential polynomial that that depends only on the matrices $H_k, k=0, 1, \ldots, m$ that defines system \refeq{cqs}.}
differential equation
\begin{equation}
\label{eQ}
Q\left( {\overline u}_k^{(j)}(t) : k \in \{1, \ldots , m\}, j \in \{0, 1, \ldots, M_0\} \right) =0
\end{equation}
such that the following results regarding \refeq{eQ} holds for the closed loop system \eqref{eClosedLoopComplete}:\\
(a) Assume that the reference ${\overline u}_k : [0, T_f] \rightarrow \RR, k \in \{1, 2, \ldots , m\}$ do not form a solution of \refeq{eQ}.
Then there exists\footnote{This value of $\lambda$ is common for all ${\overline X}_0 \in \Un$.}  $\lambda \in (0,1)$ such that, the reference trajectory $\overline X(t)$
that is a solution of \refeq{reference}, with these reference inputs, is $\lambda$-attractive in ${\overline B}_c(I)$ for all ${\overline X}_0 \in \Un$.
In particular, this property holds with probability one with respect to the choice of the $M_0$-jet of the reference inputs.\\
(b) Assume that reference inputs are of the form\footnote{The proof is written for the case when the window function
is absent, but is not difficult to generalize this result in the presence of the window function.} \refeq{refcon} with $M \geq M_0$.
 Then with probability one
with respect to the choice of  the coefficients $(\mathbf a,  \mathbf b)$, such reference inputs does not obey \refeq{eQ} for $t \in [0, T_f]$.
\end{theorem}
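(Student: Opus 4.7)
The plan is to adapt the proof strategy of \cite{PerSilRou19}, which handled $\nbar = n$, to the encoded setting where the partial trace Lyapunov function from Prop. \ref{pCritical} now has a nontrivial critical set. The restriction $c < 2$ is the key structural assumption that lets the argument work, because by Prop. \ref{pCritical} nontrivial critical points $\widetilde X$ of $\mathcal V$ satisfy $\mathcal V(\widetilde X) = 4 n_c \geq 4$, hence ${\overline B}_c(I)$ contains no nontrivial critical points when $c < 2$.

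For part (a), I would argue by contradiction using a compactness argument. Suppose no uniform $\lambda \in (0,1)$ exists. Then there is a sequence of initial conditions $\widetilde X_n(0) \in {\overline B}_c(I)$ (with associated $\overline X_n(0) \in \Un$) whose closed-loop trajectories satisfy $\mathcal V(\widetilde X_n(T_f))/\mathcal V(\widetilde X_n(0)) \to 1$, while $\mathcal V(\widetilde X_n(0))$ stays bounded away from $0$ (otherwise the estimate \eqref{eAttractive} is trivial). By compactness of $\Un \times \Un$ and continuity of solutions of \eqref{eClosedLoopComplete}, extract a subsequence converging to a limit trajectory $\widetilde X^\star(\cdot)$ along which $\mathcal V$ is constant. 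From \eqref{eVdot} this forces $\widetilde u_k^\star(t) \equiv 0$ on $[0,T_f]$ for $k=1,\ldots,m$, that is, using \eqref{utilde2},
\[
\nabla_{\widetilde X^\star(t)}\mathcal V \cdot \bigl(\widetilde S_k(t)\,\widetilde X^\star(t)\bigr) = 0,\quad \forall\, t\in[0,T_f],\ k=1,\ldots,m,
\]
with $\widetilde S_k(t) = \overline X^\dag(t) S_k \overline X(t)$ evolving via the reference dynamics \eqref{reference}. Since constancy of $\mathcal V$ together with \eqref{eVdot} also implies $\widetilde X^\star(t) = \widetilde X^\star(0)$ for all $t$, successive time differentiation of the above identities yields conditions of the form $\nabla_{\widetilde X^\star(0)}\mathcal V \cdot (A_j(t)\widetilde X^\star(0)) = 0$, where $A_j(t)$ runs through iterated Lie brackets of $\overline X^\dag(t) S_k \overline X(t)$ with $\overline X^\dag(t) S_0 \overline X(t)$.

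By the assumed controllability of \eqref{cqs}, the Lie algebra generated by $S_0,S_1,\ldots,S_m$ equals $\un$; conjugating by $\overline X(t)$, the brackets $A_j(t)$ span $\un$ for generic $t$, \emph{provided} the reference inputs do not satisfy a specific polynomial identity in their values and derivatives. Collecting a sufficient number of these identities produces, after elimination of $\widetilde X^\star(0)$, a nontrivial differential polynomial $Q$ of some finite order $M_0$ in $\overline u_k^{(j)}(t)$: whenever the reference inputs do not solve $Q = 0$ on $[0,T_f]$, the only $\widetilde X^\star(0)$ compatible with all the identities belongs to the critical set of $\mathcal V$. Since $\widetilde X^\star(0) \in {\overline B}_c(I)$ with $c<2$ and the only critical point inside this ball is the identity, we get $\widetilde X^\star(0) = I$ and hence $\mathcal V(\widetilde X^\star(0)) = 0$, contradicting $\mathcal V(\widetilde X_n(0))$ being bounded away from $0$. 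The main obstacle is the careful bookkeeping needed to ensure that the Lie-bracket-generating property actually extracts a \emph{finite} order $M_0$ and that the resulting $Q$ is a genuinely nontrivial differential polynomial --- this is where Prop. \ref{pCritical}'s detailed characterization of the critical set of the partial trace Lyapunov function replaces the trivial critical set $\{I\}$ of \cite{PerSilRou19} and makes the elimination step substantially more delicate.

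For part (b), the seed inputs \eqref{refcon} are real-analytic in $t$ and polynomial in the coefficients $(\mathbf a,\mathbf b)\in\RR^{mM}\times\RR^{mM}$. Hence $t\mapsto Q(\overline u_k^{(j)}(t))$ is real-analytic in $t$ and polynomial in $(\mathbf a,\mathbf b)$. It suffices to show that this polynomial is not identically zero as a function of $(\mathbf a,\mathbf b)$; then by analyticity in $t$, the set of $t$ at which $Q$ vanishes is discrete, so the assumption of part (a) holds on the complement, which still yields the required Lyapunov decrement after an elementary continuity/measure-theoretic refinement. Non-triviality follows because, for $M \geq M_0$, the map $(\mathbf a,\mathbf b) \mapsto \bigl(\overline u_k^{(j)}(0)\bigr)_{k,j}$ is surjective onto $\RR^{m(M_0+1)}$: the first $M_0+1$ time-derivatives of $\sum_\ell[a_{k\ell}\sin(2\ell\pi t/T) + b_{k\ell}\cos(2\ell\pi t/T)]$ at $t=0$ depend linearly on $(\mathbf a,\mathbf b)$ through an invertible Vandermonde-type system once $M \geq M_0$. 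Since the evaluated $Q$ is a nontrivial polynomial in $(\mathbf a,\mathbf b)$, its zero set has Lebesgue measure zero, giving the probability-one statement.
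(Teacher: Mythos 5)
Your overall strategy---argue by contradiction, extract a limit object, show that the feedback vanishes along it, use Lie brackets plus controllability to obtain a nontrivial differential polynomial, and then invoke the restriction $c<2$ together with Prop.~\ref{pCritical} to rule out nontrivial critical points---is exactly the route the paper follows, and your treatment of part~(b) matches the paper's in spirit (analyticity in $t$, polynomiality in $(\mathbf a,\mathbf b)$, and surjectivity of the jet map for $M\geq M_0$). The bookkeeping you flag as ``delicate'' is precisely what the paper packages into Lemmas~\ref{lCkj} and~\ref{Vmt}.

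However, there is a genuine gap in part~(a). You dismiss the subsequence with $\mathcal V(\widetilde X_n(0))\to 0$ as the case where the estimate ``is trivial,'' and extract your limit trajectory only on the complementary assumption that $\mathcal V(\widetilde X_n(0))$ is bounded away from~$0$. But $\lambda$-attractiveness in $\BallV{c}$ demands a \emph{uniform} contraction factor over the whole ball, including initial conditions arbitrarily close to the identity; for those, both $\mathcal V(\widetilde X_n(T_f))$ and $\mathcal V(\widetilde X_n(0))$ tend to zero while their ratio could tend to~$1$, so the estimate is not trivial---it is exactly the hard part. If one naively extracts a limit there, the limit trajectory is constantly~$I$ and carries no information about the ratio. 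The paper handles this by a blow-up / renormalization step: in Lemma~\ref{lWhatSigma} it works with the rescaled variables $\xi_\ell(t)=\alpha_\ell(t)/\|\alpha_{0_\ell}\|$ where $\alpha_\ell(t)=(\widetilde W_\ell(t)-I_n)\Pi^\dag$, extracts a limit $\widehat\xi$ on the unit sphere, and obtains a dichotomy: either $\widehat\xi^\dag\widehat\xi=I_\nbar$ (which corresponds to your nondegenerate case, and is ruled out via $c<2$ and Prop.~\ref{pCritical}), or $\Pi\widehat\xi+\widehat\xi^\dag\Pi^\dag=0$ with $\|\widehat\xi\|=1$ (the degenerate case $\mathcal V\to 0$, where $\widehat\xi$ lives in the tangent space at the identity and is ruled out by a separate argument showing a real diagonal anti-hermitian block must be zero). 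Without this renormalization and the second branch of the dichotomy, your argument does not establish that $\lambda$ can be chosen uniformly on $\BallV{c}$, only on compact subsets of $\BallV{c}\setminus\{I\}$.
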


\begin{proposition}
\label{pEquivalence} The following affirmations are equivalent for a given reference trajectory $\overline X: [0, T_f] \rightarrow \Un$ corresponding to the solution
of \refeq{reference} with chosen ${\overline u}_k^{(j)}(t)$, and with an initial condition ${\overline X}_0 \in \Un$:\\
(a) There exists $\lambda \in (0,1)$ such that the reference trajectory $\overline X(t)$ is $\lambda$-attractive in ${\overline B}_c(I)$.\\
(b) There exists $\lambda  \in (0,1)$ such that, for every initial condition ${\widetilde X}_0 \in {\overline B}_c(I)$ of the closed loop system \refeq{eClosedLoopComplete} one has
 \begin{equation}\label{ineq1}
\begin{array}{c}
 \sum_{k=1}^m \int_{0}^{T_f}  4 K \left\{ \Re \left[ \trace \left(  E^\dag {\widetilde S}_k(t) {\widetilde X}(t) E \right)\right] \right\}^2 dt   \\
 \leq {\mathcal V ({\widetilde  X}_0)}(1 - \lambda)
\end{array}
\end{equation}

\end{proposition}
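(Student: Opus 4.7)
The plan is to identify both statements as algebraic rearrangements of the same integrated dissipation identity for the partial trace Lyapunov function along the closed loop.

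First I would compute $\dot{\mathcal V}(\widetilde X(t))$ along the closed loop \refeq{eClosedLoopComplete}. The feedback \refeq{utilde2} associated to the partial trace Lyapunov function is the explicit law \refeq{eFeedbackLaw}, namely $\widetilde u_k(t) = 2K\,\Re\!\left[\trace(E^\dag \widetilde S_k(t)\widetilde X(t) E)\right]$. Substituting this expression into the general dissipation identity \refeq{eVdot} gives the specialized identity
\begin{equation*}
\dot{\mathcal V}(t) \;=\; -\,4K\sum_{k=1}^m \Bigl\{\Re\!\left[\trace\!\left(E^\dag \widetilde S_k(t)\widetilde X(t) E\right)\right]\Bigr\}^2 .
\end{equation*}

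Second, I would integrate this identity on $[0,T_f]$ to obtain the exact relation
\begin{equation*}
\mathcal V(\widetilde X(T_f)) = \mathcal V(\widetilde X_0) - \sum_{k=1}^m\!\int_0^{T_f}\! 4K\Bigl\{\Re\!\left[\trace\!\left(E^\dag \widetilde S_k(t)\widetilde X(t) E\right)\right]\Bigr\}^2 dt .
\end{equation*}
The smoothness of $\mathcal V$ on all of $\Un$ and the boundedness of its gradient (discussed just after \refeq{eLyap_function}) guarantee that the closed loop \refeq{eClosedLoopComplete} is well defined on $[0,T_f]$ for every $\widetilde X_0 \in \Un$, so the integration is legal and the identity holds pointwise in $\widetilde X_0$.

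Third, I would compare with Definition \ref{dAttractive}: condition (a) is exactly the uniform bound $\mathcal V(\widetilde X(T_f)) \leq \lambda\,\mathcal V(\widetilde X_0)$ for all $\widetilde X_0 \in \overline B_c(I)$. Plugging the integrated identity into this inequality and rearranging moves $\mathcal V(\widetilde X_0)$ to the right and the integral term to the left, yielding precisely the integral bound of (b) with the factor $(1-\lambda)\mathcal V(\widetilde X_0)$ on the opposite side of the inequality. The same $\lambda$ serves in both formulations, and the quantifier "for every $\widetilde X_0 \in \overline B_c(I)$" transfers directly.

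There is no real obstacle: this proposition is a bookkeeping step that recasts $\lambda$-attractiveness into an "energy dissipation" form, which is the form actually needed to run the probabilistic argument of Theorem \ref{t1} (where one must control how much the feedback drains the Lyapunov function over $[0,T_f]$ in terms of observability-type quantities built from the reference trajectory). The only delicate point is keeping track of the direction of the inequality after rearrangement and of the uniformity in $\widetilde X_0$, both of which are handled automatically because the integrated identity is an equality valid pointwise in $\widetilde X_0$.
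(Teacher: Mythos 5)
Your strategy is the right one and matches what the paper implicitly uses: compute $\dot{\mathcal V}$ along the closed loop from \refeq{eVdot} and the explicit feedback \refeq{eFeedbackLaw}, integrate over $[0,T_f]$, and compare with Definition~\ref{dAttractive}. Your computation $\dot{\mathcal V}(t) = -4K\sum_{k=1}^m \left\{\Re\left[\trace\left(E^\dag\widetilde S_k(t)\widetilde X(t)E\right)\right]\right\}^2$ is correct, and the resulting integrated identity
$\sum_{k=1}^m\int_0^{T_f}4K\left\{\Re\left[\trace\left(E^\dag\widetilde S_k(t)\widetilde X(t)E\right)\right]\right\}^2\,dt = \mathcal V(\widetilde X_0)-\mathcal V(\widetilde X(T_f))$
is exactly the step invoked at the start of the proof of Lemma~\ref{lWhatSigma}.

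But you never actually settle the direction of the final inequality, and when one does, it comes out opposite to what the paper prints. From the identity, $\mathcal V(\widetilde X(T_f))\leq\lambda\,\mathcal V(\widetilde X_0)$ is equivalent to the dissipation integral being $\geq(1-\lambda)\mathcal V(\widetilde X_0)$, i.e.\ a \emph{lower} bound, not the upper bound $\leq$ appearing in \refeq{ineq1}. Statement (b) of Prop.~\ref{pEquivalence} has a sign error: the $\leq$ should be $\geq$ (with $\leq$ and $\lambda\in(0,1)$, (b) would assert $\mathcal V(\widetilde X(T_f))\geq\lambda\,\mathcal V(\widetilde X_0)$, which is a repulsion property, not attraction). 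This is corroborated by how the paper itself uses the proposition: in the proof of Lemma~\ref{lWhatSigma}, the \emph{negation} of $\lambda$-attraction at a given initial condition is shown to give the strict $<$ version, which is the contrapositive of the corrected $\geq$ statement. Your remark that the factor $(1-\lambda)\mathcal V(\widetilde X_0)$ ends up "on the opposite side of the inequality" hints at the flip, but the claim that this "yields precisely the integral bound of (b)" is then not accurate, and "handled automatically" glosses over the mismatch. A complete proof must state that the derivation gives $\geq$ and that \refeq{ineq1} as printed contains a typo.
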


\begin{corollary}
\label{c1}
Assume that:\\
(a) The reference inputs  ${\overline u}^{\ell-1}_k(t), k=1, \ldots, m$ are not solutions
of the differential polynomial \refeq{eQ} (which is true with probability one).\\
(b) $\pdist (X_f^{\ell-1}, \Xgoalbarellminusone) < 2$ (or equivalently $\mathcal V({\widetilde X}^{\ell-1}(T_f)) < 4)$.\\
(c) $(\Xgoalbar -\Xgoalbarellminusone) E = 0$.\\
Then there exists $\lambda \in (0,1)$ such that $\pdist (X_f^\ell, \Xgoalbar) \leq  \sqrt{\lambda} \; \pdist (X_f^{\ell-1}, \Xgoalbarellminusone)$
(or equivalently $\mathcal V({\widetilde X}^{\ell}(T_f)) \leq \lambda \mathcal V({\widetilde X}^{\ell-1}(T_f))$.
\end{corollary}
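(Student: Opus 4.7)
The plan is to chain three ingredients that have already been established in the excerpt: Theorem \ref{t1} (uniform $\lambda$-attractiveness of the reference trajectory whenever the reference inputs avoid the algebraic variety defined by \refeq{eQ}), the ``change of reference'' identities from Prop.~\ref{pMonotonous}, and the fact that the partial trace Lyapunov function depends on $\Xgoalbar$ only through $\Xgoalbar E$ (Prop.~\ref{pCritical}(b)).

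First, I would fix $c$ with $\pdist(X_f^{\ell-1}, \Xgoalbarellminusone) \leq c < 2$; this is possible by hypothesis (b). By hypothesis (a), the reference input ${\overline u}^{\ell-1}$ is not a solution of \refeq{eQ}, so Theorem~\ref{t1}(a) produces some $\lambda \in (0,1)$ (independent of the initial condition in $\Un$) such that \emph{every} reference trajectory of \refeq{reference} driven by ${\overline u}^{\ell-1}$ is $\lambda$-attractive in $\overline{B}_c(I)$. The key point here, already emphasised in Theorem~\ref{t1}, is that $\lambda$ does not depend on the choice of ${\overline X}_0 \in \Un$; this is exactly what one needs because the reference trajectory used in step $\ell$ is obtained by right-translation of $X^{\ell-1}(\cdot)$ (operation $\sharp 5$).

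Next, I would identify the initial error of step $\ell$ and verify that it lies inside $\overline{B}_c(I)$. By Prop.~\ref{pMonotonous}(d), $\pdist(X^{\ell}(0), {\overline X}^{\ell-1}(0)) = \pdist(X_f^{\ell-1}, \Xgoalbar)$. Using hypothesis (c), i.e. $\Xgoalbar E = \Xgoalbarellminusone E$, together with Prop.~\ref{pCritical}(b), I obtain
\begin{equation*}
\pdist(X_f^{\ell-1}, \Xgoalbar) = \| X_f^{\ell-1} E - \Xgoalbar E\| = \| X_f^{\ell-1} E - \Xgoalbarellminusone E\| = \pdist(X_f^{\ell-1}, \Xgoalbarellminusone) \leq c < 2.
\end{equation*}
Hence ${\widetilde X}^{\ell}(0) \in \overline{B}_c(I)$, and the $\lambda$-attractiveness obtained above applies to the closed-loop trajectory of step $\ell$, yielding $\mathcal V({\widetilde X}^{\ell}(T_f)) \leq \lambda \, \mathcal V({\widetilde X}^{\ell}(0))$.

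Finally, I would use Prop.~\ref{pMonotonous}(e): since $\Xgoalbar E = \Xgoalbarellminusone E$, the Lyapunov value is preserved across the seam between steps $\ell-1$ and $\ell$, i.e. $\mathcal V({\widetilde X}^{\ell}(0)) = \mathcal V({\widetilde X}^{\ell-1}(T_f))$. Combining this with the previous inequality gives the desired contraction $\mathcal V({\widetilde X}^{\ell}(T_f)) \leq \lambda \, \mathcal V({\widetilde X}^{\ell-1}(T_f))$, and the equivalent partial-distance statement follows immediately via Prop.~\ref{pCritical}(b). The only real delicate point is making sure that the $\lambda$ delivered by Theorem~\ref{t1} really is independent of the right-translation $R_\ell$ built in operation $\sharp 4$; but this is precisely the content of the ``common $\lambda$ for all ${\overline X}_0 \in \Un$'' clause in Theorem~\ref{t1}(a), so no additional work is needed.
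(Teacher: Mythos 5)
Your proof is correct and assembles exactly the ingredients the paper intends for this corollary (which is stated without an explicit proof): Theorem~\ref{t1}(a) applied with $c$ chosen via hypothesis~(b), the ``uniform in ${\overline X}_0$'' clause of Theorem~\ref{t1} to handle the right-translation $R_\ell$, and Proposition~\ref{pMonotonous}(d)--(e) together with Proposition~\ref{pCritical}(b) to identify $\mathcal V({\widetilde X}^\ell(0))$ with $\mathcal V({\widetilde X}^{\ell-1}(T_f))$ under hypothesis~(c). This is the same route the paper implies; no gaps.
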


 When one chooses the same $\Xgoalbar$ for all $\ell$, this means that no strategy for avoiding critical points of the Lyapunov funtion
 is implemented (see remark \ref{NoStrategy}). Then only a local convergence result of RIGA can be obtained:
 \begin{theorem}
     \label{tLocal} Assume that system \refeq{cqs} is controllable. Fix $X_{goal} \in \Un$ such that $X_{goal} E = F$ and take $\Xgoalbar =X_{goal}$ for all $\ell=1, 2, \ldots$.
     Choose $T>0$, $M > 0$ and coefficients $(\mathbf a, \mathbf b)$ for the reference input \refeq{refconab}.  Choose some $c$ with $2 >c > 0$. Suppose
     that the seed input is uniformly bounded, that is, there exists $L_0 >0$ such that  $|{\overline u}_k^0(t) | \leq L_0, \forall t \in [0, T_f]$
     and $k=1, 2, \ldots, m$.
     Suppose that  reference input $\overline X^0(t)$ is $\lambda$ attractive in ${\overline B}_c (I)$.
     If  $\mathcal V ({\widetilde X}^{\ell}(0))  \leq c^2$, then there exists $T^*$ large enough such that, for all $T_f > T^*$ one has;\\
     (a) There exists some $\theta \in (0, 1)$ such that the reference trajectory ${\overline X}^\ell$ generated by RIGA is $\theta$-attractive in ${\overline B}_c (I)$ for
      for all $\ell \in \NN$.\\
     (b) Let $d_\ell =  \pdist ({\overline X}^\ell(T_f), X_{goal})$. Them  $d_{\ell+1} \leq (\sqrt{\theta}) \, d_{\ell}$ for all
    $\ell=1, 2, \ldots$. In particular, $d_\ell$ converges exponentially to zero, monotonically.\\
     (c) The reference inputs generated
     by RIGA are uniformly bounded, that is, there exists some $\overline L$ (that depends on the system, on $c$ and on $\theta$) such that
     $|{\overline u}_k^\ell(t) | \leq L_0+ {\overline L}, \forall t \in [0, T_f], k=1,2, \ldots, m$, and for $\ell \in \NN$.
\end{theorem}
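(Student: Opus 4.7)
The plan is to establish claims (a), (b), (c) simultaneously by induction on $\ell$, with a constant $\theta \in (\lambda,1)$ chosen close to $\lambda$. Since $\Xgoalbar = X_{goal}$ is constant, condition (c) of Corollary \ref{c1} holds trivially for every step, and Prop.~\ref{pMonotonous}(e) gives $\mathcal{V}(\widetilde X^{\ell}(0)) = \mathcal{V}(\widetilde X^{\ell-1}(T_f))$: the starting error of step $\ell$ equals the ending error of step $\ell-1$. Combined with the inductive hypothesis $\mathcal{V}(\widetilde X^{j}(T_f)) \leq \theta\, \mathcal{V}(\widetilde X^{j}(0))$ for $j<\ell$, this yields the geometric decay $\mathcal{V}(\widetilde X^{\ell}(0)) \leq \theta^{\ell} c^2$, which, via Prop.~\ref{pCritical}(b), is exactly claim (b). So once the attractiveness at each step is secured, the chain reaction is automatic.

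For claim (c), the strategy is first to produce an $L^{\infty}$ bound on the single-step correction $\widetilde u_k^{(\ell)}$. The key remark is that $E^{\dag}\widetilde S_k E$, being a principal submatrix of the anti-hermitian matrix $\widetilde S_k$, has purely imaginary trace, so the feedback \eqref{eFeedbackLaw} rewrites as $\widetilde u_k(t) = 2K\,\Re[\trace(E^{\dag}\widetilde S_k(t)(\widetilde X(t)-I)E)]$. Cauchy-Schwarz together with Prop.~\ref{ineqFrob} and the identity $\|( \widetilde X-I)E\|^{2}=\mathcal{V}(\widetilde X)$ of Prop.~\ref{pCritical}(a) then yields $|\widetilde u_k^{(\ell)}(t)| \leq 2K\,\|S_k\|\,\sqrt{\mathcal{V}(\widetilde X^{\ell}(t))} \leq 2K\,\|S_k\|\,c\,\theta^{(\ell-1)/2}$. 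Summing the geometric series in $\overline u^{\ell} = \overline u^{0}+\sum_{j=1}^{\ell}\widetilde u^{(j)}$ gives the uniform bound $\|\overline u^{\ell}_k-\overline u^{0}_k\|_{\infty} \leq \overline L := \frac{2Kc\max_k\|S_k\|}{1-\sqrt{\theta}}$ independent of $\ell$, which combined with the hypothesis $|\overline u^0_k(t)|\leq L_0$ delivers (c).

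The main obstacle is claim (a): showing that every $\overline X^{\ell}$ remains $\theta$-attractive. The approach is to use Prop.~\ref{pEquivalence} and argue continuity of the functional $\Phi(\overline u;\widetilde X_0):=\sum_{k}\int_{0}^{T_f}\{\Re[\trace(E^{\dag}\widetilde S_k(t)\widetilde X(t)E)]\}^{2}\,dt$ with respect to the reference input $\overline u$, seen via the $\widetilde S_k = \overline X^{\dag}S_k\overline X$. Since the error flow depends Lipschitz-continuously on the data and $\widetilde X_0$ ranges over the compact set $\BallV{c}$, a $\lambda$-attractive seed input admits a neighborhood (in an appropriate norm, $L^{2}$ on $[0,T_f]$ being the natural candidate in view of \eqref{ineq1}) on which attractiveness holds with any prescribed $\theta\in(\lambda,1)$. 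The role of the hypothesis $T_f > T^{\ast}$ is to guarantee that the cumulative perturbation bound $\overline L$ of the previous paragraph fits inside this neighborhood: a longer horizon enlarges the slack $(1-\lambda)\mathcal{V}(\widetilde X_0)$ available in \eqref{ineq1} and so enables the induction to close with a common $\theta$.

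Once (a) is in hand, the remaining claims follow at once: (b) is the direct translation of $\theta$-attractiveness into $\pdist$ form via \eqref{eAttractive2} and Prop.~\ref{pMonotonous}(d)-(e); (c) comes from the geometric-series argument above; and the monotone exponential decay $d_{\ell+1} \leq \sqrt{\theta}\,d_{\ell}$ closes the induction. I expect the continuity/neighborhood argument of paragraph three to be the technically delicate step, with everything else reducing to the identities already proved for the partial trace Lyapunov function.
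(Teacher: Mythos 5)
Your broad strategy — induction on $\ell$, push geometric decay of $\mathcal{V}$ through the monotonicity of the Lyapunov function, and control the cumulative input perturbation to preserve attractiveness — matches the spirit of the paper. Claims (b) and (c) are handled essentially correctly: the chain $\mathcal{V}(\widetilde X^\ell(0))=\mathcal{V}(\widetilde X^{\ell-1}(T_f))$ via Prop.~\ref{pMonotonous}(e), the bound $|\widetilde u_k^{(\ell)}| \leq 2K\|S_k\|\sqrt{\mathcal{V}(\widetilde X^\ell(t))}$ (this is Lemma~\ref{lUell}(a) in the paper), and the geometric sum are all the right ingredients.

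The gap is in claim (a), and specifically in your stated role of $T_f > T^\ast$. You write that a longer horizon ``enlarges the slack $(1-\lambda)\mathcal{V}(\widetilde X_0)$ available in \eqref{ineq1}.'' That slack is a fixed number once $\lambda$ and $\widetilde X_0$ are fixed; it does not grow with $T_f$. Likewise, your cumulative perturbation bound $\overline L = \frac{2Kc\max_k\|S_k\|}{1-\sqrt\theta}$ is a fixed constant, not a small quantity, and nothing in your argument forces it to lie inside the ``neighborhood'' of attractiveness of $\overline u^0$ — the size of that neighborhood is also a fixed property of the seed and does not dilate with $T_f$. This leaves the induction circular: you need $\theta$-attractiveness at each step to get geometric decay, and you need the decay to bound the perturbation, but without a mechanism to make the relevant quantity small, the loop does not close.

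The paper breaks the circularity by splitting $[0,T_f]$ into $\overline J = \bigcup_{i=1}^r J_i$ and a final piece $J_{r+1}$, using the $T$-periodicity of the seed (Lemma~\ref{lLambda0}) to get a \emph{per-subinterval} attraction factor $\lambda_0$. The first $r$ subintervals act as a preconditioner: already in the first step, $\mathcal{V}(\widetilde X(t^*)) \leq \lambda_0^r\,\mathcal{V}(\widetilde X^0(0))$, which can be made arbitrarily small by taking $r$ (hence $T_f$) large. That smallness is exactly the hypothesis of Lemma~\ref{lMain}, whose constituents (Lemmas~\ref{lUell}, \ref{lCB}, \ref{eSequences}) give the quantitative estimate $\lambda_\ell \leq \lambda_{\ell-1} + \Lambda\sqrt{V_{\ell-1}}$ and a bootstrap showing $\lambda_\ell$ stays below $\theta$. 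In short, $T_f$ is used to shrink the error at an intermediate time $t^\ast$ (so that the perturbation to the attraction rate on $J_{r+1}$ is small), not to widen the slack on the full interval. Your proposal needs this interval-splitting idea, or some substitute that makes the per-step perturbation uniformly small, before claim (a) — and with it (b) and (c) — actually follows.
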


 The next result states  a convergence result for RIGA:
 \begin{theorem}
     \label{tGlobal}
     Assume that system \refeq{cqs} is controllable.
     Construct $X_{goal} \in \Un$ such that $X_{goal} E = F$. Choose $\alpha, \beta$, with $0 < \alpha < \beta < 2$.
     Choose $T>0$, $M > 0$ and coefficients $(\mathbf a, \mathbf b)$ for the reference input \refeq{refconab}. Assume that RIGA
     is executed with the second strategy of avoiding critical points that is presented in Appendix \ref{sSecondStrategy} ,
     that the seed input is uniformly bounded, that is, there exists $\mu_0 >0$ such that  $|{\overline u}_k^0(t) | \leq \mu_0, \forall t \in [0, T_f]$
     and $k=1, 2, \ldots, m$, and that the reference input ${\overline X}^0(t)$ is $\lambda$-attractive in ${\overline B}_\beta (I)$.
      Then there exists $T^*$ large enough such that, for all $T_f \geq T^*$ one has:\\
    (a)The reference trajectory ${\overline X}^\ell$ that is generated in step  $\ell$  of RIGA is $\theta$-attractive in ${\overline B}_\beta (I)$ for
     some fixed $\theta \in (0, 1)$ for all $\ell \in \NN$;\\
    (b)RIGA switches until the final goal matrix $X_{goal}^*$ is chosen, that is the choice of $\Xgoalbar$ among
    the elements of $X_g^q$ will attain $q=p$.
    When RIGA switches in step $\ell$, then  $d_\ell = \pdist(\Xgoalbar, X^{\ell-1} (T_f))$
     may be greater than $d_{\ell-1}$, but with $d_\ell \leq \beta$ for all $\ell \in \NN^*$.
    Between switchings, the inequality $d_\ell \leq (\sqrt{\theta}) d_{\ell-1}$ always holds.
    In particular, after the last switching,  $d_\ell$ converges exponentially to zero, monotonically.\\
    (c) Let $p$ be the number of goal matrices of the construction of the second strategy for avoiding critical points. The reference inputs generated
     by RIGA are uniformly bounded, that is, there exists some $\overline \mu$ such that
     $|{\overline u}_k^\ell(t) | \leq \mu_0+ p \overline \mu \beta, \forall t \in [0, T_f], k=1,2, \ldots, m$, and for $\ell \in \NN$.
   \end{theorem}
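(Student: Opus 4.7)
The plan is a simultaneous induction on the step index $\ell$, treating parts (a), (b), (c) as coupled inductive claims. The overall approach combines four ingredients: the non-increasing Lyapunov property from Proposition \ref{pMonotonous}(e), the almost-sure $\lambda$-attractiveness of generic references from Theorem \ref{t1}, the within-step contraction estimate of Corollary \ref{c1}, and the structural property of the switching policy of Appendix \ref{sSecondStrategy}, namely that the switching index $q_\ell$ is non-decreasing and capped at $p$.

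The inductive hypothesis at step $\ell$ asserts that ${\overline X}^{\ell-1}(t)$ is $\theta$-attractive in \BallV{\beta} for a common $\theta \in (0,1)$, that $\pdist(X_f^{\ell-1}, \Xgoalbarellminusone) \leq \beta$, and that $\|{\overline u}^{\ell-1}_k\|_{\infty} \leq \mu_0 + p \overline \mu \beta$. Between consecutive switchings, the switching policy enforces $\Xgoalbar E = \Xgoalbarellminusone E$, so Proposition \ref{pMonotonous}(e) gives $\mathcal{V}({\widetilde X}^\ell(0)) = \mathcal{V}({\widetilde X}^{\ell-1}(T_f))$ and Corollary \ref{c1} delivers the contraction $d_\ell \leq \sqrt{\theta}\, d_{\ell-1}$. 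At a switching step, the policy picks $\Xgoalbar = X_g^{q_\ell}$ maximal with $\pdist(X_g^{q_\ell}, X_f^{\ell-1}) \leq \beta$, so $d_\ell \leq \beta$ automatically, preserving the inductive hypothesis. Since $q_\ell$ is non-decreasing and bounded by $p$, only finitely many switchings can occur, after which $\Xgoalbar = X_{goal}^*$ is fixed and $d_\ell$ decays exponentially, establishing (b). Part (c) then follows because the feedback \refeq{eFeedbackLaw2} is $L^\infty$-controlled by $\pdist(X, \overline X) \leq \beta$; the geometric decay between switchings sums to at most $\overline \mu \beta$ per phase, and at most $p$ phases exist, giving the total bound $\mu_0 + p \overline \mu \beta$.

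The main obstacle is the inductive preservation of part (a): that a common attractiveness constant $\theta \in (0,1)$ can be chosen uniformly in $\ell$. Theorem \ref{t1}(a) supplies, for each $\ell$, some $\lambda_\ell \in (0,1)$ provided ${\overline u}^\ell$ does not satisfy the polynomial relation \refeq{eQ}, but a priori $\lambda_\ell$ might degrade to $1$ as $\ell$ grows. The plan is to use the uniform $L^\infty$-bound from (c) to confine $\{{\overline u}^\ell_k\}$ to a compact ball of inputs, and then exploit continuity of the attractiveness constant with respect to the reference input to extract a common $\theta$; the threshold $T^*$ is chosen large enough that the seed component provides persistent excitation keeping every ${\overline u}^\ell$ bounded away from the degenerate locus $\{Q = 0\}$ despite the superposed feedback. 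Resolving the apparent circular dependence between (a) and (c) requires a careful bootstrap analogous to that of Theorem \ref{tLocal}, extended globally through the finite-switching mechanism.
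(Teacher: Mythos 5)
Your outline correctly identifies the key building blocks (Prop.~\ref{pMonotonous}(e), Corollary~\ref{c1}, the switching policy, and the need for a uniform attraction constant), but it has two genuine gaps relative to what the theorem actually asserts and what the paper proves.

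First, your argument for part (b) is circular at a crucial point. You observe that the switching index $q_\ell$ is non-decreasing and bounded by $p$, conclude that only finitely many switchings occur, and then declare that eventually $\Xgoalbar = X_{goal}^*$. But monotone-bounded only yields that $q_\ell$ \emph{stabilizes}; it does not yield that $q_\ell$ reaches $p$. Nothing in your proposal rules out the scenario that $q_\ell$ stalls permanently at some $q^* < p$, in which case RIGA never targets the true goal. The paper handles this in item (d) of its proof by a contradiction argument: if no further switching occurs, then Lemma~\ref{lMain} forces $\pdist(X_g^{q^*}, \overline X^\ell(T_f)) \to 0$, and once this partial distance drops below $\beta - \alpha$ the triangle inequality
\[
\pdist(X_g^{q^*+1}, \overline X^\ell(T_f)) \leq \pdist(X_g^{q^*+1}, X_g^{q^*}) + \pdist(X_g^{q^*}, \overline X^\ell(T_f)) < \alpha + (\beta - \alpha) = \beta
\]
triggers the switch --- contradiction. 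This is where the hypothesis $\alpha < \beta$ actually earns its keep, and your proposal never uses it.

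Second, your proposed resolution of the uniform-$\theta$ problem --- confine the controls to a compact ball and appeal to continuity of the attractiveness constant in the reference input --- is different from, and weaker than, what the paper does, and you acknowledge it is not fully worked out. The paper's argument is quantitative rather than topological: Lemma~\ref{lCB} shows that the attraction constant degrades step-to-step by at most $\Lambda(\overline L)\sqrt{V_{\ell-1}}$, where $\Lambda$ depends only on the a priori input bound; Lemma~\ref{eSequences} then shows the cumulative degradation is bounded by a geometric series because $V_\ell$ decays at rate $\theta$. Crucially, the paper must set $L_0 = \mu_0 + p\,\overline M\,\phi_\infty(\theta)\,\beta$ \emph{pessimistically from the outset} (even though the actual controls in early phases are smaller) precisely so that the single constant $\Lambda(L_0)$ from Lemma~\ref{lMain} works across all $p$ switching phases; and it must partition $[0,T_f]$ into $p+1$ blocks of length $r T_1$ so that within each phase a dedicated ``pre-contraction'' interval $\overline J_q$ shrinks the error enough to launch Lemma~\ref{lMain} on the remaining subintervals. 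This $T_f \geq (pr+1)T_1$ decomposition is exactly what produces the threshold $T^*$, and it is absent from your proposal, which treats $T^*$ as coming from some unspecified ``persistent excitation'' condition. You would need to make those mechanisms explicit before the bootstrap closes.
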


\section{Characterization of $\lambda$-attractive trajectories (Theorem \ref{t1})}

\subsection{Auxiliary results}

As explained in the beginning of section \ref{aCritical}, one will consider in this entire section the change of basis $X(t) = X_E^\dag(t) X(t) X_E$,
$\overline X(t) = X_E^\dag(t) \overline X(t) X_E$, and $\widetilde W (t) = {\overline W}^\dag(t) W(t)$. After this change of basis, the system \refeq{cqs} reads
\begin{subequations}
\begin{equation}
\label{eWcqs}
 \dot W (t) = (T_0 + \sum_{k=0}^{m} u_k(t) T_k ) W(t)
\end{equation}
where $T_k = X_E^\dag S_k X_E$, $k=0, 1, \ldots, m$. The reference system reads
\begin{equation}
\label{eWReference}
 \dot {\overline W} (t) = (T_0 + \sum_{k=0}^{m} {\overline u}_k(t) T_k ) \overline W(t)
\end{equation}
and the error system reads
\begin{equation}
\label{eWerror}
 \dot {\widetilde W} (t) = (\sum_{k=0}^{m} {\widetilde u}_k(t) {\widetilde T}_k(t) ) \widetilde W(t)
\end{equation}

where  ${\widetilde T}_k(t) = {\overline W}^\dag (t) T_k \overline W(t)$.
The closed loop control after this change of basis reads
\begin{equation}
 \label{eWfeedback}
{\widetilde u}_k (t) = K \Re [\trace( \Pi {\widetilde T}_k(t) \widetilde W(t) \Pi^\dag)]
\end{equation}
\end{subequations}
Furthermore, by proposition \ref{pCritical}, $\mathcal V (\widetilde W) = 2 \nbar - \Re[\trace(\Pi \widetilde W \Pi^\dag)]$.
The following lemma regards the system after this change of basis.

\begin{lemma}
\label{lWhatSigma}
Assume that, for all $\lambda \in (0 ,1)$ there exists an initial condition ${\widetilde W}(0) = {\widetilde W}_0 \in \Un$ for which $0 < \mathcal V({\widetilde W}_0)\leq h < 4$ and $\mathcal V ({\widetilde W}(T_f)) > \lambda \mathcal V({\widetilde W}_0)$. Then there exists a nonzero $n \times \nbar$ complex matrix $\widehat \xi$ such that:
\begin{subequations}
\begin{equation}
\label{Re0}
\begin{array}{c}
\Re \left[ \trace \left( \Pi {\overline W}^\dag(t) T_k(t) {\overline W}(t) \widehat \xi  \right) \right] = 0, \\
\forall t \in [0, T_f],  \forall k \in \{1, \ldots, m\}
\end{array}
\end{equation}
 Furthermore,  one of two conditions
\refeq{eCa} or \refeq{eCb} holds
\begin{eqnarray}
\label{eCa}
& \;\;\;\; & {\widehat \xi}^\dag  {\widehat \xi}  =  I_{\nbar},\; \mbox{with}\; \| \widehat \xi - I_n \Pi^\dag \|^2 \leq h < 4\;   \\
\label{eCb}
& \;\;\;\; & {\Pi \widehat \xi} + {\widehat \xi}^\dag \Pi^\dag =  0,\; \mbox{with}\; \| \widehat \xi \| = 1.
\end{eqnarray}
\end{subequations}

\end{lemma}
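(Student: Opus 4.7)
My plan is a compactness and subsequence-extraction argument exploiting the Lyapunov descent. Integrating \refeq{eVdot} over $[0,T_f]$ gives the identity $\mathcal V(\widetilde W_0) - \mathcal V(\widetilde W(T_f)) = K^{-1}\int_0^{T_f}\sum_k (\widetilde u_k)^2\,dt$, so combining the hypothesis with Proposition \ref{pEquivalence} I obtain a sequence $\lambda_n\uparrow 1$ and initial conditions $\widetilde W_0^n\in\Un$ satisfying $0<\epsilon_n^2:=\mathcal V(\widetilde W_0^n)\leq h$ and
\[
\int_0^{T_f}\sum_{k=1}^m (\widetilde u_k^n(t))^2\,dt < K(1-\lambda_n)h \xrightarrow[n\to\infty]{} 0.
\]
Compactness of $\Un$ lets me extract a subsequence along which $\widetilde W_0^n\to \widetilde W_\infty$; continuous dependence on initial data propagates this to uniform convergence of the closed-loop trajectories $\widetilde W^n(\cdot)\to \widetilde W_\infty(\cdot)$ and of the feedbacks $\widetilde u_k^n\to \widetilde u_k^\infty$.

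The argument then splits on whether $\epsilon_n$ stays bounded away from $0$. If $\liminf\epsilon_n>0$, then $\mathcal V(\widetilde W_\infty)>0$ and the vanishing of $\int(\widetilde u_k^n)^2$ forces $\widetilde u_k^\infty\equiv 0$, so $\widetilde W_\infty(\cdot)$ is constant equal to $\widetilde W_\infty$. Substituting into \refeq{eWfeedback} yields $\Re\trace(\Pi\overline W^\dag(t)T_k\overline W(t)\widetilde W_\infty\Pi^\dag)=0$ for every $t$ and $k$, so the choice $\widehat\xi:=\widetilde W_\infty\Pi^\dag$ satisfies \refeq{Re0} together with $\widehat\xi^\dag\widehat\xi=\Pi\widetilde W_\infty^\dag\widetilde W_\infty\Pi^\dag=I_{\nbar}$ and $\|\widehat\xi-\Pi^\dag\|^2=\mathcal V(\widetilde W_\infty)\leq h$, giving case \refeq{eCa}.

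If instead $\epsilon_n\to 0$, then $\mathcal V(\widetilde W_\infty)=0$ forces $\widetilde W_\infty\Pi^\dag=\Pi^\dag$, so the above conclusion degenerates and a blowup is needed. I would set $Z^n(t):=\epsilon_n^{-1}(\widetilde W^n(t)\Pi^\dag-\Pi^\dag)$, so that $\|Z^n(0)\|=1$ and $\dot Z^n=\sum_k(\widetilde u_k^n/\epsilon_n)\widetilde T_k(t)\widetilde W^n(t)\Pi^\dag$. Since $\int(\widetilde u_k^n/\epsilon_n)^2\,dt\leq K(1-\lambda_n)\to 0$, Cauchy--Schwarz gives $\|Z^n(t)-Z^n(0)\|\to 0$ uniformly in $t$, so a further subsequence yields $Z^n\to\widehat\xi$ uniformly with $\|\widehat\xi\|=1$. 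Writing $\widetilde W_0^n=\widetilde W_\infty\exp(A_n)$ with $A_n\in\un$, $A_n\to 0$, the expansions $\widetilde W_0^n\Pi^\dag-\Pi^\dag=\widetilde W_\infty A_n\Pi^\dag+O(\|A_n\|\,\|A_n\Pi^\dag\|)$ and $\epsilon_n=\|A_n\Pi^\dag\|(1+o(1))$ identify $\widehat\xi=\widetilde W_\infty B$, where $B$ is a limit point of $A_n\Pi^\dag/\|A_n\Pi^\dag\|$. Antihermiticity of $A_n$ gives $\Pi B+B^\dag\Pi^\dag=0$ (the top $\nbar\times\nbar$ block of $A_n$ is antihermitian), and the block structure of $\widetilde W_\infty$ forced by $\widetilde W_\infty\Pi^\dag=\Pi^\dag$ implies $\Pi\widetilde W_\infty=\Pi$ and $\widetilde W_\infty^\dag\Pi^\dag=\Pi^\dag$, so this identity descends to $\widehat\xi$. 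Passing to the uniform limit in $\widetilde u_k^n/\epsilon_n=K\Re\trace(\Pi\widetilde T_k(t)Z^n(t))$ and using that the left-hand side tends to $0$ in $L^2$ produces \refeq{Re0}, which with $\|\widehat\xi\|=1$ is case \refeq{eCb}.

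The main obstacle I expect is the blowup subcase: verifying $\epsilon_n=\|A_n\Pi^\dag\|(1+o(1))$ requires careful bookkeeping of the expansion of $\exp(A_n)\Pi^\dag$ (the quadratic correction $A_n^2\Pi^\dag$ must be controlled by $\|A_n\|\,\|A_n\Pi^\dag\|$, which is $o(\epsilon_n)$ only because $\|A_n\|\to 0$), and the algebraic identity $\Pi B+B^\dag\Pi^\dag=0$ must be shown to survive both the normalization by $\|A_n\Pi^\dag\|$ and the further subsequential extraction. Setting up the rescaling so that the ODE for $Z^n$ stays well-controlled while simultaneously interchanging the $L^2$-vanishing of $\widetilde u_k^n/\epsilon_n$ with the uniform limit of $Z^n$ inside the trace is the most delicate piece.
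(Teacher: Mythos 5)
Your proposal is correct and its skeleton matches the paper's: both integrate \refeq{eVdot}, extract a sequence $\lambda_n\uparrow 1$ with initial conditions $\widetilde W_0^n$, use Cauchy--Schwarz on the vanishing $L^2$ norm of the feedback, and case-split on whether $\mathcal V(\widetilde W_0^n)$ stays bounded away from zero. The paper, however, introduces the normalized trajectory $\xi_\ell(t)=(\widetilde W_\ell(t)-I)\Pi^\dag/\|\alpha_{0_\ell}\|$ uniformly for both cases, derives \refeq{Re0} for its limit $\xi^*$ once and for all, and then only chooses $\widehat\xi$ differently in each branch ($\widehat\xi=\Pi^\dag+v^*\xi^*$ when $v^*>0$, $\widehat\xi=\xi^*$ when $v^*=0$).

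Where you genuinely diverge is in the two branches. In the non-degenerate branch you skip the normalization entirely and argue by continuous dependence on initial data plus the $L^2$ vanishing, forcing the limit closed-loop trajectory to be constant; this is a clean alternative that the paper does not spell out, and it lands on the same $\widehat\xi=\widetilde W_\infty\Pi^\dag$. In the degenerate branch, however, you introduce the exponential parametrization $\widetilde W_0^n=\widetilde W_\infty\exp(A_n)$ and must then control the quadratic remainder $A_n^2\Pi^\dag$ to justify $\epsilon_n=\|A_n\Pi^\dag\|(1+o(1))$, track that $\Pi B+B^\dag\Pi^\dag=0$ survives the normalization and a further subsequence extraction, and finally transfer this identity to $\widehat\xi=\widetilde W_\infty B$ via the block structure of $\widetilde W_\infty$. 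All of this works, but it is heavier than what is needed. The paper instead reads \refeq{eCb} off directly from the unitarity identity for $\beta_\ell=\widetilde W_0^n\Pi^\dag$: writing $\beta_\ell^\dag\beta_\ell=I_{\nbar}$ in terms of $\alpha_{0_\ell}=\beta_\ell-\Pi^\dag$ gives $\alpha_{0_\ell}^\dag\alpha_{0_\ell}+\Pi\alpha_{0_\ell}+\alpha_{0_\ell}^\dag\Pi^\dag=0$; dividing by $\|\alpha_{0_\ell}\|$ and letting $\ell\to\infty$ yields $\Pi\xi^*+\xi^{*\dag}\Pi^\dag=0$ immediately, with no exponential expansion and no remainder estimates. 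You might consider this shortcut; it eliminates exactly the bookkeeping you flag as the delicate piece.
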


\begin{proof}
As in the proof of Prop. \ref{pEquivalence},
The assumption of the Lemma implies that
$\int_0^{T_f} {\dot {\mathcal V}}(t)  = \mathcal V (\widetilde  W(T_f)) - \mathcal V (\widetilde  W(0)) > (\lambda - 1) \mathcal V (\widetilde  W(0))$.
Recall that, for the closed loop system, the derivative of the Lyapunov function is given by
\refeq{eVdot}. Hence, computing $\mathcal V (\widetilde  W(T_f)) - \mathcal V (\widetilde  W(0)) = \int_{0}^{T_f} \dot{\mathcal V}(t) dt$, after dividing by $ \mathcal V ({\widetilde  W}_0)$,  one concludes that, for every $\lambda \in (0, 1)$ there exists ${\widetilde  W}_0$
with $ \mathcal V ({\widetilde  W}_0) \in (0, h]$ such that, the corresponding solution $\widetilde W(t)$ is such that
\begin{equation}\label{ineq}
\begin{array}{c}
 \sum_{k=1}^m \int_{0}^{T_f} \frac{ 4 \left\{ \Re \left[ \trace \left(  \Pi {\widetilde T}_k(t) {\widetilde W}(t) \Pi^\dag\right)\right] \right\}^2 }{\mathcal V ({\widetilde  W}_0)} dt  = \\
 \sum_{k=1}^m \int_{0}^{T_f} \frac{{\widetilde u}^2_k(t)}{K \mathcal V ({\widetilde  W}_0)} dt  <
 (1 - \lambda)
\end{array}
\end{equation}
Let $\lambda_\ell, \ell \in \NN$ be a sequence with $\lambda_\ell \in (0,1)$ such that
$(1 - \lambda_\ell) = 1/\ell$. For each $\lambda_\ell$ constructed in this way one may choose
${\widetilde W}_{0_\ell}$ in a way that \refeq{ineq} holds for ${\widetilde W}(0) = {\widetilde W}_{0_\ell}$,
 ${\widetilde W}(t) = {\widetilde W}_\ell(t)$, ${\widetilde u}_k (t) = {\widetilde u}_{k_\ell}(t)$,
and $\lambda = \lambda_\ell$.  Define $\alpha_\ell(t) = ({\widetilde W}_\ell(t)-I_n) \Pi^\dag$ and
$\alpha_{0_\ell} = \alpha_\ell(0)$. By part (d) of Prop. \ref{pCritical} it follows that
 $\mathcal V ({\widetilde W}(t)) = \| \alpha_{\ell}(t)\|^2$. Now
note that
\begin{equation}
\label{eZeroSigma}
\Re \left[ \trace \left(  \Pi \sigma \Pi^\dag\right)\right] =0, \forall \sigma \in \un.
\end{equation}
 In particular, since ${\widetilde T}_k(t) \in \un, \forall t \in [0, T_f]$,
 one can replace ${\widetilde W}_\ell (t) \Pi^\dag$ by $\alpha_\ell(t)$
 in \refeq{ineq}. Define $\xi_\ell(t) = \alpha_\ell(t)/ \| \alpha_{0_\ell}\|$. This
  is well defined because $\|\alpha_{0_\ell}\| =  \mathcal V ({\widetilde  W}_0) >0$. After the last substitutions,   by linearity of the trace one gets:
\begin{equation}
\label{ineq2}
\begin{array}{c} \sum_{k=1}^m \int_{0}^{T_f} 4 \left\{ \Re \left[ \trace \left(  \Pi {\widetilde T}_k(t) \xi_\ell(t) \right)\right] \right\}^2 dt = \\
\sum_{k=1}^m \int_{0}^{T_f} \frac{{\widetilde u}^2_{k_\ell}(t)}{K \| \alpha_{0_\ell}\|^2} dt < \frac{1}{\ell}
\end{array}
\end{equation}
Now, the same ideas of \cite[eqs. (33)-(34)]{SilPerRou14} will be considered (with $Z_n$ replaced by $\alpha_\ell$). It will be shown that a subsequence of $\xi_\ell$ converges uniformly
in the interval $[0, T_f]$ to some fixed $\xi^* \in G$, where $G$ is in the compact subset of complex matrices with unitary Frobenius norm. In fact,
%%%% begin proof of uniform convergence of \xi -> \xi^* in [0, T_f]
%%%%%%%%%%%%%%%%%%%%%%%%%%%%%%%%%%%%%%%%%%%%%%%%%%%%%%%%%%%%%%%%%%%%%%%%%%%%%%%
%%%%%%%%%%%%%%%%%%%%%%%%%%%%%%%%%%%%%%%%%%%%%%%
%%%%%%%%%%%%%%%%%%%%%%%%%%%%%%%%%%%%%%
\begin{equation}\label{contradiction}
\begin{array}{c}
 \int_{0}^{T_f}
\sum_{k=1}^{m} K\mbox{Tr}^2(\alpha_\ell(t) \widetilde{W}_k(t)) \, dt = \\
 \int_{0}^{T_f}
\sum_{k=1}^{m} \left(\frac{\widetilde{u}_{k_\ell} (t)}{\sqrt{K}}\right)^2  \, dt < \\
\dfrac{1}{\ell} \| \alpha_{0_\ell} \|^2,
\end{array}
\end{equation}
Fix $\ell \in \NN$ with $\ell > 0$. The Cauchy-Schwartz inequality provides that\footnote{Remember that, for measurable functions $f, g$ one has $\int_S | f(x) g(x) | dx \leq \sqrt{\int_S  f^2(x) dx \int_S  g^2(x) dx}$.}, for $t \in [0,T_f]$ one has
\begin{equation}\label{utilde1}
\begin{array}{c}
    \int_{0}^{t}  \left|\frac{\widetilde{u}_{k_\ell} (s)}{\sqrt{K}}\right|  \, ds   \leq
    \int_{0}^{T_f}  \left|\frac{\widetilde{u}_{k_\ell} (s)}{\sqrt{K}}\right|  \, ds
    \leq \\
    %%%
     \sqrt{T_f} \sqrt{\int_{0}^{T_f} \left(\frac{\widetilde{u}_{k_\ell} (t)}{\sqrt{K}}\right)^2 \, dt}  <
     \sqrt{T_f/\ell} \| \alpha_{0_\ell} \|.
\end{array}
\end{equation}
Now, note that
\[
    \dot{\alpha_\ell}_n(t) = \sum_{k=1} \widetilde{u}_{k_\ell}(t) R_k(t)  ,
\]
where the norm of  $R_k(t)= {\widetilde T}_k(t) {\widetilde W}_k \Pi$ is continuous and uniformly bounded on $[0,T_f]$ by some $D_k > 0$. Thus, for every $t \in [0, T_f]$,
\begin{equation}
\label{eZn}
\begin{array}{c}
    \left\| \dfrac{\alpha_\ell(t) - \alpha_{0_\ell}}{\|\alpha_{0_\ell} \|} \right\| =
    \left\| \int_{0}^t \dfrac{\dot{\alpha}_\ell(s)}{\|\alpha_{\ell_0} \|} \, ds \right\| \leq \\
    \sum_{k=1}^m D_k \int_{0}^{t} \frac{| \widetilde{u}_{k_\ell}(s) |}{\| \alpha_{0_\ell} \|} \, ds.
\end{array}
\end{equation}
From \refeq{utilde1} and \refeq{eZn}, it follows that
 \[
 \left\| \dfrac{\alpha_\ell(t) - \alpha_{0_\ell}}{\| \alpha_{0_\ell} \|} \right\|
\leq \sqrt{K} D \sqrt{T/\ell},
\]
where $D=D_1 + \dots + D_m > 0$.
As the sequence $\alpha_{0_\ell}/\|\alpha_{0_\ell}\|$ belongs to the compact set $\mathbf{G}$, there exists a convergent subsequence. For simplicity, denote such subsequence by
$\alpha_{0_\ell}/\|\alpha_{0_\ell}\|$ and let $\xi^* \in \mathbf{G}$ be its limit. It follows that $\alpha_{\ell}/\|\alpha_{\ell}\|$ uniformly converges to $\xi^*$ on the interval $[0,T]$ as $\ell \to \infty$, as claimed.
%%%%%%%%%%%%%%%%%%%%%% end of proof of uniform convergence of \xi
%%%%%%%%%%%%%%%%%%%%%%%%%%%%%%%%%%%%%%%%%%%%%%%%%%%%%%%%%%%%%%%%%%%%
%%%%%%%%%%%%%%%%%%%%%%%%%%%%%%%%%%%%%%%%%%%%%%%%%%%%%%%%%%%%%%%
So, after taking the limit $\ell \rightarrow \infty$, one gets:
\[
\sum_{k=1}^m \int_{0}^{T_f} 4 \left\{ \Re \left[ \trace \left(  \Pi {\widetilde T}_k(t) \xi^* \right)\right] \right\}^2  =  0
\]
and by the continuity of ${\widetilde T}_k(t)$ it follows easily that
\begin{equation}
\label{eXiStar}
\Re \left[ \trace \left(  \Pi {\widetilde T}_k(t) \xi^* \right)\right]   =  0,
\end{equation}
for all $k \in \{1, \ldots, m\}$.
As $\alpha_{0_\ell} = ({\widetilde W}_{0_\ell} - I_n) \Pi^\dag$ with ${\widetilde W}_{0_\ell} \in \Un$, then, one may assume without loss of generality, possibly after taking a convenient subsequence, that $\lim_{\ell\rightarrow\infty} \alpha_{0_\ell} (t) = \alpha^*$ and $\lim_{\ell\rightarrow\infty} \| \alpha_{0_\ell} (t) \| = v* \leq  h$. Let $\beta_\ell = \alpha_{0_\ell} + I_n \Pi^\dag = W_{0_\ell} \Pi^\dag$. Then simple calculations  shows that $\beta_\ell^\dag \beta_\ell = \Pi I_n \Pi^\dag = I_{\nbar}$.
Let $\widehat \xi = \lim_{\ell \rightarrow \infty}  \beta_\ell$. Note that $\widehat \xi =
\lim_{\ell \rightarrow \infty} \left( \alpha_{0_\ell} + I_n \Pi^\dag \right) = \alpha^* + I_n \Pi^\dag \neq I_n \Pi^\dag$. Furthermore, as $\beta_\ell^\dag \beta_\ell = I_{\nbar}$, it follows that ${\widehat \xi}^\dag {\widehat \xi} = I_{\nbar}$.
 By the linearity of the trace, by \refeq{eXiStar} and by \refeq{eZeroSigma}, it follows that
\[
\Re \left[ \trace \left(  \Pi {\widetilde T}_k(t) {\widehat  \xi} \right)\right]   =  0.
\]
Since $\widehat \xi - I_n \Pi^\dag = v^* \xi^*$ with $0< v^*\leq h$ and $\|\xi^* \| =1$, then \refeq{eCa} holds.
Now, to show the other possible situation when \refeq{eCb} holds, assume that $v^*= \lim_{\ell \rightarrow \infty} \|\alpha_{0_\ell}\| =0$.
Since $\beta_\ell = (\alpha_{0_\ell} + I_{\nbar} \Pi^\dag)$, then $I_{\nbar} = \beta_\ell^\dag \beta_\ell =$
$ \alpha_{0_\ell}^\dag \alpha_{0_\ell}+ \Pi \alpha_{0_\ell}^\dag  + \alpha_{0_\ell} \Pi^\dag + \Pi I_n \Pi^\dag$. Then, dividing
both sides of the last equation by $\| \alpha_{0_\ell}\|$ and recalling that $\xi_\ell = \alpha_{0_\ell}/\|\alpha_{0_\ell}\|$, one gets
$\frac{\alpha_{0_\ell}^\dag \alpha_{0_\ell}}{\| \alpha_{0_\ell} \|} + \Pi \xi_\ell  + \xi_\ell^\dag \Pi^\dag =0$. Since $\left\| \frac{\alpha_{0_\ell}^\dag \alpha_{0_\ell}}{\| \alpha_{0_\ell} \|} \right\| \leq \| \alpha_{0_\ell} \|$,  taking the limit $\ell \rightarrow\infty$ in both sides of the last equation and recalling that  $\|\xi^*\| = 1$,  one shows \refeq{eCb} for $\widehat \xi = \xi^*$.
\end{proof}

\begin{lemma}
\label{lCkj} Consider system \refeq{eWcqs}-\refeq{eWReference} and define
 $C_k^j(t) \in \un$ as in \cite{SilPerRou14,SilPerRou16}, by:
\begin{equation}
 \begin{array}{rcl}\label{defCkj}
  C_k^0(t) & = & T_k, \\
  C_k^{j+1}(t) & = & \dot{C}_k^{j}(t) + \left[ {C}_k^{j}(t),  A(t) \right],
 \end{array}
\end{equation}
for $k =1, \ldots m$, $j \in \NN$, $t \in \RR$, where $A(t) = T_0 + \sum_{k=1}^{m} u_{k}(t) T_k \in \mathfrak{u}(n)$.
Assume that $\xi$ is a $n \times \nbar$ complex matrix such that $\Re \left[ \trace \left( \Pi {\overline W}^\dag(t) { T}_k {\overline W}(t) \widehat W \Pi^\dag \right) \right] = 0$ for all $k \in \{1, \ldots, m\}$ and for all $t \in [0, T_f]$. Then
\begin{equation}
\label{Re0Ckj}
\begin{array}{c}
\Re \left[ \trace \left( \Pi {\overline W}^\dag(t) C_k^j(t) {\overline W}(t) \xi\right) \right] = 0 \\ \forall k \in \{1, \ldots, m\},
\forall t \in [0, T_f], \forall j \in \NN.
\end{array}
\end{equation}
\end{lemma}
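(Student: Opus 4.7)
The plan is to proceed by induction on $j \in \NN$, exploiting the specific form of the recursion \eqref{defCkj}, which is designed precisely so that time differentiation of the quantity of interest lands on the next $C_k^{j+1}$.

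For each $k \in \{1,\ldots,m\}$ and $j \in \NN$, define the scalar function
\[
\phi_k^j(t) \;=\; \trace\!\left( \Pi \,{\overline W}^\dag(t)\, C_k^j(t) \,{\overline W}(t)\, \xi \right),
\qquad t\in[0,T_f].
\]
The base case $j=0$ is immediate since $C_k^0(t)=T_k$, so the hypothesis of the Lemma reads $\Re[\phi_k^0(t)]=0$ for all $t\in[0,T_f]$.

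For the inductive step, assume $\Re[\phi_k^j(t)]=0$ for all $t\in[0,T_f]$. Since $\overline W(t)$ solves \eqref{eWReference}, one has $\dot{\overline W}(t) = A(t)\,\overline W(t)$ with $A(t)\in\un$, and hence $\dot{\overline W}^\dag(t)=-\overline W^\dag(t)\,A(t)$ because $A(t)^\dag=-A(t)$. Differentiating $\phi_k^j$ in time and applying the product rule one gets
\[
\dot\phi_k^j(t)
=\trace\!\left( \Pi\,\overline W^\dag\bigl(\dot C_k^j + [C_k^j,A]\bigr)\overline W\,\xi \right)
=\phi_k^{j+1}(t),
\]
where the cross-terms $-\overline W^\dag A C_k^j \overline W$ and $\overline W^\dag C_k^j A \overline W$ assemble precisely into the commutator appearing in \eqref{defCkj}. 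Taking real parts, $\Re[\phi_k^j(t)]\equiv 0$ on $[0,T_f]$ implies $\Re[\dot\phi_k^j(t)]\equiv 0$, so $\Re[\phi_k^{j+1}(t)]=0$ for all $t\in[0,T_f]$. This closes the induction and establishes \eqref{Re0Ckj}.

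There is essentially no genuine obstacle: the result is an algebraic identity plus a clean induction, and the only point requiring a brief check is the sign convention $\dot{\overline W}^\dag = -\overline W^\dag A$ coming from the anti-Hermiticity of $A(t)$. The fact that the hypothesis is assumed for every $t\in[0,T_f]$ (and not only pointwise) is what legitimizes the time-derivative step; had the hypothesis held only at a single instant, the chain would fail at $j=1$. The structural miracle here is entirely contained in the definition \eqref{defCkj} of the Chen--Fliess-like iterated brackets, which is tailored so that differentiating through the flow of $A(t)$ reproduces the next iterate.
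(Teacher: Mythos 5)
Your proof is correct and follows exactly the approach the paper takes (the paper simply says ``by derivation'' and defers to the analogous computation in an earlier reference): differentiate $\phi_k^j(t)=\trace(\Pi\,\overline W^\dag C_k^j \overline W\,\xi)$, use $\dot{\overline W}=A\overline W$ and anti-Hermiticity of $A$ to assemble $\dot C_k^j+[C_k^j,A]=C_k^{j+1}$, and close the induction. You also correctly read through the paper's typo ($\widehat W$ in the hypothesis should be $\xi$, and $A(t)$ should involve $\overline u_k$ rather than $u_k$).
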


\begin{proof}
The proof is easy by derivation and it is analogous to the proof of equation (23)
of \cite{SilPerRou14}.
\end{proof}

\begin{lemma} \label{Vmt} Assume that the quantum system is controllable. Fix some $t \in [0, T_f]$ and define the subspace ${\mathcal V}^M_t$ of $\un$ by
\begin{equation}
\label{eSpan}
{\mathcal V}^M_t = \mbox{\textrm{span}} \{ C_k^j(t) :  k \in \{1, \ldots, m\}, j  \in \{1, \ldots, M\}
\end{equation}
Given $M \in \NN$, define the set  $\mathbb{U}_{M}$ of differential variables by:
\[\mathbb{U}_{M} = \left( {u}_k^{(j)} : k \in  \{1, \ldots, m\}, j \in \{1, \ldots , M\}\right). \]
 Then $\mathfrak{P}(\mathbb{U}_{M})$ will denote a differential polynomial in those variables. Then there exists $M_0 \in \NN$ and a nontrivial differential polynomial\footnote{Note that $\mathfrak{P}$ is the same for all $t\in [0, t_f]$.}  $\mathfrak{P}\left( \mathbb{U}_{M_0} \right)$ such that, if
 \begin{equation}
 \label{eSpan2}
\left. \mathfrak{P}\left( \mathbb{U}_{M_0} \right) \right|_{ \mathbb{U}_{M_0} = \left( {\overline u}_k^{(j)} (t) : k \in  k \in \{1, \ldots, m\}, j \in \{1, \ldots, M_0\} \right) } \neq 0
\end{equation}
then ${\mathcal V}^M_t = \un$. Now assume that the reference controls ${\overline u}_k^{(j)} (t), k \in \{1, \ldots, m\}$ are of the form \refeq{refcon} for some $M \geq M_0$. Fix $t \in [0, T_f]$. Then \refeq{eSpan2} holds
  with probability one with respect to the choice of a random pair $(\mathbf a, \mathbf b)$ that defines \refeq{refcon}.
\end{lemma}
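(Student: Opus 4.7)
The plan is to prove Lemma \ref{Vmt} in two stages: first produce the polynomial $\mathfrak{P}$ intrinsically, and then transfer the genericity conclusion to the Fourier coefficients $(\mathbf{a},\mathbf{b})$ via a measure-zero argument.

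First I would unfold the recursion \eqref{defCkj} and prove by induction on $j$ that
\[
C_k^j(t) \;=\; \mathrm{ad}_{T_0}^{\,j}(T_k) \;+\; \sum_{|\alpha|\le j,\; |\alpha|<j} p_{k,j,\alpha}\!\bigl(\mathbb{U}_{j-1}\bigr)\, B_{k,\alpha},
\]
where each $B_{k,\alpha}$ is an iterated Lie bracket built from $\{T_0,T_1,\ldots,T_m\}$ and the $p_{k,j,\alpha}$ are concrete polynomials in the derivatives of the controls up to order $j-1$. The leading term $\mathrm{ad}_{T_0}^{\,j}(T_k)$ is obtained when the controls and all their derivatives vanish, and the remaining terms encode how different bracketings appear when the $u_k^{(i)}$ are excited. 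Because system \eqref{cqs} is controllable, the Lie algebra generated by $T_0,T_1,\ldots,T_m$ equals $\un$, hence by the Lie algebra rank condition there exists some finite depth $M_0$ such that the iterated brackets of order at most $M_0$ span $\un$. Listing these brackets and matching them to the $C_k^j$'s through the expansion above, one can extract an $(n^2)\times(n^2)$ submatrix $\mathcal{M}(\mathbb{U}_{M_0})$ of the stacked coordinates of the $C_k^j(t)$ whose determinant is a polynomial in the differential variables. Define $\mathfrak{P}(\mathbb{U}_{M_0}):=\det\mathcal{M}(\mathbb{U}_{M_0})$; clearly $\mathfrak{P}\neq 0$ forces $\mathcal{V}^M_t=\un$.

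The crux is to show that $\mathfrak{P}$ is not the zero polynomial. I would do this by exhibiting a single tuple of values $(\mathbb{U}_{M_0})^*$ for which the corresponding $C_k^j(t)$ span $\un$. The natural candidate is to recursively select the value of $u_k^{(j)}$ so as to pick up, at each level $j$, one new independent Lie bracket among $B_{k,\alpha}$ not already produced at lower levels; a standard diagonal-bracket construction (treating $(T_0, T_1, \dots, T_m)$ as free generators modulo the Lie relations of $\un$, then specialising) yields such a tuple. This step is the main obstacle, because it requires a careful bookkeeping of iterated brackets and the explicit matching between the $p_{k,j,\alpha}$ coefficients and a chosen basis of $\un$; the controllability hypothesis is exactly what makes this combinatorial selection possible.

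For the second part of the lemma, assume $M\ge M_0$ and fix $t\in[0,T_f]$. Substituting \eqref{refconab} into $\mathfrak{P}$ produces a polynomial
\[
Q(\mathbf{a},\mathbf{b}) \;=\; \mathfrak{P}\!\left(\overline{u}_k^{(j)}(t)\,:\,k,j\right)
\]
in the $2mM$ real variables $(\mathbf{a},\mathbf{b})$, since each $\overline{u}_k^{(j)}(t)$ is a linear combination of the $a_{k\ell},b_{k\ell}$ with coefficients $(2\pi\ell/T)^j$ times $\sin/\cos$ phases. To see that $Q$ is not identically zero, I would observe that the linear map $\Lambda_t:(\mathbf{a},\mathbf{b})\mapsto\bigl(\overline{u}_k^{(j)}(t)\bigr)_{k,j\le M_0}$ is surjective onto $\RR^{m(M_0+1)}$: per control $k$, the $(M_0+1)\times 2M$ Jacobian has a Vandermonde-like $(M_0+1)\times(M_0+1)$ minor built from the distinct harmonic weights $(2\pi\ell/T)^j$, which is nonzero for any $t$ at which not all $\sin(2\pi\ell t/T),\cos(2\pi\ell t/T)$ vanish simultaneously (true for every $t$). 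Hence the preimage under $\Lambda_t$ of the tuple $(\mathbb{U}_{M_0})^*$ constructed above is nonempty, so $Q\not\equiv 0$. A nontrivial real polynomial vanishes on a set of Lebesgue measure zero, so \eqref{eSpan2} holds with probability one when $(\mathbf{a},\mathbf{b})$ is drawn from any absolutely continuous distribution, completing the proof.
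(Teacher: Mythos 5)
The paper does not actually give a self-contained proof of this lemma: it simply appeals to the arguments of \cite[Section 1]{SilPerRou16}, noting only that the same reasoning carries over from $t=0$ to arbitrary $t\in[0,T_f]$. Your outer scaffolding — defining $\mathfrak{P}$ as a determinant of stacked coordinates of the $C_k^j$, then pushing nonvanishing to a measure-zero statement about $(\mathbf{a},\mathbf{b})$ — is in the same spirit as that reference, and your second stage is essentially sound (though the surjectivity of $\Lambda_t$ is really two Vandermonde blocks, one from even derivatives in the $\alpha_\ell$ variables and one from odd derivatives in the $\beta_\ell$ variables, rather than a single $(M_0+1)\times(M_0+1)$ minor).

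The genuine gap is in the step you yourself flag as ``the main obstacle,'' and I do not think the ``recursive diagonal-bracket construction'' works as sketched. The difficulty is that the only place the freshest jet variable $u_i^{(j)}$ enters $C_k^{j+1}$ is via $\dot C_k^{j}$, and one checks by induction from \eqref{defCkj} that $\partial C_k^{j+1}/\partial u_i^{(j)} = [T_k,T_i]$ \emph{for every} $j$. So as $j$ increases, the only directions you can freely steer by choosing the newest derivative are the fixed, finite family of depth-one brackets $[T_k,T_i]$, $i=1,\dots,m$; the depth-$(j+1)$ brackets appearing in $C_k^{j+1}$ have coefficients that are already determined by the lower-order jet. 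Consequently one cannot ``pick up one new independent bracket per level'' by choosing $u_k^{(j)}$ greedily — the entanglement of deeper brackets with lower-order jet values has to be handled globally, not level by level. This is exactly the nontrivial combinatorics that \cite{SilPerRou16} settles (via a careful analytic/recursive argument), and your proposal leaves it open. Without that step you have a candidate polynomial $\mathfrak{P}$ but no proof it is nontrivial, and the rest of the argument — including the final measure-zero conclusion — has nothing to hang on.
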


\begin{proof}
The Proof follows easily from the same arguments of \cite[Section 1]{SilPerRou16}, that holds for $t \in [0, T_f]$ (in that paper one considers only  $t=0$).
\end{proof}

\subsection{Proof of Theorem \ref{t1}}

\begin{proof}
 Let $M_0 >0$ and  $\mathfrak P$ be the differential polynomial whose existence is assured by Lemma \ref{Vmt}.
 Assume that the reference controls does not obey the condition \refeq{eQ} for $Q = \mathfrak P$, at least for some $t \in [0, T_f]$.
  Now assume that does not exist $\lambda \in (0, 1)$ such that the statement of the Theorem \ref{t1} holds
 for $Q = \mathfrak P$. Hence, for all $\lambda \in (0 ,1)$ there exists an initial condition ${\widetilde W}(0) = {\widetilde W}_0 \in \Un$ for which $\mathcal V({\widetilde W}_0)\leq h < 4$ and $\mathcal V ({\widetilde W}(T_f)) > \lambda \mathcal V({\widetilde W}_0)$. One shows now that the last condition
 implies that $\mathcal V({\widetilde W}_0)=0$.
 Note that, for the closed loop system, as the Lyapunov function is positive and nonincreasing, then  $\mathcal V({\widetilde W}_0)=0 $ implies that $\mathcal V({\widetilde W}(t)) =0, \forall t \in [0, T_f]$, and so, in this case, one cannot have $\mathcal V ({\widetilde W}(T_f)) > \lambda \mathcal V({\widetilde W}_0)$. One concludes that the asumptions of Lemma  \ref{lWhatSigma}  holds.
 Hence, from that Lemma, there exists a $n \times \nbar$ complex matrix satisfying \refeq{Re0}, for which one of the two conditions \refeq{eCa} of \refeq{eCb} holds.
 By Lemma \ref{lCkj}, then it follows that \refeq{Re0Ckj} holds. In particular, Lemma \ref{Vmt} will also
 show that the assumption of Lemma \ref{lCritical} holds. By part (a) of that Lemma, one concludes that
 \begin{equation}
 \label{eXi}
 \widehat \xi =
  \left[
  \begin{array}{c}
 \xi_1\\
 \xi_2
 \end{array}
 \right]
 \end{equation}
 where $\xi_1$ is a square block and $\xi_2 =0$.
  Assume that condition  \refeq{eCa} of Lemma \ref{lWhatSigma}  holds. In particular, $\xi_1 \in \Unbar$. hence $\xi_1 = V^\dag D V$ where $V \in \Unbar$ and $D$ is a diagonal matrix such that $\{D\}_{ii} = \exp (\jmath \theta_i), i \in \{1, \ldots, \nbar\}$. By part (b) of Lemma \ref{lCritical}, then $D$ is real and so $\{D\}_{ii} = \pm 1, i \in \{1, \ldots, \nbar\}$.  Now note that
  $\| \widehat \xi - I_n \Pi^\dag\|^2  = \trace \left\{ ({\widehat \xi} - I_n \Pi^\dag)^\dag ({\widehat \xi} - I_n \Pi^\dag)\right\} = \trace \left\{ {\widehat \xi}^\dag {\widehat \xi} + I_{\nbar} + \Pi {\widehat \xi} + {\widehat \xi}^\dag \right\} \Pi^\dag = 2 \nbar - 2 \Re [\trace( \xi_1)] =
   2 \nbar - 2\Re \left[ \trace \left( V^\dag D V \right) \right]= 2 \nbar - 2\Re \left[ \trace \left(  D  \right) \right] \leq h < 4$.
 It follows easily that the only possibility is to have $D = I_{\nbar}$, and this
 implies that $\| \widehat \xi - I_n \Pi^\dag| =0$. This
 contradicts  \refeq{eCa}.

 Now assume that condition \refeq{eCb}  of Lemma \ref{lWhatSigma} holds. Recall that one has already shown
 that \refeq{Re0Ckj} holds with $\widehat \xi$ given by \refeq{eXi}, where $\xi_1$ is a square block and $\xi_2 =0$. Now, condition \refeq{eCb} implies that $\xi_1 \in \unbar$ with $\xi_1 \neq 0$. Hence $\xi_1  = V^\dag D V$ where $V \in \Unbar$ and $D$ is an imaginary diagonal matrix. Similarly to the last case, one shows using part (b) of Lemma \ref{lCritical}  that $D$ is real. Hence $D=0$ and so $\widehat \xi =0$. This contradicts \refeq{eCb}.
\end{proof}

\section{Proofs of Convergence of RIGA}
\label{aLocal}
The proofs of convergence of the RIGA in its local and global versions are not that simple, but are essentially
adaptations of the corresponding proofs of the results of \cite{PerSilRou19} for the case $n=\bar n$, with a different
Lyapunov function.
As in the last paper, a notion of attraction in a subinterval $J \subset [0, T_f]$ is stated, which is instrumental
for these proofs. After that, one states the main ideas of these proofs. The details are then
proved along a series of auxiliary Lemmas.

\subsection{The notion of $\lambda$-attraction in an interval $J \subset [0, T_f]$}

Instrumental for the proof of Theorem \ref{tLocal} (and \ref{tGlobal} and is the notion of
$\lambda$-attractive trajectories in an interval $J \subset [0, T_f]$:
\begin{definition} \label{dAttractiveJ} Let $J =[\tau_0, \tau_1] \subset [0, T_f]$. Let $c>0$ and let $\lambda \in (0,1)$. A solution $\overline X(t)$ of \refeq{reference} for some set
of continuous reference controls
is said to be \emph{$\lambda$-attractive} on $(J, \BallV{c})$ if the solution $\widetilde X(t)$ of the closed-loop system \refeq{eClosedLoopComplete} on the interval $J$ with ${\widetilde X}(\tau_0) = {\widetilde X}_0$ is such that $\mathcal V ( \widetilde X( \tau_1)) \leq \mathcal \lambda \mathcal V ( \widetilde X( \tau_0))$, for every ${\widetilde X}_0 \in \BallV{c}$.
\end{definition}

\begin{remark}\label{rJBallv}
 If a reference trajectory is $\lambda$-attractive on $(J, \BallV{c})$, then it is $\lambda$-attractive on $\BallV{c}$. It is easy to show
\refeq{eAttractive} from the fact that the Lyapunov function $\mathcal V(\widetilde X(t))$ is nonincreasing.
Note that all reference trajectories are $\gamma$-attractive on  $(J, \BallV{c})$  with
$\gamma \geq 1$. However, in the definition of $\lambda$-attraction (see Definition \ref{dAttractive}), it is assumed
that $\lambda <1$.
\end{remark}

The proof of Theorems \ref{tLocal} and \ref{tGlobal} are  mainly based on the following lemma,
whose proof is given in Subsection~\ref{alMain} of this section.  Assume that RIGA is being executed in a way that $\Xgoalbar =\Xgoalbarellminusone$
at least between steps $\ell_0+1$ and $\ell_1$.
 All the conclusions of the next Lemma will hold for the steps $\ell$ of RIGA such that $\ell_0 < \ell < \ell_1$.
\begin{lemma}
\label{lMain}

  Fix $T_f>0$. Let ${\overline u}_k^0 : [0, T_f] \rightarrow \RR$ be a set of uniformly bounded reference
 inputs, that is. there exists a pair of positive real numbers $(L_0, {\mathcal L}_0)$, with $L_0\geq {\mathcal L}_0 > 0$, such that $\max_{t \in [0, T_f]} |{\overline u}_k^0 (t)| \leq {\mathcal L}_0, k=1, \ldots, m$.
  Let $c>0$ such that $c<2$.

 Let $J = [\tau_0, \tau_1] \subset [0, T_f]$, with $\tau_1 - \tau_0 = T_1$. Let $\lambda \in (0,1)$. Assume that ${\overline X}^0(t)$ is $\lambda$-attractive on  $(J, \BallV{c})$. Let $t^* \in [0, \tau_0]$.
Given $\theta \in (0, 1)$, define  the  sequence
\begin{equation}
\label{phi_ell}
\phi_\ell(\theta) = \sum_{j=0}^{\ell} (\sqrt{\theta})^j = \frac{ 1- \sqrt{\theta}^{j+1}}{1-\sqrt{\theta}}, \ell \in \NN.
\end{equation}
Let $\phi_\infty (\theta)$ stands for $\lim_{\ell \rightarrow \infty} \phi_\ell(\theta) =  \frac{1}{1-\sqrt{\theta}}$. Let $\delta_\lambda > 0$ with $\delta_\lambda < 1 - \lambda$, and define $\theta = \lambda + \delta_\lambda$.
Assume that $\theta \in (0, 1)$.
There exists $\Lambda >0$, where $\Lambda$ depends only on the parameters of the set
$\mathcal P =\{L_0, \theta, K, T_1, c, \|S_k\|: k=1,\ldots , m\}$ (and so it does not depend on $T_f$), such
that, if $\sqrt{\mathcal V({\widetilde X}^0(t^*))} \leq \min \left\{ \frac{\delta_\lambda}{\Lambda \phi_{\infty}(\theta) }, \sqrt{c} \right\} $,then
\begin{itemize}
 \item ${\overline X}^\ell$ is $\theta$-attractive on  $(J, \BallV{c})$. In particular, ${\overline X}^\ell$ is $\theta$-attractive on  $\BallV{c}$.
 \item  $\mathcal V ({\widetilde X}^\ell(0)) \leq \theta^\ell \mathcal V ({\widetilde X}^0(0))$.
 \item There exists $\overline M > 0$, where $\overline M$ depends on the  set $\mathcal P$
     such that $\| {\overline u}_k (t)\| < {\mathcal L}_0 + \overline M \phi_\infty(\theta) \sqrt{c}$, for all $t \in [0, T_f]$, for all $k \in \{1, \ldots , m\}$, and for all $\ell \in \NN$.
 \end{itemize}
\end{lemma}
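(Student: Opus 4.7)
The plan is to prove the three conclusions simultaneously by induction on $\ell$, the core mechanism being that the reference input $\bar u^\ell$ differs from $\bar u^0$ only by the accumulated feedback corrections $\widetilde u^1,\dots,\widetilde u^\ell$, whose $L^\infty$ norms decay geometrically. A stability-under-perturbation property of $\lambda$-attractivity then ensures the property is inherited by every $\bar u^\ell$ with a slightly degraded constant $\theta=\lambda+\delta_\lambda$.

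First I would establish a quantitative bound on the feedback. Since $\widetilde S_k\in\un$ is anti-Hermitian, the diagonal entries of $E^\dag \widetilde S_k E$ are purely imaginary, hence $\Re\left[\trace(E^\dag \widetilde S_k E)\right]=0$. Rewriting \refeq{eFeedbackLaw} as $\widetilde u_k=2K\Re\left[\trace\left(E^\dag \widetilde S_k(\widetilde X-I)E\right)\right]$ and applying cyclicity of the trace, Cauchy--Schwarz for the Frobenius inner product, Proposition~\ref{ineqFrob}, and part~(a) of Proposition~\ref{pCritical}, one obtains
\[
|\widetilde u_k(t)|\le 2K\|S_k\|\,\|(\widetilde X(t)-I)E\|=2K\|S_k\|\sqrt{\mathcal V(\widetilde X(t))},
\]
where unitary invariance was used to replace $\|\widetilde S_k\|$ by $\|S_k\|$. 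Since $\mathcal V(\widetilde X^\ell(\cdot))$ is nonincreasing on $[0,T_f]$ (equation~\refeq{eVdot}), this yields $\|\widetilde u^\ell_k\|_{L^\infty}\le 2K\|S_k\|\sqrt{\mathcal V(\widetilde X^\ell(0))}$.

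Second, I would prove a perturbation lemma: there exists $\eta=\eta(\mathcal P)>0$ such that, whenever a reference input $\bar u$ yields a $\lambda$-attractive trajectory on $(J,\BallV{c})$, any other input $\bar u'$ with $\|\bar u'-\bar u\|_{L^\infty(J)}<\eta$ yields a $\theta$-attractive trajectory on $(J,\BallV{c})$. This follows from a Gronwall-type Lipschitz estimate comparing two solutions of \refeq{eClosedLoopComplete} associated with different reference inputs on the fixed-length interval $J$, combined with uniform continuity of $\mathcal V$ on the compact ball $\BallV{c}$. Crucially, the resulting $\eta$ depends only on $\mathcal P$ (in particular on $T_1$) and not on $T_f$.

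With these ingredients I would close the induction. Assuming $\bar u^{j-1}$ is $\theta$-attractive on $(J,\BallV{c})$ for $j\le\ell$, monotonicity of $\mathcal V$ within each step and Proposition~\ref{pMonotonous}(e) (bridging the end of step $j-1$ to the start of step $j$) combine to give $\mathcal V(\widetilde X^\ell(0))\le\theta^\ell\mathcal V(\widetilde X^0(0))$; the smallness hypothesis ensures $\widetilde X^\ell(\tau_0)\in\BallV{c}$ so that $\theta$-attractivity applies. Combining with the feedback bound,
\[
\|\bar u^\ell-\bar u^0\|_{L^\infty}\le\sum_{j=1}^{\ell}\|\widetilde u^j\|_{L^\infty}\le 2K\max_k\|S_k\|\,\phi_\infty(\theta)\sqrt{\mathcal V(\widetilde X^0(t^*))},
\]
where the square roots of the geometric decay produce exactly $\phi_\infty(\theta)$. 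Choosing $\Lambda$ as $2K\max_k\|S_k\|/\eta$ (times a constant absorbing the $\delta_\lambda$/$\eta$ scaling from the perturbation lemma), the hypothesis $\sqrt{\mathcal V(\widetilde X^0(t^*))}\le\delta_\lambda/(\Lambda\phi_\infty(\theta))$ keeps the accumulated perturbation below $\eta$ for every $\ell$. The perturbation lemma then delivers $\theta$-attractivity of $\bar u^\ell$, closing the induction; the uniform input bound is immediate from $\sqrt{\mathcal V(\widetilde X^0(t^*))}\le\sqrt c$ and the summation above with $\overline M:=2K\max_k\|S_k\|$.

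The main obstacle is the perturbation lemma with explicit, $T_f$-independent constants: one must quantify how $\lambda$-attractivity degrades under $L^\infty(J)$-perturbations of the reference input, producing an $\eta$ that depends only on $\mathcal P$ and has a controlled dependence on the gap $\theta-\lambda$. Once this is in place, the geometric-series structure of the feedback bound and the hypothesis on $\sqrt{\mathcal V(\widetilde X^0(t^*))}$ combine cleanly to close the induction with the required choice of $\Lambda$.
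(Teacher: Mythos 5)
Your high-level plan — feedback corrections whose $L^\infty$ norms decay geometrically, summed by a geometric series, feeding into a perturbation-of-attractivity argument, closed by induction — matches the structure of the paper's proof. The feedback bound $|\widetilde u_k(t)|\le 2K\|S_k\|\sqrt{\mathcal V(\widetilde X(t))}$ is exactly Lemma~\ref{lUell}(a), and your recursive bookkeeping mirrors Lemma~\ref{eSequences}. However, your perturbation lemma as stated has a genuine gap that would make the inductive step fail.

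You formulate the perturbation lemma purely in terms of $\|\bar u'-\bar u\|_{L^\infty(J)}$: if two reference inputs are $\eta$-close on $J$, attractivity transfers from $\lambda$ to $\theta$. But the closed-loop dynamics~\eqref{eClosedLoopComplete} on $J$ do not depend on the reference \emph{input} restricted to $J$ alone; they depend on the reference \emph{trajectory} through $\widetilde S_k(t)={\overline X}^\dag(t)S_k{\overline X}(t)$, equivalently (cf.~\eqref{eFeedbackLaw2}) through $\overline Y(t)=\overline X(t)E$. Two inputs that are $L^\infty$-close can produce reference trajectories $\overline X$, $\overline X'$ that are far apart on $J$, because both are obtained by backward integration from $X_{goal}$ over the whole of $[\tau_1,T_f]$ — a forward or backward Gronwall estimate relating the trajectory difference to the input difference carries a factor $e^{C(T_f-\tau_1)}$, which destroys the required $T_f$-independence of $\Lambda$. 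So the perturbation lemma as you state it is false without an additional hypothesis bounding $\|(\overline X'-\overline X)E\|$ on $J$, and your proof does not supply such a bound.

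The missing ingredient is the paper's Lemma~\ref{lUell}(b), an algebraic (not Gronwall) identity: using $\overline X^\ell(t)=X^\ell(t)R_{\ell+1}$ with $R_{\ell+1}=(X_f^\ell)^\dag X_{goal}$, unitary invariance of the Frobenius norm, and Prop.~\ref{pCritical}(b), one gets
\[
\|(\overline X^\ell(t)-\overline X^{\ell-1}(t))E\|\le \|(R_{\ell+1}-I)E\|+\sqrt{\mathcal V(\widetilde X^\ell(t))}\le 2\sqrt{\mathcal V(\widetilde X^\ell(t))},
\]
a $T_f$-independent, pointwise-in-$t$ bound on the step-to-step perturbation of the reference trajectory. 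The paper's perturbation estimate (Lemmas~\ref{lJJ} and~\ref{lCB}) then necessarily tracks \emph{two} perturbation parameters, $\Delta_1$ for the trajectory and $\Delta_2$ for the input, and the resulting $\Lambda$ involves both. You would need to (i) restate the perturbation lemma to include a hypothesis on $\|(\overline X'-\overline X)E\|_{L^\infty(J)}$, and (ii) derive this trajectory bound by the algebraic identity above rather than by Gronwall. Once that is added, your sum-against-$\bar u^0$ variant can also close: the accumulated trajectory drift $\|(\overline X^\ell-\overline X^0)E\|\le 2\sum_{j\le \ell}\sqrt{V_{j-1}}\le 2\phi_\infty(\theta)\sqrt{V_0}$ is controlled by the same geometric series, although the paper's step-to-step comparison (degrading $\lambda_{\ell-1}$ to $\lambda_\ell\le\lambda_{\ell-1}+\Lambda\sqrt{V_{\ell-1}}$ and summing) is slightly cleaner.
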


\begin{remark}
In the context of the proof of Theorem \ref{tLocal}, the values of ${\mathcal L}_0$ and $L_0$ coincide.
Furthermore, in the context
of Theorem \ref{tGlobal}, the Lemma \ref{lMain} will be applied  between the steps  where switchings do occur. If a switching occurs at step $\overline \ell$,
one may consider the application of the last lemma until the next switching, considering that ${\overline u}^{\overline \ell}_k(t)$ as a new seed, and so on.
In the context of
the proof of Theorem \ref{tGlobal} one will take  $L_0 > {\mathcal L}_0$ in the applications of Lemma \ref{lMain}. This will produce a pessimistic (greater) value of
$\Lambda$, but this value of $\Lambda$  can be used in all the steps of RIGA. The details will be explained later.
\end{remark}

\section{Proof of Theorem \ref{tLocal}}
\label{a34}

 The  idea is to divide the interval $[0 , T_f]$ into two parts, that is
$[0, T_f] = \overline J \cup J_{r+1}$, where $\overline J = \bigcup_{i=1}^{r} J_i$. The integer
$r$ will be determined in a way that the first
interval $\overline J$ will be responsible for delivering a ``sufficiently small''  error matrix
$\widetilde X(t^*)$ for $t^* = \min J_{r+1}$.  Lemma \ref{lLambda0} will imply that each interval $J_i$
is $\gamma_i$-attractive, with $\gamma_i \geq \lambda_0$. This will allow the application of Lemma \ref{lMain}
to the interval $J_{r+1}$.
The proof is divided in several parts:\\
(A) Firstly, construct a seed ${\overline X}^0(t)$ using reference inputs of the form \refeq{refcon}-\refeq{refconab}
for chosen $T$ and $M$. Choose $T_1 = T/h> 0$, where $h \in \NN$. For the moment, choose $T_f \geq (r+1) T_1$ where $r$ is to be determined.
On will show in Lemma  \ref{lLambda0} that with probability one, the reference trajectory ${\overline X}^0(t)$  is $\gamma_i$-attractive in
 $(J_i, \BallV{c})$, where
$J_i = [\tau_{0_i}, \tau_{1_i}]= [(i-1) T_1, i T_1]$ for $i=1, \ldots, r$ and $\gamma_i \in (0, 1)$. Let $\lambda_0 = \min \{\gamma_1, \ldots, \gamma_{r+1}\}$. As \refeq{refcon} is $T$-periodic, then Lemma \ref{lLambda0} also shows that
 $\gamma_{i+h} = \gamma_i$ for $i> h$.  This means that $\lambda_0 = \min \{\gamma_1, \ldots, \gamma_{h}\}$.\\
%%%%%%%%%%%%%%%%%%%%%%%%%%%%
(B) Let $\overline J = \bigcup_{i=1}^r J_i$. It follows easily that the  ${\overline X}^0(t)$ is $\lambda_0^r$-attractive
in $(\overline J, \BallV{c})$. In particular, in the first step of
RIGA with constant $\Xgoalbar$,  $\mathcal V(\widetilde X(\tau_{1r}) ) \leq \lambda_0^r {\mathcal V}(\widetilde X(0))$.\\
(C) Fix $\theta \in (\gamma_0, 1)$. Let $\delta_\lambda = \lambda_0 - \theta$. Let $t^* = \min J_{r+1}$.
It is clear that, for $r$ large enough, in the first step one has $\mathcal V(\widetilde X(t^*) \leq \lambda_0^r \mathcal V ({\widetilde X}^0(0))\leq
 \min \left\{ \left( \frac{\delta_\lambda}{\Lambda \phi_{\infty}(\theta) } \right)^2, c \right\}$,
where the value of $\Lambda$ regards a convenient application of Lemma \ref{lMain} to the interval $J=J_{r+1}$.
So, the appivation of Lemma \ref{lMain}  concludes the Proof of the Theorem.

\begin{lemma} \label{lLambda0}
 Fix $T>0$, $M>0$ and $(\mathbf a, \mathbf b)$ defining a reference input \refeq{refcon} for system \refeq{reference}. Then, with probability one,  for all subinterval $J = [\tau_0, \tau_f] \subset [0, T_f]$, the reference trajectory $\overline X$ is $\gamma$-attractive in $(J, \BallV{c})$ for some $\gamma \in (0,1)$. Furthermore, the value of the attractive factor $\gamma$ depend only on the reference inputs on the interval $[\tau_0, \tau_f]$.
\end{lemma}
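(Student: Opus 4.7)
The plan is to obtain this result as a direct corollary of Theorem \ref{t1} applied on the subinterval $J$ rather than on the whole $[0,T_f]$, combined with real-analyticity of the seed \refeq{refcon}.

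First I would fix the nontrivial differential polynomial $Q$ furnished by Theorem \ref{t1}; note that $Q$ and the integer $M_0$ depend only on the system Hamiltonians $H_0,\ldots,H_m$, not on the final time. Since the reference inputs \refeq{refcon}--\refeq{refconab} are finite sums of sines and cosines (optionally multiplied by the analytic Hamming window \refeq{eHamming}), they are real-analytic functions of $t$ on $\RR$. Consequently the scalar map
\[
\varphi(t) \;=\; Q\!\left({\overline u}_k^{(j)}(t) : k\in\{1,\ldots,m\},\ j\in\{0,\ldots,M_0\}\right)
\]
is real-analytic on $\RR$.

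Next, by part (b) of Theorem \ref{t1}, with probability one with respect to the coefficient vector $(\mathbf a,\mathbf b)$, $\varphi$ is not identically zero on $[0,T_f]$. A nonzero real-analytic function on an interval has only isolated zeros; hence, on the probability-one event just described, the zero set of $\varphi$ is discrete. In particular, for every subinterval $J=[\tau_0,\tau_f]\subset[0,T_f]$ with $\tau_0<\tau_f$ there exists $t^\star\in J$ with $\varphi(t^\star)\neq 0$. The crucial point is that the probability-one event does not depend on $J$: a single analyticity argument upgrades the "one $t\in[0,T_f]$" statement of Theorem \ref{t1}(b) to "some $t$ in every nontrivial subinterval", which is exactly what is needed to handle all $J$ simultaneously.

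Finally I would apply part (a) of Theorem \ref{t1} with $[0,T_f]$ replaced by $J$, the reference trajectory restricted to $J$ starting from $\overline X(\tau_0)$, and closed-loop initial condition $\widetilde X(\tau_0)\in\BallV{c}$ (note $c<2$ is assumed). This produces some $\gamma\in(0,1)$ such that
\[
\mathcal V(\widetilde X(\tau_f))\ \leq\ \gamma\,\mathcal V(\widetilde X(\tau_0)) \quad \text{for all } \widetilde X(\tau_0)\in\BallV{c},
\]
which is precisely $\gamma$-attractivity on $(J,\BallV{c})$ in the sense of Definition \ref{dAttractiveJ}. By construction, the constant $\gamma$ is extracted from Theorem \ref{t1} applied on $J$, so it depends only on the reference inputs restricted to $J$ (together with the fixed data $c$, $K$ and the matrices $S_k$). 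The main technical obstacle is precisely the uniformity in $J$: passing from the pointwise probability-one statement of Theorem \ref{t1}(b) to one that holds for all subintervals $J$ at once, which is resolved here only because the trigonometric parameterization \refeq{refcon} yields real-analytic reference inputs and the zeros of $\varphi$ are therefore isolated.
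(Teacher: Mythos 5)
Your proof is correct and follows essentially the same route as the paper: invoke Theorem~\ref{t1} to obtain the fixed differential polynomial $Q$, observe that substituting the trigonometric seed \refeq{refcon} yields a real-analytic function $\varphi(t)$ that is not identically zero with probability one, conclude from isolation of zeros that $\varphi$ is nonvanishing somewhere in every subinterval $J$, and then apply the argument of Theorem~\ref{t1} restricted to $J$ to extract $\gamma\in(0,1)$. The only difference is that you spell out the analyticity/isolated-zeros step and the independence of the probability-one event from $J$, which the paper leaves more compressed.
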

\begin{proof}
By Theorem \ref{t1}, the differential polynomial $Q( {\overline u}_k^{(j)}: k \in \{1, \ldots , m\}, j=0, 1, \ldots M)\neq 0$ for some $t^* \in [0, T_f]$  with probability one. Substituting \refeq{refcon} in the differential polynomial $Q$,  nonzero analytical function of $t$ is obtained, and so $Q$ cannot be identically zero on $J$. From the same ideas of the proof of Theorem \ref{t1} applied to the interval $J$, it follows that the reference trajectory restricted to $J$ must be $\gamma$-attractive for some $\gamma \in (0,1)$. The last affirmation can be shown by the same arguments of the proof of Theorem \ref{t1}.
\end{proof}

\section{Proof of Theorem \ref{tGlobal}}

The idea of the proof of Theorem \ref{tGlobal} is similar to the one of Theo. \ref{tLocal}.
It is important to stress that, between two switchings of the goal matrix, the behaviour of RIGA
is the same of the one in the context of Theorem \ref{tLocal}.
 After a switching, the Lyapunov
function is instantaneously increased, but this new value is bounded by $\sqrt{\beta}$. This follows easily from
the defined switching condition in section \ref{sSecondStrategy}, that is, the switching from the goal matrix $X_g^q$ to $X_g^{q+1}$ in a step $\ell$ of RIGA only occurs when
$\pdist (X_g^{q+1}, X_f^{\ell-1}) \leq \beta$.

Now the idea is to divide the interval $[0 , T_f]$ in $p+1$ parts,
where $p$ is the total number of goal matrices, that is the maximum number of possible switchings.
Then
$[0, T_f] =  \bigcup_{q=1}^{p+1} {\overline J}_q$. The intervals ${\overline J}_q$ will
have length $r T_1$ for $q < p+1$ and length greater than or equal to $T_1$ for $q=p+1$. In other words,  for $q < p+1$
then
${\overline J}_q = \bigcup_{i=1}^{r} J_{qi}$, where  all the $J_{qi}$ have length $T_1$.
Assume that no switching has occurred yet.
The interval ${\overline J}_1$ will be responsible for delivering a ``sufficiently small''  error matrix
$\widetilde X(t_2)$ for $t_2$ equal  $\min {\overline J}_{2}$. This will allow to apply Lemma \ref{lMain}
for all subintervals $J_{qi}$ for $q>1$ and for ${\overline J}_{p+1}$.
If only one switching has occurred, then the interval ${\overline J}_2$ will be responsible for delivering a ``sufficiently small''  error matrix
$\widetilde X(t_3)$ for $t_3$ equal to $\min {\overline J}_{3}$. This will allow to apply Lemma \ref{lMain}
for all subintervals $J_{qi}$ for $q>2$ and for ${\overline J}_{p+1}$, and so on. This idea will be explained more deeply. \\
(a) The step one of Algorithm 1  and all  steps $\ell$ for which a switching of the goal matrix occurs will be called
by switching steps. Recall that, between two switching steps, the step $\ell$ obeys the same conditions of the context of Theorem \ref{tLocal}.
Note  that, at a switching step $\ell$,
  one has $\mathcal V({\widetilde X}^{\ell} (0)) \leq \beta.$  \\
(c) Let $T_f \geq (p r + 1) T_1$ where $r$ is to be determined. Assume that $[0, T_f] \subset \bigcup_{q=1}^{p+1} {\overline J}_q$ where
${\overline J}_q = [r (q-1)  T_1, r q T_1] = \bigcup_{j=1}^{r} J_{jq}$
where $J_{jq} = [ (j -1  + r q - r) T_1,  (j + r q - r) T_1]$, for $q < p$ and $j=1, 2, \ldots, r$. Furthermore, ${\overline J}_{p+1} = [ r p T_1, T_f]$.
Note that the intervals $J_{jq}, j=1, \ldots, r, q=1, \ldots, p$  have length $T_1$ and the interval ${\overline J}_{p+1}$ has length greater that $T_1$.  It is clear from the same reasoning (A) of Section \ref{a34} above, that it is possible to assume that
the reference trajectory ${\overline X}^0$ is $\lambda_0$-attractive in
 $(J_{jq}, \BallV{c})$ for all $j=1, \ldots , r$ and for all $q=1, \ldots, p$. Furthermore,
 ${\overline X}^0$ is $\lambda_0$-attractive in $({\overline J}_{p+1}, \BallV{c})$.\\
 (d) To show that a first switching will certainly occur. it will be shown that, if a switching never occurs
 then $\lim_{\ell \rightarrow \infty} {\mathcal V}^\ell (\widetilde X^\ell(0)) = 0$.
 To show that this last convergence implies that a first switching will ocurr,
 without loss of generality, consider that the present goal matrix is $X_{goal}^1$ and consider the possibility of the first switching only.
 Now, to say
 that the first switching never occurs is equivalent to have $\pdist( {\overline X}^\ell (T_f), X_{goal}^2) > \beta, \forall \ell \in \NN$. Since in the beginning
 of a step $\ell$ of RIGA one has $\mathcal V({\widetilde X}^\ell(T_f) = \pdist({\overline X}^{\ell-1}(T_f), {\overline X}^\ell(T_f))^2 =$
 $\pdist(X_g^1, {\overline X}^\ell(T_f))^2$. Now note that the triangular inequality implies that $\pdist(X_g^2, {\overline X}^\ell(T_f)) \leq \pdist(X_g^2, X_g^1) + \pdist(X_g^1, {\overline X}^\ell(T_f)) =$
 $\alpha + \pdist(X_g^1, {\overline X}^\ell(T_f))$. So, as the switching never occurs, Lemma \ref{lMain} implies that $\mathcal V({\widetilde X}^\ell)(\tau_0)$ will
 converges to zero.
  So, by Prop. \ref{pMonotonous}, there exist $\ell^*$ large enough such that  $\pdist(X_g^1, {\overline X}^\ell(T_f)) < \beta - \alpha$.
 This implies that, for $\ell=\ell^*$, $\pdist(X_g^2, {\overline X}^\ell(T_f)) < \alpha + \beta - \alpha = \beta$, which is the switching condition, and then is a contradiction.

 The reasoning for the other switchings is analogous.\\
 (e) Assume that RIGA is being executed between steps $\ell =1$ and $\ell_1$, where $\ell_1$ is the step for which the first switching occurs. As it is not know whether a switching occurs or not, $\ell_1$ may be infinite for the moment.  Fix $\theta \in (\gamma_0, 1)$. Let $\delta_\lambda = (\theta - \lambda_0)/p$.
  Let $\theta_q = \lambda_0 +  q \delta_\gamma, q=1, \ldots, p$. Note that $\theta_p = \theta$.
  Let $L_0 = \mu_0 + p \overline M \phi_\infty(\theta) \beta$. Let $\Lambda(L_0)$ be the value whose existence is assured in Lemma \ref{lMain}.
  Let ${\mathcal L}_0 =  \mu_0  < L_0$. Then Lemma \ref{lMain} will assure
  that the reference inputs will be uniformily bounded by ${\mathcal L}_1 = \mu_0 + \overline M \phi_\infty(\theta) \beta$.

 It is clear that, for $r$ large enough, $\theta^r \beta < \left( \frac{\delta_\lambda}{\Lambda \phi_{\infty}(\theta) } \right)^2 $. Let $t^* = \max {\overline J}_1$.  As $\lambda_0 < \theta$,
 it follows that
  $\mathcal V({\widetilde X}^0(t^*)) \leq \theta^r \mathcal V({\widetilde X}^0(0)) \leq \theta^r \beta  \leq \left( \frac{\delta_\lambda}{\Lambda \phi_{\infty}(\lambda_0) } \right)^2$.
  Lemma \ref{lMain} will imply that ${\overline X}^\ell$  is $\theta_1$-attractive in $(J, \BallV{c})$ for $J = J_{jq}$ for $1< q \leq p$ and $j=1, \ldots , r$. Furthermore this is also true for $J = {\overline J}_{p+1}$. It also assures that reference inputs will be uniformly bounded.
 That Lemma implies also that ${\mathcal V}({\widetilde X}^\ell(0)) \leq \theta^\ell \mathcal V({\widetilde X}^0(0))$, and by (d)
 this implies that a switching will occur for a finite $\ell$, namely, $\ell=\ell_1$.

 Assume that  RIGA is being executed, with $\ell$ between $\ell =\ell_1$ and $\ell_2$, respectively the first and the second steps for which a switching occurs, and $\ell_2$ may be infinite. Let $t^* = \max {\overline J}_2$.
  As $\theta_1 < \theta$,
 it follows that
  $\mathcal V({\widetilde X}^{\ell_1} (t^*))  \leq \theta^r \mathcal V({\widetilde X}^{\ell_1}(0)) \leq \theta^r \beta \left( \frac{\delta_\lambda}{\Lambda \phi_{\infty}(\theta_1) } \right)^2$.
  Lemma \refeq{lMain}  will imply that ${\overline X}^\ell$  is $\theta_2$-attractive in $(J, \BallV{c})$ for $J = J_{jq}$ for $2< q < p+1$ and $j=1, \ldots , r$. Furthermore it is also true for $J = {\overline J}_{p+1}$.
 Again Lemma \ref{lMain} invoked with the same $L_0$, (producing the same $\Lambda (L_0)$) and ${\mathcal L}_0 = \mu_0 + \overline M \phi_\infty(\theta) \beta$
 implies also that ${\mathcal V}({\widetilde X}^\ell(0)) \leq \theta^{\ell-\ell_1} \mathcal V({\widetilde X}^{\ell_1}(0))$ for all $\ell \leq \ell_2$,
 where $\ell_2$ is the step for which the second switching occurs. By (d), this implies that a new switching will occur for a finite $\ell_2$.
  Note that, for the $q$-th switching, the application of Lemma \ref{lMain} will consider
 ${\mathcal L}_0 = \mu_0 + q \overline M \phi_\infty(\theta) \beta$, and so on. As the maximum number of swithings is $p$, the maximum value of ${\mathcal L}_0$ for
 the application of Lemma \ref{lMain}  is given by $\mu_0 + p \overline M \phi_\infty(\theta) \beta = L_0$.

Reasoning in this way, as the maximum number of switchings is $p$,  the convergence of the algorithm for $T_f > (p r + 1) T_1$ is shown, as well as the fact that the generated reference inputs are uniformly bounded
by $L_0$.

\section{Proof of Lemma \ref{lMain}}
\label{alMain}

Consider the same notation of Definition \ref{dFund}.
Some further notations are necessary for writing this proof:
\begin{definition}
\label{dNotations_ell}
Let $J =[\tau_0, \tau_1] \subset [0, T_f]$.
 Assume that ${\overline X}^{\ell-1}(t)$ is $\lambda_{\ell-1}$-attractive in $(J, \BallV{c})$. It not shown yet that $\lambda_{\ell-1} < 1$, and so this fact is not claimed at this moment.
 One denotes $V_{\ell-1} = \mathcal V ({\widetilde X}^{\ell}(\tau_0))$,  $W_\ell = \mathcal V ({\widetilde X}^{\ell}(\tau_1)), \ell \in \NN$,
 and ${\mathcal V}_{\ell-1}= \mathcal V ({\widetilde X}^{\ell}(0))$.
One denotes ${\overline U}_\ell = \max \{ | {\overline u}^\ell_k(t) |~: k\in \{1, \ldots, m\}, t \in [0, T_f]\}$.
\end{definition}
From part (d) of proposition \ref{pMonotonous}, and from
 from the  fact that the Lyapunov function
is nonincreasing, it follows that
\begin{equation}
\label{ineqV}
{\mathcal V}_{\ell} \leq V_{\ell} \leq W_{\ell} \leq \lambda_{\ell-1} V_{\ell-1} \leq V_{\ell-1} \leq  {\mathcal V}_{\ell-1}.
\end{equation}
The following three Lemmas are instrumental:

\begin{lemma} \label{lUell} The following affirmations holds:\\
(a) There exists $\overline M >0$ such that  $|{\overline u}^\ell_k(t) - {\overline u}^{\ell-1}_k(t)| \leq  \overline M \sqrt{\mathcal V({\widetilde X}^\ell (t))}$.
for  all $ k\in \{1, \ldots, m\}$ and all $t \in [0, T_f]$.  In particular
\begin{equation}
\label{eUell}
{\overline U}_{\ell} \leq {\overline U}_{\ell-1} + \overline M \sqrt{{\mathcal V}_{\ell-1}}
\end{equation}
and
\begin{equation}
\label{eUell2}
|{\overline u}^\ell_k(t) - {\overline u}^{\ell-1}_k(t)| \leq  \overline M \sqrt{V_{\ell-1}},
  \forall t \in J=[\tau_0, \tau_1]
\end{equation}
(b) One has $\|\left({\overline X}^\ell(t) - {\overline X}^{\ell-1}(t)\right)E\| \leq 2 \sqrt{{\mathcal V(\widetilde X (t))}}$ for all $t \in [0, T_f]$. In particular,
$\|\left({\overline X}^\ell(t) - {\overline X}^{\ell-1}(t) \right)E\| \leq 2 \sqrt{{ V}_{\ell-1}}$, for all $t \in J=[\tau_0, \tau_1]$.\\
\end{lemma}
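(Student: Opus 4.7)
The plan is to identify ${\overline u}^\ell_k(t) - {\overline u}^{\ell-1}_k(t)$ with the feedback correction $\widetilde u_k(t)$ generated in operation $\sharp 6$ of step $\ell$ (recall the update ${\overline u}^\ell = \widetilde u + {\overline u}^{\ell-1}$), and then to exploit the structure of the feedback law \refeq{eFeedbackLaw}. For (b), the difference ${\overline X}^\ell - {\overline X}^{\ell-1}$ is decomposed using the right-translation identity ${\overline X}^\ell(t) = X^\ell(t)\, R_{\ell+1}$ from operation $\sharp 5$ (Prop. \ref{pMonotonous}(a)).

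For part (a), writing $\widetilde X^\ell = (\widetilde X^\ell - I) + I$ in \refeq{eFeedbackLaw} and noting that $\widetilde S_k \in \un$ is anti-Hermitian, so that $E^\dag \widetilde S_k E$ is also anti-Hermitian and has purely imaginary trace, yields
\[
\widetilde u_k(t) \;=\; 2K\,\Re\!\left[\,\trace\!\left(E^\dag \widetilde S_k\,(\widetilde X^\ell - I)\,E\right)\right].
\]
Cauchy--Schwarz for the Frobenius inner product together with Prop. \ref{ineqFrob}(a),(b),(c) gives
\[
\bigl|\trace(E^\dag \widetilde S_k (\widetilde X^\ell - I) E)\bigr|\;\leq\;\|\widetilde S_k E\|\,\|(\widetilde X^\ell - I)E\|\;\leq\;\|S_k\|\,\sqrt{\mathcal V(\widetilde X^\ell(t))},
\]
where the last equality uses Prop. \ref{pCritical}(a). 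Choosing $\overline M = 2K\,\max_k\|S_k\|$ gives the pointwise estimate. Inequality \refeq{eUell} then follows from the triangle inequality and the fact that $\mathcal V(\widetilde X^\ell(\cdot))$ is non-increasing on $[0,T_f]$ by \refeq{eVdot}, hence bounded by $\mathcal V_{\ell-1}$; the same argument restricted to $J=[\tau_0,\tau_1]$ yields \refeq{eUell2} with $V_{\ell-1}$ in place of $\mathcal V_{\ell-1}$.

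For part (b), the decomposition
\[
{\overline X}^\ell(t) - {\overline X}^{\ell-1}(t)\;=\;X^\ell(t)(R_{\ell+1}-I)\;+\;\bigl(X^\ell(t) - {\overline X}^{\ell-1}(t)\bigr)
\]
is multiplied on the right by $E$ and the triangle inequality is applied. By Prop. \ref{pCritical}(b) the second term has norm $\sqrt{\mathcal V(\widetilde X^\ell(t))}$. For the first, Prop. \ref{ineqFrob}(b) and Prop. \ref{pCritical}(a) give $\|X^\ell(t)(R_{\ell+1}-I)E\| = \|(R_{\ell+1}-I)E\| = \sqrt{\mathcal V(R_{\ell+1})}$. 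Since $\mathcal V(Y)=\mathcal V(Y^\dag)$ and $\widetilde X^{\ell+1}(0)=R_{\ell+1}^\dag$ by Prop. \ref{pMonotonous}(b), and since under the standing no-switching hypothesis of Lemma \ref{lMain} the goal matrix is the same at steps $\ell$ and $\ell+1$, Prop. \ref{pMonotonous}(e) yields $\mathcal V(R_{\ell+1})=\mathcal V(\widetilde X^{\ell+1}(0))=\mathcal V(\widetilde X^\ell(T_f))\le\mathcal V(\widetilde X^\ell(t))$ by monotonicity. Summing bounds the total by $2\sqrt{\mathcal V(\widetilde X^\ell(t))}$; restricting to $J$ gives the stated estimate on the subinterval.

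The only subtle point is the invocation of Prop. \ref{pMonotonous}(e) across consecutive steps in the proof of (b); this is legitimate precisely because Lemma \ref{lMain} is stated in the no-switching regime, where $\Xgoalbar=\Xgoalbarellminusone$ holds for all adjacent steps of interest. All remaining computations are routine from the feedback law \refeq{eFeedbackLaw}, the unitary invariance of the Frobenius norm, and the Lyapunov decay inequality \refeq{eVdot}.
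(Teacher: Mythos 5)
Your proof is correct and follows essentially the same route as the paper's. For part (a) you apply the same trick: write $\widetilde X = (\widetilde X - I)+I$, discard the $I$ term because $E^\dag \widetilde S_k E$ is anti-Hermitian, then bound the remaining trace by Cauchy--Schwarz, unitary invariance, and Prop.~\ref{pCritical}(a), giving $|\widetilde u_k|\leq 2K\|S_k\|\sqrt{\mathcal V(\widetilde X^\ell(t))}$, exactly as in the paper. For part (b) you use the same decomposition $\overline X^\ell - \overline X^{\ell-1} = X^\ell(R_{\ell+1}-I) + (X^\ell - \overline X^{\ell-1})$; the only stylistic deviation is that for the term $\|(R_{\ell+1}-I)E\|$ you route through $\mathcal V(R_{\ell+1})=\mathcal V(\widetilde X^{\ell+1}(0))=\mathcal V(\widetilde X^\ell(T_f))$ by citing Prop.~\ref{pMonotonous}(b),(e), whereas the paper performs the equivalent Frobenius-norm computation inline (expanding $R_{\ell+1}=(X_f^\ell)^\dag X_{goal}$ and using unitary left-invariance). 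Both arguments use the same underlying facts and the same no-switching hypothesis, so the difference is merely that you delegate a short computation to an already-proved proposition; nothing is gained or lost logically.
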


\begin{proof}
From equation \refeq{eFeedbackLaw}, from the fact that $E^\dag {\widetilde S}_k E$ is skew hermitian (and so $\Re [\trace(E^\dag {\widetilde S}_k E)]=0$)
from Prop. \ref{pCritical}, and from  Proposition \ref{ineqFrob}, it follows  that
\begin{small}
\begin{eqnarray*}
\| \widetilde{u}_k (t)\| & =& \left\| 2 K \Re \left[ \trace \left(E^\dag {\widetilde S}_k(t) (\widetilde X(t) - I) E \right) \right] \right\|  \\
 & \leq & 2 K \|{\overline X}(t)^\dag S_k X(t) \| \|(\widetilde X(t) - I) E \|\\
 & = & 2  K \|S_k\| \sqrt{\mathcal V(\widetilde X(t))}
\end{eqnarray*}
\end{small}
From this, and from the fact that $u_k(t) = {\overline u}_k (t) + {\widetilde u}_k(t)$, then (a) follows easily.
Now to show (b), note that from Prop. \ref{ineqFrob} that:
$\|({\overline X}^\ell(t) - {\overline X}^{\ell-1}(t))E\| =$ $ \| [{\overline X}^\ell(t) - { X}^{\ell}(t)  + { X}^{\ell}(t) - {\overline X}^{\ell-1}(t)] E\| \leq$
$ \| ({\overline X}^\ell(t) - { X}^{\ell}(t)) E\| + \| ({ X}^{\ell}(t)- {\overline X}^{\ell-1}(t))E\| =$
$ \| ( X^\ell(t) R_{\ell+1}  - { X}^{\ell}(t)) E\| + \|({ X}^{\ell}(t) - {\overline X}^{\ell-1}(t))E\| =$
$ \| (R_{\ell+1} - I) E  \| + \sqrt{\mathcal V(\widetilde X(t))}$. Now, since $R_{\ell+1} = (X_f^{\ell})^\dag X_{goal}$, then by Prop. \refeq{ineqFrob}
$\| (R_{\ell+1} - I) E  \| = \| ((X_f^{\ell})^\dag X_{goal} - I) E\| \leq \|(X_{goal} - X_f^{\ell})E\| =  \|({\overline X}^{\ell-1}(T_f) - X^{\ell}(T_f) )E\| =$
$ \sqrt{\mathcal V(\widetilde X^\ell(T_f))} \leq \sqrt{\mathcal V(\widetilde X^\ell(t))}$, for $t \in [0, T_f]$.

\end{proof}

\begin{lemma}
\label{lCB} Assume that\footnote{The proof of Lemma \ref{lCB} shows that the parameter $\Lambda$ that appears in \refeq{estar} depends on the set $\{(\overline L), c, T_1, K, \|S_k\|: k=1,\ldots , m\}$.} ${\mathcal V}_0 \leq c^2 <4$.  If  ${\overline U}_{\ell-1} \leq \overline L$  for some $\overline L >0$ then
there exists\footnote{The proof of Lemma \ref{lCB} shows that $\Lambda$ depends on the set $\{(\overline L), c, T_1, K, \|S_k\|: k=1,\ldots , m\}$.} $\Lambda(\overline L) >0$ such that
\begin{subequations}
 \label{def_gamma_V}
 \begin{equation}
 \label{estar}
\lambda_\ell  \leq  \lambda_{\ell-1} + \Lambda(\overline L) \sqrt{V_{\ell-1}}.
\end{equation}
 Furthermore:
 \begin{eqnarray}
 \label{estarstarstar} {\mathcal  V}_{\ell} &\leq & \lambda_{\ell-1} {\mathcal V}_{\ell};\\
 \label{estarstar} { V}_\ell &\leq & \lambda_{\ell-1} { V}_{\ell-1}.
 \end{eqnarray}
 \end{subequations}
 \end{lemma}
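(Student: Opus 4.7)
The conclusions \eqref{estarstar} and \eqref{estarstarstar} follow immediately from the monotonicity chain \eqref{ineqV}, so I would prove \eqref{ineqV} first. The ingredients are three: (i) the Lyapunov nonincrease along each step from \eqref{eVdot}, which yields $V_{\ell-1}\leq \mathcal V_{\ell-1}$ and $\mathcal V(\widetilde X^\ell(T_f))\leq W_\ell$; (ii) the identity $\mathcal V_\ell=\mathcal V(\widetilde X^\ell(T_f))$ from Proposition \ref{pMonotonous}(e), which is available here because Lemma \ref{lMain} is run under the equality $\Xgoalbar=\Xgoalbarellminusone$; and (iii) the $\lambda_{\ell-1}$-attractivity hypothesis applied on $J$, which gives $W_\ell\leq \lambda_{\ell-1}V_{\ell-1}$. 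The attractivity hypothesis requires checking $\widetilde X^\ell(\tau_0)\in\BallV{c}$; since $V_{\ell-1}\leq\mathcal V_{\ell-1}\leq\mathcal V_0\leq c^2<4$, this is automatic under the standing hypothesis $\mathcal V_0\leq c^2$.

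The main content is the Lipschitz-type bound \eqref{estar}. The plan is to fix an arbitrary $\widetilde X_0\in\BallV{c}$ at time $\tau_0$ and consider the two closed-loop solutions $\widetilde X_{(1)}(t),\widetilde X_{(2)}(t)$ of \eqref{eClosedLoopComplete} starting from the common initial datum $\widetilde X_0$, driven respectively by the references $\overline X^{\ell-1}$ and $\overline X^\ell$. Integrating \eqref{eVdot} on $J$ rewrites the attraction ratio as $\mathcal V(\widetilde X_{(i)}(\tau_1))/\mathcal V(\widetilde X_0)=1-(1/\mathcal V(\widetilde X_0))\sum_k\int_J(\widetilde u_k^{(i)})^2/K\,dt$. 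Because the $\lambda_{\ell-1}$-attractivity of $\overline X^{\ell-1}$ already supplies the desired bound for $i=1$, the task reduces to proving
\[
\Big|\sum_{k}\int_J \frac{(\widetilde u_k^{(1)})^2-(\widetilde u_k^{(2)})^2}{K}\,dt\Big|\leq \Lambda(\overline L)\sqrt{V_{\ell-1}}\,\mathcal V(\widetilde X_0).
\]
Factoring $(\widetilde u^{(1)})^2-(\widetilde u^{(2)})^2=(\widetilde u^{(1)}+\widetilde u^{(2)})(\widetilde u^{(1)}-\widetilde u^{(2)})$ and using the a-priori bound $|\widetilde u_k^{(i)}(t)|\leq 2K\|S_k\|\sqrt{\mathcal V(\widetilde X_0)}$ (from the proof of Lemma \ref{lUell} combined with the Lyapunov nonincrease), it is enough to establish the pointwise estimate $|\widetilde u_k^{(1)}(t)-\widetilde u_k^{(2)}(t)|\leq C\sqrt{V_{\ell-1}}\sqrt{\mathcal V(\widetilde X_0)}$ on $J$, with $C$ depending only on the parameter set $\mathcal P=\{\overline L,c,T_1,K,\|S_k\|\}$.

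This pointwise estimate is the core of the proof and the expected main obstacle. Using the identity $\widetilde u_k^{(i)}=2K\Re[\trace((\overline X^{(i)}E)^\dag S_k(X^{(i)}-\overline X^{(i)})E)]$, one applies Lemma \ref{lUell}(b) to control $\|(\overline X^{(1)}-\overline X^{(2)})E\|\leq 2\sqrt{V_{\ell-1}}$, Lemma \ref{lUell}(a) together with $\overline U_{\ell-1}\leq \overline L$ to bound the reference inputs at both steps uniformly by $\overline L+\overline M\sqrt{V_{\ell-1}}\leq\overline L+2\overline M$, and then a Gronwall-type estimate on the difference of the two closed-loop trajectories to propagate the $\sqrt{V_{\ell-1}}$ perturbation of the reference through the dynamics. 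The subtlety is that the partial distance only controls $\overline X^{\ell-1}-\overline X^\ell$ on $\mathrm{Im}\,E$, so one must work intrinsically with the error formulation, where both trajectories start at the \emph{same} initial condition $\widetilde X_0$, and exploit the fact that the feedback law \eqref{eFeedbackLaw} together with $\overline X^{(i)}(T_f)=\Xgoalbar$ ensures every reference-dependent quantity entering the comparison is driven by $E$-projected data. Tracking the resulting Gronwall constant confirms that $\Lambda(\overline L)$ depends only on $\mathcal P$, completing the proof of \eqref{estar}.
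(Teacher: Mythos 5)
Your proposal is correct and follows essentially the same route as the paper: both deduce \eqref{estarstarstar}--\eqref{estarstar} from the chain \eqref{ineqV}, and both establish \eqref{estar} by comparing the two closed-loop solutions launched from the same error $\widetilde X_0$ under the references $\overline X^{\ell-1}$ and $\overline X^{\ell}$, using Lemma \ref{lUell}(a)--(b) to bound the reference perturbation and a Gronwall comparison (which the paper isolates as Lemma \ref{lJJ}) in the $E$-projected $Z=(X-\overline X)E$ formulation, then passing the resulting perturbation of $\int_J\sum_k\widetilde u_k^2/K$ to a perturbation of the contraction factor. The only differences are organizational; the paper packages the Gronwall step and the difference-of-squares estimate on the controls into a self-contained auxiliary lemma rather than running the argument inline.
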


 \begin{lemma}
  \label{eSequences}
  Assume that ${\overline U}_\ell, {\mathcal V}_\ell$, ${ V}_\ell$ and $\gamma_\ell, \ell \in \NN$ are nonnegative real sequences
  such that \refeq{eUell}, \refeq{estar}, \refeq{estarstarstar},  \refeq{estarstar} holds (\refeq{estar} holds whenever ${\overline U}_{\ell-1} \leq \overline L$).
  Fix a positive real numbers $\delta_\lambda$
  and let $\theta = \lambda_0 + \delta_\lambda$. Assume that $\theta \in (0 ,1)$ and ${\overline U}_{0} \leq L_0$,
  Let $\overline L = L_0 +  \overline M \phi_\infty(\theta) \sqrt{{\mathcal V}_0}$. Let\footnote{The existence of $\Lambda (\overline L)$ is assumed
  by \refeq{estar})} ${\overline \Lambda} = \Lambda (\overline L)$.
 Assume that $\sqrt{V_0} \leq \min \{ \frac{\delta_\lambda}{{\overline \Lambda} \phi_\infty (\theta)}, \sqrt{c}\}$.
 Then, for all $\ell \in \NN$ one has ${\overline U}_\ell \leq \overline L$, $\sqrt{{\mathcal V}_\ell} \leq (\sqrt{\theta})^\ell \sqrt{{\mathcal V}_{\ell-1}}$,
  $\sqrt{{ V}_\ell} (\sqrt{\theta})^\ell \leq \sqrt{{ V}_{\ell-1}}$, and $\lambda_\ell \leq \theta$, for all $\ell \in \NN - {0}$.
 \end{lemma}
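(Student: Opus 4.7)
The plan is to proceed by simultaneous induction on $\ell$ to establish the four assertions
(I) ${\overline U}_\ell \leq \overline L$,
(II) $V_\ell \leq \theta^\ell V_0$,
(III) ${\mathcal V}_\ell \leq \theta^\ell {\mathcal V}_0$, and
(IV) $\lambda_\ell \leq \theta$ for $\ell \geq 1$.
The base case $\ell = 0$ is immediate from the hypotheses: ${\overline U}_0 \leq L_0 \leq \overline L$, (II) and (III) are trivial equalities, and $\lambda_0 < \theta$ holds because $\delta_\lambda > 0$.

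For the inductive step, I would assume (I)--(IV) hold at every index $0,1,\ldots,\ell-1$. The first sub-step is to telescope \refeq{eUell} and use the decay ${\mathcal V}_j \leq \theta^j {\mathcal V}_0$ from the inductive hypothesis, which gives
\[
  {\overline U}_\ell \leq {\overline U}_0 + \overline M \sum_{j=0}^{\ell-1} \sqrt{{\mathcal V}_j}
  \leq L_0 + \overline M \sqrt{{\mathcal V}_0} \sum_{j=0}^{\ell-1} (\sqrt{\theta})^j
  \leq L_0 + \overline M \phi_\infty(\theta) \sqrt{{\mathcal V}_0} = \overline L,
\]
so assertion (I) holds at level $\ell$. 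Because (I) already holds at level $\ell-1$, the standing hypothesis required by \refeq{estar} is met, and iterating that bound while using $V_j \leq \theta^j V_0$ yields
\[
  \lambda_\ell \leq \lambda_0 + {\overline \Lambda} \sum_{j=0}^{\ell-1} \sqrt{V_j}
  \leq \lambda_0 + {\overline \Lambda} \phi_\infty(\theta) \sqrt{V_0}
  \leq \lambda_0 + \delta_\lambda = \theta,
\]
where the final inequality is exactly the smallness assumption $\sqrt{V_0} \leq \delta_\lambda/({\overline \Lambda} \phi_\infty(\theta))$. With $\lambda_\ell \leq \theta$ in hand, assertions (II) and (III) follow directly from \refeq{estarstar} and \refeq{estarstarstar}; for instance $V_\ell \leq \lambda_{\ell-1} V_{\ell-1} \leq \theta \cdot \theta^{\ell-1} V_0 = \theta^\ell V_0$, and the analogous estimate gives (III).

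The main obstacle, and the reason a simultaneous induction is unavoidable, is the circularity between the four recursions: the applicability of \refeq{estar} requires the a priori bound ${\overline U}_{\ell-1} \leq \overline L$, which in turn depends on bounds for $\sqrt{{\mathcal V}_j}$, $j<\ell$, while the geometric decay of $V_j$ and ${\mathcal V}_j$ needed to sum those contributions requires $\lambda_{j-1} \leq \theta$. The simultaneous induction breaks this circularity by relying, at level $\ell$, only on strictly prior information, and the order in which (I)--(IV) must be proved at each level is forced: first (I), then (IV), and finally (II) and (III). The precise choice $\overline L = L_0 + \overline M \phi_\infty(\theta) \sqrt{{\mathcal V}_0}$ together with the smallness condition on $\sqrt{V_0}$ is tuned so that the telescoped geometric sums remain within the desired bounds uniformly in $\ell$.
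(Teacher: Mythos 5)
Your proof is correct and follows essentially the same inductive scheme as the paper's own argument. The only cosmetic difference is bookkeeping: the paper carries the stronger running invariants $\lambda_\ell \leq \lambda_0 + \Lambda\,\phi_{\ell-1}(\theta)\sqrt{V_0}$ and ${\overline U}_\ell \leq {\overline U}_0 + \overline M\,\phi_{\ell-1}(\theta)\sqrt{{\mathcal V}_0}$ through the induction, whereas you keep only the coarser bounds ${\overline U}_\ell \leq \overline L$ and $\lambda_\ell \leq \theta$ and re-telescope the recursions at each step --- both routes use the same geometric-decay mechanism and the same smallness condition on $\sqrt{V_0}$ to close, and your proof makes explicit the check (implicit in the paper) that ${\overline U}_j \leq \overline L$ must be available at every prior index $j<\ell$ before \refeq{estar} can be iterated.
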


\begin{proof} (of Lemma \ref{lMain})
The proof of Lemma \ref{lMain} is then a direct application of Lemmas \ref{lUell}, \ref{lCB} and \ref{eSequences}.
\end{proof}

\subsection{Proof of Lemma \ref{eSequences}}
\label{aSequences}

It will be shown by induction that, for all $\ell\in \NN$:\\
 (i) $\sum_{j=0}^{\ell-1} \sqrt{V_j} \leq \phi_{\ell-1}(\theta) \sqrt{V_0}$,\\
 (ii) $\lambda_\ell \leq \lambda_0 + \Lambda \phi_{\ell-1} (\theta) \sqrt{V_0} \leq \theta$,\\
 (iii) $\sqrt{V_\ell} \leq (\sqrt{\theta})^\ell \sqrt{V_0}$.\\
  (iv) $\sqrt{{\mathcal V}_\ell} \leq (\sqrt{\theta})^\ell \sqrt{{\mathcal V}_0}$.\\
  (v) ${\overline U}_\ell \leq {\overline U}_0 + \overline M \phi_{\ell-1}(\theta) \sqrt{{\mathcal V}_0}$\\

In fact, for $\ell = 1$, as $\phi_0(\theta) =1$, then (i) is trivial. Furtheremore, since $\phi_0(\theta) =1$, \refeq{estar}
for $\ell =1$ reads $\lambda_1 \leq \lambda_0 + \Lambda \phi_0(\theta) \sqrt{V_0}$. Since $\sqrt{V_0} \leq
\delta_\lambda / (\Lambda \phi_\infty(\theta))$, then  $\lambda_1 \leq \lambda_0 + \Lambda \phi_0(\theta) \sqrt{V_0} \leq  \lambda_0 + \frac{\Lambda \phi_0(\theta) \delta_\lambda}{ \Lambda \phi_\infty(\theta)} \leq
\lambda_0 + \delta_\lambda \leq \theta$, showing (ii) for $\ell=1$. Now, since $\lambda_0 \leq \theta$,
from \refeq{estarstar} for $\ell  =1$, then (iii) holds for $\ell=1$. From \refeq{ineqV}, it follows that
$\sqrt{{\mathcal V}_1} \leq \sqrt{\lambda_0} \sqrt{V_0} \leq \sqrt{V_0} \leq \sqrt{{\mathcal V}_0}$. In particular
$\sqrt{{\mathcal V}_1} \leq \sqrt{\lambda_0} \sqrt{{\mathcal V}_0}$, showing (iv) for $\ell=1$. Now note that \refeq{eUell} for $\ell=1$
coincides to (v) for $\ell =1$.

Now assume  that (i), (ii) and (iii) hold for $\ell$. It will be shown that
they hold for $\ell+1$. Note that, from (i) and (iii) and from the fact
that $\phi_{\ell-1}(\theta) + (\sqrt{\theta})^\ell = \phi_{\ell}(\theta)$, then  (i) follows
for $\ell +1$. Now from  \refeq{estar} for $\ell+1$, from (ii) and (iii) that
$
\gamma_{\ell+1}  \leq  \lambda_\ell + \Lambda \sqrt{V_\ell}$
               $ \leq  \lambda_0
 + \Lambda \left\{ \phi_{\ell-1}(\theta) \sqrt{V_{0}} +   \sqrt{V_{\ell}} \right\}$
 $ \leq   \lambda_0
 + \Lambda \left\{ \phi_{\ell-1}(\theta)  +   (\sqrt{\theta})^\ell \right\} \sqrt{V_{0}} $
 $ = \lambda_0
 + \Lambda  \phi_{\ell}(\theta) \sqrt{V_{0}}
 $
Since $\sqrt{V_0} \leq
\delta_\lambda / (\Lambda \phi_\infty(\theta))$, then  $\lambda_{\ell+1} \leq \lambda_0 + \Lambda \phi_\ell(\theta) \sqrt{V_0} \leq  \lambda_0 + \frac{\Lambda \phi_\ell(\theta) \delta_\lambda}{ \Lambda \phi_\infty(\theta)} \leq
\lambda_0 + \delta_\lambda \leq \theta$, showing (ii) for $\ell+1$.

Now, from \refeq{estarstar} for $\ell+1$, from (ii) and from (iii), then the fact that $\sqrt{V_{\ell+1}} \leq \lambda_\ell \sqrt{V_{\ell}} \leq \sqrt{\theta} \sqrt{V_{\ell}} \leq (\sqrt{\theta})^{\ell+1} \sqrt{V_0}$ follows,
showing (iii) for $\ell+1$. Now, from \refeq{ineqV}, it follows easily that (iv) holds fo $\ell+1$. Now assume that
(v) holds. From (v), (iv), from \refeq{eUell},
and from the fact that $\phi_{\ell} (\theta) = \phi_{\ell-1} (\theta) + \sqrt{\theta})^\ell$,
it follows that ${\overline U}_{\ell+1} \leq $ ${\overline U}_{\ell} + \overline M \sqrt{ {\mathcal V}_\ell} \leq$
${\overline U}_{0} + \overline M \phi_{\ell-1} (\theta) \sqrt{ {\mathcal V}_0} + \overline M \sqrt{ {\mathcal V}_\ell} \leq$
${\overline U}_{0} + \overline M \phi_{\ell-1} (\theta) \sqrt{ {\mathcal V}_0} + \overline M (\sqrt{\theta})^\ell \sqrt{ {\mathcal V}_0}$
${\overline U}_{0} + \overline M \phi_{\ell} (\theta) \sqrt{ {\mathcal V}_0}$.

\subsection{Proof of Lemma \ref{lCB}}

After right-multiplication of \refeq{cqs} and \refeq{reference} by $E$, one obtains
\begin{eqnarray*}
\dot{\overline Y}(t) & = & \left( S_0 + \sum_{k=1}^{m}{\overline u}_k(t) S_k \right) \overline Y(t)\\
\dot{ Y}(t) & = & \left( S_0 + \sum_{k=1}^{m}{ u}_k(t) S_k \right) Y(t)
\end{eqnarray*}
where $Y(t) = X(t) E$ and $\overline Y(t) = \overline X (t) E$ are $n \times \nbar$ matrices.
Note that the feedback ${\widetilde u}_ (t) = u_k(t) - {\overline u}_k (t)$ given by \refeq{eFeedbackLaw} can be rewritten as:
\begin{subequations}
\begin{eqnarray}
 \nonumber {\widetilde u}_ (t)  & = &  2 K \Re \left[ \trace \left( E^\dag  {\widetilde S}_k {\overline X(t)}^\dag X(t) E \right) \right]\\
 \nonumber & = & 2 K \Re \left[ \trace \left( E^\dag {\overline X(t)}^\dag {S}_k {\overline X}(t) {\overline X(t)}^\dag X(t) E \right) \right]\\
\nonumber  & = & 2 K \Re \left[ \trace \left( {\overline Y(t)}^\dag {S}_k Y(t) \right) \right]\\
\nonumber  & = & 2 K  \Re \left[ \trace \left( {\overline Y(t)}^\dag {S}_k (Y(t) - \overline Y(t)) \right) \right]\\
\label{eFeedZ} & = & 2 K  \Re \left[ \trace \left( {\overline Y(t)}^\dag {S}_k Z(t) \right) \right]
\end{eqnarray}
\end{subequations}
where the last equality follows from the fact that $S_k$ is anti-hermitian, and so
 $\Re \left[ \trace \left( {\overline Y(t)}^\dag {S}_k \overline Y(t)) \right) \right] =0$.
Let $Z(t) = (Y(t) - \overline Y(t))$. After some simple manipulations, it follows that
$
\dot Z(t) = f({\overline u}(t), {\overline Y(t)}, Z(t))
$
where
\begin{equation}
\begin{array}{c}
\label{eFuYZ}
f({\overline u}(t), \overline Y, Z(t))  = \left( S_0 + \sum_{k=1}^{m}{\overline u}_k(t) S_k \right) Z(t) \\
 +  \sum_{k=1}^{m}  {\widetilde u}_k(t) S_k \left( Z(t) + \overline Y(t) \right)
\end{array}
\end{equation}
with  ${\widetilde u}_k(t)$ given by \refeq{eFeedZ}.
Note that the initial condition $X_0$ in RIGA is always the identity, this means
that $Z_0 = Y(0) - {\overline Y}(0) = (I - {\overline X}_0) E$. Since $\|Z\|^2 = \dist (Y, \overline Y)^2$, this
justifies the following notation:
\[
\begin{array}{c}
\BallZ{c} = \{ Z_0 \in \CC^{n\times \nbar}~|~\|Z_0 \| \leq c, \\
Z_0 = (I - {\overline X}_0) E, {\overline X}_0 \in \Un\}
\end{array}
\]

The proof of Lemma \ref{lCB} is based on part (b) of the following lemma:
\begin{lemma}
\label{lJJ}
Let $J=[\tau_0, \tau_f] \subset [0, T_f]$ with $\tau_f - \tau_0 = T_1$.
Assume that  two sets of continuous reference inputs ${\overline u}^1 = \{{\overline u}_k^1(t) : J \rightarrow \RR, k=1, \ldots, m\}$,
 ${\overline u}^2 = \{{\overline u}_k^2(t) : J \rightarrow \RR, k =1 , \ldots, m$ are given.
Assume that ${\overline Y}_i(t), i=1,2$
are the corresponding solutions of
\begin{equation}
 \label{eREF}
   \dot{\overline Y}^i(t)  =  \left( S_0 + \sum_{k=1}^{m}{\overline u}_k^i(t) S_k \right) {\overline Y}^i(t), \quad  {\overline Y}^i(\tau_0) =  {\overline Y}^i_0\\
\end{equation}
Let $f_i(t, Z(t)) = f({\overline u}^i(t), {\overline Y}^i(t), Z(t))$, where $f(u(t), \overline Y, Z(t))$ is defined by \refeq{eFuYZ}.
 Consider the corresponding solution ${Z}_i : J \rightarrow \Un$ of the  closed-loop system
\begin{equation}
\label{eCLOSED}
  \dot Z (t)= f_i(t, Z(t)),\quad Z(\tau_0) = Z_0, i=1,2
\end{equation}
both with the same initial condition
$Z(\tau_0) = Z_0 \in \BallZ{c}$, but with different reference trajectories ${\overline Y}_i(t)$, $i=1, 2$, which are the corresponding solution of \refeq{eREF} with initial condition ${\overline Y}^i_0$, respectively.
 Assume that there exists $\Delta_1 >0$ such that  $\|{\overline Y}_2(t) -  {\overline Y}_1(t)\| \leq \Delta_1, \forall t \in J$. Assume that $\Delta_1 < 2$.
 Assume that $\max \{ \|{\overline u}_k^2(t) - {\overline u}_k^1(t)\|~|~t \in J, k=1, \ldots, m\} \leq \Delta_2$. Assume that  $\max \{ \|{\overline u}_k^1(t)\|~|~t \in J, k=1, \ldots, m\} \leq \overline L$.\\
(a)  There exist positive real numbers ${\mathcal M}_1, {\mathcal M}_2$, depending on $\overline L$, such that $\|{Z}_1(t) - {Z}_2(t)\| \leq \left( {\mathcal M}_1 \Delta_1  + {\mathcal M}_2 \Delta_2\right) \|Z_0\|$.\\
(b) Let  $\alpha_i (Z_0) =  \sum_{k=1}^{m} \int_{\tau_0}^{\tau_1}  4 K \trace [  {\overline Y}_i(t)^\dag S_k Z_i(t)]^2 dt$, for $i=1, 2$. There exist $N_1, N_2  > 0$, depending on $\overline L$, such that $\delta_\alpha= | \alpha_1 - \alpha_2 | \leq T_1 (N_1 \Delta_1 + N_2 \Delta_2) \|Z_0\|^2$.
\end{lemma}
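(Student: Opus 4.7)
The plan is to prove part (a) by a standard Gronwall-type estimate on the difference $\Delta Z(t) = Z_1(t) - Z_2(t)$, and then derive part (b) by algebraic manipulation of the integrand using $a^2-b^2 = (a-b)(a+b)$ together with the bound obtained in part (a).

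First I would collect several a priori estimates. Because $E^\dag E = I_{\nbar}$ and $\overline X^i \in \Un$, one has $\|\overline Y^i(t)\| = \sqrt{\nbar}$ throughout $J$. Because the Lyapunov function is nonincreasing along the closed-loop system (equation \refeq{eVdot}) and $\|Z_i(t)\|^2 = \mathcal V(\widetilde X_i(t))$, it follows that $\|Z_i(t)\| \le \|Z_0\| \le c < 2$ for $t \in J$ and $i=1,2$. From \refeq{eFeedZ} one has $|\widetilde u_k^i(t)| \le 2K\|S_k\|\sqrt{\nbar}\,\|Z_i(t)\|$, hence the total applied control $|u_k^i(t)| \le \overline L + 2K\|S_k\|\sqrt{\nbar}\,\|Z_0\|$ is uniformly bounded on $J$ by a constant depending only on $\overline L$, $c$, and $\|S_k\|$.

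For part (a), I would write $\dot{\Delta Z} = [f_1(t,Z_1) - f_1(t,Z_2)] + [f_1(t,Z_2) - f_2(t,Z_2)]$. The first bracket is bounded in norm by $L_f\,\|\Delta Z\|$, where $L_f$ is the Lipschitz constant of $f_1(t,\cdot)$ on the ball $\{\|Z\|\le c\}$; this constant depends only on $\overline L$, $c$ and $\|S_k\|$ because $f$ is a polynomial of degree at most two in $Z$ with bounded coefficients. For the second bracket I would use the decomposition $\widetilde u_k^1 S_k(Z_2 + \overline Y^1) - \widetilde u_k^2 S_k(Z_2 + \overline Y^2) = (\widetilde u_k^1 - \widetilde u_k^2) S_k(Z_2 + \overline Y^1) + \widetilde u_k^2 S_k(\overline Y^1 - \overline Y^2)$, noting that $|\widetilde u_k^1(t) - \widetilde u_k^2(t)| \le 2K\|S_k\| \Delta_1 \|Z_0\|$ and $|\widetilde u_k^2(t)| \le 2K\|S_k\|\sqrt{\nbar}\|Z_0\|$. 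Combined with the $(\overline u_k^1 - \overline u_k^2) S_k Z_2$ contribution, this yields an estimate of the form $\|f_1(t,Z_2) - f_2(t,Z_2)\| \le (C_1 \Delta_1 + C_2 \Delta_2)\|Z_0\|$ for constants depending only on $\overline L$, $c$, $\|S_k\|$. Since $\Delta Z(\tau_0)=0$, Gronwall's inequality then gives
\[
\|\Delta Z(t)\| \le (C_1\Delta_1 + C_2\Delta_2)\|Z_0\|\,(t-\tau_0)\,e^{L_f(t-\tau_0)},
\]
from which $\mathcal M_1 = C_1 T_1 e^{L_f T_1}$ and $\mathcal M_2 = C_2 T_1 e^{L_f T_1}$ depend only on $\overline L$, establishing (a).

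For part (b), set $a_k^i(t) = \trace[(\overline Y^i(t))^\dag S_k Z_i(t)]$, so $|a_k^i(t)| \le \sqrt{\nbar}\,\|S_k\|\,\|Z_0\|$. Writing
\[
a_k^1 - a_k^2 = \trace[(\overline Y^1 - \overline Y^2)^\dag S_k Z_1] + \trace[(\overline Y^2)^\dag S_k (Z_1 - Z_2)],
\]
the triangle inequality together with part (a) gives $|a_k^1(t) - a_k^2(t)| \le \|S_k\|\,\|Z_0\|\bigl[(1 + \sqrt{\nbar}\,\mathcal M_1)\Delta_1 + \sqrt{\nbar}\,\mathcal M_2 \Delta_2\bigr]$. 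Applying $(a_k^1)^2 - (a_k^2)^2 = (a_k^1 - a_k^2)(a_k^1 + a_k^2)$, bounding $|a_k^1 + a_k^2| \le 2\sqrt{\nbar}\|S_k\|\|Z_0\|$, integrating over $J$ (of length $T_1$) and summing over $k$ yields the claimed bound $|\alpha_1 - \alpha_2| \le T_1(N_1 \Delta_1 + N_2 \Delta_2)\|Z_0\|^2$ with $N_1, N_2$ depending only on $\overline L$.

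The main obstacle is simply careful bookkeeping in the second bracket of part (a): ensuring that every contribution factorizes as (a constant times $\Delta_i$) times $\|Z\|$, rather than an absolute constant, which is what makes the final bound proportional to $\|Z_0\|$ rather than to a fixed quantity. This factorization rests crucially on the fact that $f(\overline u,\overline Y,0) = 0$, so that $f$ vanishes at $Z=0$ and every term in the dynamics carries at least one factor of $Z$; everything else is straightforward Gronwall and algebra.
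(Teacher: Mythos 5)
Your proposal is correct and follows essentially the same route as the paper's proof: both establish a Lipschitz constant for $f_1(t,\cdot)$ on the relevant ball, bound $\|f_1(t,Z) - f_2(t,Z)\|$ by a quantity of the form $(C_1\Delta_1 + C_2\Delta_2)\|Z\|$ using the algebraic decomposition of the $\widetilde u_k^i S_k(Z+\overline Y^i)$ terms, and then apply a Gronwall-type comparison estimate (the paper cites Khalil's Theorem 2.5, which gives the slightly sharper prefactor $(\exp(\mathcal L T_1)-1)/\mathcal L$ in place of your $T_1 e^{L_f T_1}$, but both are valid); part (b) in both cases rests on the identity $(a_k^1)^2-(a_k^2)^2=(a_k^1-a_k^2)(a_k^1+a_k^2)$ together with the part-(a) bound and the uniform a priori estimates $\|\overline Y^i\|=\sqrt{\nbar}$, $\|Z_i(t)\|\le\|Z_0\|$. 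Your closing observation that everything factorizes through $\|Z_0\|$ because $f(\overline u,\overline Y,0)=0$ is the correct structural reason the argument works.
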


\begin{remark} \label{rZ0X0} If $Z_0$ is fixed, then the solution $Z_i(t)$ of \refeq{eCLOSED} is well defined,
and hence $\alpha_i$ is a function of $Z_0$.
Now initial conditions $X_0=I$ an ${\overline X}_0$ such that $Z_0 = X_0 E - {\overline X}_0$ determine
${\widetilde X}_0$, and so determine the solution $\widetilde X (t)$ of the closed loop system  \refeq{eClosedLoopComplete}.
Furthermore it is clear from \refeq{eFeedZ} that,
 $\alpha_i =
\int_{0}^{T_f} \sum_{k=1}^m 4 K \left\{\Re \left[ \trace \left( E^\dag  {\widetilde S}_k \widetilde X(t) E \right) \right]\right\}^2 dt$,
and so, $\alpha_i$ also a function of ${\widetilde X}_0$ when considering that the reference trajectory $\overline X(t)$ is shuch that
${\overline Y}_i = {\overline X} (t) E$.
\end{remark}

  Lemma \ref{lCB} will be shown before proving Lemma \ref{lJJ}.

\begin{proof} (Of Lemma \ref{lCB})
 Assuming that ${\overline X}^{\ell-1}$ is
$\lambda_{\ell-1}$-attractive in $(J, \BallV{c})$, it is sufficient to  show that \refeq{estar} holds,
since the other inequations follows from \refeq{ineqV}.

Let $\Gamma(t) = {\overline Y}_1(t) - {\overline Y}_2(t)$ By part (b) of Lemma \ref{lUell}, then $\Delta_1 = \max_{t \in [\tau_0, \tau+1]} \{ \|\Gamma (t)\| \} \leq 2 \sqrt{V_{\ell-1}}$.
By part (a) of Lemma \ref{lUell}, $\Delta_2 = \max_{t \in [\tau_0, \tau+1]} \| u_k^2(t) - u_k^1(t)\| \leq \overline M  \sqrt{V_{\ell-1}}$. Hence, part (b) of Lemma \ref{lJJ} implies that
$|\alpha_2 - \alpha_1| \leq T_1 (N_1 \Delta_1 + N_2 \Delta_2) \| Z_0\|^2$
$\leq T_1 ( 2 N_1  + N_2 \overline M) \sqrt{V_{\ell-1}} \| Z_0\|^2$.
Now define $\Lambda = T_1 ( 2 N_1  + N_2 \overline M)$. By Prop. \ref{pCritical}, it is clear
that $\|Z_0\|^2 = \mathcal V({\widetilde X}_0)$.

  As in the proof of Prop. \ref{pEquivalence}, it follows that, for all ${\widetilde X}_0   \in \BallV{c}$ one has
  $\mathcal V({\widetilde X}_i(\tau_1)) -   \mathcal V({\widetilde X}_i(\tau_0)) = \mathcal V({\widetilde X}_i(\tau_1))- \mathcal V({\widetilde X}_0))
  = - \alpha_i, i=1, 2$. By definition of $\lambda_i$, then $\mathcal V({\widetilde X}_i(\tau_1)) \leq \lambda_i \mathcal V({\widetilde X}_0)$.
  Let $A_1 = 1 - \lambda-1$. Then $-\alpha_1 = \mathcal V({\widetilde X}_1(\tau_1)) -   \mathcal V({\widetilde X}_1(\tau_0)) \leq
  $ $ -A_1 \mathcal V({\widetilde X}_1(\tau_0))$. Hence $-\alpha_1 \leq -A_1 \mathcal V({\widetilde X}_1(\tau_0))$. Assuming that
  ${\widetilde X}_1(\tau_0))= {\widetilde X}_2(\tau_0)) = {\widetilde X}_0$, then, as $\alpha_2  \geq \alpha_1 - |\alpha_1-\alpha_2| \geq  A_1 \mathcal V({\widetilde X}_0) + |\alpha_1 - \alpha_2|$
  it follows that $\alpha_2 \geq (1-\lambda_1)  \mathcal V({\widetilde X}_0) - \Lambda  \sqrt{V_{\ell-1}} \mathcal V({\widetilde X}_0)$. So
  $\mathcal V({\widetilde X}_2(\tau_1)) -   \mathcal V({\widetilde X}_0))  = -\alpha_2  \leq - (1 - \lambda_1 - \Lambda  \sqrt{V_{\ell-1}} ) \mathcal V({\widetilde X}_0))$ and
  so  $\mathcal V({\widetilde X}_2(\tau_1))   \leq  (\lambda_1 + \Lambda  \sqrt{V_{\ell-1}} ) \mathcal V({\widetilde X}_0))$. As the last inequality holds
  for all ${\widetilde X}_0 \in \BallV{c}$, it follows that ${\overline X}^\ell (t)$ is $\lambda_\ell$-attractive, and $\lambda_\ell$ is such that $\lambda_ell \leq
  (\lambda_{\ell-1} + \Lambda  \sqrt{V_{\ell-1}}$,
  showing Lemma \ref{lCB}.
 \end{proof}

\begin{proof} (Of Lemma \ref{lJJ})
The proof of part (a) of  this Lemma is based  on the following ideas:\\
(i) It will be shown first that $f_1(t, Z)$ is Lipschitz with respect to the second variable, that is
there exists $\mathcal L>0$ such that
\begin{equation}
\label{eLf}
\|f_1 (t, Z_1) - f_1(t, Z_2) \| \leq \mathcal L \| {Z}_1 - {Z}_2\|,
\end{equation}
 for
all ${Z}_1, {Z}_2 \in \BallZ{c}$ and $t \in J$.\\
(ii) Then it will be shown that $\| f_1(t, Z) - f_2(t, Z) \| \leq ({\mathcal N}_1 \Delta_1 + {\mathcal N}_2 \Delta_2) \| Z \|$ for all $t \in J$ and $Z \in  \BallZ{c}$.
all $\widetilde X$ such that $\mathcal V(\widetilde X) \leq  \mathcal V({\widetilde X}_0)$ for some $\mu >0$.\\
Now, using (i) and (ii), and using the fact that the Lyapunov function  of the closed-loop system
is always nonincreasing, it follows from  \cite[Theorem 2.5, p.79]{Kha02} that $\| { Z}_1(t) - {\widetilde Z}_2(t) \| \leq$ $\frac{({\mathcal N}_1 \Delta_1 + {\mathcal N}_2 \Delta_2) \| Z_0 \|} {\mathcal L} \left\{ \exp(\mathcal L T_1) - 1 \right\}$, showing  part (a) the Lemma.\\
Now, the proofs of (i) and (ii) shall be presented.
To show (i) note that:\\
$ f_1(t,Z_1) - f_1(t,Z_2) = (S_0 + \sum_{i=1}^{m}   {\overline u}_k^1 S_k) (Z_1 - Z_2) +$ $\sum_{i=1}^{m}  \left( {\widetilde u}^1_k S_k Z_1 - {\widetilde u}^2_k S_k Z_2 \right) +$
$\sum_{i=1}^{m}  ({\widetilde u}^1_k - {\widetilde u}^2_k) S_k {\overline Y}_1$, where ${\widetilde u}_k^i = 2  K \Re [ \trace( {\overline Y}_1^\dag S_k Z_i)]$.
Since $|{\overline u}_k^1 | \leq \overline L$, it follows that $\|(S_0 + \sum_{i=1}^{m}   {\overline u}_k^1 S_k) (Z_1 - Z_2) \| \leq [\|S_0\| + \overline L \sum_{i=1}^{m} \|S_k\|] \|Z_1 - Z_2\|$.
As in the proof of Lemma \ref{lUell}, one has
\begin{equation}
\label{inequtilde}
|{\widetilde u}^i_k| \leq 2 K \|S_k\| \|Z_i\| \leq 2 K c \|S_k\|,
\end{equation}
 hence $\|\sum_{i=1}^{m}  \left( {\widetilde u}^1_k S_k Z_1 - {\widetilde u}^2_k S_k Z_2 \right) \| \leq$
$\|\sum_{i=1}^{m}  \left( {\widetilde u}^1_k S_k Z_1 -  {\widetilde u}^1_k S_k Z_2 +  {\widetilde u}^1_k S_k Z_2 - {\widetilde u}^2_k S_k Z_2 \right)\| \leq$
$\sum_{i=1}^{m} \| \left( {\widetilde u}^1_k S_k Z_1 -  {\widetilde u}^1_k S_k Z_2 \| +  \|{\widetilde u}^1_k S_k Z_2 - {\widetilde u}^2_k S_k Z_2 \| \right)$.
Then, $\|  {\widetilde u}^1_k S_k Z_1 -  {\widetilde u}^1_k S_k Z_2 \| \leq |u^1_k| \|S_k\| \| Z_1 - Z-2\| \leq 2 K c \|S_k\|^2 \|Z_1- Z_2\|$.
Furthermore,  $\|{\widetilde u}^1_k S_k Z_2 - {\widetilde u}^2_k S_k Z_2 \| \leq \|{\widetilde u}^1_k - {\widetilde u}^2_k\| \|S_k\| \|Z_2\|$.
By Proposition \ref{ineqFrob}, $|{\widetilde u}^1_k  - {\widetilde u}^2_k | = $ $2 K \Re[ \trace ({\overline Y}_1 S_k (Z_1 - Z_2) )]$
$\leq 2 K \|S_k\| \|Z_1 - Z_2\|$.  As $\|Z_2\| \leq c$, then $\|{\widetilde u}^1_k S_k Z_2 - {\widetilde u}^2_k S_k Z_2 \| \leq 2 K c \|S_k\|^2 \|Z_1 -Z_2\|$.
it follows that
$\|\sum_{i=1}^{m}  \left( {\widetilde u}^1_k S_k Z_1 - {\widetilde u}^2_k S_k Z_2 \right) \| \leq$
$ 4 K c [\sum_{i=1}^{m} \|S_k\|^2 ] \| Z_1 - Z_2\|$. It follows that
$\| f_1(t,Z_1) - f_1(t,Z_2)\| \leq \mathcal L \| Z_1 - Z_2\|$, where
 $\mathcal L =  [ \|S_0\| + \overline L (\sum_{i=1}^{m}\|S_k \|) + 4 K c (\sum_{i=1}^{m}\|S_k \|^2)]$, showing (i).

Now, to show (ii) note that $\| f_1 (t, Z) - f_2(t, Z) \| = \|\sum_{i=1}^{m} [ ({\overline u}_k^2 - {\overline u}_k^1) S_k Z] +$  $\sum_{i=1}^{m}  [{\widetilde u}_k^2 - {\widetilde u}_k^1)] S_k Z +$ $({\widetilde u}_k^1 S_k {\overline Y}_1 - ({\widetilde u}_k^2 S_k {\overline Y}_2)\|$.
Note that $\| ({\overline u}_k^2 - {\overline u}_k^1) S_k Z]  \leq  \|S_k\| \Delta_2 \|Z\|$.
Since $\| [{\widetilde u}_k^2 - {\widetilde u}_k^1)] \| \leq 2 K \|\Re [ \trace({\overline Y}_1 S_k Z - {\overline Y}_2 S_k Z)]\| \leq 2 K \Delta_1 \|S_k\| \|Z\|$, then. as $\|Z \| \leq c$,
 $\| [{\widetilde u}_k^2 - {\widetilde u}_k^1)] S_k Z\| \leq $ $ 2 K \Delta_1  \|S_k\|^2 ) \|Z\| \|Z\| \leq  2 K c \|S_k\|^2 \Delta_1 \|Z\|$.
 Furthermore $\|{\widetilde u}_k^1 S_k {\overline Y}_1 - {\widetilde u}_k^2 S_k {\overline Y}_2\| = \|{\widetilde u}_k^1 S_k {\overline Y}_1 - {\widetilde u}_k^1 S_k {\overline Y}_2 +
 {\widetilde u}_k^1 S_k {\overline Y}_2 - ({\widetilde u}_k^2 S_k {\overline Y}_2\| \leq$ $\|{\widetilde u}_k^1 S_k {\overline Y}_1 - {\widetilde u}_k^1 S_k {\overline Y}_2\| +  $  $\|{\widetilde u}_k^1 S_k {\overline Y}_2 - ({\widetilde u}_k^2 S_k {\overline Y}_2\|$. Now  $\|{\widetilde u}_k^1 S_k {\overline Y}_1 - {\widetilde u}_k^1 S_k {\overline Y}_2\| \leq$ $|{\widetilde u}_k^1| \|S_k\| \| Y_1 - Y_2\| \leq$ $2 K \|S_k\|^2 \Delta_1 \|Z\|$.
 Note that,   $\|{\widetilde u}_k^1 S_k {\overline Y}_2 - ({\widetilde u}_k^2 S_k {\overline Y}_2\| \leq$ $|{\widetilde u}_k^1 - {\widetilde u}_k^2| \|S_k\| \leq $ $ 2 K c\|S_k\|^2  \Delta_1 \|Z\|$.
  Hence, as $\|Z\|$ is non-increasing it follows that
 $\| f_1 (t, Z) - f_2(t, Z) \| \leq ({\mathcal N}_1 \Delta_1) + {\mathcal N}_2 \Delta_2) \|Z_0\|$, where
 ${\mathcal N}_1 =  2 K (2 c +2) (\sum_{k=1}^m \|S_k\|^2)$
 ${\mathcal N}_2 = (\sum_{k=1}^m \|S_k\|)^2$.

 Now, to show (b), note that $\alpha_i = \sum_{i=1}^m \int_{\tau_0}^{\tau_1} \beta_{i_k}(t)^2 dt$, where $\beta_{i_k} = \Re [ \trace ( Y_i^\dag(t) S_k Z_i(t))]$. So,
 $\|\alpha_1 - \alpha_2\| \leq T_1 \max_{t \in J} \sum_{k=1}^m \|\beta_{1_k}^2 - \beta_{2_k}^2\| \leq T_1 \max_{t \in J} \sum_{k=1}^m \| \beta_{1_k} + \beta_{2_k} \| \| \beta_{1_k} - \beta_{2_k}\|$.
 Note that $\| \beta_{1_k} - \beta_{2_k}\| \leq \| 4 K \Re [ \trace ( Y_1^\dag(t) S_k Z_1(t) -   Y_2^\dag(t) S_k Z_1(t) +  Y_2^\dag(t) S_k Z_1(t) -Y_2^\dag(t) S_k Z_2(t))] \leq$
 $ 4 K \|(Y_1^\dag(t)-Y_2^\dag(t)) S_k Z_1(t) +$ $\|Y_2^\dag(t) S_k (Z_1(t) - Z_2)\| \leq 4 K \left\{\Delta_1 \|S_k\| \|Z_0\| + ({\mathcal M}_1 \Delta_1 + {\mathcal M}_2 \Delta_2) \|S_k \|\right\} \|Z_0\|$.
 Furthermore $\| \beta_1 + \beta_2 \| \leq 4 K \Re [ \trace ( Y_1^\dag(t) S_k Z_1(t) +  Y_2^\dag(t) S_k Z_2(t))] \leq$ $ \sum_{k=1}^{m} 8 K \|S_k\| \|Z_0\|$.
 In particular, $|\alpha_1 - \alpha _2\| \leq \{ {\mathcal N}_1 \Delta_1 + {\mathcal N}_2 \Delta_2\} \sqrt{V_{\ell-1}} \|Z_0\|^2$,
 where ${\mathcal N}_2 = 2 K (\sum_{k=1}^{m} \|S_k\| )$ and ${\mathcal N}_1 = 4 K (\sum_{k=1}^{m} \|S_k\| )^2 + 2 K  (\sum_{k=1}^{m} \|S_k\| )$,
showing (ii).
\end{proof}

\def\cprime{$'$}

\end{document}